\newcommand{\nosemic}{\renewcommand{\@endalgocfline}{\relax}}
\newcommand{\dosemic}{\renewcommand{\@endalgocfline}{\algocf@endline}}
\let\oldnl\nl
\newcommand{\nonl}{\renewcommand{\nl}{\let\nl\oldnl}}
\def\csname ver@algorithm.sty\endcsname{}
\def\csname ver@algorithmic.sty\endcsname{}
\theoremstyle{plain}
\newtheorem{theorem}{Theorem}[section]
\newtheorem{proposition}[theorem]{Proposition}
\newtheorem{lemma}[theorem]{Lemma}
\theoremstyle{definition}
\newtheorem{assumption}[theorem]{Assumption}
\theoremstyle{remark}
\title{Convergence Analysis of Split Federated Learning on Heterogeneous Data}
\begin{document}
\author{%
  Pengchao Han \footnotemark[1]  \\
   Guangdong University of Technology, China \\
  \texttt{hanpengchao@gdut.edu.cn} \\
  \And
  Chao Huang \footnotemark[1] \\
  Montclair State University, USA \\
  \texttt{huangch@montclair.edu} \\
  \AND
  Geng Tian \\
  Southern University of Science and Technology, China \\
  \texttt{12332463@mail.sustech.edu.cn} \\
  \And
  Ming Tang \footnotemark[2]\\
  Southern University of Science and Technology, China \\
  \texttt{tangm3@sustech.edu.cn} \\
\And
  Xin Liu  \\
  University of California, Davis, USA \\
  \texttt{xinliu@ucdavis.edu} 
}
\renewcommand{\thefootnote}{\fnsymbol{footnote}}
\footnotetext[1]{Equal contribution.}
\footnotetext[2]{Corresponding author.}
\footnotetext{This work was partially supported by the National Natural Science Foundation of China (Grants 62202214 and 62401161),  Guangdong Basic and Applied Basic Research Foundation (Grants 2023A1515012819 and 2022A1515110056), and USDA-020-67021-32855.}
\renewcommand{\thefootnote}{\arabic{footnote}}
\maketitle

\begin{abstract}
Split federated learning (SFL) is a recent distributed approach for collaborative model training among multiple clients. In SFL, a global model is typically split into two parts, where clients train one part in a parallel federated manner, and a main server trains the other. Despite the  recent research on SFL algorithm development, the convergence analysis of SFL is missing in the literature, and this paper aims to fill this gap. The analysis of SFL can be more challenging than that of federated learning (FL), due to the potential dual-paced updates at the clients and the main server. We provide convergence analysis of SFL for strongly convex and general convex objectives on heterogeneous data. The convergence rates are $O(1/T)$ and $O(1/\sqrt[3]{T})$, respectively, where $T$ denotes the total number of rounds for SFL training.
We further extend the analysis to non-convex objectives and the scenario  where some clients may be unavailable during training. 
Experimental experiments validate our theoretical results and show that SFL outperforms FL and split learning (SL) when data is highly heterogeneous across a large number of clients.
\end{abstract}

\section{Introduction}
\label{sec: intro}
\subsection{Motivation}
    Federated learning (FL) \cite{mcmahan2017communication,jiao2024provably} allows distributed clients to train a global machine learning model collaboratively without sharing raw data. FL leverages the parallel computing capabilities of clients to enhance model training efficiency. However, FL is usually computationally intensive. Clients need to train the entire global model multiple times, which can be infeasible for resource-constrained edge devices. This challenge is further exacerbated as the trend towards increasingly larger model architectures demands more substantial resources \cite{abdelmoniem2023refl}. Moreover, FL suffers from the client drift problem when clients'  data distributions are heterogeneous, aka non-identically and independently distributed (non-IID). A large number of studies have proposed algorithms to address the client drift issue, e.g., \cite{li2020federated,karimireddy2020scaffold,li2021model,tan2023federated}.

\begin{figure*}
    \centering
    \includegraphics[height=5cm]{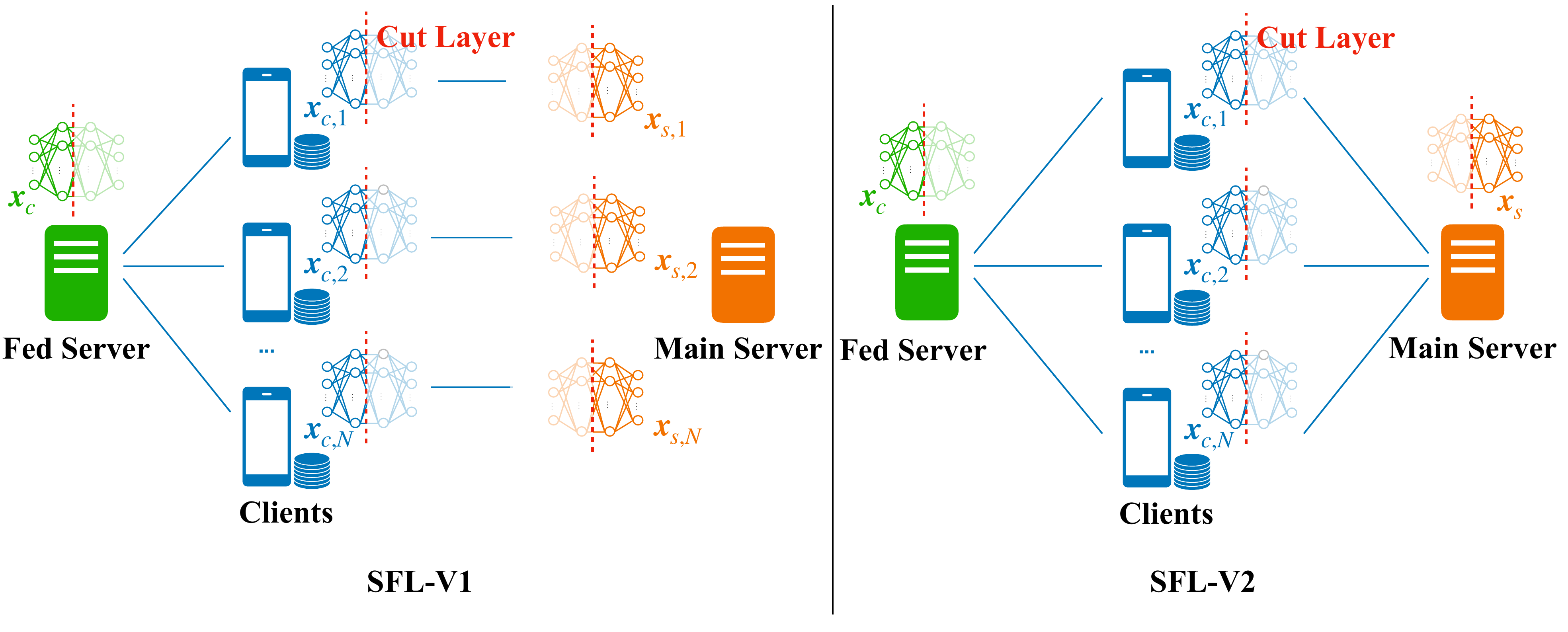}
    \caption{An illustration of SFL framework, and there are two major algorithms, i.e., SFL-V1 (left) and SFL-V2 (right) \cite{thapa2022splitfed}. 
    More discussions on SFL-V1 and SFL-V2 are given in Sec. \ref{sec: formulation}.}
    \label{fig:splitfed}
\end{figure*}

Split learning (SL) \cite{vepakomma2018split} is another distributed approach. By splitting the model across clients and a main server, SL can substantially reduce the computational workload on edge devices. Moreover, recent studies in \cite{wu2023split,li2023convergence} show that SL can outperform FL when data is highly heterogeneous. However, SL’s sequential training among clients can lead to high latency in each training round and potential performance loss (e.g., caused by catastrophic forgetting), which impedes its practical applicability in real-world distributed systems.

In light of above challenges, Thapa et. al in \cite{thapa2022splitfed} proposed split federated learning (SFL) as a hybrid  approach that synergizes the strengths of both FL and SL. SFL combines parallel training of FL with partial model training of SL. They  proposed two major SFL algorithms: SFL-V1 and SFL-V2. An illustration of these SFL algorithms are shown in Fig. \ref{fig:splitfed}. Specifically, 
the global model (to be trained) is first split at a cut layer into two parts: a client-side model and a server-side model. Then, the clients are responsible for training only the client-side model under the coordination of a \emph{fed server} (similar to FL). Another server, known as the \emph{main server}, is tasked with training the server-side model by collaborating with the clients (similar to SL). SFL aims to leverage parallel processing to reduce latency, while benefiting from the reduced computational workloads and enhanced data heterogeneity handling of SL.

Following \cite{thapa2022splitfed}, there has been an emerging volume of empirical studies on SFL.  e.g., \cite{shiranthika2023splitfed,shen2023ringsfl,han2021accelerating,stephanie2023digital,huang2023minibatch,waref2022split,10631047}. However, \textbf{a convergence analysis of SFL is missing in the literature}, and this paper aims to provide a comprehensive convergence analysis under different conditions.  Convergence theory is crucial for understanding the learning performance of SFL, particularly in the context of \textit{heterogeneous data} and \textit{partial participation} scenarios. In practical distributed systems, clients are prone to have different data distributions. Moreover, not all clients may be active or available at all times. These two issues can significantly affect the learning performance of SFL. We aim to provide convergence guarantees for SFL on heterogeneous data (under both full and partial participation). We further compare the results to FL and SL, which provides insights into the practical deployment of various distributed approaches.

\subsection{Related Work}\label{sec: related work}
\textbf{Convergence theories of FL and SL}.
 There are many convergence results on FL.  Most studies focus on data heterogeneity, e.g., \cite{wang2019adaptive,li2020convergence,khaled2020tighter,karimireddy2020scaffold,koloskova2020unified}. Some studies look at partial participation, e.g., \cite{yang2021achieving, wang2022unified, tang2023tackling}. There are also convergence results on Mini-Batch SGD, e.g., \cite{stich2019unified, woodworth2020minibatch, woodworth2020local}, where \cite{woodworth2020minibatch} argued that the key difference between FL and Mini-Batch SGD is the communication frequency.

To our best knowledge, there is only one recent study \cite{li2023convergence} discussing the convergence of SL. The major difference to SL analysis lies in the sequential training manner across clients, while SFL clients perform parallel training.

\subsection{Challenges and  Contributions}\label{sec: challenges-contributions}
\textbf{Challenges of SFL convergence analysis}. When data is homogeneous (IID) across clients, the convergence theory in \cite{koloskova2020unified} (mainly developed for FL) can be applied to SFL. When data is heterogeneous, however, the theory cannot be directly applied due to the client drift problem. The challenge is intensified with clients' partial participation, which induces bias in the training process. Despite that prior FL theories have handled data heterogeneity \cite{li2020convergence} and partial participation \cite{wang2022unified}, SFL convergence analysis imposes unique challenges due to the dual-paced model aggregation and model updates at the client-side and server-side. More specifically, 

\textit{Dual-paced model \underline{aggregation} in SFL-V1}: In SFL-V1, the main server maintains one server-side model for each client, and it periodically aggregates the server-side models. When the main server aggregates its models at the same frequency as the clients, the analysis is the same to that of FL. However, FL analysis cannot be applied when aggregations occur at different frequencies, and it is challenging to analyze the impact of such discrepancy on SFL convergence. 

\textit{Dual-paced model \underline{updates} in SFL-V2}: In SFL-V2, the main server only maintains one version of server-side model. The clients update the client-side models in a parallel manner while the main server updates the server-side model in a sequential fashion. Hence, each client's local update depends on the randomness of the previous clients who have interacted with the main server. While \cite{li2023convergence} handled sequential client training, their theory cannot be applied to SFL-V2 as they did not consider the aggregation of client-side models. This makes our analysis  more challenging than FL and SL.

\textbf{Contributions}. We summarize our contributions as follows:
\begin{itemize}
\item We provide the first comprehensive convergence analysis of SFL. The analysis is more challenging than prior FL analysis due to the dual-paced model aggregation and model updates. To this end, we derive a key decomposition result (Proposition \ref{prop: decomposition}) that enables us to analyze the convergence from the server-side and client-side separately.   


\item Based on the decomposition result, we prove that the convergence guarantees of both SFL-V1 and SFL-V2 are $O(1/T)$ for strongly convex objective and  $O(1/\sqrt[3]{T})$ for general convex objective, 
where $T$ denotes the total number of  rounds for SFL training.  
We further extend the analysis to non-convex objectives and more practical scenarios where some clients may be unavailable during training. 

\item We conduct simulations on various datasets. We show that the results are consistent with our theories. We further show two surprising results: (i) SFL achieves a better performance when clients maintain a larger portion of the global model;  (ii) SFL-V2 outperforms FL and SL when clients have highly heterogeneous data and the number of client is large. 
\end{itemize}


The rest of the paper is organized as follows. 
Sec. \ref{sec: formulation} formulates the SFL model.   Sec. \ref{sec: convergence} presents the convergence results for SFL. We conduct experiments in Sec. \ref{sec: numerical} and conclude in Sec. \ref{sec: conclusion}.

\section{Problem Formulation}\label{sec: formulation}
\subsection{Model}\label{sec: model}
We consider a set of clients  $\mathcal{N}=\{1,2,\cdots, N\}$, where each client $n\in\mathcal{N}$ has a local private dataset $\mathcal{D}_n$ of size $D_n=|\mathcal{D}_n|$. Suppose the global model parameterized by $\boldsymbol{x}$ has $L$ layers. In SFL, the global model is split at the $L_c$-th layer (i.e., the cut layer) into two segments: a client-side model $\boldsymbol{x}_c$ (from the first layer to layer $L_c$) and a server-side model $\boldsymbol{x}_s$ (from layer $L_c+1$ to layer $L$), where $\boldsymbol{x}=\left[\boldsymbol{x}_c; \boldsymbol{x}_s\right]$. 
Let $\boldsymbol{x}_{c,n}$ denote the local client-side model of client $n$. The clients train models with the help of two servers: (i) fed server, which periodically aggregates clients' local models $\boldsymbol{x}_{c,n}$ (similar to FL), and (ii) main server, who trains the server-side model $\boldsymbol{x}_s$. In this work, we consider two major SFL algorithms: SFL-V1 and SFL-V2 \cite{thapa2022splitfed}. In SFL-V1, the main server maintains a separate server-side model $\boldsymbol{x}_{s,n}$ corresponding to each client $n$. In comparison, in SFL-V2, the main server only maintains one model $\boldsymbol{x}_s$.

 Let $F_n (\boldsymbol{x}; \zeta_n)$ denote the loss of model $\boldsymbol{x}$ over client $n$'s mini-batch instance $\zeta_n$, which is randomly sampled from client $n$'s dataset $\mathcal{D}_n$. 
Let $ F_n (\boldsymbol{x})\triangleq \mathbb{E}_{\zeta_n\sim \mathcal{D}_n}[ F_n (\boldsymbol{x}; \zeta_n)]$ denote the expected loss of model $\boldsymbol{x}$ over client $n$'s dataset. The goal of SFL is to minimize the expected loss of the model $\boldsymbol{x}$ over the datasets of all clients:
\begin{equation}\label{global-loss}
\min_{\boldsymbol{x}} f(\boldsymbol{x})=\sum_{n=1}^N a_n F_n\left(\boldsymbol{x}\right), 
\end{equation}
where $a_n\in [0,1]$ is the weight of client $n$ satisfying $\sum_{n\in \mathcal{N}}a_n=1$. Typically, $a_n=D_n/\sum_{n'\in \mathcal{N}} D_{n'}$, where a client with a larger data size is assigned a larger weight \cite{wu2023split}. 

\subsection{Algorithm Description} \label{sec: algorithm description}
We provide a brief description of SFL. Refer to Appendix \ref{appendix: algorithm description} for a more detailed discussion. SFL takes a total number of $T$ rounds to solve (\ref{global-loss}). At the beginning of each round $t$, clients download the recent global client-side model from the fed server, where the model is an aggregated version of the client-side models of the clients from the previous round $t-1$. Each round $t$ contains two stages: 

\textbf{Stage 1: model training}. Clients and the main server train the full global model for $\tau$ iterations in each round. In each iteration $i< \tau$, there are three steps:

\textit{Step 1: client forward propagation}. Each client $n$ samples a mini-batch of data $\zeta_{n}^{t,i}$ from $\mathcal{D}_n$, computes the intermediate features (e.g., activation values at the cut layer) over its current model $\boldsymbol{x}_{c,n}^{t,i}$, and sends the activation to the main server. The clients perform forward propagation in parallel. \\
 
 \textit{Step 2: main server training}. Upon receiving the activation of each client $n$, 
 \begin{itemize}
 \item SFL-V1: the main server computes the loss using the current server-side model $\boldsymbol{x}_{s,n}^{t,i}$. It then computes the gradients over $\boldsymbol{x}_{s,n}^{t,i}$ to update the model. It also computes the gradient over the activation at the cut layer, and sends it to client $n$.
 \item SFL-V2:  the main server computes the loss $F_n(\left\{\boldsymbol{x}_{c,n}^{t,i},\boldsymbol{x}_s^{t,i}\right\})$, based on which it then updates the server-side model $\boldsymbol{x}_s^{t,i}$. 
 It also computes and sends the gradient over activation at the cut layer to client $n$. Note that the main server sequentially interacts with the clients in a randomized order. 
 \end{itemize}
 
\textit{Step 3: client backward propagation}. Receiving gradient at the cut layer, each client $n$ computes the client-side gradient using the chain rule, and then updates its model $\boldsymbol{x}_{c,n}^{t,i}$.

\textbf{Stage 2: model aggregation}. Model aggregation can occur for both client-side and server-side models. For the client side,  after $\tau$ iterations of model training (i.e., at the end of round $t$), each client sends its current client-side model to the fed server. The fed server aggregates the clients' models (e.g., weighted averaging), which will be downloaded in the next round $t+1$:
\begin{equation}\label{client-aggregation}
\boldsymbol{x}^{t+1}_c\leftarrow \sum_{n\in \mathcal{N}}a_n\boldsymbol{x}^{t, \tau}_{c,n}.
\end{equation}
For the server side, (i) in SFL-V1, after $\tilde{\tau}$ iterations of training, the main server aggregates all server-side models. Note that $\tilde{\tau}$ does not necessarily need to equal $\tau$, but when equality holds, SFL-V1 can be regarded as FL (despite the model splitting). 
(ii) In SFL-V2, no aggregation occurs since the main server only maintains one model.  

\subsection{Client Participation}\label{sec: client participation}
We consider two cases: (i) \textit{full participation} where all clients are available during training. This can model the scenarios where clients are organizations or companies who likely have sufficient computation and communication resources \cite{huang2023duopoly}; (ii) \textit{partial participation} where some clients may be unavailable during training. This can model the cases where clients are edge devices (e.g., mobile phones) that are usually resource-constrained and may be disconnected from the SFL process. 

To model partial participation, we consider independent participation probabilities for each client, allowing for arbitrary and heterogeneous participation probabilities. Specifically, we use $q_n\in [0,1]$ to denote client $n$'s participation level (or probability), and $\boldsymbol{q}=(q_n, n\in \mathcal{N})$. If $q_n=1$, client $n$ participates in every round of SFL with probability one. If $q_n<1$, client $n$ is unavailable in some rounds. Denote $\mathcal{P}^t(\boldsymbol{q})$ as the set of participating clients in round $t$. In the presence of partial participation, we need to modify (\ref{client-aggregation}) (and the potential server-side aggregation) to offset the incurred bias:
\begin{equation}
\boldsymbol{x}^{t+1}_c\leftarrow \sum_{n\in \mathcal{P}^t(\boldsymbol{q})}\frac{a_n}{q_n}\boldsymbol{x}^{t, \tau}_{c,n}. 
\end{equation}

\section{Convergence Analysis}\label{sec: convergence}
We first make technical assumptions in Sec. \ref{sec: assumption}. Then, we present a key technical result in Sec. \ref{sec: decomposition} to support the SFL convergence analysis. Finally, we provide the convergence results under full participation and partial participation in Sec. \ref{sec: full participation} and Sec. \ref{sec: partial participation}, respectively. 
\subsection{Assumptions}\label{sec: assumption}
We start with some conventional assumptions for convergence analysis in the FL literature. 
\begin{assumption}(\textit{$S$-Smoothness}) \label{assump: smoothness}
Each client $n$'s loss function $F_n$ is $S$-smooth. That is,  for all $\boldsymbol{x}, \boldsymbol{y} \in \mathbb{R}^d$,
\begin{equation}
F_n(\boldsymbol{y}) \leq F_n(\boldsymbol{x})+\langle\nabla F_n(\boldsymbol{x}), \boldsymbol{y}-\boldsymbol{x}\rangle+\frac{S}{2}\|\boldsymbol{y}-\boldsymbol{x}\|^2.
\end{equation}
\end{assumption}
The smoothness assumption holds for many loss functions in, for example, logistic regression, softmax classifier, and $l_2$-norm regularized linear  regression \cite{li2020convergence}. 

\begin{assumption}(\textit{Unbiased and bounded stochastic gradients with bounded variance})\label{assump: stochastic gradient}
The stochastic gradients $\boldsymbol{g}_n(\cdot)$ of $F_n\left(\cdot\right)$ is unbiased with the variance bounded by $\sigma_n^2$.
\begin{equation}\label{eq:unbiased-gradient}
    \mathbb{E}_{\zeta_n\sim \mathcal{D}_n}\left[\boldsymbol{g}_n\left(\boldsymbol{x},\zeta_n\right)\right] = \nabla F_n\left(\boldsymbol{x}\right),
\end{equation}
\begin{equation}\label{bounded-variance}
    \mathbb{E}_{\zeta_n\sim \mathcal{D}_n}\left[\left\Vert \boldsymbol{g}_n\left(\boldsymbol{x},\zeta_n\right) - \nabla F_n\left(\boldsymbol{x}\right) \right\Vert^2 \right] \leq \sigma_n^2.
\end{equation}
\end{assumption}
\begin{assumption}(\textit{Bounded gradients})\label{assump: bounded-gradients}
The expected squared norm of stochastic gradients is bounded by $G^2$.
\begin{equation}
\mathbb{E}_{\zeta_n\sim \mathcal{D}_n}\left\Vert \boldsymbol{g}_n\left(\boldsymbol{x}, \zeta_n\right) \right\Vert^2 \leq G^2.
\end{equation}
\end{assumption}
The value of $\sigma_n$ measures the level of stochasticity. 
\begin{assumption}(\textit{Heterogeneity})\label{assump: heterogeneity}
There exists an $\epsilon^2$ such that the divergence between local and global gradients is bounded by $\epsilon^2$.
\begin{equation}
\left\Vert \nabla F_n\left(\boldsymbol{x}\right)-\nabla f\left(\boldsymbol{x}\right)  \right\Vert^2 \le \epsilon^2.
\end{equation}
\end{assumption}
A larger $\epsilon^2$ indicates a larger degree of data heterogeneity. 

\subsection{Decomposition}\label{sec: decomposition}
As discussed in Sec. \ref{sec: challenges-contributions}, analyzing the performance bound of SFL can be more challenging than that of conventional FL counterparts due to the dual-paced model aggregation and model updates. 
To address this challenge, we decompose the convergence analysis into the server-side and client-side updates, respectively. We give the decomposition  below.

\begin{proposition}{(Convergence decomposition)}\label{prop: decomposition}
Let $\boldsymbol{x}^*\triangleq[\boldsymbol{x}_c^*; \boldsymbol{x}_s^*]$ denote the optimal global model
that minimizes $f(\cdot)
$,  and $\boldsymbol{x}^T\triangleq[\boldsymbol{x}_c^T; \boldsymbol{x}_s^T]$ is the global model obtained after $T$ rounds of SFL training. 
Under Assumption \ref{assump: smoothness}, we have 
\begin{equation}
\begin{aligned}
&\mathbb{E}\left[f(\boldsymbol{x}^
{T})\right]- f(\boldsymbol{x}^*)
\le\frac{S}{2}\left(\mathbb{E}||\boldsymbol{x}_s^{T}-\boldsymbol{x}_s^*||^2+\mathbb{E}||\boldsymbol{x}_c^{T}-\boldsymbol{x}_c^*||^2\right).
\end{aligned}
\end{equation}
\end{proposition}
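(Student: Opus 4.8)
The plan is to derive the bound directly from the $S$-smoothness of the global objective $f$. The first step is to observe that $f(\boldsymbol{x}) = \sum_{n} a_n F_n(\boldsymbol{x})$ is a convex combination of $S$-smooth functions, hence $f$ itself is $S$-smooth, so Assumption~\ref{assump: smoothness} gives, for all $\boldsymbol{x},\boldsymbol{y}$,
\begin{equation}
f(\boldsymbol{y}) \le f(\boldsymbol{x}) + \langle \nabla f(\boldsymbol{x}), \boldsymbol{y}-\boldsymbol{x}\rangle + \frac{S}{2}\|\boldsymbol{y}-\boldsymbol{x}\|^2 .
\end{equation}

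The second step is to apply this inequality at the point $\boldsymbol{x} = \boldsymbol{x}^*$ with $\boldsymbol{y} = \boldsymbol{x}^T$. Since $\boldsymbol{x}^*$ minimizes $f$, we have $\nabla f(\boldsymbol{x}^*) = \boldsymbol{0}$, so the inner-product term vanishes and we obtain the deterministic bound
\begin{equation}
f(\boldsymbol{x}^T) - f(\boldsymbol{x}^*) \le \frac{S}{2}\|\boldsymbol{x}^T - \boldsymbol{x}^*\|^2 .
\end{equation}
The third step is to take expectations over the randomness of the SFL training process and then split the squared norm along the client/server partition: because $\boldsymbol{x}^T = [\boldsymbol{x}_c^T;\boldsymbol{x}_s^T]$ and $\boldsymbol{x}^* = [\boldsymbol{x}_c^*;\boldsymbol{x}_s^*]$ are the concatenations of their respective blocks, $\|\boldsymbol{x}^T - \boldsymbol{x}^*\|^2 = \|\boldsymbol{x}_c^T - \boldsymbol{x}_c^*\|^2 + \|\boldsymbol{x}_s^T - \boldsymbol{x}_s^*\|^2$, which yields the claimed inequality after taking $\mathbb{E}[\cdot]$.

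There is essentially no hard part here; the result is a one-line consequence of smoothness and first-order optimality. The only points that warrant a sentence of justification are (i) that smoothness of each $F_n$ transfers to $f$ under the convex weighting $\{a_n\}$ (linearity of the gradient and of the quadratic upper bound), and (ii) that the Euclidean norm on the full parameter space decomposes additively over the disjoint coordinate blocks corresponding to $\boldsymbol{x}_c$ and $\boldsymbol{x}_s$ — this is what makes the decomposition meaningful, since it lets subsequent sections bound $\mathbb{E}\|\boldsymbol{x}_s^T-\boldsymbol{x}_s^*\|^2$ and $\mathbb{E}\|\boldsymbol{x}_c^T-\boldsymbol{x}_c^*\|^2$ separately using the server-side and client-side update dynamics.
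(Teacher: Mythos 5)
Your proof is correct and follows essentially the same route as the paper's: apply the $S$-smoothness quadratic upper bound of $f$ (inherited from the $F_n$ via the convex weights $a_n$) at $\boldsymbol{x}^*$, kill the inner-product term via $\nabla f(\boldsymbol{x}^*)=\boldsymbol{0}$, and split the squared norm over the client/server coordinate blocks. The only cosmetic difference is that you make the first-order optimality condition explicit where the paper leaves it implicit.
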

The proof is given in Appendix \ref{appendix: technical lemmas}. Proposition \ref{prop: decomposition} is particularly useful. It shows that despite the challenging dual-paced updates, to bound the SFL performance gap, it suffices to separately bound the gap at the server-side and client-side models. Note that our decomposition can be easily applied to other distributed approaches such as SL. In addition, such a decomposition is not necessarily loose, as our derived bounds for SFL achieve the same order as in FL (see Appendix \ref{appendix: bound comparison} for details).

\subsection{Results under Full Participation}\label{sec: full participation}
Built upon Proposition \ref{prop: decomposition}, we first present the convergence results under full participation.  
For convenience, define 
\begin{equation}
\begin{aligned}
I^{\rm err}\triangleq \left\|\boldsymbol{x}^{0}-\boldsymbol{x}^{\ast}\right\|^2, \quad \gamma\triangleq 8S/\mu-1,  \quad 
\tau_{\rm min}\triangleq \min\{\tau, \tilde{\tau}\}, \quad  \tau_{\rm max}\triangleq \max\{\tau, \tilde{\tau}\},
\end{aligned}
\end{equation}
and let $\eta^t$ represent the learning rate at round $t$. Let $f^*$ denotes the optimal global loss, i.e., $f^*\triangleq f(\boldsymbol{x}^*)$. All results are obtained based on Assumptions  \ref{assump: smoothness}-\ref{assump: heterogeneity}. 
The convergence results for SFL-V1 and SFL-V2 are summarized in Theorems \ref{theorem-v1-full} and \ref{theorem-v2-full}, respectively\footnote{Following many existing works in FL (e.g., \cite{karimireddy2020scaffold}), we consider $\mathbb{E}\left[f(\boldsymbol{x}^
{T})\right]- f^*$ and $\frac{1}{T}\sum_{t=0}^{T-1}\eta^t\mathbb{E}[\Vert\nabla_{\boldsymbol{x}}f(\boldsymbol{x}^{t})\Vert^2]$ as the performance metrics for (strongly) convex and non-convex objectives, respectively.}.  
\begin{theorem}{( \underline{SFL-V1: full participation})}\label{theorem-v1-full}

 \textbf{$\mu$-strongly convex}: Let Assumptions \ref{assump: smoothness} - \ref{assump: bounded-gradients} hold, and  $\eta^t=\frac{4}{\mu\tilde{\tau}(\gamma+t)}$ for client-side model and $\eta^t=\frac{4}{\mu\tau(\gamma+t)}$ for server-side model, 
 \begin{equation}\label{bound-v1-sc-full}
 \begin{aligned}
\mathbb{E}\left[\!f(\boldsymbol{x}^
{T})\!\right]\!-\!f^*\!\le\!
\frac{8SN\!\sum_{n=1}^N\!a_n^2\!\left(2\sigma_n^2\!+\!G^2\right)}{\mu^2\left(\gamma+T\right)}
\!+\!\frac{768S^2\!\sum_{n=1}^N\!a_n\!\left(2\sigma_n^2\!+\!G^2\right)}{\mu^3\left(\gamma+T\right)\left(\gamma+1\right)}\!+\!\frac{S(\gamma+1) I^{\rm err}}{2(\gamma+T)}.
\end{aligned}
 \end{equation}

\textbf{General convex}:  Let Assumptions \ref{assump: smoothness} - \ref{assump: bounded-gradients} hold, and $\eta^t \leq \frac{1}{2 S\tau_{\rm max}}$, 
\begin{equation}\label{bound-v1-gc-full}
\begin{aligned}
\mathbb{E}\left[f\left(\boldsymbol{x}^{T}\right)\right]-f^*\le &\frac{SI^{\rm err}}{2(T+1)} 
+ \frac{1}{2}\left(\frac{(\tilde{\tau}^2+\tau^2)I^{\rm err} N}{\tau_{\min}^2 (T+1)}\sum_{n=1}^Na_n^2\left(2\sigma_n^2+G^2\right)\right)^{\frac{1}{2}}\\
    &+\hspace{-1mm}\frac{1}{2}\left(\frac{24(\tilde{\tau}^2+\tau^2)S I^{\rm err}}{\tau_{\min}^2 (T+1)}\sum_{n=1}^Na_n\left(2\sigma_n^2+G^2\right)\right)^{\frac{1}{3}}.
\end{aligned}
\end{equation}

\textbf{Non-convex}: Let Assumptions \ref{assump: smoothness}, \ref{assump: stochastic gradient}, and \ref{assump: heterogeneity} hold, and $\eta^t \leq \min\left\{ \frac{1}{16S\tau_{\max}}, \frac{\tau_{\min}}{8SN\tau_{\max}^2\sum_{n=1}^Na_n^2}\right\}$,  
\begin{equation}\label{bound-v1-nc-full}
\begin{aligned}
\frac{1}{T}\!\sum_{t=0}^{T-1}\!\eta^t\!\mathbb{E}\!\left[\!\left\Vert\nabla_{\boldsymbol{x}}f\left(\boldsymbol{x}^{t}\right)\right\Vert^2\!\right]\!\leq\!\frac{4}{T\!\tau_{\min}}\!\left(f\!\left(\!\boldsymbol{x}^{0}\!\right)\!-\!f^\ast\right)\!+\!\frac{8NS(\tau^2\!+\!\tilde{\tau}^2)}{T\tau_{\rm min}}\!\sum_{n=1}^N\!a_n^2\!\left(\!\sigma_n^2+\epsilon^2\!\right)\!\sum_{t=0}^{T-1}\!\left(\eta^t\right)^2.
\end{aligned}
\end{equation}
\end{theorem}

\begin{theorem}{( \underline{SFL-V2: full participation})}\label{theorem-v2-full}

\textbf{$\mu$-strongly convex}: Let Assumptions \ref{assump: smoothness} - \ref{assump: bounded-gradients} hold, and $\eta^t=\frac{4}{\mu\tilde{\tau}(\gamma+t)}$ for client-side model and $\eta^t=\frac{4}{\mu\tau(\gamma+t)}$ for server-side model, 
\begin{equation}\label{bound-v2-sc-full}
\mathbb{E}\!\left[\!f(\boldsymbol{x}^
{T})\!\right]\!-\!f^*\le\!\frac{8SN\sum_{n=1}^N\!(a_n^2\!+\!1)\left(2\sigma_n^2\!+\!G^2\right)}{\mu^2\left(\gamma\!+\!T\right)}\!+\!\frac{768S^2\!\sum_{n=1}^N\!(a_n\!+\!1)\left(2\sigma_n^2\!+\!G^2\right)}{\mu^3\left(\gamma\!+\!T\right)\left(\gamma\!+\!1\right)}\!+\!\frac{S(\gamma\!+\!1)I^{\rm err}}{2(\gamma\!+\!T)}.
\end{equation}

\textbf{General convex}: Let Assumptions \ref{assump: smoothness} - \ref{assump: bounded-gradients} hold, and $\eta^t \leq \frac{1}{2 S\tau}$, 
\begin{equation}\label{bound-v2-gc-full}
\mathbb{E}\!\left[\!f\!\left(\!\boldsymbol{x}^{T}\!\right)\!\right]\!-\!f^*\!\leq\!\frac{SI^{\rm err}}{2\!(T\!+\!1)}+\frac{1}{2}\!\left(\!\frac{NI^{\rm err}}{T\!+\!1}\!\sum_{n=1}^N\!(\!a_n^2\!+\!1\!)\left(\!2\sigma_n^2\!+\!G^2\!\right)\!\right)^{\frac{1}{2}}\!+\frac{1}{2}\!\left(\!\frac{24SI^{\rm err}}{T\!+\!1}\!\sum_{n=1}^N\!(\!a_n\!+\!1\!)\!\left(\!2\sigma_n^2\!+\!G^2\!\right)\right)^{\frac{1}{3}}.
\end{equation}

\textbf{Non-convex}:  Let Assumptions \ref{assump: smoothness}, \ref{assump: stochastic gradient}, and \ref{assump: heterogeneity} hold, and $\eta^t \leq \min\left\{\frac{1}{16S\tau}, \frac{1}{8SN^2\tau}\right\}$, 
\begin{equation}\label{bound-v2-nc-full}
\begin{aligned}
    \frac{1}{T}\!\sum_{t=0}^{T-1}\eta^t\mathbb{E}\!\left[\!\left\Vert\nabla_{\boldsymbol{x}}f\left(\boldsymbol{x}^{t}\right)\right\Vert^2\!\right]\!\leq\! \frac{4}{T\tau}\!\left(f\left(\boldsymbol{x}^{0}\right)\!-\!f^\ast\right)\!+\!\frac{8NS\tau}{T} \sum_{n=1}^N\!(a_n^2+1)\!\left(\sigma_n^2+\epsilon^2\right)\!\sum_{t=0}^{T-1}\left(\eta^t\right)^2.
\end{aligned}
\end{equation}

\end{theorem}

Proofs of Theorems \ref{theorem-v1-full}-\ref{theorem-v2-full} are given in Appendices \ref{proof-v1-full}-\ref{proof-v2-full}, respectively. We summarize the key findings  below. 

\textbf{Convergence rate}. The convergence bounds of both SFL-V1 and SFL-V2 achieve an order of $O(1/T)$ on strongly convex 
 (and non-convex) objectives. For general convex objectives, the convergence rate becomes $O(1/\sqrt[3]{T})$.\footnote{Note that it might be counter-intuitive to observe looser bounds on general convex objectives than on non-convex objectives. This is associated with different performance metrics used in the analysis, e.g., see the left hand side of (\ref{bound-v1-gc-full}) and (\ref{bound-v1-nc-full}). } Note that our bounds match the existing bounds for FL and SL (in terms of the order of $T$) on heterogeneous data for strongly convex objectives.  For a more detailed comparison, please refer to Appendix \ref{appendix: bound comparison}.\footnote{We also compared SFL to FL and SL in terms of communication/computation overheads in Appendix \ref{appendix: latency}. }

\textbf{Impact of data heterogeneity}. The convergence bounds increase as the level of data heterogeneity increases. For example, in (\ref{bound-v1-nc-full}), the bound increases in $\epsilon^2$ (see Assumption \ref{assump: heterogeneity}).  This means that SFL tends to perform worse when clients' data are more heterogeneous, which is a commonly observed phenomenon in distributed learning, e.g., FL. 

\textbf{Choice of learning rate.} One should use a smaller learning rate when the number of local iteration $\tau$ increases.  This bears a similar spirit to \cite{li2020convergence}. In addition, our results indicate that a proper choice of constant learning rate suffices for SFL convergence. It would be an interesting direction to investigate whether diminishing learning rates are able to achieve faster convergence.  


\textbf{Comparison between SFL-V1 and SFL-V2}.  The convergence results between the two SFL versions are very similar, except that $a_n^2$ (and $a_n$) in SFL-V1 are replaced by $a_n^2+1$ (and $a_n+1$) in SFL-V2. See (\ref{bound-v1-sc-full}) and (\ref{bound-v2-sc-full}) for an inspection. We will show in Sec. \ref{sec: numerical} that SFL-V1 and SFL-V2 achieve similar accuracy (except under highly heterogeneous data). 

\subsection{Results under Partial Participation}\label{sec: partial participation}
Now, we present the results under partial participation. 
\begin{theorem}{( \underline{SFL-V1: partial participation})}\label{theorem-v1-partial}

\textbf{$\mu$-strongly convex}: Let Assumptions \ref{assump: smoothness} - \ref{assump: bounded-gradients} hold, and $\eta^t=\frac{4}{\mu\tilde{\tau}(\gamma+t)}$ for client-side model and $\eta^t=\frac{4}{\mu\tau(\gamma+t)}$ for server-side model, 
\begin{equation}\label{bound-v1-sc-partial}
\mathbb{E}\left[f(\boldsymbol{x}^
{T})\right]\!-\!f^*\!\leq\!\frac{8SN\!\sum_{n=1}^N\!a_n^2\!\left(\!2\sigma_n^2\!+\!G^2\!+\!\frac{G^2}{q_n}\!\right)}{\mu^2\left(\gamma+T\right)}\!+\!\frac{768S^2\!\sum_{n=1}^N\!a_n\!\left(\!2\sigma_n^2\!+\!G^2\!\right)}{\mu^3\left(\gamma+T\right)\left(\gamma+1\right)}\!+\!\frac{S\!(\gamma\!+\!1)\!I^{\rm err}}{2(\gamma+T)}.
\end{equation}
\textbf{General convex}:   Let Assumptions \ref{assump: smoothness} - \ref{assump: bounded-gradients} hold, and $\eta^t \leq \frac{1}{2 S\tau_{\rm max}}$, 
\begin{equation}\label{bound-v1-gc-partial}
\begin{aligned}
\mathbb{E}\left[f\left(\boldsymbol{x}^{T}\right)\right]-f^* \leq \frac{SI^{\rm err}}{2(T+1)}
&+  \frac{1}{2}\left(\frac{(\tilde{\tau}^2+\tau^2)I^{\rm err} N}{\tau_{\min}^2 (T+1)}\sum_{n=1}^Na_n^2\left(2\sigma_n^2+G^2+\frac{G^2}{q_n}\right)\right)^{\frac{1}{2}}\\
  &+\hspace{-1mm}\frac{1}{2}\left(\frac{24(\tilde{\tau}^2+\tau^2)S I^{\rm err}}{\tau_{\min}^2 (T+1)}\sum_{n=1}^Na_n\left(2\sigma_n^2+G^2\right)\right)^{\frac{1}{3}}.
    \end{aligned}
\end{equation}
\textbf{Non-convex}:  Let Assumptions \ref{assump: smoothness}, \ref{assump: stochastic gradient}, and \ref{assump: heterogeneity} hold, and $\eta^t \leq \min\{ \frac{1}{16S\tau_{\max}}, \frac{\tau_{\min}}{8SN\tau_{\max}^2\sum_{n=1}^N \frac{a_n^2}{q_n}}\}$, 
\begin{equation}\label{bound-v1-nc-partial}
\frac{1}{T}\!\sum_{t=0}^{T-1}\!\eta^t\mathbb{E}\!\left[\!\left\Vert\nabla_{\boldsymbol{x}}f\left(\boldsymbol{x}^{t}\right)\right\Vert^2\!\right]\!\leq\!\frac{4}{T\tau_{\min}}\!\left(f\left(\boldsymbol{x}^{0}\right)\!-\!f^\ast\right)\!+\!\frac{8NS(\tau^2\!+\!\tilde{\tau}^2) }{T\tau_{\min}}\!\sum_{n=1}^N\!\frac{a_n^2}{q_n}\!\left(\sigma_n^2\!+\!\epsilon^2\right)\!\sum_{t=0}^{T-1}\!\left(\eta^t\right)^2.
\end{equation}
\end{theorem}

\begin{theorem}{( \underline{SFL-V2: partial participation})}\label{theorem-v2-partial}

\textbf{$\mu$-strongly convex}:  Let Assumptions \ref{assump: smoothness} - \ref{assump: bounded-gradients} hold, and $\eta^t=\frac{4}{\mu\tilde{\tau}(\gamma+t)}$ for client-side model and $\eta^t=\frac{4}{\mu\tau(\gamma+t)}$ for server-side model, 
\begin{equation}\label{bound-v2-sc-partial}
\mathbb{E}\!\left[\!f\left(\boldsymbol{x}^{T}\right)\right]\!-\!f^*\!\leq\!\frac{8SN\!\sum_{n=1}^N\!(\!a_n^2\!+\!1\!)\!\left(2\sigma_n^2\!+\!G^2+\frac{G^2}{q_n}\right)}{\mu^2\left(\gamma\!+\!T\right)}+\frac{768S^2\!\sum_{n=1}^N\!(\!a_n\!+\!1\!)\!\left(\!2\sigma_n^2\!+\!G^2\!\right)}{\mu^3\left(\gamma+T\right)\left(\gamma+1\right)}+\frac{S(\gamma\!+\!1)\!I^{\rm err}}{2(\gamma+T)}.
\end{equation}
\textbf{General convex}: Let Assumptions \ref{assump: smoothness} - \ref{assump: bounded-gradients} hold, and $\eta^t \leq \frac{1}{2 S\tau}$, 
\begin{equation}\label{bound-v2-gc-partial}
\begin{aligned}
\mathbb{E}\left[f\left(\boldsymbol{x}^{T}\right)\right]-f^* \leq \frac{SI^{\rm err}}{2(T+1)}
   &+ \frac{1}{2}\left(\frac{NI^{\rm err}}{T+1}\sum_{n=1}^N(a_n^2+1)\left(2\sigma_n^2+G^2+\frac{G_n^2}{q_n}\right)\right)^{\frac{1}{2}}\\
    &+\frac{1}{2}\left(\frac{24SI^{\rm err}}{T+1}\sum_{n=1}^N(a_n+1)\left(2\sigma_n^2+G^2\right)\right)^{\frac{1}{3}}.
\end{aligned}
\end{equation}
\textbf{Non-convex}: Let Assumptions \ref{assump: smoothness},  \ref{assump: stochastic gradient}, and \ref{assump: heterogeneity} hold, and $\eta^t \leq \min\left\{\frac{1}{16S\tau},  \frac{1}{8SN^2\tau\sum_{n=1}^N\frac{a_n^2}{q_n}}\right\}$, 
\begin{equation}\label{bound-v2-nc-partial}
\begin{aligned}
\frac{1}{T}\sum_{t=0}^{T-1}\eta^t\mathbb{E}\left[\left\Vert\nabla_{\boldsymbol{x}}f\left(\boldsymbol{x}^{t}\right)\right\Vert^2\right]\!\leq\!\frac{4}{T\tau}\left(f\left(\boldsymbol{x}_{0}\right)-f^\ast\right)\!+\!\frac{8NS\tau}{T}\!\sum_{n=1}^N\!\frac{a_n^2+1}{q_n}\left(\sigma_n^2\!+\!\epsilon^2\right)\sum_{t=0}^{T-1}\left(\eta^t\right)^2.
\end{aligned}
\end{equation}
\end{theorem}
The proofs are given in Appendices \ref{proof-v1-partial}-\ref{proof-v2-partial}.



\textbf{Impact of partial participation}. In practical cross-device settings, some clients may not participate in all rounds of training, i.e., $q_n<1$ for some $n$. This brings an additional term $G^2/q_n$ to the convergence bound (e.g., see (\ref{bound-v1-gc-full}) and (\ref{bound-v1-gc-partial})), meaning that partial participation worsens SFL performance. This is also observed in FL literature (e.g., \cite{wang2022unified}) and is consistent with our experimental results.  


\section{Experimental  Results}\label{sec: numerical}

\subsection{Setup} \label{subsec:setup}
We conduct experiments on CIFAR-10 and CIFAR-100 \cite{cifar10}.\footnote{More experiments on FEMNIST are given in Appendix \ref{sec: FEMNIST}.} 
To simulate data heterogeneity, we adopt the widely used Dirichlet distribution \cite{hsu2019measuring} with a controlling parameter $\beta$. Here, a smaller $\beta$ corresponds to a higher level of data heterogeneity across clients. We use ResNet-18, which contains four blocks, as the model structure and consider four types of model splitting represented by $L_c=\{1,2,3,4\}$, where $L_c=n$ means the model is split after the $n$-th residual block. We consider two major distributed approaches as the benchmark, i.e., FL (in particular FedAvg \cite{mcmahan2017communication}) and SL \cite{vepakomma2018split}. The learning rates for SFL-V1, SFL-V2, FL, and SL are set as $0.01$. The batch-size $b_s$ is 128, and we run experiments for $T=200$ rounds. Unless stated otherwise, we use $N=10$, $\beta=0.1$, $E=5$, where $E$ is the number of local epochs for client-side model aggregation (i.e., every $E$ times of training performed over each client's dataset, their client-side models are aggregated at the fed server), and hence $\tau=\lceil \frac{D_n}{b_s}\rceil\times E$. We set $\tau=\tilde{\tau}$ for the fair comparison to vanilla FL. The experiments are run on a CPU (Intel(R) Xeon(R) Gold 5320 at 2.20GHz) and a GPU (A100-PCIE-80GB).
\textbf{Our codes are provided in} \url{https://github.com/TIANGeng708/Convergence-Analysis-of-Split-Federated-Learning-on-Heterogeneous-Data}.

\begin{figure}[t]
	\centering
        \subfloat[SFL-V1 on CIFAR-10.]{\includegraphics[width=1.4in]{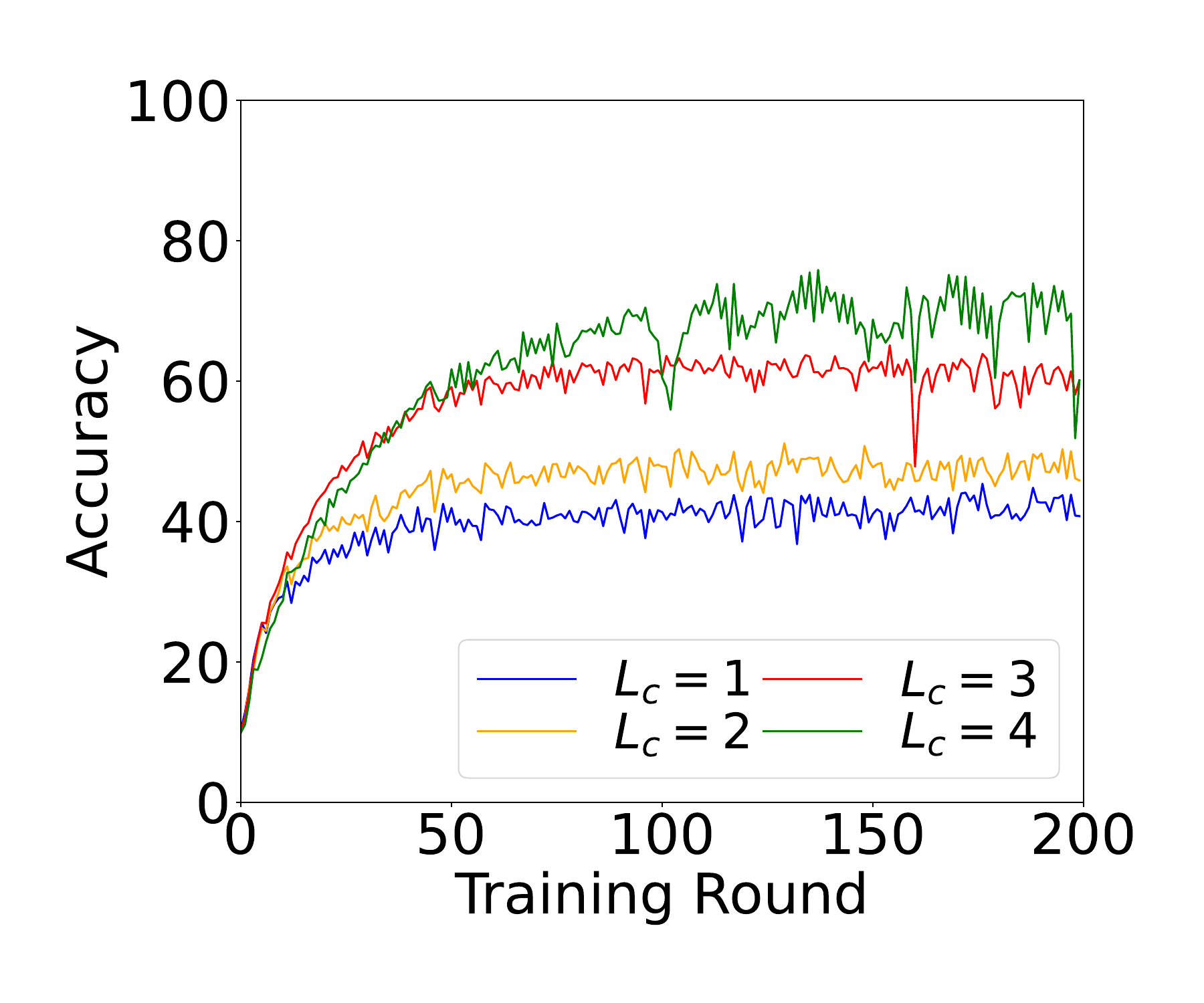}}
	\subfloat[SFL-V2 on CIFAR-10.]{\includegraphics[width=1.4in]{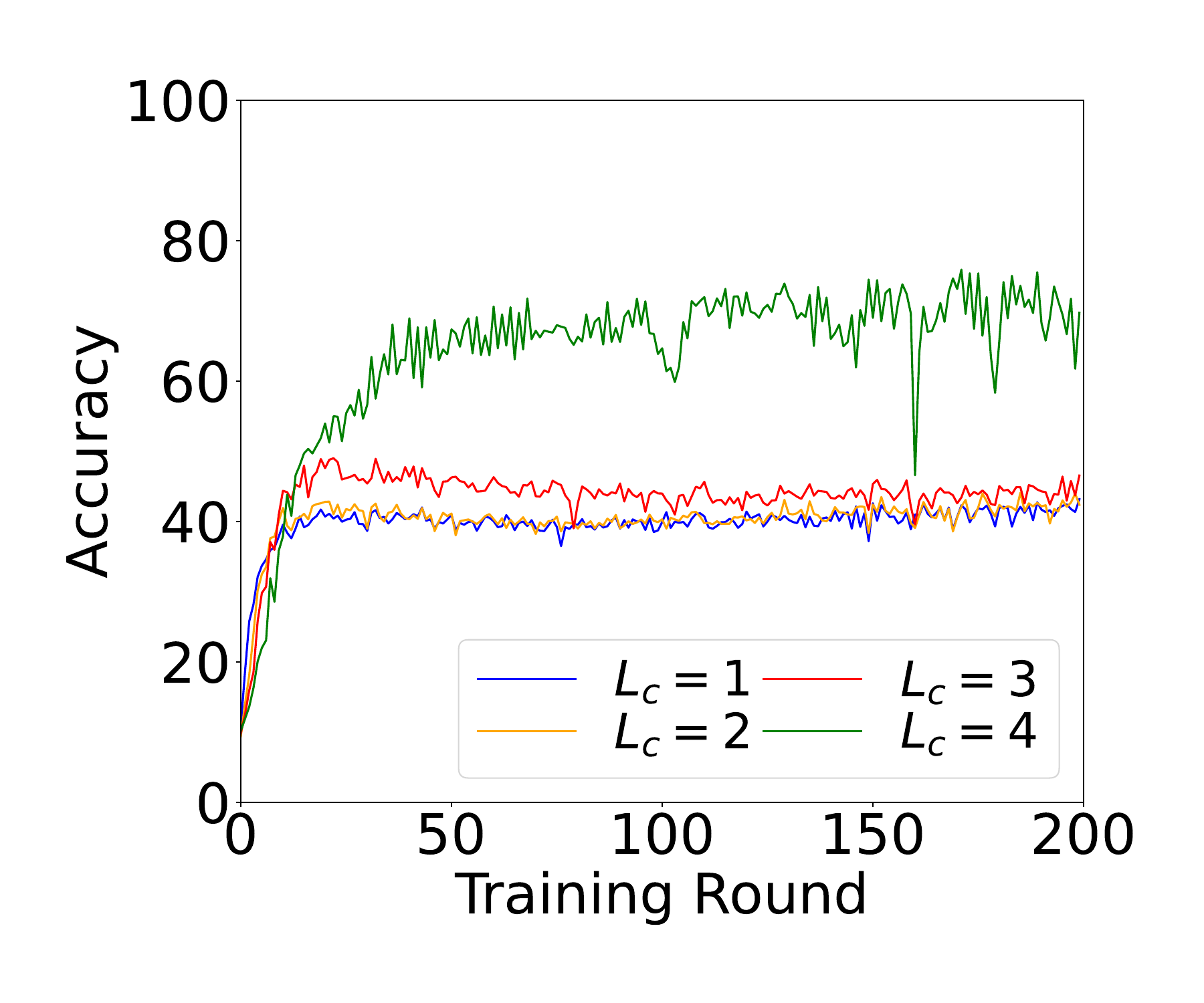}}
	\subfloat[SFL-V1 on CIFAR-100.]{\includegraphics[width=1.4in]{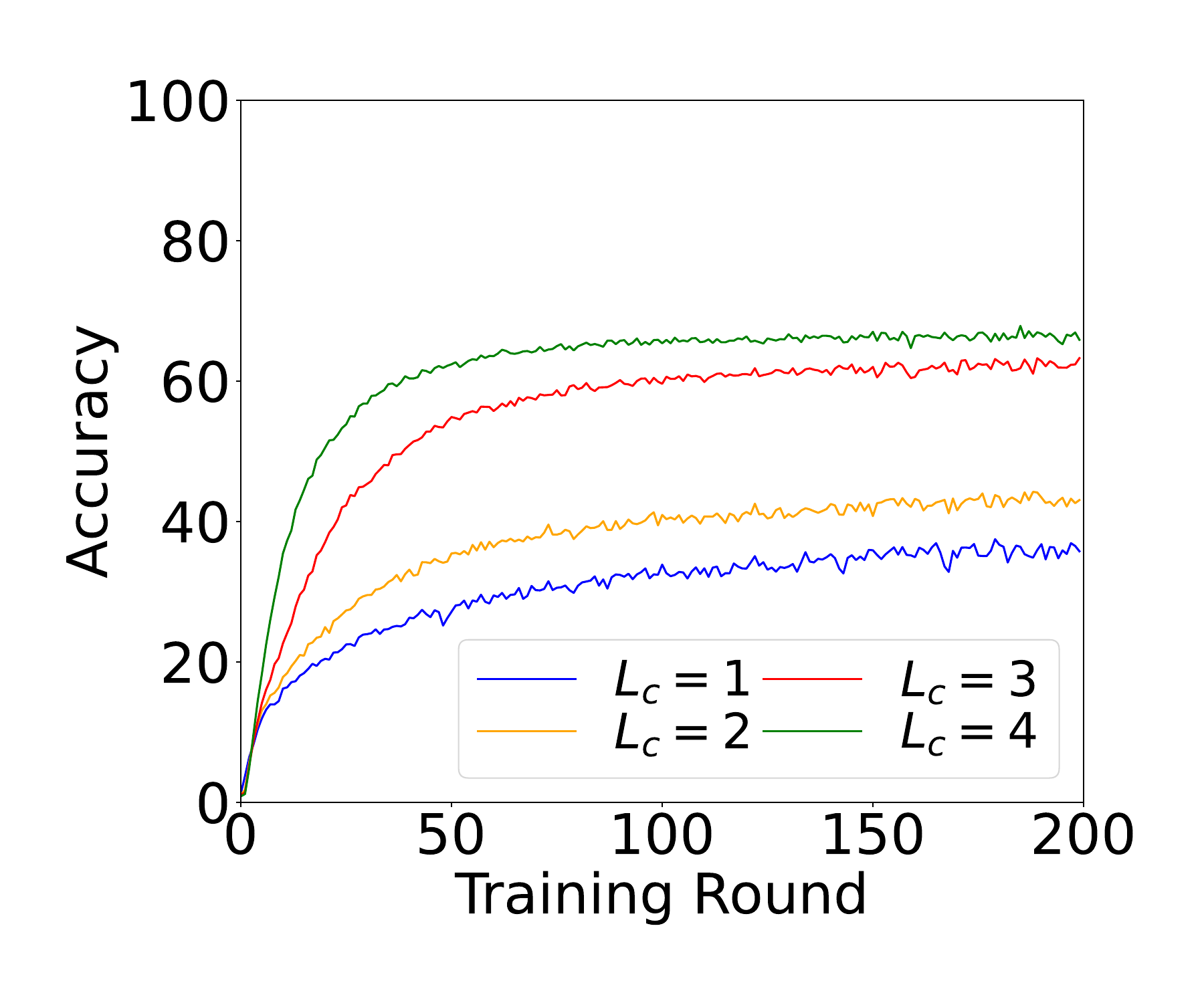}}
	\subfloat[SFL-V2 on CIFAR-100.]{\includegraphics[width=1.4in]{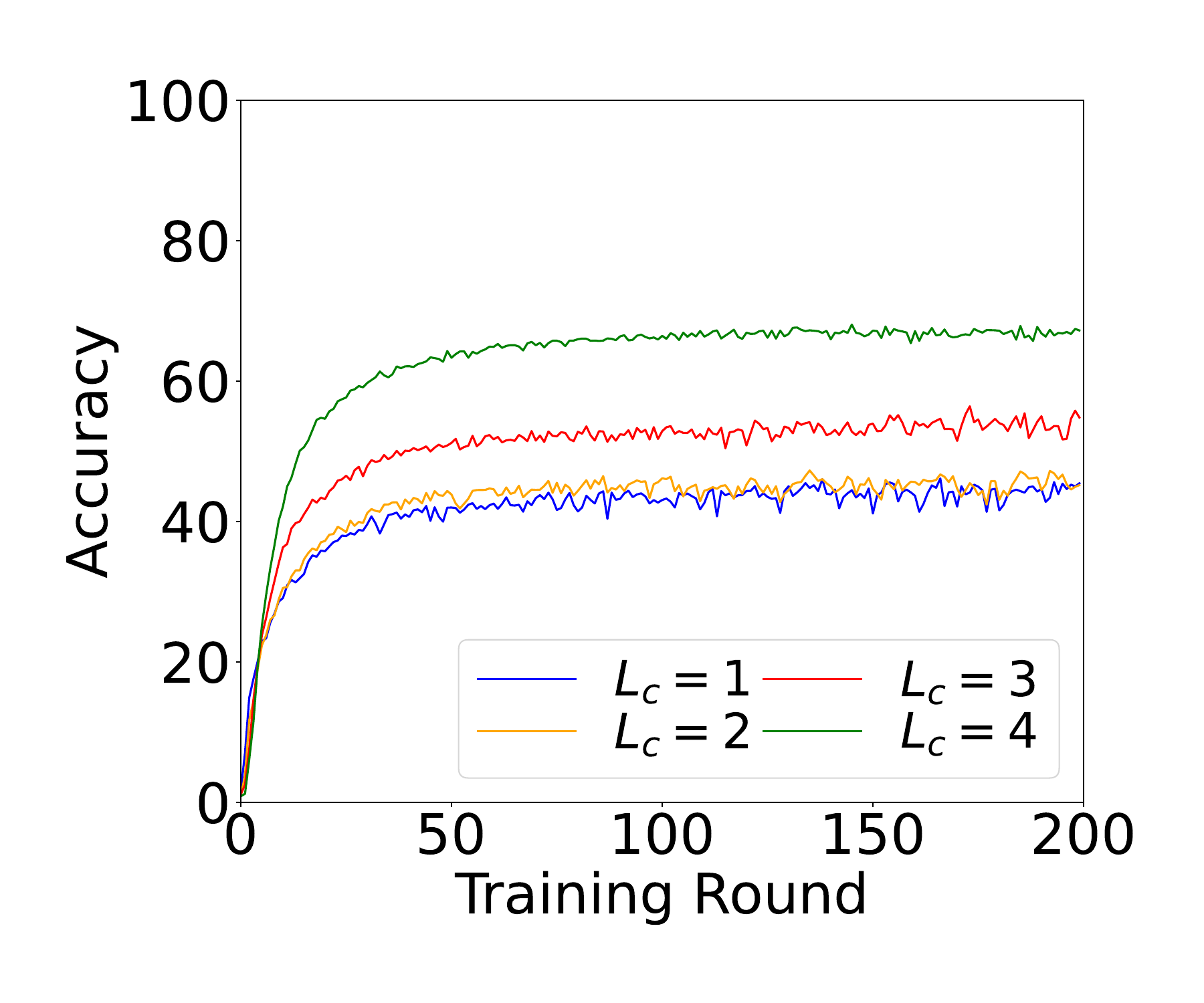}}
	\caption{Impact of the choice of cut layer on SFL performance.}
	\label{Impact_cut-layer}
  \vspace{-15pt}
\end{figure}
\begin{figure}[t]
	\centering
        \subfloat[SFL-V1 on CIFAR-10.]{\includegraphics[width=1.4in]{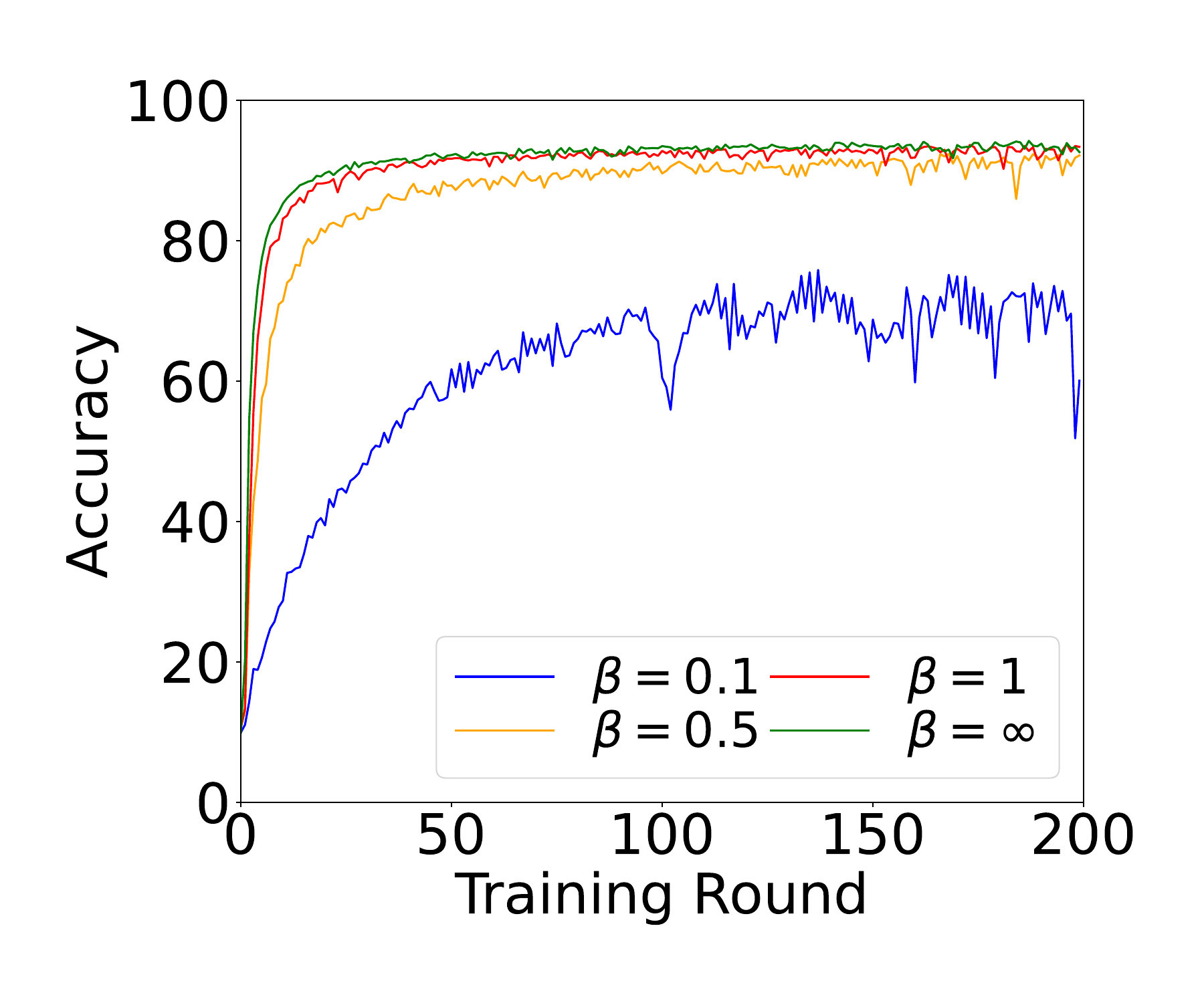}}
	\subfloat[SFL-V2 on CIFAR-10.]{\includegraphics[width=1.4in]{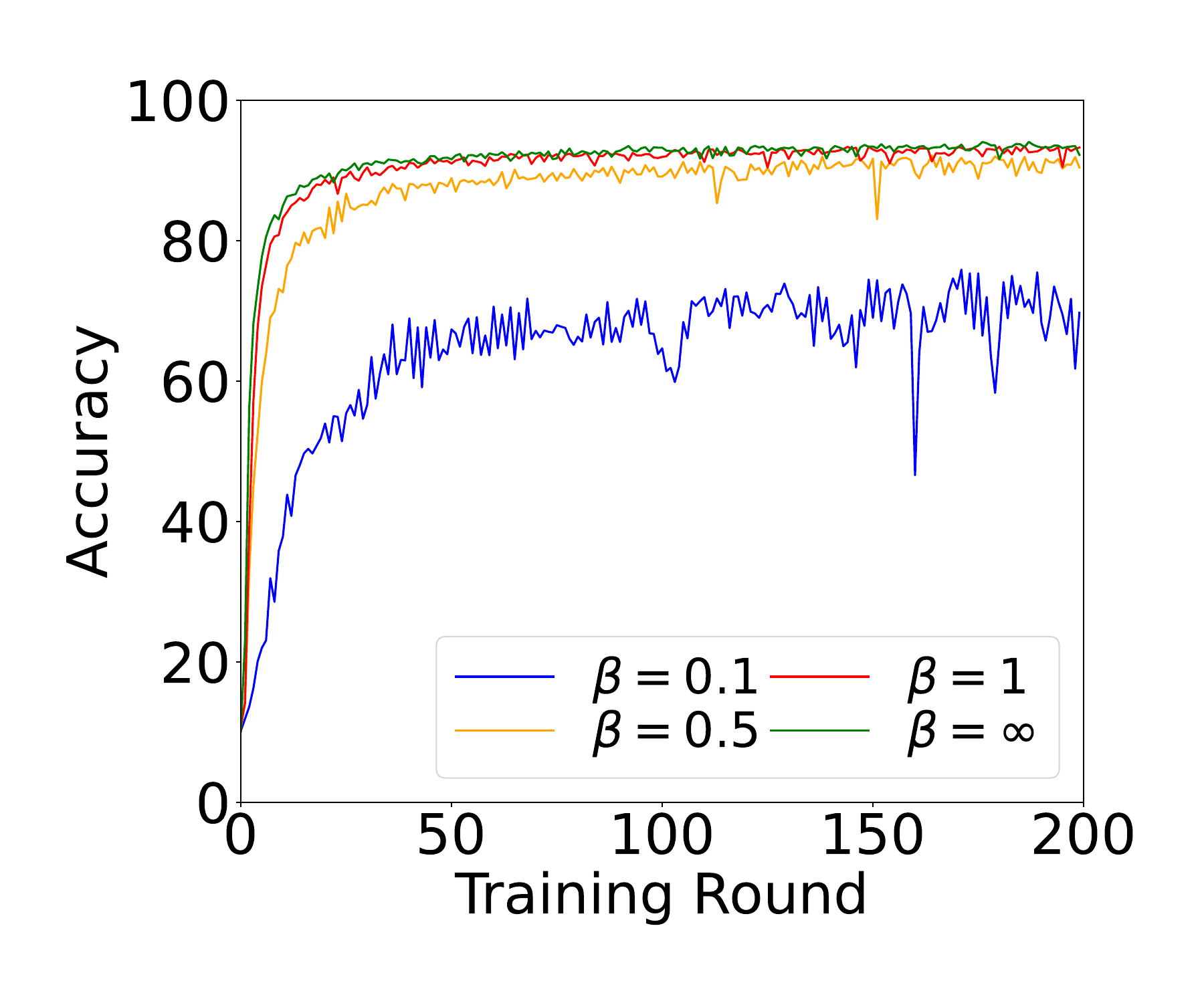}}
	\subfloat[SFL-V1 on CIFAR-100.]{\includegraphics[width=1.4in]{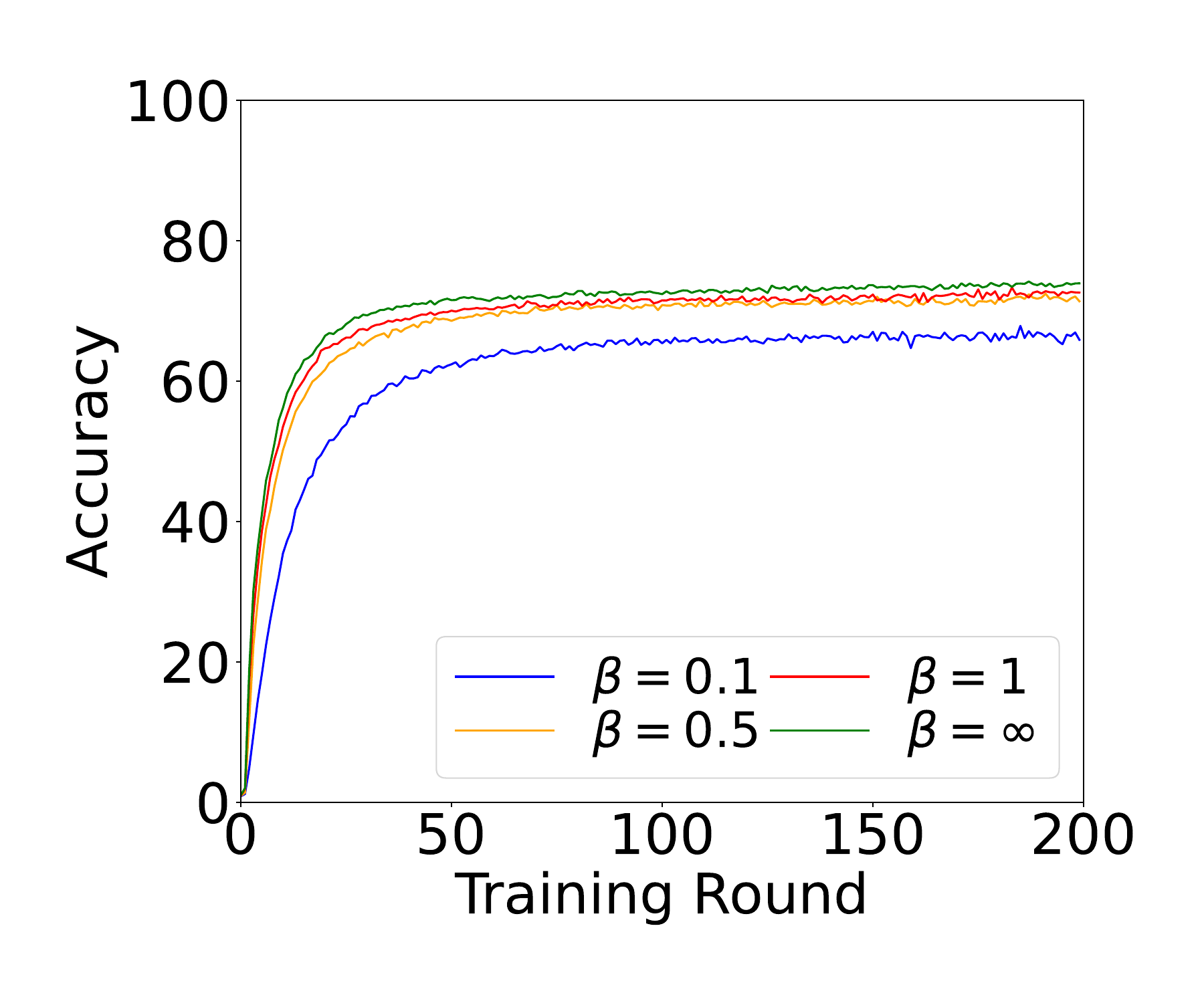}}
	\subfloat[SFL-V2 on CIFAR-100.]{\includegraphics[width=1.4in]{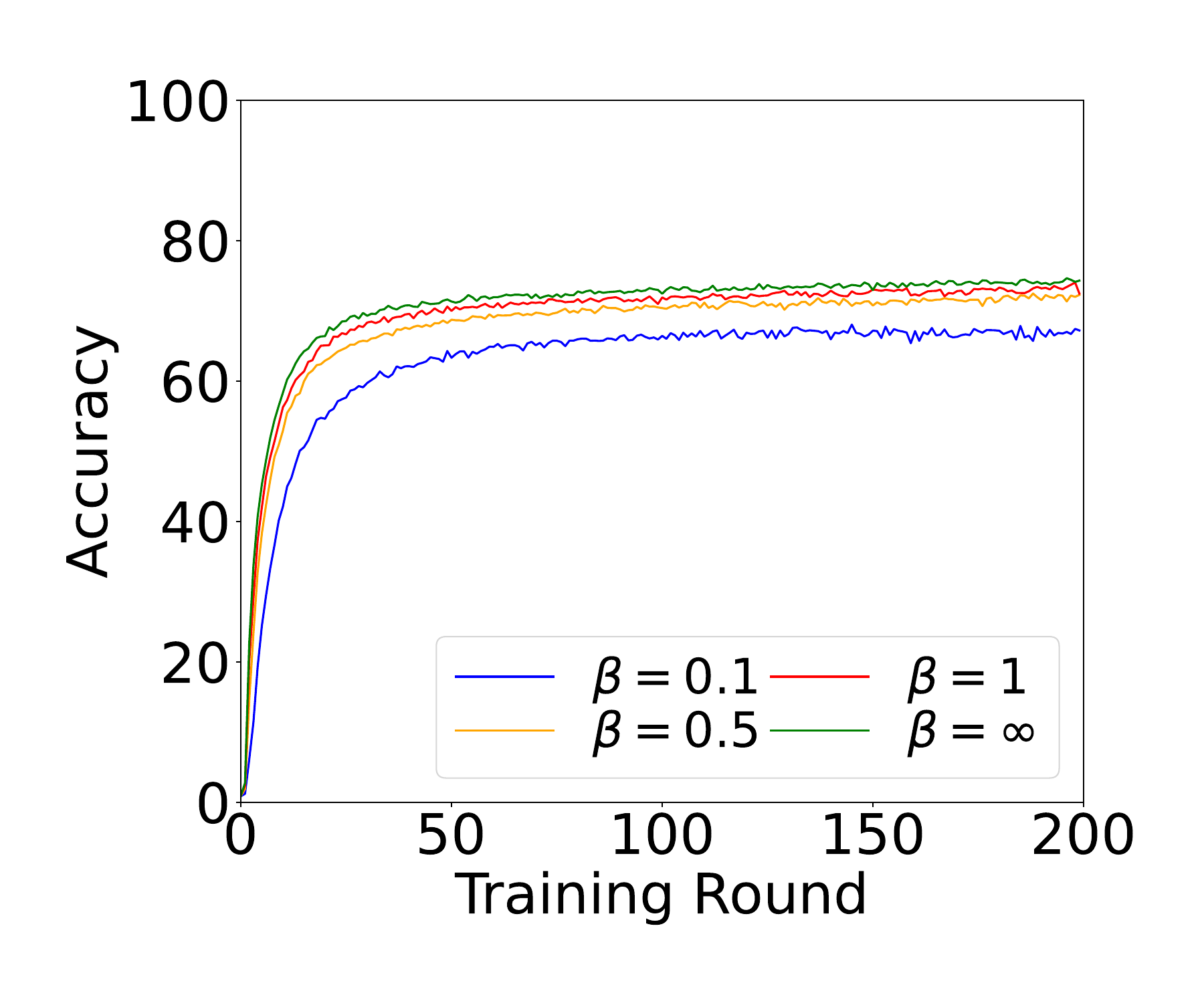}}
	\caption{Impact of data heterogeneity on SFL performance.}
	\label{Impact-data-heteogeneity}
 \vspace{-15pt}
\end{figure}

\subsection{Impact of system parameters on SFL performance}

\textbf{Impact of cut layer}. We first investigate how the choice of the cut layer $L_c$ affects the SFL performance. The results are reported in 
Fig. \ref{Impact_cut-layer}.  We observe that for both SFL-V1 and SFL-V2, the performance increases in $L_c$ (i.e., clients have a larger proportion of the global model). This is associated with our empirical observation that the average client gradient variance gets smaller with $L_c$. Intuitively, a smaller gradient variance implies a lower degree of the client drift issue, which leads to a better algorithm performance.\footnote{See Appendix \ref{sec: cut layer} for more detailed discussions on this point. } Based on this observation, we use $L_c=4$ for SFL (and SL) for the following experiments.

\begin{figure}[t]
	\centering
        \subfloat[SFL-V1 on CIFAR-10.]{\includegraphics[width=1.4in]{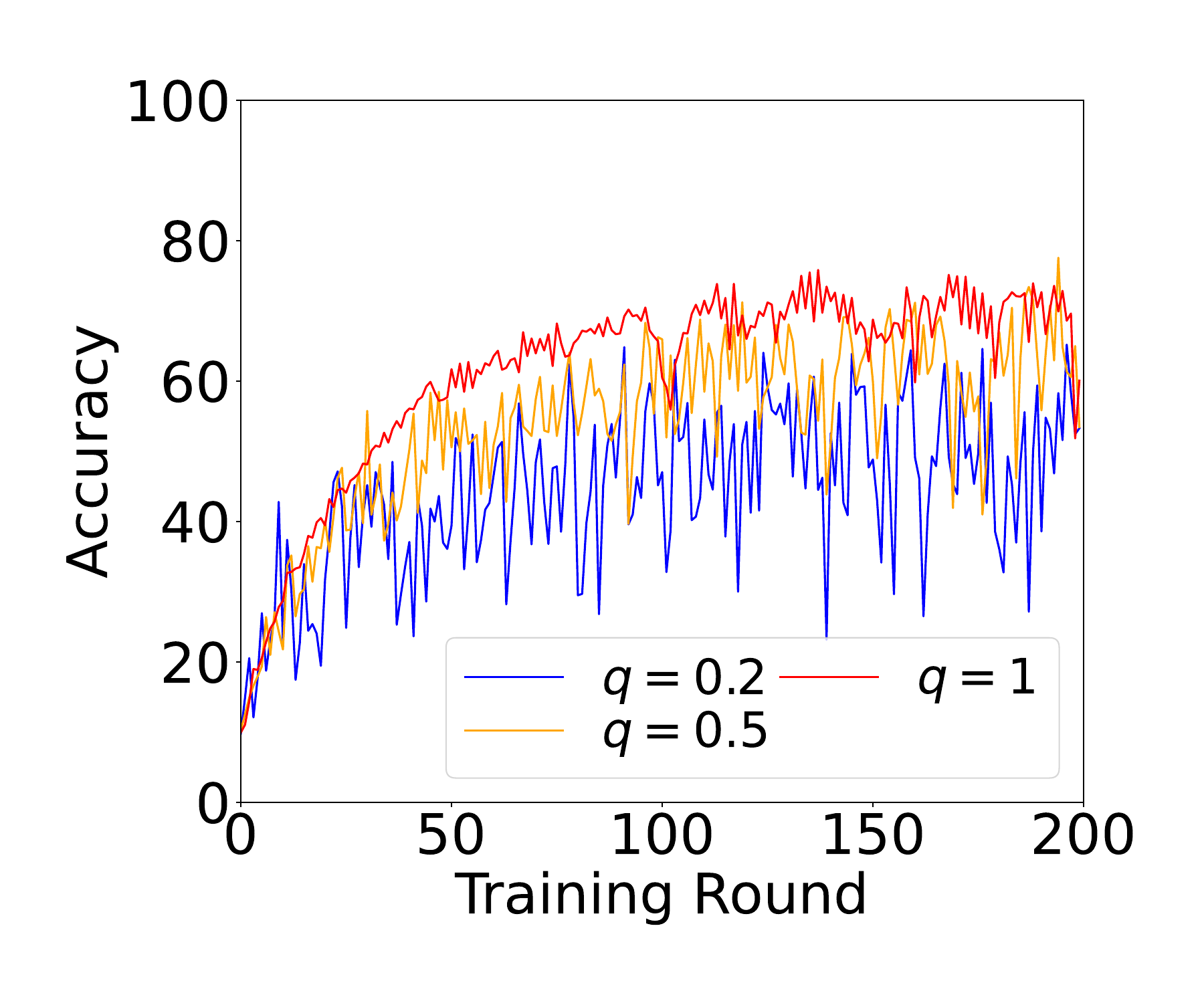}}
	\subfloat[SFL-V2 on CIFAR-10.]{\includegraphics[width=1.4in]{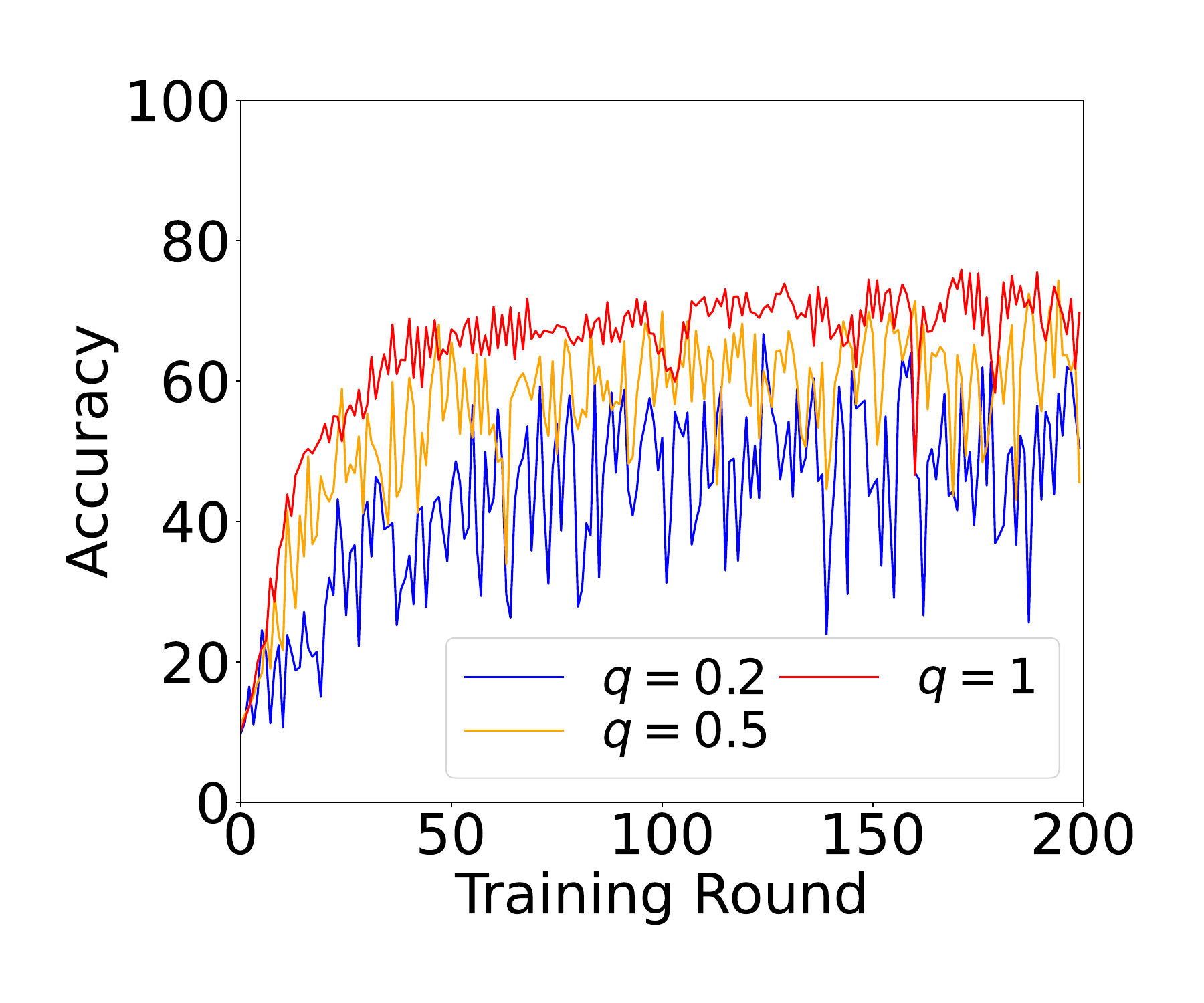}}
	\subfloat[SFL-V1 on CIFAR-100.]{\includegraphics[width=1.4in]{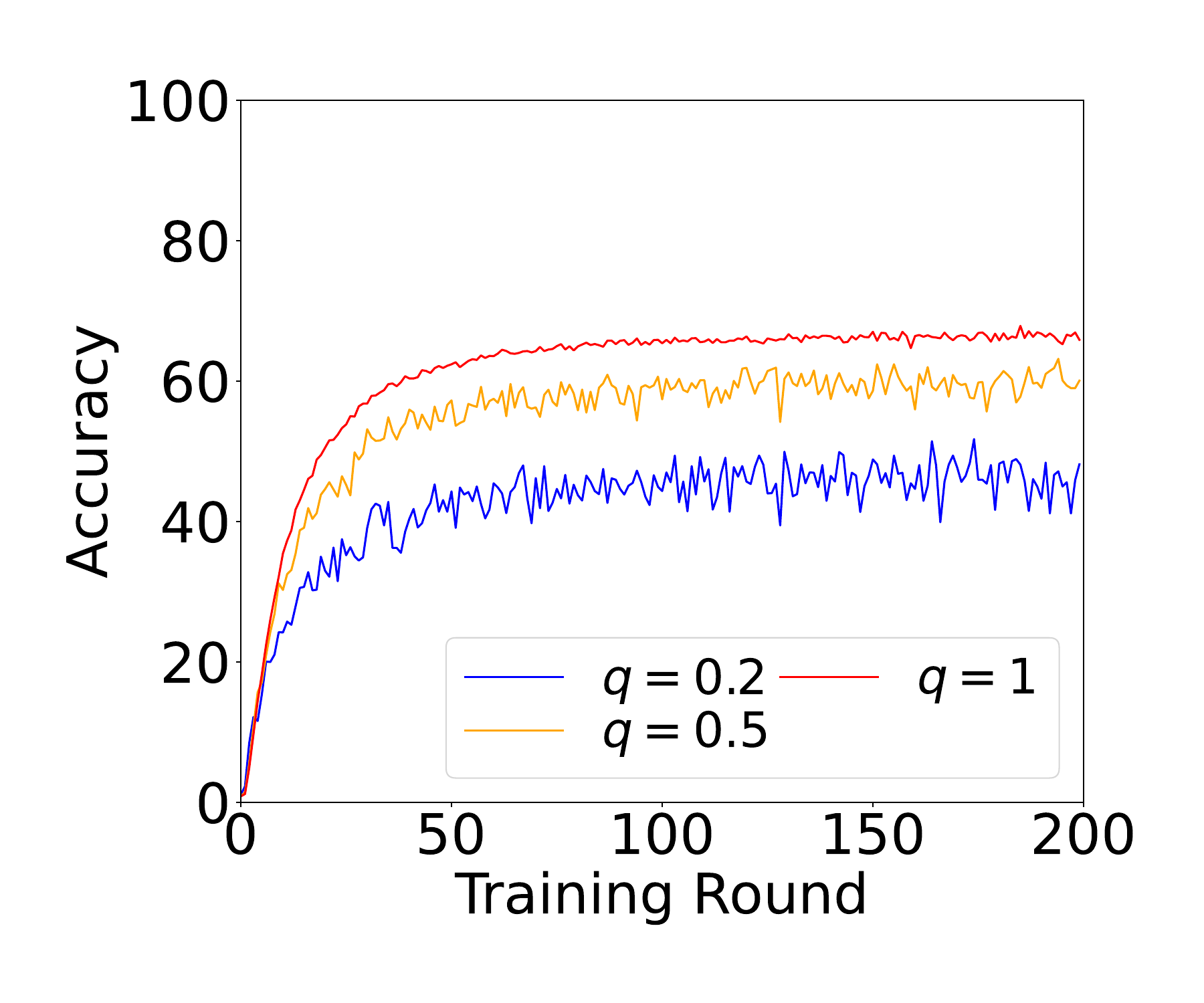}}
	\subfloat[SFL-V2 on CIFAR-100.]{\includegraphics[width=1.4in]{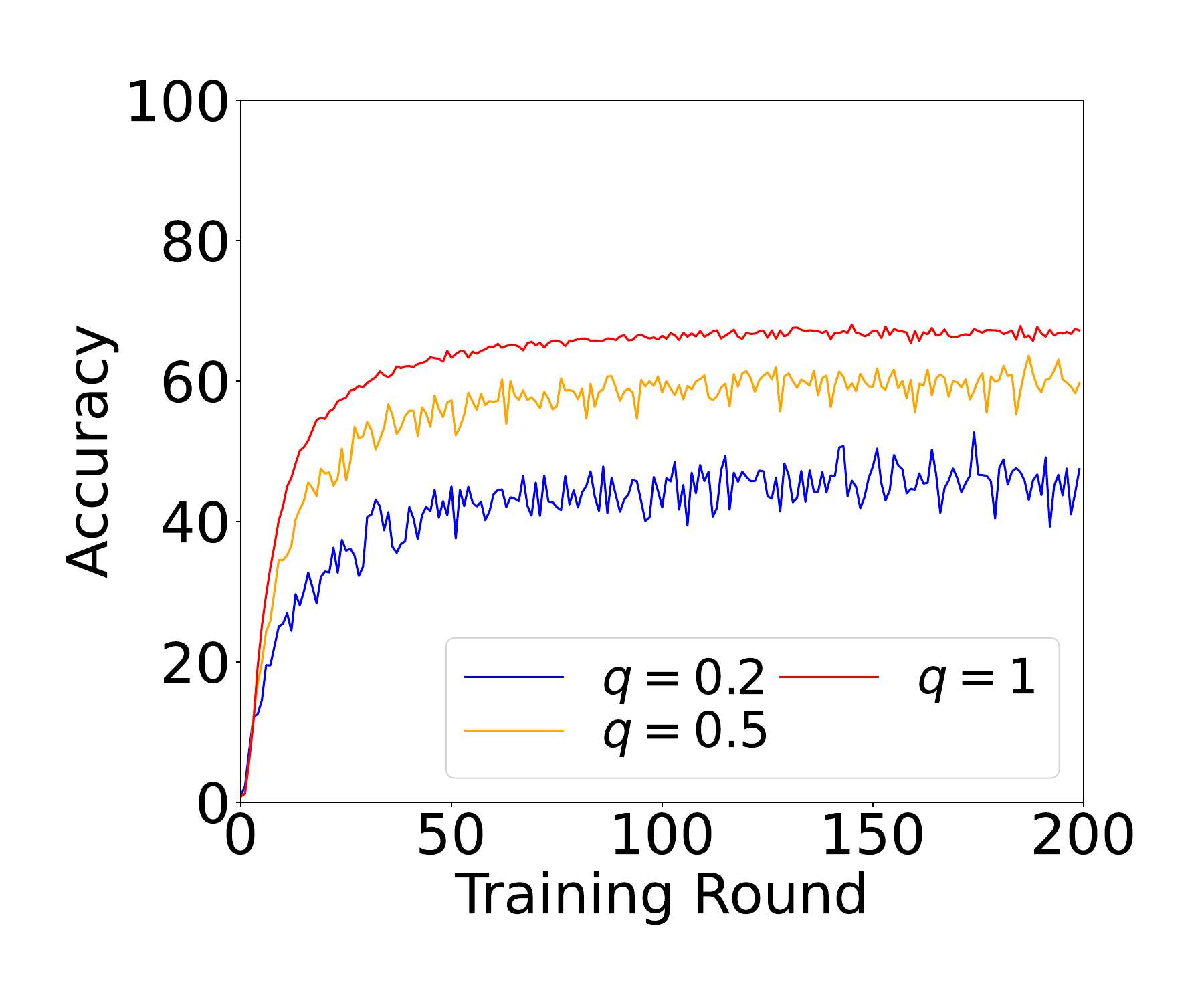}}
	\caption{Impact of client participation on SFL performance.}
	\label{Impact-client-participation}
  \vspace{-15pt}
\end{figure}

\begin{figure}[t]
	\centering
        \subfloat[{$\beta=0.5, N=10$.}]{\includegraphics[width=1.4in]{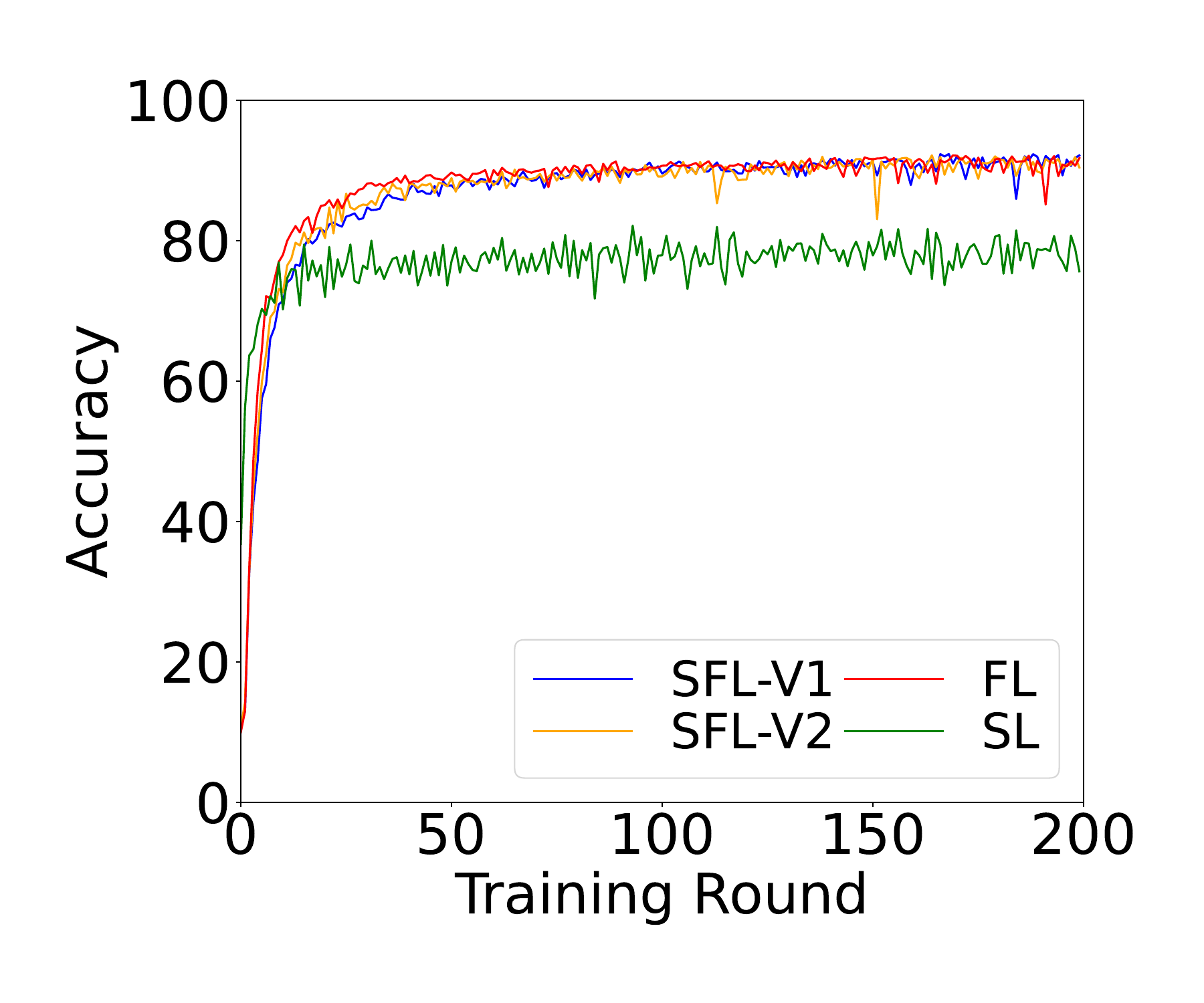}}
	\subfloat[$\beta=0.1, N=10$.]{\includegraphics[width=1.4in]{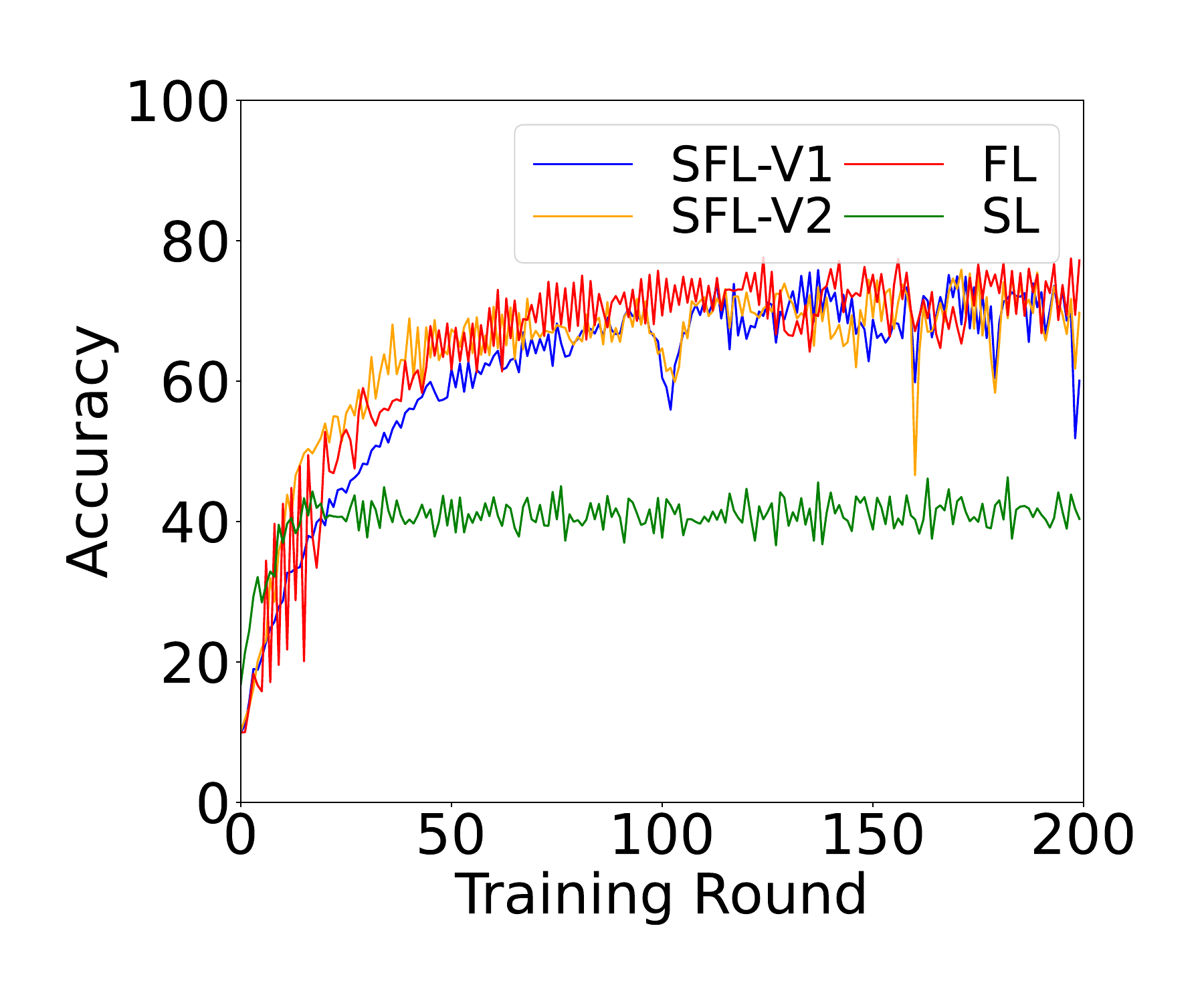}}
        \subfloat[$\beta=0.1, N=50$.]{\includegraphics[width=1.4in]{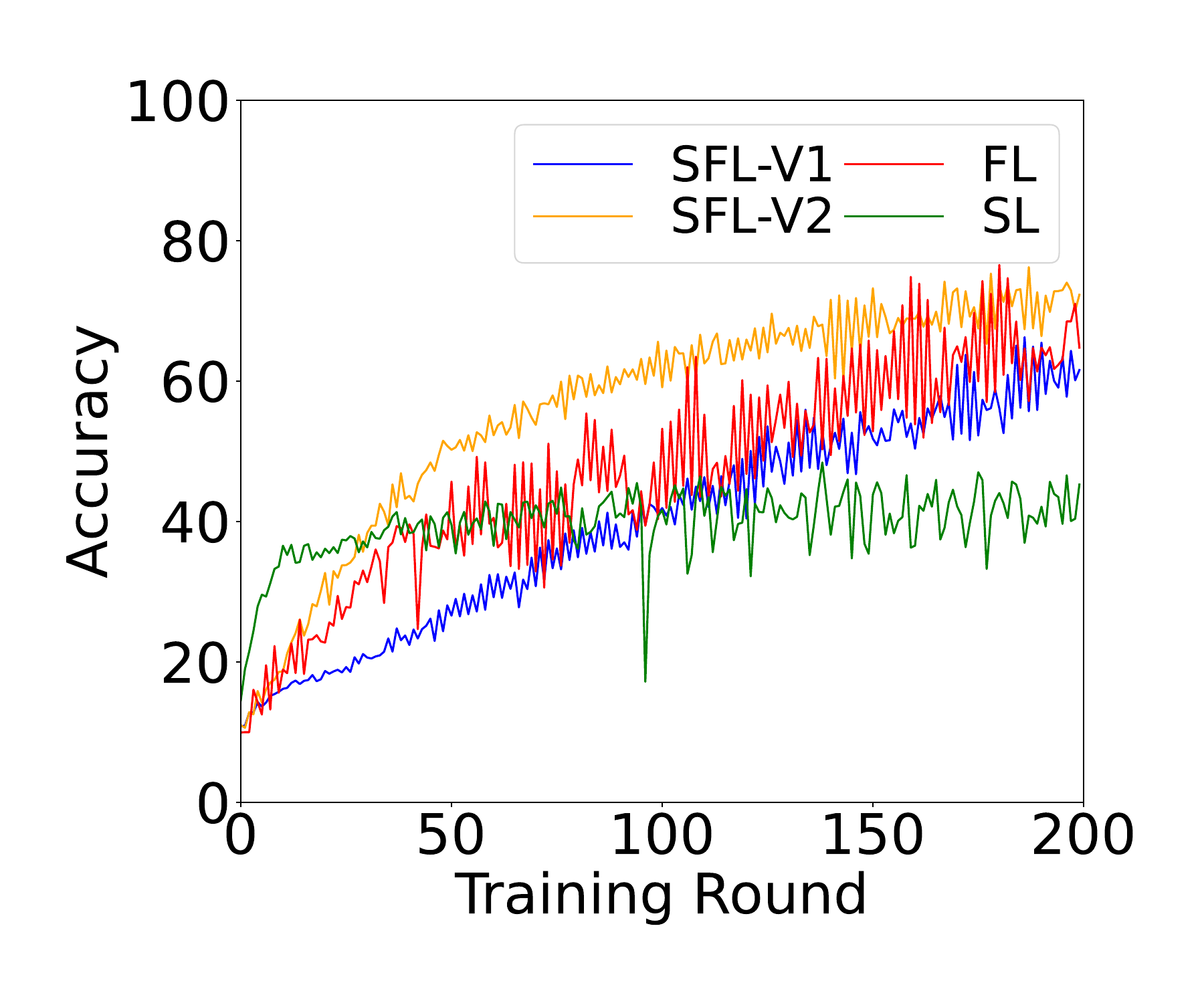}}
	\subfloat[$\beta=0.1, N=100$.]{\includegraphics[width=1.4in]{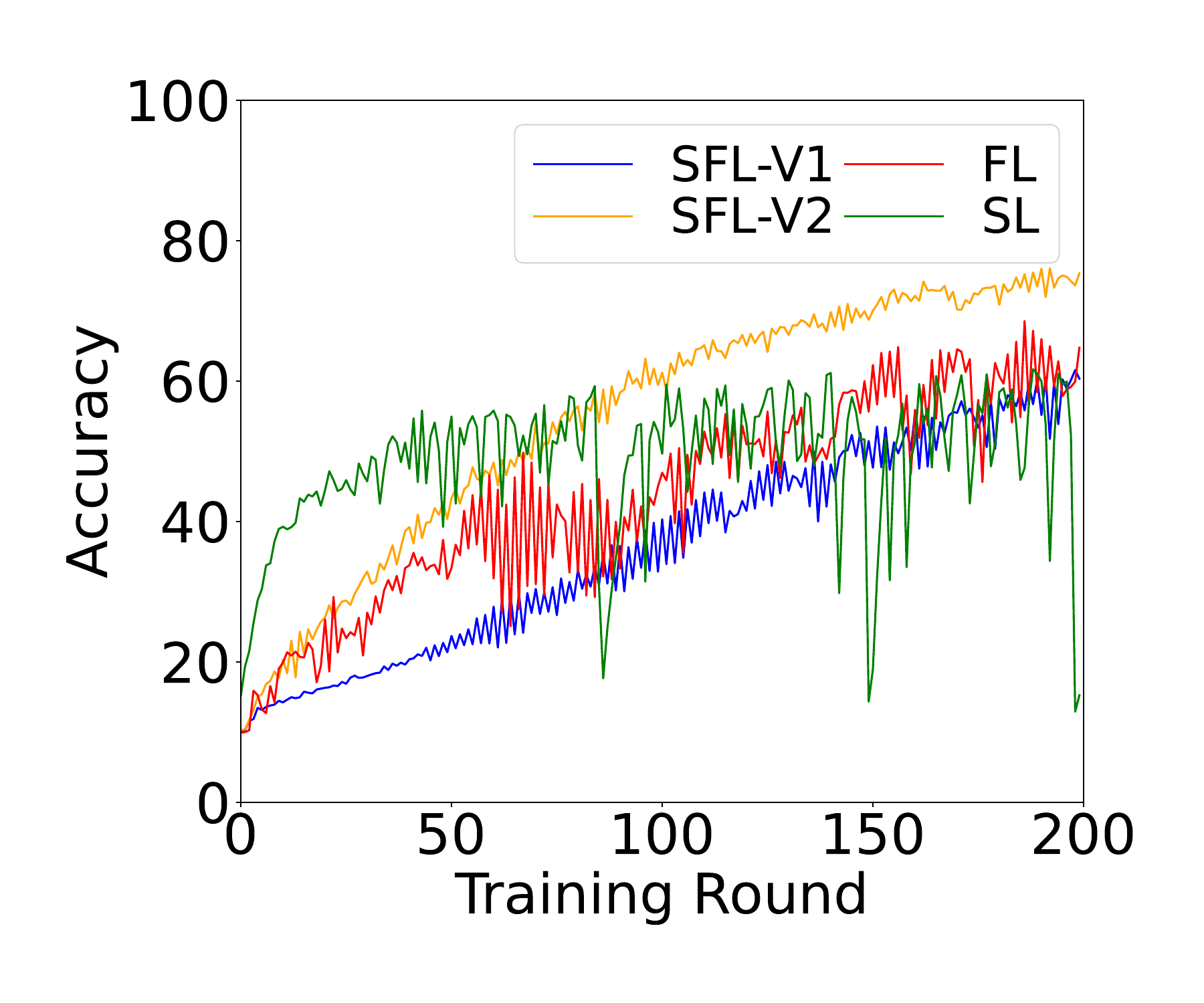}}
	\caption{Performance comparison  on CIFAR-10.}
	\label{comparison-three}
  \vspace{-15pt}
\end{figure}

\textbf{Impact of data heterogeneity}. We study the impact of data heterogeneity on SFL performance, where we use $\beta\in \{0.1, 0.5, 1, \infty\}$, and $\beta=\infty$ means clients have IID data. The results are reported in Fig. \ref{Impact-data-heteogeneity}. We observe that a higher level of data heterogeneity (i.e., a smaller $\beta$) leads to slower algorithm convergence and a lower accuracy for both SFL-V1 and SFL-V2. 
 The observation is consistent with our convergence bound, e.g., in (\ref{bound-v2-nc-full}), the performance bound increases in $\epsilon^2$. Note that the negative impact of heterogeneity is commonly observed in distributed learning literature including FL \cite{huang2023duopoly} and SL \cite{shen2023ringsfl}. 

\textbf{Impact of partial participation}. We study the impact of client participation and let $q_n=q\in \{0.2, 0.5, 1\}, \forall n.$ The results are reported in Fig. \ref{Impact-client-participation}. We observe that a lower level of participation leads to less stable convergence and also a smaller accuracy. This is consistent with our convergence results, e.g., in (\ref{bound-v2-sc-partial}), the bound decreases in clients' participation level $q_n$. 
Partial participation is expected in practical cross-device scenarios where clients are resourced-constrained edge devices. It is important to develop efficient algorithms as well as effective incentive mechanisms to encourage clients' participation in SFL.




\subsection{Comparison among SFL, FL, and SL.} We now compare SFL to FL and SL. 
We consider different combinations of data heterogeneity $\beta\in \{0.1, 0.5\}$ and cohort sizes $N\in \{10, 50, 100\}$. The results are reported in Fig. \ref{comparison-three}. When data is mildly heterogeneous (i.e., $\beta=0.5$), SFL and FL have similar convergence rates and accuracy performance. Note that SL seems to under-perform SFL and FL. We think this is mainly due to the catastrophic forgetting issue, which has been observed in \cite{shen2023ringsfl,gao2020end}.

\textbf{SFL outperforms FL and SL under highly heterogeneous data and a large client number}. When data becomes more non-IID (i.e., $\beta=0.1$), SFL-V2 tends to outperform FL and SL. The improvement becomes more significant as the cohort size gets larger. 
The bottleneck of FL is the client drift issue caused by data heterogeneity. The  bottleneck of SL is associated with the catastrophic forgetting. SFL-V2 is a hybrid combination of FL and SL, which can lead to a better tradeoff between client drift and forgetting. By appropriately choosing the cut layer, SFL-V2 outperforms FL and SL.
This observation also indicates that SFL-V2 can be a more appealing solution than FL for practical cross-device systems, as it achieves a better performance while requiring smaller computation overheads from edge devices.


\section{Conclusion}\label{sec: conclusion}
In this work, we provided the first comprehensive convergence analysis of SFL for strongly convex, general-convex, and non-convex objectives on heterogeneous data. One key challenge is the dual-paced model updates. We get around this issue by decomposing the performance gap of the global model into the client-side and server-side gaps. We further extend our analysis to the more practical scenario with partial client participation.  Experimental experiments validate our theories and further show that SFL can outperform FL and SL under highly heterogeneous data and a large client number. 
One limitation of our work is that our bounds for SFL achieve the same order (in terms of training rounds) as in FL, yet the experiments showed that SFL outperforms FL under high heterogeneity. This is possibly due to that tighter bounds for SFL are to be derived, which is an important future work.
For future work, one can apply our derived bounds to optimize SFL system performance, considering model accuracy,  communication overhead, and computational workload of clients.  It is also interesting to theoretically analyze how the choice of the cut layer affects the SFL performance. 

\clearpage



\bibliography{ref}
\bibliographystyle{plain}






\clearpage
\appendix

\section{Appendix / supplemental material}

We organize the entire appendix file as follows:

\textbf{In Sec. \ref{appendix: algorithm description}, we provide detailed algorithmic descriptions.}

\textbf{In Sec. \ref{appendix: notation and lemmas}, we provide notations and some technical lemmas. }
\begin{itemize}
\item In Sec. \ref{appendix: notation}, we provide some notations
\item In Sec. \ref{appendix: recall SFL}, we recall SFL-V1 and SFL-V2
\item In Sec. \ref{appendix: assumption}, we recall the assumptions
\item In Sec. \ref{appendix: technical lemmas}, we provide some useful technical lemmas together with their proofs
\end{itemize}

\textbf{In Sec. \ref{proof-v1-full}, we prove Theorem \ref{theorem-v1-full}, i.e., convergence of \underline{SFL-V1 under full participation}.}
\begin{itemize}
\item In Sec. \ref{proof-v1-sc-full}, we prove the  strongly convex case
\item In Sec. \ref{proof-v1-gc-full}, we prove the general convex case
\item In Sec. \ref{proof-v1-nc-full}, we prove the non-convex case
\end{itemize}

\textbf{In Sec. \ref{proof-v2-full}, we prove Theorem \ref{theorem-v2-full}, i.e., convergence of \underline{SFL-V2 under full participation}.}
\begin{itemize}
\item In Sec. \ref{proof-v2-sc-full}, we prove the  strongly convex case
\item In Sec. \ref{proof-v2-gc-full}, we prove the general convex case
\item In Sec. \ref{proof-v2-nc-full}, we prove the non-convex case
\end{itemize}

\textbf{In Sec. \ref{proof-v1-partial}, we prove Theorem \ref{theorem-v1-partial}, i.e., convergence of \underline{SFL-V1 under partial participation}.}
\begin{itemize}
\item In Sec. \ref{proof-v1-sc-partial}, we prove the  strongly convex case
\item In Sec. \ref{proof-v1-gc-partial}, we prove the general convex case
\item In Sec. \ref{proof-v1-nc-partial}, we prove the non-convex case
\end{itemize}

\textbf{In Sec. \ref{proof-v2-partial}, we prove Theorem \ref{theorem-v2-partial}, i.e., convergence of \underline{SFL-V2 under partial participation}.}
\begin{itemize}
\item In Sec. \ref{proof-v2-sc-partial}, we prove the  strongly convex case
\item In Sec. \ref{proof-v2-gc-partial}, we prove the general convex case
\item In Sec. \ref{proof-v2-nc-partial}, we prove the non-convex case
\end{itemize}

\textbf{In Sec. \ref{appendix: comparison}, we SFL to other distributed approaches, i.e., FL, SL, and Mini-Batch SGD.}
\begin{itemize}
\item In Sec. \ref{appendix: bound comparison}, we compare their convergence bounds
\item In Sec. \ref{appendix: latency}, we compare their overheads in terms of communication and computation
\end{itemize}

\textbf{In Sec. \ref{appendix: experiments}, we provide more experimental results.}




\newpage 
\section{Algorithm description} \label{appendix: algorithm description}
For version 1, the client-side model parameter and M-server-side model parameter are aggregated every $\tau$ and $\tilde{\tau}$ iterations, respectively. 
In iteration $i$ of round $t$, each client $n$ samples a mini-batch of data $\zeta_n^{t,i}$ from $\mathcal{D}_n$, computes the intermediate features $h(\boldsymbol{x}_{c,n}^{t,i};\zeta_n^{t,i})$  (e.g., activation values at the cut layer) over its current model $\boldsymbol{x}_{c,n}^{t,i}$, and sends $h(\boldsymbol{x}_{c,n}^{t,i};\zeta_n^{t,i})$  to the M-server. 
For each client $n$, the M-server computes the loss $F_n(h(\boldsymbol{x}_{c,n}^{t,i};\zeta_n^{t,i}), \boldsymbol{x}_{s,n}^{t,i})$ based on $\boldsymbol{x}_{s,n}^{t,i}$. 
Let $\nabla$ denote a gradient operator and $\nabla_{\boldsymbol{w}} F$ represents the gradient of $F$ w.r.t. $\boldsymbol{w}$. The M-server computes the M-server-side gradient $\boldsymbol{g}_{s,n}^{t,i}(\boldsymbol{x}_{s,n}^{t,i}; \zeta_n^{t,i})=\nabla_{\boldsymbol{x}_s}F_n(h(\boldsymbol{x}_{c,n}^{t,i};\zeta_n^{t,i}),\boldsymbol{x}_{s,n}^{t,i})$, the gradient over the intermediate features (activations) at the cut layer $\boldsymbol{r}_{c,n}^{t,i}(\boldsymbol{x}^{t,i}_{s,n};\zeta_n^{t,i})=\nabla_{h}F_n(h(\boldsymbol{x}_{c,n}^{t,i};\zeta_n^{t,i}),\boldsymbol{x}_{s,n}^{t,i})$, and sends $\boldsymbol{r}_{c,n}^{t,i}(\boldsymbol{x}^{t,i}_{s,n};\zeta_n^{t,i})$ to client $n$.   Each client $n$ computes the client-side gradient $\boldsymbol{g}_{c,n}^{t,i}(\boldsymbol{x}^{t,i}_{c,n};\zeta_n^{t,i})$ based on  $\boldsymbol{r}_{c,n}^{t,i}(\boldsymbol{x}^{t,i}_{s,n};\zeta_n^{t,i})$ using the chain rule.

For version 2, the client-side model is aggregated every $\tau$ iterations, while the M-server trains only one version of the M-server-side model.

\begin{center}
\begin{minipage}{.78\linewidth}
\begin{algorithm}[H]

	\caption{SFL-V1 under clients' partial participation} 
	\small
	{
		
		\KwIn{$\tau, \tilde{\tau}, T$, and learning rate $\eta_t$}
		\KwOut{Global model $\boldsymbol{x}^T=\{\boldsymbol{x}_c^T, \boldsymbol{x}_s^T\}$}
		
		\SetKwFor{EachClient}{each client $n \in \mathcal{P}^t$:}{}{}
  \SetKwFor{EachClientall}{each client $n \in \mathcal{N}$:}{}{}
		\SetKwFor{TheMainServer}{the M-server:}{}{}
  \SetKwFor{TheFedServer}{the F-server:}{}{}
		
		Initialize $\boldsymbol{x}^0=\{\boldsymbol{x}_c^0,\boldsymbol{x}_s^0\}$;
		
		\For{$i = 0, \ldots, (T-1)\tau_{\max}$}
		{
			Determine participating client set $\mathcal{P}^t\subseteq \mathcal{N}$ according to $q_n$; 
			
			\vspace{3pt}
		 \nonl 	\textbf{Phase 1: model training.} 
			
			\EachClient{}{
                    
                {
				Sample a mini-batch $\zeta_n^{t,i}$; 
				
				Send $h(\boldsymbol{x}_{c,n}^{t,i};\zeta_n^{t,i})$ to the M-server;
    
    \TheMainServer{}{
    Compute $F_n(h(\boldsymbol{x}_{c,n}^{t,i};\zeta_n^{t,i}), \boldsymbol{x}_s^{t,i})$, $\boldsymbol{g}_{s,n}^{t,i}(\boldsymbol{x}_{s,n}^{t,i};\zeta_n^{t,i})$, and $\boldsymbol{r}_{c,n}^{t,i}(\boldsymbol{x}^{t,i}_{s,n};\zeta_n^{t,i})$;

     Send $\boldsymbol{r}_{c,n}^{t,i}(\boldsymbol{x}^{t,i}_{s,n};\zeta_n^{t,i})$ to client $n\in \mathcal{P}^t$; 

     $\boldsymbol{x}_{s,n}^{t,i+1} \leftarrow \boldsymbol{x}_{s,n}^{t,i} - \eta_t \boldsymbol{g}_{s,n}^{t,i}(\boldsymbol{x}_{s,n}^{t,i}; \zeta_n^{t,i})$;

    \label{algline:download}
    }
    Compute $\boldsymbol{g}_{c,n}^{t,i}(\boldsymbol{x}_{c,n}^{t,i}; \zeta_n^{t,i})$;

$\boldsymbol{x}_{c,n}^{t,i+1} \leftarrow \boldsymbol{x}_{c,n}^{t,i} - \eta_t \boldsymbol{g}_{c,n}^{t,i}(\boldsymbol{x}_{c,n}^{t,i};\zeta_n^{t,i})$;

				}

			}			

			\vspace{3pt}
	 \nonl 		\textbf{Phase 2: model aggregation.}

    \If{$i\%\tau=0$}
    {
      \EachClient{}{
     Send $\boldsymbol{x}_{c,n}^{t,\tau}$ to the F-server;
   }

   \TheFedServer{}{
   $\boldsymbol{x}_c^{t+1} \leftarrow \sum_{n \in \mathcal{P}^t} \frac{a_n}{q_n}\boldsymbol{x}_{c,n}^{t,\tau}$;
   }

 \EachClientall{}{
      $\boldsymbol{x}_{c,n}^{t,0} \leftarrow  \boldsymbol{x}_c^{t}$;
   }
   }
   
   \If{$i\%\tilde{\tau}=0$}
   {
    \TheMainServer{}{
    $\boldsymbol{x}_s^{t+1} \leftarrow \sum_{n \in \mathcal{P}^t} \frac{a_n}{q_n}\boldsymbol{x}_{s,n}^{t,\tilde{\tau}}$.

    $\boldsymbol{x}_{s,n}^{t,0} \leftarrow  \boldsymbol{x}_s^{t}, \forall n \in \mathcal{N}$;
    }
   }

}
}

\end{algorithm} 
\end{minipage}
\end{center}

\begin{center}
\begin{minipage}{.78\linewidth}
\begin{algorithm}[H]

	\caption{SFL-V2 under clients' partial participation} 
	\small
	{
		
		\KwIn{$\tau, T$, and learning rate $\eta_t$}
		\KwOut{Global model $\boldsymbol{x}^T=\{\boldsymbol{x}_c^T, \boldsymbol{x}_s^T\}$}
		
		\SetKwFor{EachClient}{each client $n \in \mathcal{P}^t$:}{}{}
		\SetKwFor{TheMainServer}{the M-server:}{}{}
  \SetKwFor{TheFedServer}{the F-server:}{}{}
		
		Initialize $\boldsymbol{x}^0=\{\boldsymbol{x}_c^0,\boldsymbol{x}_s^0\}$;
		
		\For{$t = 0, \ldots, T-1$}
		{
			Determine participating client set $\mathcal{P}^t\subseteq \mathcal{N}$ according to $q_n$; 
			
			\vspace{3pt}
		 \nonl 	\textbf{Phase 1: model training.} 
			
			\EachClient{}{
                    $\boldsymbol{x}_{c,n}^{t,0} \leftarrow  \boldsymbol{x}_c^{t}$;
                    
                \For{$i=0,\ldots, \tau-1$}{
				Sample a mini-batch $\zeta_n^{t,i}$; 
				
				Send $h(\boldsymbol{x}_{c,n}^{t,i};\zeta_n^{t,i})$ to the M-server;
    
    \TheMainServer{}{
    Compute $F_n(h(\boldsymbol{x}_{c,n}^{t,i};\zeta_n^{t,i}), \boldsymbol{x}_s^{t,i})$, $\boldsymbol{g}_{s,n}^{t,i}(\boldsymbol{x}_{s}^{t,i};\zeta_n^{t,i})$, and $\boldsymbol{r}_{c,n}^{t,i}(\boldsymbol{x}^{t,i}_s;\zeta_n^{t,i})$;

     Send $\boldsymbol{r}_{c,n}^{t,i}(\boldsymbol{x}^{t,i}_s;\zeta_n^{t,i})$ to client $n\in \mathcal{P}^t$; 

     $\boldsymbol{x}_{s}^{t,i+1} \leftarrow \boldsymbol{x}_{s}^{t,i} - \frac{\eta_t}{q_n}\boldsymbol{g}_{s,n}^{t,i}(\boldsymbol{x}_{s}^{t,i}; \zeta_n^{t,i})$;

    }
    Compute $\boldsymbol{g}_{c,n}^{t,i}(\boldsymbol{x}_{c,n}^{t,i}; \zeta_n^{t,i})$;

$\boldsymbol{x}_{c,n}^{t,i+1} \leftarrow \boldsymbol{x}_{c,n}^{t,i} - \eta_t \boldsymbol{g}_{c,n}^{t,i}(\boldsymbol{x}_{c,n}^{t,i};\zeta_n^{t,i})$;

				}

			}			
      \TheMainServer{}{
      $\boldsymbol{x}_{s}^{t+1,0} \leftarrow \boldsymbol{x}_{s}^{t,\tau}$;
       }
			\vspace{3pt}
	 \nonl 		\textbf{Phase 2: model aggregation.}

      \EachClient{}{
     Send $\boldsymbol{x}_{c,n}^{t,\tau}$ to the F-server;
   }

   \TheFedServer{}{
   $\boldsymbol{x}_c^{t+1} \leftarrow \sum_{n \in \mathcal{P}^t} \frac{a_n}{q_n}\boldsymbol{x}_{c,n}^{t,\tau}$.
   }

		}
}

\end{algorithm} 
\end{minipage}
\end{center}

\newpage
\section{Notations and technical lemmas}\label{appendix: notation and lemmas}
\subsection{Notations}\label{appendix: notation}
Recall that the objective of SFL is given by 
\begin{equation}
\min_{\boldsymbol{x}} f(\boldsymbol{x}):= \sum_{n=1}^N a_n F_n(\boldsymbol{x})
\end{equation}

We define 
\begin{itemize}
\item $\boldsymbol{x}_{c}$ and $\boldsymbol{x}_{s}$: global model parameter on the clients and server sides, respectively.
\item  $\boldsymbol{x}_{c,n}$ and $\boldsymbol{x}_{s,n}$: local forms of parameter on client $n$ and on the main server corresponding to client $n$ (in SFL-V1).
\item $\nabla F_{c,n}\left(\cdot\right)$ and $\nabla F_{s,n}\left(\cdot\right)$: the gradients of $F_n\left(\cdot\right)$ over $\boldsymbol{x}_{c}$ and $\boldsymbol{x}_{s}$, respectively.
\item $\boldsymbol{g}_{c,n}\left(\cdot\right)$ and $\boldsymbol{g}_{s,n}\left(\cdot\right)$: the stochastic gradients of $F_n\left(\cdot\right)$ over $\boldsymbol{x}_{c}$ and $\boldsymbol{x}_{s}$, respectively.
\end{itemize}

For convenience, we omit the notation for mini-batch training data when referring to stochastic gradients. 

Further, we recall how SFL-V1 and SFL-V2 update models below. 
\subsection{SFL-V1 and SFL-V2 model updates}\label{appendix: recall SFL}
Let $q_n$ denote the participating probability of client $n$ and define $\boldsymbol{q}:=\left\{q_1,\ldots,q_N\right\}$. We denote $\mathbf{I}_n^t$ as a binary variable, taking 1 if client $n$ participates in model training in round $t$, and 0 otherwise. $\mathbf{I}_n^t$ follows a Bernoulli distribution with an expectation of $q_n$. Denote $\mathcal{P}^t\left(\boldsymbol{q}\right)$ as the set of participating clients in round $t$.

\textbf{Parameter update for SFL-V1:}

\begin{itemize}
    \item Local training of client $n$: $\boldsymbol{x}^{t,0}_{c,n}\leftarrow \boldsymbol{x}^t_{c}$, $\boldsymbol{x}^{t,i+1}_{c,n}\leftarrow \boldsymbol{x}^{t,i}_{c,n} - \eta^t\boldsymbol{g}_{c,n}^{t,i}\left(\boldsymbol{x}^{t,i}_{c,n}\right)$, $\boldsymbol{x}^{t+1}_{c,n}\leftarrow \boldsymbol{x}^{t,\tau}_{c,n}$;
    \item Client-side global aggregation:
    \begin{itemize}
        \item Full participation: ${\boldsymbol{x}}^{t+1}_{c}\leftarrow\boldsymbol{x}^{t}_{c}-\eta^t\sum_{n\in \mathcal{N}}a_n\sum_{i=0}^{\tau}{\boldsymbol{g}}_{c,n}^{t,i}\left(\boldsymbol{x}^{t,i}_{c,n}\right)$;
        \item Partial participation:  ${\boldsymbol{x}}^{t+1}_{c}\leftarrow\boldsymbol{x}^{t}_{c}-\eta^t\sum_{n\in \mathcal{P}^t\left(\boldsymbol{q}\right)}\frac{a_n}{q_n}\sum_{i=0}^{\tau}{\boldsymbol{g}}_{c,n}^{t,i}\left(\boldsymbol{x}^{t,i}_{c,n}\right)$;
    \end{itemize}
    \item M-server-side model update:
    \begin{itemize}
        \item Full participation: $\boldsymbol{x}^{t+1}_{s}\leftarrow \boldsymbol{x}^t_{s} - \eta^t \sum_{n\in \mathcal{N}}a_n\sum_{i=0}^{\tilde{\tau}-1}\boldsymbol{g}_{s,n}^{t,i}\left(\boldsymbol{x}^{t,i}_{s,n}\right)$;
        \item Partial participation: $\boldsymbol{x}^{t+1}_{s}\leftarrow \boldsymbol{x}^t_{s} - \eta^t \sum_{n\in \mathcal{P}^t\left(\boldsymbol{q}\right)}\frac{a_n}{q_n}\sum_{i=0}^{\tilde{\tau}-1}\boldsymbol{g}_{s,n}^{t,i}\left(\boldsymbol{x}^{t,i}_{s,n}\right)$.
    \end{itemize}
\end{itemize}

\textbf{Parameter update for SFL-V2:}

\begin{itemize}
    \item Local training of client $n$: $\boldsymbol{x}^{t,0}_{c,n}\leftarrow \boldsymbol{x}^t_{c}$, $\boldsymbol{x}^{t,i+1}_{c,n}\leftarrow \boldsymbol{x}^{t,i}_{c,n} - \eta^t\boldsymbol{g}_{c,n}^{t,i}\left(\boldsymbol{x}^{t,i}_{c,n}\right)$, $\boldsymbol{x}^{t+1}_{c,n}\leftarrow \boldsymbol{x}^{t,\tau}_{c,n}$;
    \item Client-side global aggregation:
    \begin{itemize}
        \item Full participation: ${\boldsymbol{x}}^{t+1}_{c}\leftarrow\boldsymbol{x}^{t}_{c}-\eta^t\sum_{n\in \mathcal{N}}a_n\sum_{i=0}^{\tau}{\boldsymbol{g}}_{c,n}^{t,i}\left(\boldsymbol{x}^{t,i}_{c,n}\right)$;
        \item Partial participation:  ${\boldsymbol{x}}^{t+1}_{c}\leftarrow\boldsymbol{x}^{t}_{c}-\eta^t\sum_{n\in \mathcal{P}^t\left(\boldsymbol{q}\right)}\frac{a_n}{q_n}\sum_{i=0}^{\tau}{\boldsymbol{g}}_{c,n}^{t,i}\left(\boldsymbol{x}^{t,i}_{c,n}\right)$;
    \end{itemize}
    \item M-server-side model update:
    \begin{itemize}
        \item Full participation: $\boldsymbol{x}^{t+1}_{s}\leftarrow \boldsymbol{x}^t_{s} - \eta^t \sum_{n\in \mathcal{N}}\sum_{i=0}^{\tau-1}\boldsymbol{g}_{s,n}^{t,i}\left(\boldsymbol{x}^{t,i}_{s,n}\right)$;
        \item Partial participation: $\boldsymbol{x}^{t+1}_{s}\leftarrow \boldsymbol{x}^t_{s} - \eta^t \sum_{n\in \mathcal{P}^t\left(\boldsymbol{q}\right)}\frac{1}{q_n}\sum_{i=0}^{\tau-1}\boldsymbol{g}_{s,n}^{t,i}\left(\boldsymbol{x}^{t,i}_{s,n}\right)$.
    \end{itemize}
\end{itemize}

\subsection{Assumptions}\label{appendix: assumption}

 We further recall the following assumptions for clients' loss functions in the proof.
\begin{assumption} \label{asm:lipschitz_grad} For each client $n \in \mathcal{N}$: 
\begin{itemize}   
\item The loss $F_n\left(\cdot\right)$ is $S$-smooth:
\begin{align}
       & \left\Vert\nabla F_n\left(\boldsymbol{x}\right)-\nabla F_n\left(\boldsymbol{y}\right)\right\Vert \leq S\left\Vert\boldsymbol{x}-\boldsymbol{y}\right\Vert, \forall \boldsymbol{x}, \boldsymbol{y},\\
        & F_n(\boldsymbol{y}) \leq F_n(\boldsymbol{x})+\langle\nabla F_n(\boldsymbol{x}), \boldsymbol{y}-\boldsymbol{x}\rangle+\frac{S}{2}\|\boldsymbol{y}-\boldsymbol{x}\|^2, \forall \boldsymbol{x}, \boldsymbol{y} \in \mathbb{R}^d.
\end{align}
\item The stochastic gradients of $F_n\left(\cdot\right)$ 
are unbiased with the variance bounded by $\sigma_n^2$:
\begin{equation}
    \mathbb{E}\left[\boldsymbol{g}_n\left(\boldsymbol{x}\right)\right] = \nabla F_n\left(\boldsymbol{x}\right),
\end{equation}
\begin{equation}
    \mathbb{E}\left[\left\Vert \boldsymbol{g}_n\left(\boldsymbol{x}\right) - \nabla F_n\left(\boldsymbol{x}\right) \right\Vert^2 \right] \leq \sigma_n^2.
\end{equation}
\item The expected squared norm of stochastic gradients is bounded by $G^2$:
\begin{equation}
    \mathbb{E}\left\Vert \boldsymbol{g}_n\left(\boldsymbol{x}\right) \right\Vert^2 \leq G^2.
\end{equation}
\item (Bounded gradient divergence) There exists a constant $\epsilon>0$, such that the divergence between local and global gradients is bounded by $\epsilon^2$:
\begin{align}
    \left\Vert \nabla F_n\left(\mathbf{x}\right)-\nabla f\left(\mathbf{x}\right)  \right\Vert^2 \le \epsilon^2.
\end{align}
\end{itemize}
\end{assumption}
 
\begin{assumption} \label{asm:convex} For each client $n \in \mathcal{N}$: 
\begin{itemize}
    \item The loss $F_n\left(\cdot\right)$ is $\mu$-strongly convex for some $\mu\geq 0$:
    \begin{align}
& F_n(\boldsymbol{y}) \geq F_n(\boldsymbol{x})+\langle\nabla F_n(\boldsymbol{x}), \boldsymbol{y}-\boldsymbol{x}\rangle+\frac{\mu}{2}\|\boldsymbol{y}-\boldsymbol{x}\|^2, \forall \boldsymbol{x}, \boldsymbol{y} \in \mathbb{R}^d.
\end{align}
Here, we allow that  $\mu=0$, referring to this case of the general convex.
\end{itemize}
\end{assumption}

\subsection{Technical Lemmas}\label{appendix: technical lemmas}

\begin{lemma} \label{lem:smooth-convex} [Lemma 5 in \cite{karimireddy2020scaffold}] The following holds for any $S$-smooth and $\mu$-strongly convex function $h$, and any $x, y, z$ in the domain of $h$:
    \begin{align}
        \langle\nabla h(\boldsymbol{x}), \boldsymbol{z}-\boldsymbol{y}\rangle \geq h(\boldsymbol{z})-h(\boldsymbol{y})+\frac{\mu}{4}\|\boldsymbol{y}-\boldsymbol{z}\|^2-S\|\boldsymbol{z}-\boldsymbol{x}\|^2.
    \end{align}
\end{lemma}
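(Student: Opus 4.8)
\textbf{Proof proposal for Lemma~\ref{lem:smooth-convex}.}

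The plan is to deduce this three-point inequality from the standard convexity and smoothness inequalities for $h$, in the style of Karimireddy et al.\ (2020). First I would invoke $\mu$-strong convexity at the pair $(\boldsymbol{y},\boldsymbol{z})$ to obtain a lower bound on $\langle \nabla h(\boldsymbol{z}),\boldsymbol{z}-\boldsymbol{y}\rangle$, namely
\[
\langle \nabla h(\boldsymbol{z}),\boldsymbol{z}-\boldsymbol{y}\rangle \;\ge\; h(\boldsymbol{z})-h(\boldsymbol{y})+\frac{\mu}{2}\|\boldsymbol{y}-\boldsymbol{z}\|^2 .
\]
Then I would write $\langle \nabla h(\boldsymbol{x}),\boldsymbol{z}-\boldsymbol{y}\rangle = \langle \nabla h(\boldsymbol{z}),\boldsymbol{z}-\boldsymbol{y}\rangle + \langle \nabla h(\boldsymbol{x})-\nabla h(\boldsymbol{z}),\boldsymbol{z}-\boldsymbol{y}\rangle$ and bound the cross term below using Cauchy--Schwarz and $S$-smoothness of the gradient: $\langle \nabla h(\boldsymbol{x})-\nabla h(\boldsymbol{z}),\boldsymbol{z}-\boldsymbol{y}\rangle \ge -\|\nabla h(\boldsymbol{x})-\nabla h(\boldsymbol{z})\|\,\|\boldsymbol{z}-\boldsymbol{y}\| \ge -S\|\boldsymbol{x}-\boldsymbol{z}\|\,\|\boldsymbol{z}-\boldsymbol{y}\|$.

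The remaining work is to absorb the product $S\|\boldsymbol{x}-\boldsymbol{z}\|\,\|\boldsymbol{z}-\boldsymbol{y}\|$ into the other terms. Here I would apply Young's inequality, $ab \le \frac{\mu}{4}a^2/S' + \cdots$, more precisely $S\|\boldsymbol{x}-\boldsymbol{z}\|\,\|\boldsymbol{z}-\boldsymbol{y}\| \le \frac{\mu}{4}\|\boldsymbol{z}-\boldsymbol{y}\|^2 + \frac{S^2}{\mu}\|\boldsymbol{x}-\boldsymbol{z}\|^2$. Combining this with the strong-convexity bound leaves $h(\boldsymbol{z})-h(\boldsymbol{y}) + \frac{\mu}{4}\|\boldsymbol{y}-\boldsymbol{z}\|^2 - \frac{S^2}{\mu}\|\boldsymbol{x}-\boldsymbol{z}\|^2$, which is slightly stronger in the $\|\boldsymbol{x}-\boldsymbol{z}\|^2$ coefficient than what is claimed; to land exactly on $-S\|\boldsymbol{z}-\boldsymbol{x}\|^2$ one should instead split the smoothness/cocoercivity estimate more carefully, e.g.\ use the cocoercivity-type bound $\langle \nabla h(\boldsymbol{z})-\nabla h(\boldsymbol{x}),\boldsymbol{z}-\boldsymbol{x}\rangle \ge \frac{1}{S}\|\nabla h(\boldsymbol{z})-\nabla h(\boldsymbol{x})\|^2$ together with the quadratic upper bound $h(\boldsymbol{x}) \le h(\boldsymbol{z}) + \langle \nabla h(\boldsymbol{z}),\boldsymbol{x}-\boldsymbol{z}\rangle + \frac{S}{2}\|\boldsymbol{x}-\boldsymbol{z}\|^2$ and its strongly convex lower counterpart, then add them with appropriate weights so that the gradient-difference term cancels and the coefficient of $\|\boldsymbol{z}-\boldsymbol{x}\|^2$ comes out to exactly $S$.

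Since this is quoted verbatim as ``Lemma 5 in \cite{karimireddy2020scaffold}'', the cleanest route is to cite that reference directly; the self-contained argument above is only needed if a standalone proof is desired. The main obstacle is purely bookkeeping: choosing the Young's-inequality weights and the combination coefficients so that the $\|\nabla h(\boldsymbol{z})-\nabla h(\boldsymbol{x})\|^2$ and $\langle \nabla h(\boldsymbol{z}),\cdot\rangle$ terms cancel exactly and the final constants match $\frac{\mu}{4}$ and $S$ rather than some looser pair. There is no conceptual difficulty beyond combining the four standard first-/second-order inequalities that characterize an $S$-smooth, $\mu$-strongly convex function.
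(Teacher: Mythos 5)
The paper does not prove this lemma at all; it is quoted verbatim as Lemma~5 of \cite{karimireddy2020scaffold}, so your remark that the cleanest route is to cite the reference is exactly what the paper does. The issue is with your self-contained sketch, which contains a genuine error. Your main chain (strong convexity at $(\boldsymbol{y},\boldsymbol{z})$, Cauchy--Schwarz plus smoothness on the cross term, then Young's inequality) yields the penalty $-\frac{S^2}{\mu}\|\boldsymbol{x}-\boldsymbol{z}\|^2$, and you describe this as ``slightly stronger'' than the claimed $-S\|\boldsymbol{z}-\boldsymbol{x}\|^2$. It is the opposite: since $\mu\le S$, we have $\frac{S^2}{\mu}\ge S$ (by a factor of the condition number), so your lower bound is strictly \emph{weaker} than the lemma and does not establish it. Worse, the Young's step divides by $\mu$, so the argument collapses entirely when $\mu=0$ --- and the paper explicitly invokes this lemma with $\mu=0$ in all of its general-convex proofs, so that case cannot be discarded. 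Your proposed repair via cocoercivity is left as an unexecuted sketch, so the gap is not closed.

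The standard (and short) elementary proof takes a different decomposition: apply $S$-smoothness at the pair $(\boldsymbol{x},\boldsymbol{z})$, i.e.\ $h(\boldsymbol{z})\le h(\boldsymbol{x})+\langle\nabla h(\boldsymbol{x}),\boldsymbol{z}-\boldsymbol{x}\rangle+\frac{S}{2}\|\boldsymbol{z}-\boldsymbol{x}\|^2$, and $\mu$-strong convexity at the pair $(\boldsymbol{x},\boldsymbol{y})$, i.e.\ $h(\boldsymbol{y})\ge h(\boldsymbol{x})+\langle\nabla h(\boldsymbol{x}),\boldsymbol{y}-\boldsymbol{x}\rangle+\frac{\mu}{2}\|\boldsymbol{y}-\boldsymbol{x}\|^2$. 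Subtracting gives
\begin{equation*}
\langle\nabla h(\boldsymbol{x}),\boldsymbol{z}-\boldsymbol{y}\rangle\;\ge\;h(\boldsymbol{z})-h(\boldsymbol{y})+\frac{\mu}{2}\|\boldsymbol{y}-\boldsymbol{x}\|^2-\frac{S}{2}\|\boldsymbol{z}-\boldsymbol{x}\|^2,
\end{equation*}
and the relaxed triangle inequality $\|\boldsymbol{y}-\boldsymbol{x}\|^2\ge\frac{1}{2}\|\boldsymbol{y}-\boldsymbol{z}\|^2-\|\boldsymbol{z}-\boldsymbol{x}\|^2$ turns the middle term into $\frac{\mu}{4}\|\boldsymbol{y}-\boldsymbol{z}\|^2-\frac{\mu}{2}\|\boldsymbol{z}-\boldsymbol{x}\|^2$; since $\frac{S}{2}+\frac{\mu}{2}\le S$, the claimed bound follows, with no division by $\mu$ and hence valid for $\mu=0$. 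Note that both gradients in this argument are evaluated at the same point $\boldsymbol{x}$, which is what avoids the gradient-difference term that forced you into the lossy Cauchy--Schwarz/Young step.
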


\textbf{Proof of Proposition \ref{prop: decomposition}}

\textbf{Proposition \ref{prop: decomposition}} (Convergence decomposition)
Let $\boldsymbol{x}^*\triangleq[\boldsymbol{x}_c^*; \boldsymbol{x}_s^*]$ denote the optimal global model
that minimizes $f(\cdot)
$,  and $\boldsymbol{x}^T\triangleq[\boldsymbol{x}_c^T; \boldsymbol{x}_s^T]$ is the global model obtained after $T$ rounds of SFL training. 
Under Assumption \ref{assump: smoothness}, we have 
\begin{equation}
\begin{aligned}
&\mathbb{E}\left[f(\boldsymbol{x}^
{T})\right]- f(\boldsymbol{x}^*)
\le\frac{S}{2}\left(\mathbb{E}||\boldsymbol{x}_s^{T}-\boldsymbol{x}_s^*||^2+\mathbb{E}||\boldsymbol{x}_c^{T}-\boldsymbol{x}_c^*||^2\right).
\end{aligned}
\end{equation}
\begin{proof}
Since $F_n$'s are $S$-smooth, it is easy to show that the global loss function  $f(\cdot)$ is also $S$-smooth. Thus, we have 
\begin{equation}\label{eq:decomposition1}
\begin{aligned}
&\mathbb{E}\left[f(\boldsymbol{x}^
{T})\right]\!-\!f(\boldsymbol{x}^*)\!\le\!\mathbb{E}\!\left[\!\langle\boldsymbol{x}^
{T}\!-\!\boldsymbol{x}^*,\nabla f(\boldsymbol{x}^*)\rangle\right]
\!+\!\frac{S}{2}\mathbb{E}\left[||\boldsymbol{x}^{T}\!-\!\boldsymbol{x}^*||^2\right]\!=\!\frac{S}{2}\mathbb{E}\left[||\boldsymbol{x}^{T}\!-\!\boldsymbol{x}^*||^2\right].
\end{aligned}
\end{equation}
Since  $\boldsymbol{x}^{T}\triangleq [\boldsymbol{x}_c^{T};\boldsymbol{x}_s^{T}]$, and $\boldsymbol{x}^*\triangleq [\boldsymbol{x}_c^*; \boldsymbol{x}_s^*]$, we have
   \begin{equation}\label{eq:decomposition2}
   \begin{aligned}
   &\mathbb{E}\left[||\boldsymbol{x}^{T}-\boldsymbol{x}^*||^2\right]
   =\mathbb{E}\left[||[\boldsymbol{x}_c^{T};\boldsymbol{x}_s^{T}]-[\boldsymbol{x}_c^*; \boldsymbol{x}_s^*]||^2\right]\\
   =&\mathbb{E}\left[||[\boldsymbol{x}_c^{T}-\boldsymbol{x}_c^*; \boldsymbol{x}_s^{T}-\boldsymbol{x}_s^*]||^2\right]
   =\mathbb{E}\left[||\boldsymbol{x}_c^{T}-\boldsymbol{x}_c^*||^2\right]+\mathbb{E}\left[||\boldsymbol{x}_s^{T}-\boldsymbol{x}_s^*||^2\right].
   \end{aligned}
   \end{equation}
   Substituting \eqref{eq:decomposition2} into \eqref{eq:decomposition1}, we complete the proof.
\end{proof}

\begin{proposition}[Decomposition in each round]\label{Prop: decouple-one-round}
Under Assumption \ref{asm:lipschitz_grad}, we have 
\begin{align}
&\mathbb{E}\left[f\left(\boldsymbol{x}^{t+1}\right)\right]-f\left(\boldsymbol{x}^{t}\right)
\nonumber\\
    &\leq\!\mathbb{E}\left[\left\langle\nabla_{\boldsymbol{x}_c} f\left(\boldsymbol{x}^{t}\right),\!\boldsymbol{x}^{t+1}_c-\boldsymbol{x}^{t}_c\right\rangle\right]\!+\!\frac{S}{2}\mathbb{E}\left[\!\left\|\boldsymbol{x}^{t+1}_c-\boldsymbol{x}^{t}_c\right\|^2\!\right]\!+\\
&\!\mathbb{E}\left[\!\left\langle\nabla_{\boldsymbol{x}_s}\!f\left(\boldsymbol{x}^{t}\right)\!,\!\boldsymbol{x}^{t+1}_s-\boldsymbol{x}^{t}_s\right\rangle\!\right]\!+\!\frac{S}{2}\mathbb{E}\left[\!\left\|\boldsymbol{x}^{t+1}_s\!-\!\boldsymbol{x}^{t}_s\right\|^2\!\right].\label{eq:non-covex-decouple}
\end{align}
\end{proposition}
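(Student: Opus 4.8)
\textbf{Proof plan for Proposition \ref{Prop: decouple-one-round}.}

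The plan is to apply the $S$-smoothness of the global objective $f$ (which follows from Assumption \ref{asm:lipschitz_grad} since each $F_n$ is $S$-smooth and $f$ is a convex combination of the $F_n$'s, hence also $S$-smooth) to the pair of iterates $\boldsymbol{x}^{t}$ and $\boldsymbol{x}^{t+1}$, and then split every resulting inner-product and squared-norm term according to the block structure $\boldsymbol{x} = [\boldsymbol{x}_c;\boldsymbol{x}_s]$. Concretely, first I would write the descent-lemma inequality
\begin{equation*}
\mathbb{E}\left[f(\boldsymbol{x}^{t+1})\right] - f(\boldsymbol{x}^{t}) \le \mathbb{E}\left[\langle \nabla f(\boldsymbol{x}^{t}), \boldsymbol{x}^{t+1}-\boldsymbol{x}^{t}\rangle\right] + \frac{S}{2}\mathbb{E}\left[\|\boldsymbol{x}^{t+1}-\boldsymbol{x}^{t}\|^2\right].
\end{equation*}
Then, using $\nabla f(\boldsymbol{x}^{t}) = [\nabla_{\boldsymbol{x}_c} f(\boldsymbol{x}^{t}); \nabla_{\boldsymbol{x}_s} f(\boldsymbol{x}^{t})]$ and $\boldsymbol{x}^{t+1}-\boldsymbol{x}^{t} = [\boldsymbol{x}_c^{t+1}-\boldsymbol{x}_c^{t}; \boldsymbol{x}_s^{t+1}-\boldsymbol{x}_s^{t}]$, the inner product decomposes additively as $\langle \nabla_{\boldsymbol{x}_c} f(\boldsymbol{x}^{t}), \boldsymbol{x}_c^{t+1}-\boldsymbol{x}_c^{t}\rangle + \langle \nabla_{\boldsymbol{x}_s} f(\boldsymbol{x}^{t}), \boldsymbol{x}_s^{t+1}-\boldsymbol{x}_s^{t}\rangle$, and the squared norm decomposes as $\|\boldsymbol{x}_c^{t+1}-\boldsymbol{x}_c^{t}\|^2 + \|\boldsymbol{x}_s^{t+1}-\boldsymbol{x}_s^{t}\|^2$, exactly as in the proof of Proposition \ref{prop: decomposition}. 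Taking expectations and collecting the four terms yields \eqref{eq:non-covex-decouple}.

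This is essentially a one-round, first-order analogue of Proposition \ref{prop: decomposition}: rather than expanding $f(\boldsymbol{x}^T) - f(\boldsymbol{x}^*)$ around the optimum (where the gradient term vanishes), here we expand $f(\boldsymbol{x}^{t+1}) - f(\boldsymbol{x}^{t})$ around $\boldsymbol{x}^{t}$, so the gradient-inner-product terms are retained (they will later be handled by the update rules recalled in Sec. \ref{appendix: recall SFL}). I do not anticipate a genuine obstacle: the only things to be careful about are (i) confirming that $f$ inherits $S$-smoothness from the $F_n$'s — immediate by linearity of the gradient and the triangle inequality — and (ii) making sure the block decomposition of the Euclidean inner product and norm is applied consistently, i.e., that $\boldsymbol{x}_c$ and $\boldsymbol{x}_s$ index disjoint, complementary coordinate blocks of $\boldsymbol{x}$, which is guaranteed by the definition $\boldsymbol{x} = \{\boldsymbol{x}_c; \boldsymbol{x}_s\}$ in Sec. \ref{sec: model}. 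The expectation is taken over all randomness in round $t$ (mini-batch sampling, and client participation in the partial-participation case), and linearity of expectation lets it pass through the decomposition unchanged.
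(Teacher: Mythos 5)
Your proposal is correct and matches the paper's argument: the paper proves this proposition in one line by invoking the $S$-smoothness of $f$, which is exactly your descent-lemma step followed by the block decomposition of the inner product and squared norm over $[\boldsymbol{x}_c;\boldsymbol{x}_s]$. Your additional care about $f$ inheriting smoothness and the expectation passing through by linearity is sound and consistent with the paper.
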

\begin{proof}
The proposition can be easily proved by the $S$-smoothness of $f\left(\cdot\right)$.
\end{proof}
\begin{lemma}\label{lem:multiple-local-training-convex}[Multiple iterations of local training in each round] Under Assumption \ref{asm:lipschitz_grad}, if we let $\eta^t\leq\frac{1}{\sqrt{6}S\tau}$ and run client $n$'s local model for $\tau$ iteration continuously in any round $t$, we have 
\begin{align}
    &\sum_{i=0}^{\tau-1} \mathbb{E} \left[ \left\Vert  \boldsymbol{x}^{t,i}_{n}- \boldsymbol{x}^{t}\right\Vert^2\right] \leq 12\tau^3\left(\eta^t\right)^2\left(2\sigma_n^2+G^2\right).
\end{align}
\end{lemma}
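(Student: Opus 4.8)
\textbf{Proof proposal for Lemma~\ref{lem:multiple-local-training-convex}.}

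The plan is to unroll the local SGD recursion $\boldsymbol{x}^{t,i+1}_n = \boldsymbol{x}^{t,i}_n - \eta^t \boldsymbol{g}_n^{t,i}(\boldsymbol{x}^{t,i}_n)$ from the common starting point $\boldsymbol{x}^{t,0}_n = \boldsymbol{x}^t$, so that $\boldsymbol{x}^{t,i}_n - \boldsymbol{x}^t = -\eta^t \sum_{j=0}^{i-1} \boldsymbol{g}_n^{t,j}(\boldsymbol{x}^{t,j}_n)$. First I would take expectations of $\|\boldsymbol{x}^{t,i}_n - \boldsymbol{x}^t\|^2$ and apply the bound $\|\sum_{j=0}^{i-1} v_j\|^2 \le i \sum_{j=0}^{i-1}\|v_j\|^2$ (Cauchy–Schwarz / Jensen), giving $\mathbb{E}\|\boldsymbol{x}^{t,i}_n - \boldsymbol{x}^t\|^2 \le (\eta^t)^2 i \sum_{j=0}^{i-1} \mathbb{E}\|\boldsymbol{g}_n^{t,j}(\boldsymbol{x}^{t,j}_n)\|^2$. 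To control the stochastic gradient norms I would split $\boldsymbol{g}_n^{t,j} = (\boldsymbol{g}_n^{t,j} - \nabla F_n(\boldsymbol{x}^{t,j}_n)) + \nabla F_n(\boldsymbol{x}^{t,j}_n)$, use the bounded-variance part of Assumption~\ref{asm:lipschitz_grad} to bound the first term by $\sigma_n^2$, and for the second term write $\nabla F_n(\boldsymbol{x}^{t,j}_n) = (\nabla F_n(\boldsymbol{x}^{t,j}_n) - \nabla F_n(\boldsymbol{x}^t)) + \nabla F_n(\boldsymbol{x}^t)$, bounding the difference via $S$-smoothness by $S\|\boldsymbol{x}^{t,j}_n - \boldsymbol{x}^t\|$ and $\|\nabla F_n(\boldsymbol{x}^t)\|^2$ by $G^2$ (using $\mathbb{E}\|\boldsymbol{g}_n\|^2 \le G^2$, hence $\|\nabla F_n\|^2 \le G^2$ by Jensen). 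Collecting, each term satisfies $\mathbb{E}\|\boldsymbol{g}_n^{t,j}\|^2 \lesssim \sigma_n^2 + G^2 + S^2 \mathbb{E}\|\boldsymbol{x}^{t,j}_n - \boldsymbol{x}^t\|^2$, up to a constant factor from the $(a+b+c)^2 \le 3(a^2+b^2+c^2)$ inequality.

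Next I would substitute this back to obtain a recursive inequality of the form
\begin{equation}
\mathbb{E}\|\boldsymbol{x}^{t,i}_n - \boldsymbol{x}^t\|^2 \le (\eta^t)^2 i \sum_{j=0}^{i-1}\Big( c_1(\sigma_n^2 + G^2) + c_2 S^2 \,\mathbb{E}\|\boldsymbol{x}^{t,j}_n - \boldsymbol{x}^t\|^2 \Big),
\end{equation}
and then sum over $i = 0, \dots, \tau-1$. Denoting $A \triangleq \sum_{i=0}^{\tau-1} \mathbb{E}\|\boldsymbol{x}^{t,i}_n - \boldsymbol{x}^t\|^2$, using $i \le \tau$ and $\sum_{j=0}^{i-1}(\cdot) \le \sum_{j=0}^{\tau-1}(\cdot)$, I get an inequality of the shape $A \le (\eta^t)^2 \tau^2 \big( c_1 \tau (\sigma_n^2 + G^2) + c_2 S^2 A \big)$. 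Imposing the step-size condition $\eta^t \le \frac{1}{\sqrt{6}S\tau}$ makes the coefficient $c_2 S^2 (\eta^t)^2 \tau^2$ strictly less than $\tfrac12$ (with the constants tracked appropriately), so the $A$-term can be absorbed into the left-hand side, yielding $A \le 2 c_1 (\eta^t)^2 \tau^3 (\sigma_n^2 + G^2)$; choosing constants so that $2c_1$ works out to the stated $24 = 12\cdot 2$ with a factor $2\sigma_n^2 + G^2$ (the extra factor on $\sigma_n^2$ coming from not discarding a cross term, or from a slightly different grouping) completes the bound.

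The main obstacle is the self-referential nature of the estimate: the drift $\|\boldsymbol{x}^{t,i}_n - \boldsymbol{x}^t\|^2$ appears on both sides because the stochastic gradient is evaluated at the drifted iterate, not at $\boldsymbol{x}^t$. Handling this cleanly requires being careful that after summing over $i$ the coefficient multiplying $A$ is genuinely a contraction under $\eta^t \le \frac{1}{\sqrt{6}S\tau}$ — i.e., tracking the numerical constants from the repeated applications of $\|\sum_k v_k\|^2 \le K\sum_k \|v_k\|^2$ so that the $\frac{1}{6}$ in the step size exactly offsets them. A secondary (minor) point is justifying that conditional on the past the variance bound applies to each $\boldsymbol{g}_n^{t,j}$, which follows from the independence of the mini-batch draws and the tower property; this lets the cross terms between the noise and the deterministic part vanish in expectation, keeping the constants tight.
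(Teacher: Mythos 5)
Your proposal is correct, but it takes a genuinely different route from the paper. The paper works one iteration at a time: it bounds $\mathbb{E}\|\boldsymbol{x}^{t,i}_n-\boldsymbol{x}^t\|^2$ in terms of $\mathbb{E}\|\boldsymbol{x}^{t,i-1}_n-\boldsymbol{x}^t\|^2$ using the Young-type inequality $(X+Y)^2\le(1+a)X^2+(1+\tfrac{1}{a})Y^2$ with $a=\tfrac{1}{\tau}$, obtains a linear recursion $A^{t,i}\le(1+\tfrac{2}{\tau})A^{t,i-1}+B$ (the step-size condition is used to fold the $S^2$ drift coefficient into the $(1+\tfrac{2}{\tau})$ factor), and then unrolls the geometric sum using $(1+\tfrac{2}{\tau})^{\tau}\le e^2$. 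You instead unroll the update completely, apply $\bigl\|\sum_{j<i}v_j\bigr\|^2\le i\sum_{j<i}\|v_j\|^2$, sum over $i$, and absorb the self-referential drift term after the fact: with $\eta^t\le\frac{1}{\sqrt{6}S\tau}$ the coefficient on $A=\sum_i\mathbb{E}\|\boldsymbol{x}^{t,i}_n-\boldsymbol{x}^t\|^2$ is $2(\eta^t)^2S^2\tau^2\le\tfrac13<1$, so $A\le\tfrac{3}{2}\cdot\tfrac{(\eta^t)^2\tau^3}{3}(\sigma_n^2+2G^2)=\tfrac{(\eta^t)^2\tau^3}{2}(\sigma_n^2+2G^2)$, which is strictly tighter than the stated $12\tau^3(\eta^t)^2(2\sigma_n^2+G^2)$ and hence implies it (your constants land on $\sigma_n^2+2G^2$ rather than $2\sigma_n^2+G^2$ because the paper bounds $\|\nabla F_n(\boldsymbol{x}^t)\|^2\le\sigma_n^2+G^2$ via the stochastic gradient while you use Jensen to get $G^2$ directly; either grouping is dominated by the lemma's right-hand side). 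Your absorption argument is cleaner and gives better constants; the paper's recursion-plus-$e^2$ argument is the more standard local-drift template and is what the paper reuses verbatim in Lemmas \ref{lem:multiple-local-training} and \ref{lem:multiple-local-training-grad}. Your side remarks — that the martingale structure kills the noise cross terms, and that the only delicate point is verifying the contraction coefficient — are exactly the right things to check, and both check out.
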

\begin{proof}
Similar to Lemma 3 in \cite{50448}, we have
\begin{align}
    & \mathbb{E} \left[ \left\Vert \boldsymbol{x}^{t,i}_{n}- \boldsymbol{x}^{t}\right\Vert^2\right] \nonumber\\
    & \leq\mathbb{E} \left[ \left\Vert \boldsymbol{x}^{t,i-1}_{n} - \eta^t \boldsymbol{g}^{t,i-1}_{n}- \boldsymbol{x}^{t}\right\Vert^2\right] \nonumber\\
        & \leq \mathbb{E}\!\left[\!\left\Vert\!\boldsymbol{x}^{t,i-1}_{n}\!-\!\boldsymbol{x}^{t}\!-\!\eta^t \left(\boldsymbol{g}^{t,i-1}_{n}\!-\!\nabla_{\boldsymbol{x}} F_n\left(\boldsymbol{x}^{t,i-1}_{n}\right)+\nabla_{\boldsymbol{x}} F_n\left(\boldsymbol{x}^{t,i-1}_{n}\right)\!-\!\nabla_{\boldsymbol{x}}\!F_n\left( \boldsymbol{x}^{t}\right)\!+\!\nabla_{\boldsymbol{x}}\!F_n\left( \boldsymbol{x}^{t}\right)\right)\right\Vert^2\right] \nonumber\\
        &\leq \left(1+\frac{1}{\tau}\right)\mathbb{E}\left[\left\Vert\boldsymbol{x}^{t,i-1}_{n} -  \boldsymbol{x}^{t}  \right\Vert^2\right]+3\left(1+\tau\right)\mathbb{E}\left[\left\Vert\eta^t \left(\boldsymbol{g}^{t,i-1}_{n}-\nabla_{\boldsymbol{x}} F_n\left(\boldsymbol{x}^{t,i-1}_{n}\right)\right)\right\Vert^2\right]\nonumber\\
        &+3\left(1+\tau\right)\mathbb{E}\left[\left\Vert\eta^t \left(\nabla_{\boldsymbol{x}} F_n\left(\boldsymbol{x}^{t,i-1}_{n}\right)-\nabla_{\boldsymbol{x}} F_n\left( \boldsymbol{x}^{t}\right)\right)\right\Vert^2\right]
        +3\left(1+\tau\right)\mathbb{E}\left[\left\Vert\eta^t \left(\nabla_{\boldsymbol{x}} F_n\left( \boldsymbol{x}^{t}\right)\right)\right\Vert^2\right]
       \nonumber\\
        &\leq \left(1+\frac{1}{\tau}\right)\mathbb{E}\left[\left\Vert\boldsymbol{x}^{t,i-1}_{n} -  \boldsymbol{x}^{t}  \right\Vert^2\right]+3\left(1+\tau\right)\left(\eta^t\right)^2 \sigma_n^2 \nonumber\\
        &+3\left(1+\tau\right)\left(\eta^t\right)^2S^2\mathbb{E}\left[\left\Vert\boldsymbol{x}^{t,i-1}_{n}- \boldsymbol{x}^{t}\right\Vert^2\right]
         +3\left(1+\tau\right)\left(\eta^t\right)^2\mathbb{E} \left[ \left\Vert\nabla_{\boldsymbol{x}} F_n\left( \boldsymbol{x}^{t}\right)\right\Vert^2\right]
       \nonumber\\
         &\leq \left(1+\frac{1}{\tau}+6\tau\left(\eta^t\right)^2S^2\right)\mathbb{E}\left[\left\Vert\boldsymbol{x}^{t,i-1}_{n} -  \boldsymbol{x}^{t}  \right\Vert^2\right]+6\tau\left(\eta^t\right)^2 \sigma_n^2
         +6\tau\left(\eta^t\right)^2\mathbb{E} \left[ \left\Vert\nabla_{\boldsymbol{x}} F_n\left( \boldsymbol{x}^{t}\right)\right\Vert^2\right]\nonumber\\
       &\leq \left(1+\frac{2}{\tau}\right)\mathbb{E}\left[\left\Vert\boldsymbol{x}^{t,i-1}_{n} - \boldsymbol{x}^{t}  \right\Vert^2\right]+6\tau\left(\eta^t\right)^2 \sigma_n^2 
         +6\tau\left(\eta^t\right)^2\mathbb{E} \left[ \left\Vert\nabla_{\boldsymbol{x}} F_n\left(\boldsymbol{x}^{t}\right)\right\Vert^2\right],\nonumber\\
          &\leq\!\left(1\!+\!\frac{2}{\tau}\right)\mathbb{E}\left[\!\left\Vert\boldsymbol{x}^{t,i-1}_{n}\!-\!\boldsymbol{x}^{t}  \right\Vert^2\!\right]\!+\!6\tau\left(\eta^t\right)^2 \sigma_n^2
         \!+\!6\tau\left(\eta^t\right)^2\left(\mathbb{E}\!\left[\!\left\Vert\nabla_{\boldsymbol{x}} F_n\left(\boldsymbol{x}^{t}\right)-\boldsymbol{g}_{n}^t\right\Vert^2\!\right]\!+\!\mathbb{E}\!\left[\!\left\Vert\!\nabla_{\boldsymbol{x}} \boldsymbol{g}_{n}^t\right\Vert^2\!\right]\!\right),\nonumber\\
          &\leq \left(1+\frac{2}{\tau}\right)\mathbb{E}\left[\left\Vert\boldsymbol{x}^{t,i-1}_{n} - \boldsymbol{x}^{t}  \right\Vert^2\right]
         +6\tau\left(\eta^t\right)^2\left(2\sigma_n^2+G^2\right),
\end{align}
where we use Assumption \ref{asm:lipschitz_grad}, $\left(X+Y\right)^2\leq\left(1+a\right)X^2+\left(1+\frac{1}{a}\right)Y^2$ for some positive $a$, and $\eta^t\leq\frac{1}{\sqrt{6}S\tau}$.

Let
\begin{align}
    &A^{t,i}:=\mathbb{E}\left[\left\Vert\boldsymbol{x}^{t,i}_n - \boldsymbol{x}^{t}  \right\Vert^2\right]\nonumber\\
    &B:=6\tau\left(\eta^t\right)^2\left(2\sigma_n^2+G^2\right)\nonumber\\
    &C:=1+\frac{2}{\tau}\nonumber
\end{align}

We have 
\begin{align}
    A^{t,i}\leq CA^{t,i-1}+B
\end{align}
We can show that
\begin{align}
   & A^{t,1}\leq CA^{t}+B \nonumber\\
   & A^{t,2}\leq CA^{t,1}+B \leq C^2A^{t}+CB+B\nonumber\\
   & A^{t,3}\leq CA^{t,2}+B \leq C^3A^{t}+C^2B+CB+B\nonumber\\
   & \ldots \nonumber \\
   & A^{t,i}\leq C^iA^{t}+B\sum_{j=0}^{i-1}C^{j}\nonumber
\end{align}
 
Note that $A^{t}:=A^{t,0}=\mathbb{E}\left[\left\Vert \boldsymbol{x}^{t}-\boldsymbol{x}^{t}\right\Vert^2\right]=0$. Accumulate the above for $\tau$ iterations, we have
\begin{align}
    &\sum_{i=0}^{\tau-1}  \mathbb{E} \left[ \left\Vert  \boldsymbol{x}^{t,i}_{n}- \boldsymbol{x}^{t}\right\Vert^2\right] =\sum_{i=0}^{\tau-1} B\sum_{j=0}^{i-1}C^{j} \nonumber\\
    &= B\sum_{i=0}^{\tau-1} \frac{C^i-1}{C-1} = \frac{B}{C-1}\sum_{i=0}^{\tau-1} \left(C^i-1\right)= \frac{B}{C-1} \left(\frac{C^{\tau}-1}{C-1}-\tau\right)\nonumber\\
    &=\frac{B}{\frac{2}{\tau}}\left( \frac{\left(1+\frac{2}{\tau}\right)^{\tau}-1}{\frac{2}{\tau}}-\tau\right)\\
    &\leq \frac{\tau^2B}{2} \left(\frac{e^2-1}{2}-1\right)\nonumber\\
    & \leq 2\tau^2B \nonumber\\
    & \leq 2\tau^26\tau\left(\eta^t\right)^2\left(2\sigma_n^2+G^2\right)\nonumber\\
        & \leq 12\tau^3\left(\eta^t\right)^2\left(2\sigma_n^2+G^2\right).
\end{align}
The first inequality is due to $\sum_{i=0}^{N-1}x^i=\frac{x^N-1}{X-1}$ and the third line results from $ (1+\frac{n}{x})^x\leq e^n$. 
Thus, we finish the proof.
\end{proof}
\begin{lemma}\label{lem:multiple-local-training}[Multiple iterations of local training in each round] Under Assumption \ref{asm:lipschitz_grad}, if we let $\eta^t\leq\frac{1}{\sqrt{8}S\tau}$ and run client $n$'s local model for $\tau$ iteration continuously in any round $t$, we have 
    \begin{align}
    &\sum_{i=0}^{\tau-1}  \mathbb{E} \left[ \left\Vert  \boldsymbol{x}^{t,i}_{n}- \boldsymbol{x}^{t}\right\Vert^2\right]
\leq 2\tau^2\left(8\tau\left(\eta^t\right)^2 \sigma_n^2 +8\tau\left(\eta^t\right)^2\epsilon^2
         +8\tau\left(\eta^t\right)^2\left\Vert\nabla_{\boldsymbol{x}}f\left(\boldsymbol{x}^{t}\right)\right\Vert^2\right).
\end{align}
\end{lemma}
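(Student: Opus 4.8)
\textbf{Proof proposal for Lemma~\ref{lem:multiple-local-training}.}
The plan is to follow exactly the same recursive argument used in the proof of Lemma~\ref{lem:multiple-local-training-convex}, but to keep the full-gradient term $\nabla_{\boldsymbol{x}} F_n(\boldsymbol{x}^t)$ rather than bounding it crudely by $G^2$. First I would start from the one-step expansion
$\boldsymbol{x}^{t,i}_n - \boldsymbol{x}^t = \boldsymbol{x}^{t,i-1}_n - \boldsymbol{x}^t - \eta^t \boldsymbol{g}^{t,i-1}_n$, insert and subtract $\nabla_{\boldsymbol{x}} F_n(\boldsymbol{x}^{t,i-1}_n)$ and $\nabla_{\boldsymbol{x}} F_n(\boldsymbol{x}^t)$, and apply the splitting inequality $(X+Y)^2 \le (1+a)X^2 + (1+\tfrac1a)Y^2$ with $a=\tfrac1\tau$ to separate the ``drift'' term from the three gradient/noise terms, each of which is inflated by the factor $3(1+\tau)$. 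Using Assumption~\ref{asm:lipschitz_grad} (unbiasedness and variance bound for the stochastic noise, $S$-smoothness for the gradient difference) this yields
\[
A^{t,i} \le \Bigl(1 + \tfrac1\tau + 6\tau(\eta^t)^2 S^2\Bigr) A^{t,i-1} + 6\tau(\eta^t)^2 \sigma_n^2 + 6\tau(\eta^t)^2 \bigl\|\nabla_{\boldsymbol{x}} F_n(\boldsymbol{x}^t)\bigr\|^2,
\]
where $A^{t,i} \triangleq \mathbb{E}\|\boldsymbol{x}^{t,i}_n - \boldsymbol{x}^t\|^2$. The difference from the convex lemma is that I do \emph{not} split $\|\nabla_{\boldsymbol{x}} F_n(\boldsymbol{x}^t)\|^2$ further; instead I now decompose it via the heterogeneity assumption: $\|\nabla_{\boldsymbol{x}} F_n(\boldsymbol{x}^t)\|^2 \le 2\|\nabla_{\boldsymbol{x}} F_n(\boldsymbol{x}^t) - \nabla_{\boldsymbol{x}} f(\boldsymbol{x}^t)\|^2 + 2\|\nabla_{\boldsymbol{x}} f(\boldsymbol{x}^t)\|^2 \le 2\epsilon^2 + 2\|\nabla_{\boldsymbol{x}} f(\boldsymbol{x}^t)\|^2$. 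Feeding this in and absorbing constants (with the slightly stronger step-size restriction $\eta^t \le \tfrac{1}{\sqrt8 S\tau}$, so that $6\tau(\eta^t)^2 S^2 \le \tfrac{1}{\tau}$ and the multiplier is at most $1+\tfrac2\tau$, and the factor $12$ becomes the claimed $8$ after re-grouping) gives the recursion $A^{t,i} \le C A^{t,i-1} + B$ with $C = 1+\tfrac2\tau$ and $B = 8\tau(\eta^t)^2\sigma_n^2 + 8\tau(\eta^t)^2\epsilon^2 + 8\tau(\eta^t)^2\|\nabla_{\boldsymbol{x}} f(\boldsymbol{x}^t)\|^2$.

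Next I would unroll the recursion exactly as in Lemma~\ref{lem:multiple-local-training-convex}: since $A^t = 0$, iterating gives $A^{t,i} \le B\sum_{j=0}^{i-1} C^j$, and summing over $i=0,\dots,\tau-1$ produces $\sum_{i=0}^{\tau-1} A^{t,i} \le \tfrac{B}{C-1}\bigl(\tfrac{C^\tau-1}{C-1} - \tau\bigr) = \tfrac{\tau B}{2}\bigl(\tfrac{(1+2/\tau)^\tau - 1}{2/\tau} - \tau\bigr)$. Using $(1+\tfrac2\tau)^\tau \le e^2$ this is bounded by $\tfrac{\tau^2 B}{2}\bigl(\tfrac{e^2-1}{2} - 1\bigr) \le 2\tau^2 B$, which is precisely the claimed inequality once $B$ is substituted. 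The only place I need to be a little careful is matching the constant: in the convex lemma the analogous bookkeeping lands on $12\tau^3(\eta^t)^2(2\sigma_n^2+G^2)$, so here I must make sure the constant in front of each of $\sigma_n^2$, $\epsilon^2$, $\|\nabla_{\boldsymbol{x}} f(\boldsymbol{x}^t)\|^2$ inside the parentheses is $8\tau(\eta^t)^2$ and the outer factor is exactly $2\tau^2$; this is arranged by choosing $\eta^t \le \tfrac1{\sqrt8 S\tau}$ (rather than $\tfrac1{\sqrt6 S\tau}$) so that the $S^2$-term contributes at most $\tfrac1\tau$ and the coefficient $6$ in $6\tau(\eta^t)^2$ can be relaxed to $8$.

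The main obstacle, such as it is, is purely one of constant-tracking rather than conceptual difficulty: one has to verify that after introducing the extra factor of $2$ from the heterogeneity decomposition of $\|\nabla F_n(\boldsymbol{x}^t)\|^2$, the coefficients still collapse to the clean form stated, and that the geometric-sum estimate $\tfrac{e^2-1}{2}-1 \le 4$ indeed leaves enough slack. Everything else — the one-step inequality, the choice $a=\tfrac1\tau$, the telescoping of the geometric recursion, the use of $A^t=0$ — is identical to the proof of Lemma~\ref{lem:multiple-local-training-convex} and can be reproduced verbatim with the substitutions described above.
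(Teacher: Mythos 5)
Your overall strategy is exactly the paper's: a one-step recursion $A^{t,i}\le C\,A^{t,i-1}+B$ for the drift $A^{t,i}=\mathbb{E}\|\boldsymbol{x}^{t,i}_n-\boldsymbol{x}^t\|^2$, followed by unrolling the geometric sum with $A^{t,0}=0$ and the estimate $(1+\tfrac{2}{\tau})^\tau\le e^2$ to land on $2\tau^2 B$. The unrolling part of your argument is fine.

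The one place where your proposal does not deliver the stated inequality is precisely the constant-tracking step you flag. If you first run the three-term decomposition of Lemma~\ref{lem:multiple-local-training-convex} (factor $3(1+\tau)\le 6\tau$ on each of the noise, smoothness, and $\nabla_{\boldsymbol{x}} F_n(\boldsymbol{x}^t)$ pieces) and only afterwards apply $\|\nabla_{\boldsymbol{x}} F_n(\boldsymbol{x}^t)\|^2\le 2\epsilon^2+2\|\nabla_{\boldsymbol{x}} f(\boldsymbol{x}^t)\|^2$, the coefficients in $B$ become $6\tau(\eta^t)^2$ on $\sigma_n^2$ but $12\tau(\eta^t)^2$ on $\epsilon^2$ and on $\|\nabla_{\boldsymbol{x}} f(\boldsymbol{x}^t)\|^2$; there is no regrouping that turns $12$ into $8$, and tightening the step size does not help because the factor of $2$ from Young's inequality multiplies the $6$, it does not replace it. The paper sidesteps this by inserting $\nabla_{\boldsymbol{x}} f(\boldsymbol{x}^t)$ into the telescoping decomposition from the outset, i.e.\ writing the increment as the drift plus \emph{four} correction terms (stochastic noise, smoothness difference, heterogeneity difference $\nabla F_n-\nabla f$, and $\nabla f$ itself), so that each of the four non-drift terms receives the factor $4(1+\tau)\le 8\tau$ directly and the recursion reads
\begin{align*}
A^{t,i}\le\Bigl(1+\tfrac{1}{\tau}+8\tau(\eta^t)^2S^2\Bigr)A^{t,i-1}
+8\tau(\eta^t)^2\sigma_n^2+8\tau(\eta^t)^2\epsilon^2
+8\tau(\eta^t)^2\bigl\|\nabla_{\boldsymbol{x}} f(\boldsymbol{x}^t)\bigr\|^2 ,
\end{align*}
with $8\tau(\eta^t)^2S^2\le\tfrac1\tau$ under $\eta^t\le\tfrac{1}{\sqrt8 S\tau}$. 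Your route proves the lemma only with $12$ in place of $8$ on two of the three terms — harmless for the downstream theorems up to constants, but not literally the displayed bound. Switching to the five-piece decomposition fixes this and makes the rest of your argument go through verbatim.
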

\begin{proof}
\begin{align}
    & \mathbb{E} \left[ \left\Vert \boldsymbol{x}^{t,i}_{n}- \boldsymbol{x}^{t}\right\Vert^2\right] \nonumber\\
       & \leq\mathbb{E} \left[ \left\Vert \boldsymbol{x}^{t,i-1}_{n} - \eta^t \boldsymbol{g}_{n}^{t,i-1}- \boldsymbol{x}^{t}\right\Vert^2\right] \nonumber\\
        & \leq \mathbb{E} \left[ \left\Vert \boldsymbol{x}^{t,i-1}_{n} - \boldsymbol{x}^{t} - \eta^t \left(\boldsymbol{g}_{n}^{t,i-1}-\nabla_{\boldsymbol{x}}F_n\left(\boldsymbol{x}^{t,i-1}_{n}\right)\right.\right.\right.\nonumber\\
        &\left.\left.\left.+\nabla_{\boldsymbol{x}}F_n\left(\boldsymbol{x}^{t,i-1}_{n}\right)-\nabla_{\boldsymbol{x}}F_n\left(\boldsymbol{x}^{t}\right)+\nabla_{\boldsymbol{x}}F_n\left(\boldsymbol{x}^{t}\right)-\nabla_{\boldsymbol{x}}f\left(\boldsymbol{x}^{t}\right)+\nabla_{\boldsymbol{x}}f\left(\boldsymbol{x}^{t}\right)\right)\right\Vert^2\right] \nonumber\\
        &\leq \left(1+\frac{1}{\tau}\right)\mathbb{E}\left[\left\Vert\boldsymbol{x}^{t,i-1}_{n} - \boldsymbol{x}^{t}  \right\Vert^2\right]+8\tau\mathbb{E}\left[\left\Vert\eta^t \left(\boldsymbol{g}_{n}^{t,i-1}-\nabla_{\boldsymbol{x}}F_n\left(\boldsymbol{x}^{t,i-1}_{n}\right)\right)\right\Vert^2\right]\nonumber\\
        &+8\tau\mathbb{E}\left[\left\Vert\eta^t \left(\nabla_{\boldsymbol{x}}F_n\left(\boldsymbol{x}^{t,i-1}_{n}\right)-\nabla_{\boldsymbol{x}}F_n\left(\boldsymbol{x}^{t}\right)\right)\right\Vert^2\right]
        +8\tau\mathbb{E}\left[\left\Vert\eta^t \left(\nabla_{\boldsymbol{x}}F_n\left(\boldsymbol{x}^{t}\right)-\nabla_{\boldsymbol{x}}f\left(\boldsymbol{x}^{t}\right)\right)\right\Vert^2\right]\nonumber\\
         &+8\tau\left\Vert\eta^t \nabla_{\boldsymbol{x}}f\left(\boldsymbol{x}^{t}\right)\right\Vert^2
       \nonumber\\
       &\leq \left(1+\frac{1}{\tau}\right)\mathbb{E}\left[\left\Vert\boldsymbol{x}^{t,i-1}_{n} - \boldsymbol{x}^{t}  \right\Vert^2\right]+8\tau\left(\eta^t\right)^2 \sigma_n^2 +8\tau\left(\eta^t\right)^2S^2\mathbb{E}\left[\left\Vert\boldsymbol{x}^{t,i-1}_{n}-\boldsymbol{x}^{t}\right\Vert^2\right]+8\tau\left(\eta^t\right)^2\epsilon^2\nonumber\\
         &+8\tau\left(\eta^t\right)^2\left\Vert\nabla_{\boldsymbol{x}}f\left(\boldsymbol{x}^{t}\right)\right\Vert^2
       \nonumber\\
         &\leq\!\left(\!1\!+\!\frac{1}{\tau}\!+\!8\tau\left(\!\eta^t\!\right)^2S^2\right)\mathbb{E}\!\left[\!\left\Vert\boldsymbol{x}^{t,i-1}_{n}\!-\!\boldsymbol{x}^{t}  \right\Vert^2\!\right]\!+\!8\tau\left(\eta^t\right)^2\!\sigma_n^2\!+\!8\tau\left(\!\eta^t\!\right)^2\epsilon^2\!+\!8\tau\left(\eta^t\right)^2\left\Vert\!\nabla_{\boldsymbol{x}}f\left(\boldsymbol{x}^{t}\!\right)\!\right\Vert^2\nonumber\\
          &\leq \left(1+\frac{2}{\tau}\right)\mathbb{E}\left[\left\Vert\boldsymbol{x}^{t,i-1}_{n} - \boldsymbol{x}^{t}  \right\Vert^2\right]+8\tau\left(\eta^t\right)^2 \sigma_n^2 +8\tau\left(\eta^t\right)^2\epsilon^2
         +8\tau\left(\eta^t\right)^2\left\Vert\nabla_{\boldsymbol{x}}f\left(\boldsymbol{x}^{t}\right)\right\Vert^2
\end{align}
where we have applied Assumption \ref{asm:lipschitz_grad}, $\left(X+Y\right)^2\leq\left(1+a\right)X^2+\left(1+\frac{1}{a}\right)Y^2$ for some positive $a$, and $\eta^t\leq\frac{1}{\sqrt{8}S\tau}$.

Let
\begin{align}
    &A_{t,i}:=\mathbb{E}\left[\left\Vert\boldsymbol{x}^{t,i}_{n} - \boldsymbol{x}^{t}  \right\Vert^2\right]\nonumber\\
    &B:=8\tau\left(\eta^t\right)^2 \sigma_n^2 +8\tau\left(\eta^t\right)^2\epsilon^2
         +8\tau\left(\eta^t\right)^2\left\Vert\nabla_{\boldsymbol{x}}f\left(\boldsymbol{x}^{t}\right)\right\Vert^2\nonumber\\
    &C:=1+\frac{2}{\tau}\nonumber
\end{align}

We have 
\begin{align}
    A_{t,i}\leq CA_{t,i-1}+B
\end{align}
We can show that
\begin{align}
   & A_{t,i}\leq C^iA_{t}+B\sum_{j=0}^{i-1}C^{j}\nonumber
\end{align}

Note that $A_{t}=\mathbb{E}\left[\left\Vert \boldsymbol{x}^{t}-\boldsymbol{x}^{t}\right\Vert^2\right]=0$. Accumulate the above for $\tau$ iterations, we have
\begin{align}
    &\sum_{i=0}^{\tau-1}  \mathbb{E} \left[ \left\Vert  \boldsymbol{x}^{t,i}_{n}- \boldsymbol{x}^{t}\right\Vert^2\right] =\sum_{i=0}^{\tau-1} B\sum_{j=0}^{i-1}C^{j} \nonumber\\
   & \leq 2\tau^2B \nonumber\\
    & \leq 2\tau^2\left(8\tau\left(\eta^t\right)^2 \sigma_n^2 +8\tau\left(\eta^t\right)^2\epsilon^2
         +8\tau\left(\eta^t\right)^2\left\Vert\nabla_{\boldsymbol{x}}f\left(\boldsymbol{x}^{t}\right)\right\Vert^2\right)
\end{align}
where we use $\sum_{i=0}^{N-1}x^i=\frac{x^N-1}{X-1}$ and $ (1+\frac{n}{x})^x\leq e^n$.
Therefore, we complete the proof.
\end{proof}

\begin{lemma}\label{lem:multiple-local-training-grad}[Multiple iterations of local gradient accumulation in each round] Under Assumption \ref{asm:lipschitz_grad}, if we let $\eta^t\leq\frac{1}{2S\tau}$ and run client $n$'s local model for $\tau$ iteration continuously in any round $t$, we have 
\begin{align}
    \sum_{i=0}^{\tau-1}\mathbb{E}\left[\left\Vert \boldsymbol{g}^{t,i}_{n}-\boldsymbol{g}_{n}^t\right\Vert^2\right] \leq 8\tau^3\left(\eta^t\right)^2S^2\left(\left\Vert \nabla_{\boldsymbol{x}}F_n\left(\boldsymbol{x}^{t}\right)\right\Vert^2+\sigma_n^2\right). 
\end{align}
\end{lemma}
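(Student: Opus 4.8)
The plan is to control the gradient drift $\|\boldsymbol{g}^{t,i}_{n}-\boldsymbol{g}_n^t\|^2$ by the parameter drift $\|\boldsymbol{x}^{t,i}_{n}-\boldsymbol{x}^{t}\|^2$ and then run a recursion of the same type as in the proof of Lemma~\ref{lem:multiple-local-training-convex}. Concretely, viewing $\boldsymbol{g}^{t,i}_{n}$ and $\boldsymbol{g}_n^t$ as the stochastic gradients of $F_n$ evaluated at $\boldsymbol{x}^{t,i}_{n}$ and at $\boldsymbol{x}^{t}$ on the \emph{same} mini-batch instance, the $S$-smoothness in Assumption~\ref{asm:lipschitz_grad} gives $\|\boldsymbol{g}^{t,i}_{n}-\boldsymbol{g}_n^t\|\le S\|\boldsymbol{x}^{t,i}_{n}-\boldsymbol{x}^{t}\|$, hence $\sum_{i=0}^{\tau-1}\mathbb{E}\|\boldsymbol{g}^{t,i}_{n}-\boldsymbol{g}_n^t\|^2\le S^2\sum_{i=0}^{\tau-1}\mathbb{E}\|\boldsymbol{x}^{t,i}_{n}-\boldsymbol{x}^{t}\|^2$. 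If one instead insists on independent noise for the two stochastic gradients, one inserts $\pm\nabla F_n(\boldsymbol{x}^{t,i}_{n})$ and $\pm\nabla F_n(\boldsymbol{x}^{t})$, bounds the deterministic middle difference by $S\|\boldsymbol{x}^{t,i}_{n}-\boldsymbol{x}^{t}\|$, and carries the two resulting $\sigma_n^2$ contributions through the recursion below; only constants change.

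It then remains to establish $\sum_{i=0}^{\tau-1}\mathbb{E}\|\boldsymbol{x}^{t,i}_{n}-\boldsymbol{x}^{t}\|^2\le 8\tau^3(\eta^t)^2\left(\|\nabla_{\boldsymbol{x}}F_n(\boldsymbol{x}^{t})\|^2+\sigma_n^2\right)$, which is a sharpened variant of Lemma~\ref{lem:multiple-local-training-convex} that keeps $\|\nabla_{\boldsymbol{x}}F_n(\boldsymbol{x}^{t})\|^2$ explicit rather than upper bounding it by $G^2$. I would unroll $\boldsymbol{x}^{t,i}_{n}=\boldsymbol{x}^{t,i-1}_{n}-\eta^t\boldsymbol{g}^{t,i-1}_{n}$ exactly as in that proof: peel off the zero-mean noise $\boldsymbol{g}^{t,i-1}_{n}-\nabla F_n(\boldsymbol{x}^{t,i-1}_{n})$ (which, after conditioning on the past, contributes $(\eta^t)^2\sigma_n^2$ per step), add and subtract $\nabla F_n(\boldsymbol{x}^{t})$, apply $\|X+Y\|^2\le(1+a)\|X\|^2+(1+\tfrac1a)\|Y\|^2$ with $a=\Theta(1/\tau)$ together with $S$-smoothness and the step-size bound $\eta^t\le\frac{1}{2S\tau}$ (so that $\tau(\eta^t)^2S^2=O(1/\tau)$). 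This yields a recursion $A^{t,i}_{n}\le\bigl(1+\tfrac{c_1}{\tau}\bigr)A^{t,i-1}_{n}+c_2\tau(\eta^t)^2\bigl(\|\nabla_{\boldsymbol{x}}F_n(\boldsymbol{x}^{t})\|^2+\sigma_n^2\bigr)$ with $A^{t,0}_{n}=0$; summing the geometric series over $i=0,\dots,\tau-1$ and using $(1+\tfrac{c_1}{\tau})^\tau\le e^{c_1}$ gives the target, and combining with the first step finishes the proof.

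The main obstacle is bookkeeping rather than conceptual. To land on the exact coefficient $8$ and the clean symmetric form $\|\nabla_{\boldsymbol{x}}F_n(\boldsymbol{x}^{t})\|^2+\sigma_n^2$, one must be careful (i) not to prematurely bound $\|\nabla_{\boldsymbol{x}}F_n(\boldsymbol{x}^{t})\|^2$ by $G^2$ as was done in Lemma~\ref{lem:multiple-local-training-convex}, (ii) to choose the splitting parameter $a$ and use the step-size constraint so that both the geometric factor $e^{c_1}$ and the accumulated constant stay controlled, and (iii) to verify that the stated $\eta^t\le\frac{1}{2S\tau}$ (slightly larger than the $\frac{1}{\sqrt{6}S\tau}$ used in Lemma~\ref{lem:multiple-local-training-convex}) is still small enough for the recursion's rate to remain below a fixed constant. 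The only genuinely delicate modeling point is the reduction in the first step: it is immediate when $\boldsymbol{g}^{t,i}_{n}$ and $\boldsymbol{g}_n^t$ share a mini-batch, and otherwise costs two extra variance terms that must be absorbed into the recursion.
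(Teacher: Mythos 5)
Your proposal is correct, but it takes a genuinely different route from the paper's. The paper runs the recursion \emph{directly on the gradient gap}: it writes $\boldsymbol{g}^{t,i}_{n}-\boldsymbol{g}_{n}^t=(\boldsymbol{g}^{t,i}_{n}-\boldsymbol{g}^{t,i-1}_{n})+(\boldsymbol{g}^{t,i-1}_{n}-\boldsymbol{g}_{n}^t)$, applies Young's inequality with weights $(1+\tau)$ and $(1+\tfrac1\tau)$, bounds the consecutive-step term by $S^2\|\boldsymbol{x}^{t,i}_{n}-\boldsymbol{x}^{t,i-1}_{n}\|^2=(\eta^t)^2S^2\|\boldsymbol{g}^{t,i-1}_{n}\|^2$ via the update rule, and then closes the recursion by splitting $\|\boldsymbol{g}^{t,i-1}_{n}\|^2\le 2\|\boldsymbol{g}^{t,i-1}_{n}-\boldsymbol{g}_n^t\|^2+2\|\boldsymbol{g}_n^t\|^2$, finishing with $\mathbb{E}\|\boldsymbol{g}_n^t\|^2\le\|\nabla F_n(\boldsymbol{x}^t)\|^2+\sigma_n^2$. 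You instead reduce the gradient gap to the parameter drift in one shot, $\|\boldsymbol{g}^{t,i}_{n}-\boldsymbol{g}_n^t\|^2\le S^2\|\boldsymbol{x}^{t,i}_{n}-\boldsymbol{x}^{t}\|^2$, and then prove a sharpened drift lemma that keeps $\|\nabla F_n(\boldsymbol{x}^t)\|^2$ explicit; your bookkeeping (handling the zero-mean noise by conditioning so it contributes $(\eta^t)^2\sigma_n^2$ additively rather than through Young's inequality, and using $\eta^t\le\tfrac{1}{2S\tau}$ to keep the contraction rate at $1+\tfrac2\tau$) does land on the stated constant $8$. Your route is more modular — it recycles the drift-lemma machinery — while the paper's avoids ever invoking Lipschitzness of the stochastic gradient between \emph{non-consecutive} iterates, needing it only step-to-step. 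Note, however, that both arguments rest on the same unflagged strengthening of Assumption \ref{asm:lipschitz_grad}: applying $S$-smoothness to the \emph{stochastic} gradients (evaluated on different mini-batches) rather than to $\nabla F_n$. You are the only one of the two to acknowledge this, and your fallback of inserting $\pm\nabla F_n$ at the cost of extra $\sigma_n^2$ terms is sound, though as you say it would perturb the constant $8$; the paper's own proof would need the identical repair.
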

\begin{proof}
\begin{align}
    &\mathbb{E}\left[\left\Vert \boldsymbol{g}^{t,i}_{n}-\boldsymbol{g}_{n}^t\right\Vert^2\right]\nonumber\\
    &\leq \mathbb{E}\left[\left\Vert \boldsymbol{g}^{t,i}_{n}-\boldsymbol{g}_{n}^{t,i-1}+\boldsymbol{g}_{n}^{t,i-1}-\boldsymbol{g}_{n}^t\right\Vert^2\right]\nonumber\\
    & \leq \left(1+\tau\right)\mathbb{E}\left[\left\Vert \boldsymbol{g}^{t,i}_{n}-\boldsymbol{g}_{n}^{t,i-1}\right\Vert^2\right]+ 
    \left(1+\frac{1}{\tau}\right)\mathbb{E}\left[\left\Vert \boldsymbol{g}_{n}^{t,i-1}-\boldsymbol{g}_{n}^t\right\Vert^2\right] \nonumber\\
    & \leq \left(1+\tau\right)S^2\mathbb{E}\left[\left\Vert \boldsymbol{x}^{t,i}_{n}-\boldsymbol{x}^{t,i-1}_{n}\right\Vert^2\right]+ 
    \left(1+\frac{1}{\tau}\right)\mathbb{E}\left[\left\Vert \boldsymbol{g}_{n}^{t,i-1}-\boldsymbol{g}_{n}^t\right\Vert^2\right] \nonumber\\
    & \leq \left(1+\tau\right)\left(\eta^t\right)^2S^2\mathbb{E}\left[\left\Vert \boldsymbol{g}_{n}^{t,i-1}\right\Vert^2\right]+ 
    \left(1+\frac{1}{\tau}\right)\mathbb{E}\left[\left\Vert \boldsymbol{g}_{n}^{t,i-1}-\boldsymbol{g}_{n}^t\right\Vert^2\right] \nonumber\\
    & \leq \left(1+\tau\right)\left(\eta^t\right)^2S^2\mathbb{E}\left[\left\Vert \boldsymbol{g}_{n}^{t,i-1}-\boldsymbol{g}_{n}^t+\boldsymbol{g}_{n}^t\right\Vert^2\right]+ 
    \left(1+\frac{1}{\tau}\right)\mathbb{E}\left[\left\Vert \boldsymbol{g}_{n}^{t,i-1}-\boldsymbol{g}_{n}^t\right\Vert^2\right] \nonumber \\
    & \leq 2\left(1+\tau\right)\left(\eta^t\right)^2S^2\mathbb{E}\left[\left\Vert \boldsymbol{g}_{n}^{t,i-1}-\boldsymbol{g}_{n}^t\right\Vert^2\right] + 2\left(1+\tau\right)\left(\eta^t\right)^2S^2\mathbb{E}\left[\left\Vert \boldsymbol{g}_{n}^t\right\Vert^2\right] \nonumber\\ &+ \left(1+\frac{1}{\tau}\right)\mathbb{E}\left[\left\Vert \boldsymbol{g}_{n}^{t,i-1}-\boldsymbol{g}_{n}^t\right\Vert^2\right] \nonumber \\
    & \leq \left(1+\frac{2}{\tau}\right)\mathbb{E}\left[\left\Vert \boldsymbol{g}_{n}^{t,i-1}-\boldsymbol{g}_{n}^t\right\Vert^2\right] + 2\left(1+\tau\right)\left(\eta^t\right)^2S^2\mathbb{E}\left[\left\Vert \boldsymbol{g}_{n}^t\right\Vert^2\right] .
\end{align}

We define the following notation for simplicity:
\begin{align}
    &A_{t,i}:= \mathbb{E}\left[\left\Vert \boldsymbol{g}^{t,i}_{n}-\boldsymbol{g}_{n}^t\right\Vert^2\right]\\
    &B:=2\left(1+\tau\right)\left(\eta^t\right)^2S^2\mathbb{E}\left[\left\Vert \boldsymbol{g}_{n}^t\right\Vert^2\right]\\
    &C:=\left(1+\frac{2}{\tau}\right)
\end{align}

We have 
\begin{align}
    A_{t,i}\leq CA_{t,i-1}+B
\end{align}
We can show that
\begin{align}
  & A_{t,i}\leq C^iA_{t}+B\sum_{j=0}^{i-1}C^{j}\nonumber
\end{align}

Note that $A_{t}=\mathbb{E}\left[\left\Vert \boldsymbol{g}_{n}^t-\boldsymbol{g}_{n}^t\right\Vert^2\right]=0$. For the second part, we have
\begin{align}
    &\sum_{i=0}^{\tau-1}\mathbb{E}\left[\left\Vert \boldsymbol{g}^{t,i}_{n}-\boldsymbol{g}_{n}^t\right\Vert^2\right] =\sum_{i=0}^{\tau-1} B\sum_{j=0}^{i-1}C^{j}  \leq 2\tau^2B \nonumber\\
    & \leq 4\tau^2\left(1+\tau\right)\left(\eta^t\right)^2S^2\mathbb{E}\left[\left\Vert \boldsymbol{g}_{n}^t\right\Vert^2\right]\nonumber\\
    & \leq 8\tau^3\left(\eta^t\right)^2S^2\mathbb{E}\left[\left\Vert \boldsymbol{g}_{n}^t\right\Vert^2\right]\nonumber\\
    & \leq 8\tau^3\left(\eta^t\right)^2S^2\left(\left\Vert \nabla_{\boldsymbol{x}}F_n\left(\boldsymbol{x}^{t}\right)\right\Vert^2+\sigma_n^2\right) .
\end{align}
\end{proof}

\newpage 

\section{Proof for Theorem \ref{theorem-v1-full}}\label{proof-v1-full}
We organize the proof of Theorem \ref{theorem-v1-full} as follows:
\begin{itemize}
\item In Sec. \ref{proof-v1-sc-full}, we prove the strongly convex case.
\item In Sec. \ref{proof-v1-gc-full}, we prove the general convex case.
\item In Sec. \ref{proof-v1-nc-full}, we prove the non-convex case.
\end{itemize}

\subsection{Strongly convex case for SFL-V1}\label{proof-v1-sc-full}
\subsubsection{\textbf{One-round Parallel Update for M-Server-Side Model}}
\begin{lemma} \label{lem:server-strong-full-1}
    Under Assumptions \ref{asm:lipschitz_grad} and \ref{asm:convex}, if $\eta^t\leq\frac{1}{2S\tilde{\tau}}$, in round $t$, the M-server-side model evolves as  
    \begin{align}
    &\mathbb{E}\left[\left\|{\boldsymbol{x}}^{t+1}_{s}-\boldsymbol{x}_s^{\ast}\right\|^2 \right]\nonumber\\
  & \leq\left(1-\frac{\eta^t\tilde{\tau}\mu}{2}\right)\mathbb{E}\left[\left\|\boldsymbol{x}^{t}_{s}-\boldsymbol{x}_s^{\ast}\right\|^2\right]-2\eta^t\tilde{\tau} \mathbb{E}\left[f\left(\boldsymbol{x}^{t}\right)-f\left(\boldsymbol{x}^{\ast}\right)\right]\nonumber\\
  &  +\left(\eta^t\right)^2\left(\tilde{\tau}\right)^2 N\sum_{n=1}^Na_n^2\left(2\sigma_n^2+G^2\right) + 24S\left(\tilde{\tau}\right)^3\left(\eta^t\right)^3 \sum_{n=1}^Na_n\left(2\sigma_n^2+G^2\right).
\end{align}
\end{lemma}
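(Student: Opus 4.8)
The plan is to establish this as a one-round contraction for the server-side block $\boldsymbol{x}_s$ alone, in the spirit of the FedAvg-on-heterogeneous-data analyses. Under SFL-V1 with full participation the server-side model is updated by $\boldsymbol{x}^{t+1}_s = \boldsymbol{x}^t_s - \eta^t \Delta^t_s$ with $\Delta^t_s := \sum_{n}a_n\sum_{i=0}^{\tilde{\tau}-1}\boldsymbol{g}_{s,n}^{t,i}(\boldsymbol{x}^{t,i}_{s,n})$, i.e., each of the $N$ per-client server-side copies performs $\tilde{\tau}$ SGD steps from the shared point $\boldsymbol{x}^t_s$ and the outputs are $a_n$-averaged. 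First I would expand $\|\boldsymbol{x}^{t+1}_s-\boldsymbol{x}^*_s\|^2 = \|\boldsymbol{x}^t_s-\boldsymbol{x}^*_s\|^2 - 2\eta^t\langle \Delta^t_s,\,\boldsymbol{x}^t_s-\boldsymbol{x}^*_s\rangle + (\eta^t)^2\|\Delta^t_s\|^2$ and take expectations.

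For the inner-product term I would take conditional expectations iteration by iteration, use the unbiasedness in Assumption~\ref{asm:lipschitz_grad} to replace each $\boldsymbol{g}_{s,n}^{t,i}$ by the corresponding partial gradient $\nabla_{\boldsymbol{x}_s}F_n$ at the drifted iterate, and then apply Lemma~\ref{lem:smooth-convex} with $h$ a $\mu$-strongly-convex $S$-smooth per-client server-side loss, $\boldsymbol{x}$ the drifted iterate, $\boldsymbol{z}=\boldsymbol{x}^t_s$, $\boldsymbol{y}=\boldsymbol{x}^*_s$. Summing over the $\tilde{\tau}$ iterations and $a_n$-averaging, the $h(\boldsymbol{z})-h(\boldsymbol{y})$ contributions aggregate (via $\sum_n a_n F_n=f$ and optimality of $\boldsymbol{x}^*$) into $-2\eta^t\tilde{\tau}\,(f(\boldsymbol{x}^t)-f(\boldsymbol{x}^*))$; the $\tfrac{\mu}{4}\|\boldsymbol{y}-\boldsymbol{z}\|^2$ contributions give $-\tfrac{\eta^t\tilde{\tau}\mu}{2}\|\boldsymbol{x}^t_s-\boldsymbol{x}^*_s\|^2$, which merges with the leading term into the $(1-\tfrac{\eta^t\tilde{\tau}\mu}{2})$ factor; and the $-S\|\boldsymbol{z}-\boldsymbol{x}\|^2$ contributions become a drift penalty $+\,2\eta^t S\sum_n a_n\sum_{i=0}^{\tilde{\tau}-1}\mathbb{E}\|\boldsymbol{x}^{t,i}_{s,n}-\boldsymbol{x}^t_s\|^2$.

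It then remains to bound the last two pieces. For the squared-norm term, Cauchy--Schwarz across the $N$ clients and across the $\tilde{\tau}$ iterations together with the bounded second moment $\mathbb{E}\|\boldsymbol{g}_n\|^2\le G^2$ give $(\eta^t)^2\mathbb{E}\|\Delta^t_s\|^2\le (\eta^t)^2\tilde{\tau}^2 N\sum_n a_n^2 G^2\le (\eta^t)^2\tilde{\tau}^2 N\sum_n a_n^2(2\sigma_n^2+G^2)$, which is precisely the third term of the claim. For the drift penalty, I would invoke Lemma~\ref{lem:multiple-local-training-convex} (read with $\tilde{\tau}$ in place of $\tau$ for the server-side copies), obtaining $\sum_{i=0}^{\tilde{\tau}-1}\mathbb{E}\|\boldsymbol{x}^{t,i}_{s,n}-\boldsymbol{x}^t_s\|^2\le 12\tilde{\tau}^3(\eta^t)^2(2\sigma_n^2+G^2)$, so that $2\eta^t S\sum_n a_n\cdot 12\tilde{\tau}^3(\eta^t)^2(2\sigma_n^2+G^2)=24S\tilde{\tau}^3(\eta^t)^3\sum_n a_n(2\sigma_n^2+G^2)$, the fourth term. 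Collecting everything (the hypothesis $\eta^t\le\tfrac{1}{2S\tilde{\tau}}$ suffices both for Lemma~\ref{lem:multiple-local-training-convex} up to absolute constants and to sign-control the residual lower-order terms) yields the stated recursion.

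The main obstacle is the bookkeeping hidden in the inner-product step, which is exactly the ``dual-paced update'' feature that makes SFL harder than FL: the server-side stochastic gradients are \emph{partial} gradients of $F_n$ evaluated at the \emph{per-client drifted} iterates $(\boldsymbol{x}^{t,i}_{c,n},\boldsymbol{x}^{t,i}_{s,n})$ rather than at the round anchor $\boldsymbol{x}^t$, so extracting the clean global gap $f(\boldsymbol{x}^t)-f(\boldsymbol{x}^*)$ from Lemma~\ref{lem:smooth-convex} requires carefully peeling off and re-absorbing into the lower-order drift terms the discrepancies between $F_n(\boldsymbol{x}^{t,i}_{c,n},\cdot)$ and $F_n(\boldsymbol{x}^t_c,\cdot)$ and between $F_n(\cdot,\boldsymbol{x}^*_s)$ and $F_n(\boldsymbol{x}^*)$ (using $S$-smoothness at the stationary point $\boldsymbol{x}^*$ and the bounded-gradient assumption). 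Getting these cross-block terms to land as $O((\eta^t)^2)$ or $O((\eta^t)^3)$ corrections, rather than contaminating the $O(\eta^t)$ descent, is the delicate part; once that is done, the rest is the standard telescoping/assembly above.
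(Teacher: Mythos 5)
Your proposal follows essentially the same route as the paper's proof: expand the one-round recursion, apply Lemma~\ref{lem:smooth-convex} to the inner-product term to extract the global gap, the strong-convexity contraction, and the drift penalty, bound the drift via Lemma~\ref{lem:multiple-local-training-convex}, and bound the accumulated stochastic-gradient second moment by $(\eta^t)^2\tilde{\tau}^2 N\sum_n a_n^2(2\sigma_n^2+G^2)$ (the paper splits this into a variance piece and a gradient-norm piece, which sums to the same bound). The cross-block subtlety you flag at the end — that the partial gradients are evaluated at drifted iterates in \emph{both} blocks while Lemma~\ref{lem:smooth-convex} is invoked as if only $\boldsymbol{x}_s$ varied — is real, but the paper's own proof applies the lemma directly without the careful peeling you describe, so your treatment is if anything more conservative than the reference argument.
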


We prove Lemma \ref{lem:server-strong-full-1} as follows.
\begin{proof}
We use $\boldsymbol{x}_{s,n}^{t,i}$ as the M-server-side model when the M-server interacts with  client $n$ for the $i$-th iteration of model training at round $t$. 
Using the (sequential) gradient update rule of 
${\boldsymbol{x}}^{t+1}_{s}=\boldsymbol{x}^t_{s} - \eta^t \sum_{n=1}^{N}\sum_{i=0}^{\tilde{\tau}-1}a_n\boldsymbol{g}_{s,n}^{t,i}\left(\left\{\boldsymbol{x}^{t,i}_{c,n},\boldsymbol{x}^{t,i}_{s,n}\right\}\right)$, we have 
\begin{align}
    &\mathbb{E}\left[\left\|{\boldsymbol{x}}^{t+1}_{s}-\boldsymbol{x}_s^{\ast}\right\|^2\right] \nonumber\\
   & =\mathbb{E}\left[\left\|\boldsymbol{x}^{t}_{s}-\eta^t \sum_{i=0}^{\tilde{\tau}-1}\boldsymbol{g}_{s}^{t,i}-\boldsymbol{x}_s^{\ast}-\eta^t\sum_{i=0}^{\tilde{\tau}-1}\nabla_{\boldsymbol{x}_s} f\left(\left\{\boldsymbol{x}^{t,i}_c,\boldsymbol{x}^{t,i}_{s}\right\}\right)+\eta^t \sum_{i=0}^{\tilde{\tau}-1}\nabla_{\boldsymbol{x}_s} f\left(\left\{\boldsymbol{x}^{t,i}_c,\boldsymbol{x}^{t,i}_{s}\right\}\right)\right\|^2 \right]\nonumber \\
& =\mathbb{E}\left[\left\|\boldsymbol{x}^{t}_{s}-\boldsymbol{x}_s^{\ast}-\eta^t \sum_{i=0}^{\tilde{\tau}-1}\nabla_{\boldsymbol{x}_s} f\left(\left\{\boldsymbol{x}^{t,i}_c,\boldsymbol{x}^{t,i}_{s}\right\}\right)\right\|^2\right]\nonumber\\
&+2 \eta^t\mathbb{E}\left[\left\langle\boldsymbol{x}^{t}_{s}-\boldsymbol{x}_s^{\ast}-\eta^t \sum_{i=0}^{\tilde{\tau}-1}\nabla_{\boldsymbol{x}_s} f\left(\left\{\boldsymbol{x}^{t,i}_c,\boldsymbol{x}^{t,i}_{s}\right\}\right),\sum_{i=0}^{\tilde{\tau}-1} \nabla _{\boldsymbol{x}_s}f\left(\left\{\boldsymbol{x}^{t,i}_c,\boldsymbol{x}^{t,i}_{s}\right\}\right)-\sum_{i=0}^{\tilde{\tau}-1}\boldsymbol{g}_{s}^{t,i}\right\rangle\right]\nonumber\\
&+\mathbb{E}\left[\left(\eta^t\right)^2\left\|\sum_{i=0}^{\tilde{\tau}-1}\boldsymbol{g}_{s}^{t,i}-\sum_{i=0}^{\tilde{\tau}-1}\nabla_{\boldsymbol{x}_s} f\left(\left\{\boldsymbol{x}^{t,i}_c,\boldsymbol{x}^{t,i}_{s}\right\}\right)\right\|^2\right] \nonumber\\
    &\leq\mathbb{E}\left[\left\|\boldsymbol{x}^{t}_s\!-\!\boldsymbol{x}_s^{\ast}\!-\!\eta^t\!\sum_{i=0}^{\tilde{\tau}-1}\nabla_{\boldsymbol{x}_s} f\left(\left\{\boldsymbol{x}^{t,i}_c,\boldsymbol{x}^{t,i}_{s}\right\}\right)\right\|^2\right]\nonumber\\
&\quad\quad+\!\left(\eta^t\right)^2\mathbb{E}\left[\left\|\sum_{n=1}^N\sum_{i=0}^{\tilde{\tau}-1}a_n\boldsymbol{g}^{t,i}_{s,n}\!-\!\sum_{i=0}^{\tilde{\tau}-1}\!\nabla_{\boldsymbol{x}_s} f\left(\left\{\boldsymbol{x}^{t,i}_c,\boldsymbol{x}^{t,i}_{s}\right\}\right)\right\|^2\right]. \label{eq:server-strong-full-1.1}
\end{align}
where the second equality is from $\left(a+b\right)^2=a^2+2ab+b^2$ and the last inequality is due to $\mathbb{E}\left[\nabla_{\boldsymbol{x}_s} f\left(\left\{\boldsymbol{x}^{t,i}_c,\boldsymbol{x}^{t,i}_{s}\right\}\right)-\boldsymbol{g}^{t,i}_{s}\right]=0$.
 
The first part in \eqref{eq:server-strong-full-1.1} is 
\begin{align}
    &\mathbb{E}\left[\left\|\boldsymbol{x}^{t}_s\!-\!\boldsymbol{x}_s^{\ast}\!-\!\eta^t\!\sum_{i=0}^{\tilde{\tau}-1}\nabla_{\boldsymbol{x}_s} f\left(\left\{\boldsymbol{x}^{t,i}_c,\boldsymbol{x}^{t,i}_{s}\right\}\right)\right\|^2\right]\nonumber\\
     &\leq\mathbb{E}\left[\left\|\boldsymbol{x}^{t}_s-\boldsymbol{x}^{\ast}_s\right\|^2\right]
  +\left(\eta^t\right)^2\tilde{\tau} N\sum_{n=1}^N\sum_{i=0}^{\tilde{\tau}-1}a_n^2\mathbb{E}\left[\left\| \nabla_{\boldsymbol{x}_s} F_n\left(\left\{\boldsymbol{x}^{t,i}_{c,n},\boldsymbol{x}^{t,i}_{s,n}\right\}\right)\right\|^2\right]\nonumber\\
   &-2\eta^t\mathbb{E}\left[\sum_{n=1}^{N}\sum_{i=0}^{\tilde{\tau}-1}a_n\left\langle\boldsymbol{x}^{t}_s-\boldsymbol{x}^{\ast}_s,\nabla_{\boldsymbol{x}_s} F_n\left(\left\{\boldsymbol{x}^{t,i}_{c},\boldsymbol{x}^{t,i}_{s}\right\}\right)\right\rangle\right],\label{eq:server-strong-full-1.2}
\end{align}
where we use $\nabla_{\boldsymbol{x}_s}f\left(\left\{\boldsymbol{x}_c,\boldsymbol{x}_s\right\}\right)=\sum_{n=1}^Na_n\nabla_{\boldsymbol{x}_s}F_n\left(\left\{\boldsymbol{x}_c,\boldsymbol{x}_s\right\}\right)$. 

For \eqref{eq:server-strong-full-1.2}, we have
\begin{align}
 &\left(\eta^t\right)^2\tilde{\tau} N\sum_{n=1}^N\sum_{i=0}^{\tilde{\tau}-1}a_n^2\mathbb{E}\left[\left\| \nabla_{\boldsymbol{x}_s} F_n\left(\left\{\boldsymbol{x}^{t,i}_{c,n},\boldsymbol{x}^{t,i}_{s,n}\right\}\right)\right\|^2\right]\nonumber\\
   &=\left(\eta^t\right)^2\tilde{\tau} N\sum_{n=1}^N\sum_{i=0}^{\tilde{\tau}-1}a_n^2\mathbb{E}\left[\left\| \nabla_{\boldsymbol{x}_s} F_n\left(\left\{\boldsymbol{x}^{t,i}_{c,n},\boldsymbol{x}^{t,i}_{s,n}\right\}\right)\right.\right.\nonumber\\
   & \left.\left.-\boldsymbol{g}_{s,n}^{t,i}\left(\left\{\boldsymbol{x}^{t,i}_{c,n},\boldsymbol{x}^{t,i}_{s,n}\right\}\right)\right\|^2 \right]+ \mathbb{E}\left[\left\|\boldsymbol{g}_{s,n}^{t,i}\left(\left\{\boldsymbol{x}^{t,i}_{c,n},\boldsymbol{x}^{t,i}_{s,n}\right\}\right)\right\|^2\right]\nonumber\\
     &\le\left(\eta^t\right)^2\left(\tilde{\tau}\right)^2 N\sum_{n=1}^Na_n^2\left(\sigma_n^2+G^2\right), 
    \end{align}
where the first inequality applies triangle inequality. In the last inequality, we apply the bound of variance and expected squared norm for stochastic gradients in  Assumption \ref{asm:lipschitz_grad}.

Since $F_n(\boldsymbol{x})$ is $S$-smooth and $\mu$-strongly convex, using Lemma \ref{lem:smooth-convex} we have
\begin{align}
    &-2\eta^t\mathbb{E}\left[\sum_{n=1}^{N}\sum_{i=0}^{\tilde{\tau}-1}a_n\left\langle\boldsymbol{x}^{t}_{s}-\boldsymbol{x}^{\ast}_s,\nabla_{\boldsymbol{x}_s} F_n\left(\left\{\boldsymbol{x}^{t,i}_{c,n},\boldsymbol{x}^{t,i}_{s,n}\right\}\right)\right\rangle\right]\nonumber\\
    &\leq -2\eta^t\sum_{n=1}^{N}\sum_{i=0}^{\tilde{\tau}-1}a_n\mathbb{E}\left[ \left(F_n\left(\boldsymbol{x}^{t}\right)-F_n\left(\boldsymbol{x}^{\ast}\right)\right.\right.\nonumber\\
  &\left.\left.+\frac{\mu}{4}\left\Vert\boldsymbol{x}^{t}_{s}-\boldsymbol{x}^{\ast}_s\right\Vert^2
  -S\left\Vert\boldsymbol{x}^{t,i}_{s,n}-\boldsymbol{x}^{t}_s\right\Vert^2\right)\right].\label{eq:server-strong-full-1.3}
\end{align}

By Lemma \ref{lem:multiple-local-training-convex}, we have
\begin{align}
    &\sum_{n=1}^N\sum_{i=0}^{\tilde{\tau}-1}  a_n\mathbb{E} \left[ \left\Vert  \boldsymbol{x}^{t,i}_{s,n}- \boldsymbol{x}^{t}_s\right\Vert^2\right] \leq 12\sum_{n=1}^Na_n\left(\tilde{\tau}\right)^3\left(\eta^t\right)^2\left(2\sigma_n^2+G^2\right).
\end{align}

From Assumption \ref{asm:lipschitz_grad}, the second part in \eqref{eq:server-strong-full-1.1} is bounded by
\begin{align}
    &\mathbb{E}\left\|\sum_{n=1}^N\sum_{i=0}^{\tilde{\tau}-1}a_n\boldsymbol{g}^{t,i}_{s,n}\!-\!\sum_{i=0}^{\tilde{\tau}-1}\!\nabla_{\boldsymbol{x}_s} f\left(\left\{\boldsymbol{x}^{t,i}_c,\boldsymbol{x}^{t,i}_{s}\right\}\right)\right\|^2 \nonumber\\
        & \leq\tilde{\tau} \sum_{i=0}^{\tilde{\tau}-1}\mathbb{E}\left\|\sum_{n=1}^N a_n\left(\boldsymbol{g}^{t,i}_{s,n}-\nabla_{\boldsymbol{x}_s} F_n\left(\left\{\boldsymbol{x}^{t,i}_c,\boldsymbol{x}^{t,i}_{s}\right\}\right)\right)\right\|^2\nonumber\\
    &\leq N\sum_{n=1}^N a_n^2 \sigma_n^2 \left(\tilde{\tau}\right)^2.
\end{align}

Thus, by $\sum_{n=1}^Na_n=1$, \eqref{eq:server-strong-full-1.1} becomes
\begin{align}
    &\mathbb{E}\left[\left\|{\boldsymbol{x}}^{t+1}_{s}-\boldsymbol{x}_s^{\ast}\right\|^2 \right]\nonumber\\
    & \leq \mathbb{E}\left[\left\|\boldsymbol{x}^{t}_s-\boldsymbol{x}^{\ast}_s\right\|^2\right]
  +\left(\eta^t\right)^2\left(\tilde{\tau}\right)^2 N\sum_{n=1}^Na_n^2\left(\sigma_n^2+G^2\right)\nonumber\\
   &-2\eta^t\tilde{\tau}\sum_{n=1}^Na_n \mathbb{E} \left[f\left(\boldsymbol{x}^{t}\right)-f\left(\boldsymbol{x}^{\ast}\right)\right]\nonumber\\
  &-\frac{\mu \tilde{\tau}\eta^t\sum_{n=1}^Na_n}{2}\left\Vert\boldsymbol{x}^{t}_{s}-\boldsymbol{x}^{\ast}_s\right\Vert^2
  +2\eta^t\left(12\sum_{n=1}^Na_nS\left(\tilde{\tau}\right)^3\left(\eta^t\right)^2\left(2\sigma_n^2+G^2\right)
\right)\nonumber\\
&+N\sum_{n=1}^N a_n^2 \left(\eta^t\right)^2\sigma_n^2 \left(\tilde{\tau}\right)^2\nonumber\\
    &\leq\left(1-\frac{\eta^t\tilde{\tau}\mu}{2}\right)\mathbb{E}\left[\left\|\boldsymbol{x}^{t}_{s}-\boldsymbol{x}_s^{\ast}\right\|^2\right]-2\eta^t\tilde{\tau} \mathbb{E}\left[f\left(\boldsymbol{x}^{t}\right)-f\left(\boldsymbol{x}^{\ast}\right)\right]\nonumber\\
  &  +\left(\eta^t\right)^2\left(\tilde{\tau}\right)^2 N\sum_{n=1}^Na_n^2\left(2\sigma_n^2+G^2\right) + 24S\left(\tilde{\tau}\right)^3\left(\eta^t\right)^3 \sum_{n=1}^Na_n\left(2\sigma_n^2+G^2\right). \label{eq:server-strong-full-1.4}
\end{align}
\end{proof}

We now prove the convergence error.
Let $\Delta^{t+1} \triangleq \mathbb{E}\left[\left\|\boldsymbol{x}_s^{t+1}-\boldsymbol{x}_s^*\right\|^2\right]$. We can rewrite \eqref{eq:server-strong-full-1.4} as:
\begin{align}
\Delta^{t+1} & \leq\left(1-\frac{\eta^t\tilde{\tau}\mu}{2}\right) \Delta^t-2\eta^t\tilde{\tau} \mathbb{E}\left[f\left(\boldsymbol{x}^{t}\right)-f\left(\boldsymbol{x}^{\ast}\right)\right], \nonumber\\
 &  +\left(\eta^t\right)^2\left(\tilde{\tau}\right)^2 N\sum_{n=1}^N\left(2\sigma_n^2+G^2\right) + 24S\left(\tilde{\tau}\right)^3\left(\eta^t\right)^3 \sum_{n=1}^N\left(2\sigma_n^2+G^2\right),\nonumber\\
& \leq\left(1-\frac{\eta^t\tilde{\tau}\mu}{2}\right) \Delta^t+\frac{\left(\eta^t\right)^2 \left(\tilde{\tau}\right)^2}{4} B_1+\frac{\left(\eta^t\right)^3 \left(\tilde{\tau}\right)^3}{8} B_2.
\end{align}
where $B_1:= 4N\sum_{n=1}^Na_n^2\left(2\sigma_n^2+G^2\right)$ and $B:=192S\ \sum_{n=1}^Na_n\left(2\sigma_n^2+G^2\right)$.

Consider a diminishing stepsize  $\eta^t=\frac{2\beta}{\tilde{\tau}(\gamma+t)}$, i.e,  $\frac{\eta^t\tilde{\tau}}{2}=\frac{\beta}{\gamma+t}$, where $\beta=\frac{2}{\mu}, \gamma=\frac{8 S}{\mu}-1$. It is easy to show that $\eta^t \leq \frac{1}{2 S\tilde{\tau}}$ for all $t$. Next, we will prove that $\Delta^{t+1} \leq \frac{v}{\gamma+t+1}$, where $v=\max \left\{\frac{4 B_1}{\mu^2}+\frac{8B_2}{\mu^3(\gamma+1)},(\gamma+1) \Delta^0\right\}$. We prove this by induction.
First, the definition of $v$ ensures that it holds for $t=-1$. Assume the conclusion holds for some $t$, it follows that

\begin{align}
&\Delta^{t+1} \leq\left(1-\frac{\eta^t\tilde{\tau}\mu}{2}\right) \Delta^t+\frac{\left(\eta^t\right)^2 \left(\tilde{\tau}\right)^2}{4} B_1 +\frac{\left(\eta^t\right)^3 \left(\tilde{\tau}\right)^3}{8} B_2 \\
& \leq\left(1-\frac{\mu \beta}{\gamma+t}\right) \frac{v}{\gamma+t}+\frac{\left(\eta^t\right)^2 \left(\tilde{\tau}\right)^2}{4} B_1 +\frac{\left(\eta^t\right)^3 \left(\tilde{\tau}\right)^3}{8} B_2 \\
& =\frac{\gamma+t-1}{(\gamma+t)^2} v+\left[\frac{\beta^2 B_1}{(\gamma+t)^2}+\frac{\beta^3 B_2}{(\gamma+t)^3}-\frac{\beta \mu-1}{(\gamma+t)^2} v\right] \\
& =\frac{\gamma+t-1}{(\gamma+t)^2} v+\left[\frac{\beta^2 B_1}{(\gamma+t)^2}+\frac{\beta^3 B_2}{(\gamma+t)^3}-\frac{\beta \mu-1}{(\gamma+t)^2} \max \left\{\frac{4 B_1}{\mu^2}+\frac{8B_2}{\mu^3(\gamma+1)},(\gamma+1) \Delta^0\right\}\right] \\
&\!=\frac{\gamma+t-1}{(\gamma+t)^2}v\!+\!\left[\!\frac{\beta^2 B_1}{(\gamma+t)^2}\!+\!\frac{\beta^3 B_2}{(\gamma+t)^3}\!-\!\frac{\beta \mu-1}{(\gamma+t)^2}\max\!\left\{\frac{\beta^2 B_1}{\beta \mu-1}\!+\!\frac{\beta^3 B_2}{(\beta \mu-1)(\gamma+1)},(\gamma+1) \Delta^0\!\right\}\!\right] \\
& \leq \frac{\gamma+t-1}{(\gamma+t)^2} v \\
& \leq \frac{v}{\gamma+t+1} .
\end{align}

Hence, we have proven that $\Delta^t \leq \frac{v}{\gamma+t}, \forall t$. Therefore, we have  
\begin{align}
    &\mathbb{E}\left[\left\|\boldsymbol{x}_s^{t}-\boldsymbol{x}_s^{\ast}\right\|^2\right]  =\Delta^t \leq \frac{v}{\gamma+t}=\frac{\max \left\{\frac{4 B_1}{\mu^2}+\frac{8B_2}{\mu^3(\gamma+1)},(\gamma+1) \mathbb{E}\left[\left\|\boldsymbol{x}_s^{0}-\boldsymbol{x}_s^{\ast}\right\|^2\right]\right\}}{\gamma+t} \nonumber\\
    &\leq\frac{16N\sum_{n=1}^Na_n^2\left(2\sigma_n^2+G^2\right)}{\mu^2\left(\gamma+t\right)}+\frac{1536S\ \sum_{n=1}^Na_n\left(2\sigma_n^2+G^2\right)}{\mu^3\left(\gamma+t\right)\left(\gamma+1\right)}+\frac{(\gamma+1) \mathbb{E}\left[\left\|\boldsymbol{x}_s^{0}-\boldsymbol{x}_s^{\ast}\right\|^2\right]}{\gamma+t}.\label{eq:sever-strong-full-1.5}
\end{align}

\subsubsection{\textbf{One-round Parallel Update for Client-Side Models}} 
 Under Assumptions \ref{asm:lipschitz_grad} and \ref{asm:convex}, if $\eta^t\leq\frac{1}{2S\tau}$, in round $t$, Lemma \ref{lem:server-strong-full-1} gives 
    \begin{align}
    &\mathbb{E}\left[\left\|{\boldsymbol{x}}^{t+1}_{c}-\boldsymbol{x}_c^{\ast}\right\|^2 \right]\nonumber\\
     &\leq\left(1-\frac{\eta^t\tau\mu}{2}\right)\mathbb{E}\left[\left\|\boldsymbol{x}^{t}_{c}-\boldsymbol{x}_c^{\ast}\right\|^2\right] -2\eta^t\tau\mathbb{E}\left[f\left(\boldsymbol{x}^{t}\right)-f\left(\boldsymbol{x}^{\ast}\right)\right]\nonumber\\
  & +\left(\eta^t\right)^2\left(\tau\right)^2 N\sum_{n=1}^Na_n^2\left(2\sigma_n^2+G^2\right)
  + 24S\left(\tau\right)^3\left(\eta^t\right)^3 \sum_{n=1}^Na_n\left(2\sigma_n^2+G^2\right).\label{eq:client-strong-full-1.2}
\end{align}

Let $\Delta^{t+1} \triangleq \mathbb{E}\left[\left\|\boldsymbol{x}_c^{t+1}-\boldsymbol{x}_c^{\ast}\right\|^2\right]$. We can rewrite \eqref{eq:client-strong-full-1.2} as:
\begin{align}
\Delta^{t+1} \leq\left(1-\frac{\eta^t\tau\mu}{2}\right) \Delta^t+\frac{\left(\eta^t\right)^2 \left(\tau\right)^2}{4} B_1+\frac{\left(\eta^t\right)^3 \left(\tau\right)^3}{8} B_2.
\end{align}
where $B_1:= 4N\sum_{n=1}^Na_n^2\left(2\sigma_n^2+G^2\right)$ and $B_2:=192S\ \sum_{n=1}^Na_n\left(2\sigma_n^2+G^2\right)$.

Consider a diminishing stepsize  $\eta^t=\frac{2\beta}{\tau(\gamma+t)}$, i.e,  $\frac{\eta^t\tau}{2}=\frac{\beta}{\gamma+t}$, where $\beta=\frac{2}{\mu}, \gamma=\frac{8 S}{\mu}-1$. It is easy to show that $\eta^t \leq \frac{1}{2 S\tau}$ for all $t$. For $v=\max \left\{\frac{4 B_1}{\mu^2}+\frac{8B_2}{\mu^3(\gamma+1)},(\gamma+1) \Delta^0\right\}$, we can prove that $\Delta^t \leq \frac{v}{\gamma+t}, \forall t$. Therefore, we have  
\begin{align}
    &\mathbb{E}\left[\left\|\boldsymbol{x}_c^{t}-\boldsymbol{x}_c^{\ast}\right\|^2\right]  =\Delta^t \leq \frac{v}{\gamma+t}=\frac{\max \left\{\frac{4 B_1}{\mu^2}+\frac{8B_2}{\mu^3(\gamma+1)},(\gamma+1) \mathbb{E}\left[\left\|\boldsymbol{x}_c^{0}-\boldsymbol{x}_c^{\ast}\right\|^2\right]\right\}}{\gamma+t} \nonumber\\
    &\leq\frac{16N\sum_{n=1}^Na_n^2\left(2\sigma_n^2+G^2\right)}{\mu^2\left(\gamma+t\right)}+\frac{1536S\ \sum_{n=1}^Na_n\left(2\sigma_n^2+G^2\right)}{\mu^3\left(\gamma+t\right)\left(\gamma+1\right)}+\frac{(\gamma+1) \mathbb{E}\left[\left\|\boldsymbol{x}_c^{0}-\boldsymbol{x}_c^{\ast}\right\|^2\right]}{\gamma+t}.\label{eq:client-strong-full-1.3}
\end{align}

\subsubsection{\textbf{Superposition of M-Server and Clients}}

We merge the M-server-side and client-side models in  \eqref{eq:sever-strong-full-1.5} and \eqref{eq:client-strong-full-1.3} using Proposition \ref{prop: decomposition} by setting $\eta^t \leq \frac{1}{2 S\max\left\{\tau,\tilde{\tau}\right\}}$.
We have
\begin{align}
&\mathbb{E}\left[f(\boldsymbol{x}^
{T})\right]- f(\boldsymbol{x}^*)\nonumber\\
&\le\frac{S}{2}\left(\mathbb{E}||\boldsymbol{x}_s^{T}-\boldsymbol{x}_s^*||^2+\mathbb{E}||\boldsymbol{x}_c^{T}-\boldsymbol{x}_c^*||^2\right)\nonumber\\
 &\leq\frac{8SN\sum_{n=1}^Na_n^2\left(2\sigma_n^2+G^2\right)}{\mu^2\left(\gamma+T\right)}+\frac{768S^2\ \sum_{n=1}^Na_n\left(2\sigma_n^2+G^2\right)}{\mu^3\left(\gamma+T\right)\left(\gamma+1\right)}+\frac{S(\gamma+1) \mathbb{E}\left[\left\|\boldsymbol{x}^{0}-\boldsymbol{x}^{\ast}\right\|^2\right]}{2(\gamma+T)}.
\end{align}

\newpage 
\subsection{General convex case for SFL-V1}\label{proof-v1-gc-full}
\subsubsection{\textbf{One-round  Parallel Update for M-Server-Side Model}}
By Lemma \ref{lem:server-strong-full-1} with $\mu=0$ and $\eta^t\leq\frac{1}{2S\tilde{\tau}}$, we have
    \begin{align}
    &\mathbb{E}\left[\left\|{\boldsymbol{x}}^{t+1}_{s}-\boldsymbol{x}_s^{\ast}\right\|^2 \right]\nonumber\\
  & \leq\mathbb{E}\left[\left\|\boldsymbol{x}^{t}_{s}-\boldsymbol{x}_s^{\ast}\right\|^2\right]-2\eta^t\tilde{\tau} \mathbb{E}\left[f\left(\boldsymbol{x}^{t}\right)-f\left(\boldsymbol{x}^{\ast}\right)\right]\nonumber\\
  &  +\left(\eta^t\right)^2\tilde{\tau}^2 N\sum_{n=1}^Na_n^2\left(2\sigma_n^2+G^2\right) + 24S\tilde{\tau}^3\left(\eta^t\right)^3 \sum_{n=1}^Na_n\left(2\sigma_n^2+G^2\right). \label{eq:server-general-full-1}
\end{align}

\subsubsection{\textbf{One-round Parallel Update for Client-Side Models}} 

By Lemma \ref{lem:server-strong-full-1} with $\mu=0$ and $\eta^t\leq\frac{1}{2S\tau}$, we have
\begin{align}
    &\mathbb{E}\left[\left\|{\boldsymbol{x}}^{t+1}_{c}-\boldsymbol{x}_c^{\ast}\right\|^2 \right]\nonumber\\
     &\leq\mathbb{E}\left[\left\|\boldsymbol{x}^{t}_{c}-\boldsymbol{x}_c^{\ast}\right\|^2\right] -2\eta^t\tau\mathbb{E}\left[f\left(\boldsymbol{x}^{t}\right)-f\left(\boldsymbol{x}^{\ast}\right)\right]\nonumber\\
  & +\left(\eta^t\right)^2\tau^2 N\sum_{n=1}^Na_n^2\left(2\sigma_n^2+G^2\right)
  + 24S\tau^3\left(\eta^t\right)^3 \sum_{n=1}^Na_n\left(2\sigma_n^2+G^2\right).\label{eq:client-general-full-1}
\end{align}

\subsubsection{\textbf{Superposition of M-Server and Clients}}

We merge the M-server-side and client-side models in  \eqref{eq:server-general-full-1} and \eqref{eq:client-general-full-1} as follows

\begin{align}
&\mathbb{E}\!\left[\!\left\|\boldsymbol{x}^{t+1}\!-\!\boldsymbol{x}^{\ast}\right\|^2\!\right]\!\leq\!\mathbb{E}\!\left[\!\left\|\boldsymbol{x}^{t+1}_{s}\!-\!\boldsymbol{x}_s^{\ast}\right\|^2\!\right]+\!\mathbb{E}\!\left[\!\left\|\boldsymbol{x}^{t+1}_c\!-\!\boldsymbol{x}_c^{\ast}\right\|^2\!\right]\!,\nonumber\\
&\leq\mathbb{E}\left[\left\|\boldsymbol{x}^{t}_{s}-\boldsymbol{x}_s^{\ast}\right\|^2\right] -2\eta^t\tilde{\tau}\mathbb{E}\left[f\left(\boldsymbol{x}^{t}\right)-f\left(\boldsymbol{x}^{\ast}\right)\right]\nonumber\\
  & +\left(\eta^t\right)^2\tilde{\tau}^2 N\sum_{n=1}^Na_n^2\left(2\sigma_n^2+G^2\right) + 24S\tilde{\tau}^3\left(\eta^t\right)^3\sum_{n=1}^Na_n\left(2\sigma_n^2+G^2\right)\nonumber\\
  &+\mathbb{E}\left[\left\|\boldsymbol{x}^{t}_{c}-\boldsymbol{x}_c^{\ast}\right\|^2\right]-2\eta^t\tau\mathbb{E}\left[f\left(\boldsymbol{x}^{t}\right)-f\left(\boldsymbol{x}^{\ast}\right)\right]\nonumber\\
  & +\left(\eta^t\right)^2\tau^2 N\sum_{n=1}^Na_n^2\left(2\sigma_n^2+G^2\right) + 24S\tau^3\left(\eta^t\right)^3\sum_{n=1}^Na_n\left(2\sigma_n^2+G^2\right)\nonumber\\
  &=\mathbb{E}\left[\left\|\boldsymbol{x}^{t}-\boldsymbol{x}^{\ast}\right\|^2\right] -4\eta^t\min\left\{\tau,\tilde{\tau}\right\}\mathbb{E}\left[f\left(\boldsymbol{x}^{t}\right)-f\left(\boldsymbol{x}^{\ast}\right)\right]\nonumber\\
  & +\left(\eta^t\right)^2 N\sum_{n=1}^Na_n^2(\tilde{\tau}^2+\tau^2)\left(2\sigma_n^2+G^2\right) + 24S\left(\eta^t\right)^3\sum_{n=1}^Na_n(\tilde{\tau}^3+\tau^3)\left(2\sigma_n^2+G^2\right).
\end{align}
Then, we can obtain the relation between  $ \mathbb{E}\!\left[\!\left\|\boldsymbol{x}^{t+1}\!-\!\boldsymbol{x}^{\ast}\right\|^2\!\right]$ and $ \mathbb{E}\!\left[\!\left\|\boldsymbol{x}^{t}\!-\!\boldsymbol{x}^{\ast}\right\|^2\!\right]$, which is related to $\mathbb{E}\left[f\left(\boldsymbol{x}^{t}\right)-f\left(\boldsymbol{x}^{\ast}\right)\right]$.
Applying Lemma 8 in \cite{li2023convergence} and let $\tau_{\min}:=\min\left\{\tilde{\tau},\tau\right\}$ and $\eta^t \leq \frac{1}{2 S\max\left\{\tau,\tilde{\tau}\right\}}$,
we obtain the performance bound as
\begin{align}
    &\mathbb{E}\left[f\left(\boldsymbol{x}^{T}\right)\right]-f\left(\boldsymbol{x}^{\ast}\right) \nonumber\\
    &\leq \frac{1}{2}\left(\frac{\tilde{\tau}^2+\tau^2}{\tau_{\min}^2}N\sum_{n=1}^Na_n^2\left(2\sigma_n^2+G^2\right)\right)^{\frac{1}{2}}\left(\frac{\left\|\boldsymbol{x}^{0}-\boldsymbol{x}^{\ast}\right\|^2}{T+1}\right)^{\frac{1}{2}}\nonumber\\
    &+\frac{1}{2}\left(\frac{\tilde{\tau}^2+\tau^2}{\tau_{\min}^2}24S\sum_{n=1}^Na_n\left(2\sigma_n^2+G^2\right)\right)^{\frac{1}{3}}\left(\frac{\left\|\boldsymbol{x}^{0}-\boldsymbol{x}^{\ast}\right\|^2}{T+1}\right)^{\frac{1}{3}}
    +\frac{S\left\|\boldsymbol{x}^{0}-\boldsymbol{x}^{\ast}\right\|^2}{2(T+1)}.
\end{align}
\newpage

\subsection{Non-convex case for SFL-V1}\label{proof-v1-nc-full}

\subsubsection{\textbf{One-round  Parallel Update for M-Server-Side Model}}
For the server, we have
\begin{align}
    &\mathbb{E} \left[\left\langle\nabla_{\boldsymbol{x}_s} f\left(\boldsymbol{x}^{t}\right), \boldsymbol{x}^{t+1}_s-\boldsymbol{x}^{t}_s\right\rangle\right]\nonumber\\
    &\leq \mathbb{E} \left[\left\langle\nabla_{\boldsymbol{x}_s} f\left(\boldsymbol{x}^{t}\right), \boldsymbol{x}^{t+1}_s-\boldsymbol{x}^{t}_s +\eta^t \tilde{\tau} \nabla_{\boldsymbol{x}_s} f\left(\boldsymbol{x}^{t}\right)- \eta^t \tilde{\tau} \nabla_{\boldsymbol{x}_s} f\left(\boldsymbol{x}^{t}\right)\right\rangle\right]\nonumber\nonumber\\
     &\leq \mathbb{E} \left[\left\langle\nabla_{\boldsymbol{x}_s} f\left(\boldsymbol{x}^{t}\right), \boldsymbol{x}^{t+1}_s-\boldsymbol{x}^{t}_s +\eta^t \tilde{\tau} \nabla_{\boldsymbol{x}_s} f\left(\boldsymbol{x}^{t}\right)\right\rangle -\left\langle \nabla_{\boldsymbol{x}_s}f\left(\boldsymbol{x}^{t}_s\right), \eta^t \tilde{\tau} \nabla_{\boldsymbol{x}_s} f\left(\boldsymbol{x}^{t}\right)\right\rangle\right]\nonumber\\
     &\leq\!\left\langle\nabla_{\boldsymbol{x}_s}\!f\left(\boldsymbol{x}^{t}\right), \mathbb{E} \left[-\!\eta^t\sum_{n=1}^N \sum_{i=0}^{\tilde{\tau}-1}\!a_n\boldsymbol{g}_{s,n}^{t,i}\right]+\eta^t\tilde{\tau}\!\nabla_{\boldsymbol{x}_s}\!f\left(\boldsymbol{x}^{t}\right)\right\rangle -\eta^t \tilde{\tau} \left\Vert\nabla_{\boldsymbol{x}_s} f\left(\boldsymbol{x}^{t}\right)\right\Vert^2 \nonumber\\
     &\leq \left\langle\nabla_{\boldsymbol{x}_s} f\left(\boldsymbol{x}^{t}\right), \mathbb{E} \left[- \eta^t\sum_{n=1}^N \sum_{i=0}^{\tilde{\tau}-1}a_n\nabla_{\boldsymbol{x}_s} F_n\left(\left\{\boldsymbol{x}^{t,i}_{c,n},\boldsymbol{x}^{t,i}_{s,n}\right\}\right)\right]+\eta^t \tilde{\tau} \nabla_{\boldsymbol{x}_s} f\left(\boldsymbol{x}^{t}\right)\right\rangle -\eta^t \tilde{\tau} \left\Vert\nabla_{\boldsymbol{x}_s} f\left(\boldsymbol{x}^{t}\right)\right\Vert^2 \nonumber\\
     &\leq \left\langle\nabla_{\boldsymbol{x}_s} f\left(\boldsymbol{x}^{t}\right), \mathbb{E} \left[- \eta^t\sum_{n=1}^N \sum_{i=0}^{\tilde{\tau}-1} a_n\nabla_{\boldsymbol{x}_s} F_n\left(\left\{\boldsymbol{x}^{t,i}_{c,n},\boldsymbol{x}^{t,i}_{s,n}\right\}\right)+\eta^t\sum_{n=1}^N \sum_{i=0}^{\tilde{\tau}-1} a_n\nabla_{\boldsymbol{x}_s} F_n\left(\boldsymbol{x}^{t}\right) \right]\right\rangle \nonumber\\&-\eta^t \tilde{\tau} \left\Vert\nabla_{\boldsymbol{x}_s} f\left(\boldsymbol{x}^{t}\right)\right\Vert^2 \nonumber\\
      &\leq  \eta^t\tilde{\tau}\left\langle \nabla_{\boldsymbol{x}_s} f\left(\boldsymbol{x}^{t}\right),\mathbb{E} \left[ - \frac{1}{\tilde{\tau}}\sum_{n=1}^N \sum_{i=0}^{\tilde{\tau}-1}a_n\nabla_{\boldsymbol{x}_s} F_n\left(\left\{\boldsymbol{x}^{t,i}_{c,n},\boldsymbol{x}^{t,i}_{s,n}\right\}\right)+ \frac{1}{\tilde{\tau}}\sum_{n=1}^N \sum_{i=0}^{\tilde{\tau}-1} a_n\nabla_{\boldsymbol{x}_s} F_n\left(\boldsymbol{x}^{t}\right) \right] \right\rangle
      \nonumber\\&-\eta^t \tilde{\tau} \left\Vert\nabla_{\boldsymbol{x}_s} f\left(\boldsymbol{x}^{t}\right)\right\Vert^2 \nonumber\\
      &\leq \frac{\eta^t\tilde{\tau}}{2} \left\Vert\nabla_{\boldsymbol{x}_s} f\left(\boldsymbol{x}^{t}\right)\right\Vert^2 +\frac{\eta^t}{2\tilde{\tau}}\mathbb{E} \left[ \left\Vert \sum_{n=1}^N \sum_{i=0}^{\tilde{\tau}-1} a_n\nabla_{\boldsymbol{x}_s} F_n\left(\left\{\boldsymbol{x}^{t,i}_{c,n},\boldsymbol{x}^{t,i}_{s,n}\right\}\right)- \sum_{n=1}^N \sum_{i=0}^{\tilde{\tau}-1} a_n\nabla_{\boldsymbol{x}_s} F_n\left(\boldsymbol{x}^{t}\right) \right\Vert^2\right]
      \nonumber\\&-\eta^t \tilde{\tau} \left\Vert\nabla_{\boldsymbol{x}_s} f\left(\boldsymbol{x}^{t}\right)\right\Vert^2 \nonumber\\
       &\leq -\frac{\eta^t\tilde{\tau}}{2} \left\Vert\nabla_{\boldsymbol{x}_s} f\left(\boldsymbol{x}^{t}\right)\right\Vert^2 +\frac{\eta^t}{2\tilde{\tau}}\mathbb{E} \left[ \left\Vert \sum_{n=1}^N  a_n\sum_{i=0}^{\tilde{\tau}-1}  \left(\nabla_{\boldsymbol{x}_s} F_n\left(\left\{\boldsymbol{x}^{t,i}_{c,n},\boldsymbol{x}^{t,i}_{s,n}\right\}\right)- \nabla_{\boldsymbol{x}_s} F_n\left(\boldsymbol{x}^{t}\right)\right) \right\Vert^2\right]\nonumber\\
       &\leq -\frac{\eta^t\tilde{\tau}}{2} \left\Vert\nabla_{\boldsymbol{x}_s} f\left(\boldsymbol{x}^{t}\right)\right\Vert^2 +\frac{N\eta^t}{2 \tilde{\tau}}\sum_{n=
       1}^Na_n^2\mathbb{E} \left[ \left\Vert \sum_{i=0}^{\tilde{\tau}-1} \left(\nabla_{\boldsymbol{x}_s} F_n\left(\left\{\boldsymbol{x}^{t,i}_{c,n},\boldsymbol{x}^{t,i}_{s,n}\right\}\right)- \nabla_{\boldsymbol{x}_s} F_n\left(\boldsymbol{x}^{t}\right) \right)\right\Vert^2\right]\nonumber\\
        &\leq -\frac{\eta^t\tilde{\tau}}{2} \left\Vert\nabla_{\boldsymbol{x}_s} f\left(\boldsymbol{x}^{t}\right)\right\Vert^2 +\frac{N\eta^t S^2}{2}\sum_{n=1}^N a_n^2\sum_{i=0}^{\tilde{\tau}-1} \mathbb{E} \left[ \left\Vert \boldsymbol{x}^{t,i}_{s,n}- \boldsymbol{x}^{t}_s\right\Vert^2\right],\label{eq:server-non-full-1.1}
\end{align}
where we apply Assumption \ref{asm:lipschitz_grad}, $\nabla_{\boldsymbol{x}_s} f\left(\boldsymbol{x}^{t}\right)=\sum_{n=1}^N a_n\nabla_{\boldsymbol{x}_s} F_n\left(\boldsymbol{x}^{t}\right)$, and $\left\langle a,b\right\rangle\leq \frac{a^2+b^2}{2}$.

By Lemma \ref{lem:multiple-local-training} with $\eta^t\leq\frac{1}{\sqrt{8}S\tilde{\tau}}$, we have
\begin{align}
    &\sum_{i=0}^{\tilde{\tau}-1}  \mathbb{E} \left[ \left\Vert  \boldsymbol{x}^{t,i}_{s,n}- \boldsymbol{x}^{t}_s\right\Vert^2\right]\leq 2\tilde{\tau}^2\left(8\tilde{\tau}\left(\eta^t\right)^2 \sigma_n^2 +8\tilde{\tau}\left(\eta^t\right)^2\epsilon^2
         +8\tilde{\tau}\left(\eta^t\right)^2\left\Vert\nabla_{\boldsymbol{x}_s}f\left(\boldsymbol{x}^{t}_s\right)\right\Vert^2\right).
\end{align}

Thus, \eqref{eq:server-non-full-1.1} becomes
\begin{align}
    &\mathbb{E} \left[\left\langle\nabla_{\boldsymbol{x}_s} f\left(\boldsymbol{x}^{t}\right), \boldsymbol{x}^{t+1}_s-\boldsymbol{x}^{t}_s\right\rangle\right]\nonumber\\        
    &\leq\!-\frac{\eta^t\tilde{\tau}}{2} \left\Vert\!\nabla_{\boldsymbol{x}_s} f\left(\boldsymbol{x}^{t}\right)\!\right\Vert^2\!+\!\frac{N\eta^t S^2}{2}\!\sum_{n=1}^N\!a_n^22\tilde{\tau}^2\left(\!8\tilde{\tau}\left(\eta^t\right)^2 \sigma_n^2\!+\!8\tilde{\tau}\left(\eta^t\right)^2\epsilon^2\!+\!8\tilde{\tau}\left(\eta^t\right)^2\!\left\Vert\nabla_{\boldsymbol{x}_s}\!f\left(\boldsymbol{x}^{t}_s\right)\right\Vert^2\!\right)\nonumber\\
       &\leq \left(-\frac{\eta^t\tilde{\tau}}{2} +8N\left(\eta^t\right)^3\tilde{\tau}^3 S^2\sum_{n=1}^N a_n^2\right)\left\Vert\nabla_{\boldsymbol{x}_s} f\left(\boldsymbol{x}^{t}\right)\right\Vert^2 +8N\eta^t S^2\tilde{\tau}^3\sum_{n=1}^N a_n^2\left(\eta^t\right)^2\left( \sigma_n^2 +\epsilon^2
         \right).\label{eq:server-non-full-1.5}
\end{align}

Furthermore, we have
\begin{align}
    &\frac{S}{2}\mathbb{E}\left[\left\|  \boldsymbol{x}^{t+1}_{s}- \boldsymbol{x}^{t}_{s} \right\|^2 \right]\nonumber\\
    &=\frac{SN\left(\eta^t\right)^2}{2}\sum_{n=1}^N\mathbb{E}\left[\left\|  \sum_{i=0}^{\tilde{\tau}-1} a_n\boldsymbol{g}^{t,i}_{s,n}\right\|^2 \right]\nonumber\\
    &\leq\frac{SN\left(\eta^t\right)^2}{2}\sum_{n=1}^Na_n^2\mathbb{E}\left[\left\|\sum_{i=0}^{\tilde{\tau}-1} \boldsymbol{g}^{t,i}_{s,n}\right\|^2 \right]\nonumber\\
      &\leq\frac{SN\left(\eta^t\right)^2 \tilde{\tau}}{2}\sum_{n=1}^Na_n^2\sum_{i=0}^{\tilde{\tau}-1} \mathbb{E}\left[\left\|\boldsymbol{g}^{t,i}_{s,n}\right\|^2 \right]\nonumber\\
      &\leq\frac{SN\left(\eta^t\right)^2 \tilde{\tau}}{2}\sum_{n=1}^Na_n^2\sum_{i=0}^{\tilde{\tau}-1} \mathbb{E}\left[\left\|\boldsymbol{g}^{t,i}_{s,n}-\boldsymbol{g}_{s,n}^t+\boldsymbol{g}_{s,n}^t\right\|^2 \right]\nonumber\\
       &\leq \frac{SN\left(\eta^t\right)^2 \tilde{\tau}}{2}\sum_{n=1}^Na_n^2\sum_{i=0}^{\tilde{\tau}-1} \left(\mathbb{E}\left[\left\|\boldsymbol{g}^{t,i}_{s,n}-\boldsymbol{g}_{s,n}^t\right\|^2\right]+\mathbb{E}\left[\left\|\boldsymbol{g}_{s,n}^t\right\|^2 \right]\right)\nonumber\\
       &\leq \frac{SN\left(\eta^t\right)^2 \tilde{\tau}}{2}\sum_{n=1}^Na_n^2\sum_{i=0}^{\tilde{\tau}-1} \left(\mathbb{E}\left[\left\|\boldsymbol{g}^{t,i}_{s,n}-\boldsymbol{g}_{s,n}^t\right\|^2\right]+\mathbb{E}\left[\left\|\nabla_{\boldsymbol{x}_s}F_n\left(\boldsymbol{x}^{t}\right)\right\|^2+\sigma_n^2 \right]\right), \label{eq:server-non-full-1.3}
       \end{align}
where the last line uses Assumption \ref{asm:lipschitz_grad} and $\mathbb{E}\left[\|\mathbf{z}\|^2\right]=\|\mathbb{E}[\mathbf{z}]\|^2+\mathbb{E}[\| \mathbf{z}-\left.\mathbb{E}[\mathbf{z}] \|^2\right]$ for any random variable $\mathbf{z}$.

By Lemma \ref{lem:multiple-local-training-grad}  with $\eta^t\leq \frac{1}{2S\tilde{\tau}}$, we have 
\begin{align}
    \sum_{i=0}^{\tilde{\tau}-1}\mathbb{E}\left[\left\Vert \boldsymbol{g}^{t,i}_{s,n}-\boldsymbol{g}_{s,n}^t\right\Vert^2\right] \leq 8\tilde{\tau}^3\left(\eta^t\right)^2S^2\left(\left\Vert \nabla_{\boldsymbol{x}_s}F_n\left(\boldsymbol{x}^{t}\right)\right\Vert^2+\sigma_n^2\right). 
\end{align}
Thus, \eqref{eq:server-non-full-1.3} becomes
\begin{align}
    &\frac{S}{2}\mathbb{E}\left[\left\|  \boldsymbol{x}^{t+1}_{s}- \boldsymbol{x}^{t}_{s} \right\|^2 \right]\nonumber\\
     &\leq\frac{SN\left(\eta^t\right)^2 \tilde{\tau}}{2}\sum_{n=1}^Na_n^2\left(8\tilde{\tau}^3\left(\eta^t\right)^2S^2\left(\left\Vert \nabla_{\boldsymbol{x}_s}F_n\left(\boldsymbol{x}^{t}\right)\right\Vert^2+\sigma_n^2\right) 
     +\tilde{\tau}\mathbb{E}\left[\left\|\nabla_{\boldsymbol{x}_s}F_n\left(\boldsymbol{x}^{t}\right)\right\|^2+\sigma_n^2 \right] \right)\nonumber\\
     &\leq\frac{SN\left(\eta^t\right)^2 \tilde{\tau}}{2}\sum_{n=1}^Na_n^2\left(\tilde{\tau}+8\tilde{\tau}^3\left(\eta^t\right)^2S^2\right)\left(\left\Vert \nabla_{\boldsymbol{x}_s}F_n\left(\boldsymbol{x}^{t}\right)\right\Vert^2+\sigma_n^2\right)\nonumber\\
    &\leq\frac{SN\left(\eta^t\right)^2 \tilde{\tau}}{2}\sum_{n=1}^Na_n^2\left(\tilde{\tau}+8\tilde{\tau}^3\left(\eta^t\right)^2S^2\right)\left(\left\Vert \nabla_{\boldsymbol{x}_s}F_n\left(\boldsymbol{x}^{t}\right)-\nabla_{\boldsymbol{x}_s}f\left(\boldsymbol{x}^{t}\right)+\nabla_{\boldsymbol{x}_s}f\left(\boldsymbol{x}^{t}\right)\right\Vert^2+\sigma_n^2\right)\nonumber\\
     &\leq\frac{SN\left(\eta^t\right)^2 \tilde{\tau}}{2}\sum_{n=1}^Na_n^2\left(\tilde{\tau}+8\tilde{\tau}^3\left(\eta^t\right)^2S^2\right)\left(2\left\Vert \nabla_{\boldsymbol{x}_s}f\left(\boldsymbol{x}^{t}\right)\right\Vert^2+2\epsilon^2+\sigma_n^2\right).
     \label{eq:server-non-full-1.4}
\end{align}
\subsubsection{\textbf{One-round Parallel Update for Client-Side Models}} 
The analysis of the client-side model update is similar to the server. Thus, we have
\begin{align}
    &\mathbb{E} \left[\left\langle\nabla_{\boldsymbol{x}_c} f\left(\boldsymbol{x}^{t}\right), \boldsymbol{x}^{t+1}_c-\boldsymbol{x}^{t}_c\right\rangle\right]\nonumber\\        
    &\leq \left(-\frac{\eta^t\tau}{2} +8N\left(\eta^t\right)^3\tau^3 S^2\sum_{n=1}^N a_n^2\right)\left\Vert\nabla_{\boldsymbol{x}_c} f\left(\boldsymbol{x}^{t}\right)\right\Vert^2 +8N\eta^t S^2\tau^3\sum_{n=1}^N a_n^2\left(\eta^t\right)^2\left( \sigma_n^2 +\epsilon^2
         \right).\label{eq:client-non-full-1.5}
\end{align}

For $\eta^t\leq \frac{1}{2S\tau}$,
\begin{align}
    &\frac{S}{2}\mathbb{E}\left[\left\|  \boldsymbol{x}^{t+1}_{c}- \boldsymbol{x}^{t}_{c} \right\|^2 \right]\nonumber\\
     &\leq\frac{SN\left(\eta^t\right)^2 \tau}{2}\sum_{n=1}^Na_n^2\left(\tau+8\tau^3\left(\eta^t\right)^2S^2\right)\left(2\left\Vert \nabla_{\boldsymbol{x}_c}f\left(\boldsymbol{x}^{t}\right)\right\Vert^2+2\epsilon^2+\sigma_n^2\right).
     \label{eq:client-non-full-1.4}
\end{align}
\subsubsection{\textbf{Superposition of M-Server and Clients}}
Applying \eqref{eq:server-non-full-1.5}, \eqref{eq:server-non-full-1.4}, \eqref{eq:client-non-full-1.4} and \eqref{eq:client-non-full-1.5} into \eqref{eq:non-covex-decouple} in Proposition \ref{Prop: decouple-one-round} and define $\tau_{\min}\triangleq\min\left\{\tau,\tilde{\tau}\right\}$, $\tau_{\max}\triangleq\max\left\{\tau,\tilde{\tau}\right\}$, 
we have
\begin{align}
&\mathbb{E}\left[f\left(\boldsymbol{x}^{t+1}\right)\right]-f\left(\boldsymbol{x}^{t}\right)\nonumber\\
& \leq \mathbb{E}\left[\left\langle\nabla_{\boldsymbol{x}_c} f\left(\boldsymbol{x}^{t}\right), \boldsymbol{x}^{t+1}_c-\boldsymbol{x}^{t}_c\right\rangle\right]+\frac{S}{2}\mathbb{E}\left[\left\|\boldsymbol{x}^{t+1}_c-\boldsymbol{x}^{t}_c\right\|^2\right]+\mathbb{E}\left[\left\langle\nabla_{\boldsymbol{x}_s} f\left(\boldsymbol{x}^{t}\right), \boldsymbol{x}^{t+1}_s-\boldsymbol{x}^{t}_s\right\rangle\right]\nonumber\\
&+\frac{S}{2}\mathbb{E}\left[\left\|\boldsymbol{x}^{t+1}_s-\boldsymbol{x}^{t}_s\right\|^2\right]\nonumber\\
 &\leq \left(-\frac{\eta^t\min\left\{\tau,\tilde{\tau}\right\}}{2} +8N\left(\eta^t\right)^3\left(\max\left\{\tau,\tilde{\tau}\right\}\right)^3 S^2\sum_{n=1}^N a_n^2\right)\left\Vert\nabla_{\boldsymbol{x}} f\left(\boldsymbol{x}^{t}\right)\right\Vert^2 \nonumber\\
 &+8N\eta^t S^2\left(\tau^3+\tilde{\tau}^3\right)\sum_{n=1}^N \left(\eta^t\right)^2a_n^2\left( \sigma_n^2 +\epsilon^2
         \right)\nonumber\\
&+\frac{SN\left(\eta^t\right)^2 \max\left\{\tau,\tilde{\tau}\right\}}{2}\sum_{n=1}^Na_n^2\left(\max\left\{\tau,\tilde{\tau}\right\}+8\left(\max\left\{\tau,\tilde{\tau}\right\}\right)^3\left(\eta^t\right)^2S^2\right)2\left\Vert \nabla_{\boldsymbol{x}}f\left(\boldsymbol{x}^{t}\right)\right\Vert^2  \nonumber\\
&+\frac{SN\left(\eta^t\right)^2\tau}{2}\sum_{n=1}^Na_n^2\left(\tau+8\tau^3\left(\eta^t\right)^2S^2\right)\left(2\epsilon^2+\sigma_n^2\right) \nonumber\\
&+\frac{SN\left(\eta^t\right)^2 \tilde{\tau}}{2}\sum_{n=1}^Na_n^2\left(\tilde{\tau}+8\tilde{\tau}^3\left(\eta^t\right)^2S^2\right)\left(2\epsilon^2+\sigma_n^2\right)  \\
& \leq\!\left(\!-\!\frac{\eta^t\!\tau_{\min}}{2}\!+\!8N\left(\!\eta^t\!\right)^3\!S^2\!\tau_{\max}^3\sum_{n=1}^N\!a_n^2\!+\!SN\!\left(\!\eta^t\!\right)^2\!\tau_{\max}\sum_{n=1}^N\!a_n^2\left(\!\tau_{\max}\!+\!8\tau_{\max}^3\left(\!\eta^t\!\right)^2S^2\!\right)\!\right)\left\Vert\nabla_{\boldsymbol{x}}\!f\left(\!\boldsymbol{x}^{t}\!\right)\right\Vert^2 \nonumber\\
   &+8N\eta^t S^2\left(\tau^3+\tilde{\tau}^3\right)\sum_{n=1}^Na_n^2\left(\eta^t\right)^2\left( \sigma_n^2 +\epsilon^2 \right) \nonumber\\
   &\!+\!\frac{1}{2}\!SN\left(\!\eta^t\!\right)^2\!\tau\!\left(\!\tau+8\tau^3\left(\!\eta^t\!\right)^2S^2\!\right)\!\sum_{n=1}^N\!a_n^2\left(\!2\epsilon^2\!+\!\sigma_n^2\!\right)\!+\!\frac{1}{2}SN\left(\!\eta^t\!\right)^2\!\tilde{\tau}\!\left(\!\tilde{\tau}\!+\!8\tilde{\tau}^3\left(\!\eta^t\!\right)^2S^2\!\right)\!\sum_{n=1}^N\!a_n^2\left(2\epsilon^2+\sigma_n^2\right)\nonumber\\
   &\leq\!\left(\!-\!\frac{\eta^t\tau_{\min}}{2}\!+\!SN\!\left(\!\eta^t\!\right)^2\!\tau_{\max}^2\!\sum_{n=1}^N\!a_n^2\!+\!8N\!\left(\!\eta^t\!\right)^3vS^2\tau_{\max}^3\!\sum_{n=1}^N\!a_n^2\!+\!8S^3N\!\left(\!\eta^t\!\right)^4\!\tau_{\max}^4 \sum_{n=1}^N\!a_n^2\!\right)\!\left\Vert\nabla_{\boldsymbol{x}}\!f\left(\!\boldsymbol{x}^{t}\!\right)\!\right\Vert^2 \nonumber\\
   &+8N\left(\eta^t\right)^3 S^2\tau^3\sum_{n=1}^Na_n^2\sigma_n^2  +8N\left(\eta^t\right)^3 S^2\tau^3\epsilon^2\sum_{n=1}^Na_n^2 \nonumber\\
   &+\!SN\left(\!\eta^t\!\right)^2\tau^2\epsilon^2\!\sum_{n=1}^N\!a_n^2\!
   +\!\frac{1}{2}SN\left(\!\eta^t\!\right)^2\!\tilde{\tau}^2\!\sum_{n=1}^N\!a_n^2 \sigma_n^2\!+\!8NS^3\!\left(\!\eta^t\!\right)^4\!\tau^4 \epsilon^2\sum_{n=1}^N\!a_n^2
   +4NS^3\!\left(\!\eta^t\!\right)^4 \tau^4 \sum_{n=1}^Na_n^2\sigma_n^2\nonumber\\
    &+8N\left(\eta^t\right)^3 S^2\tilde{\tau}^3\sum_{n=1}^Na_n^2\sigma_n^2  +8N\left(\eta^t\right)^3 S^2\tilde{\tau}^3\epsilon^2\sum_{n=1}^Na_n^2 \nonumber\\
   &\!+\!SN\left(\!\eta^t\!\right)^2\!\tilde{\tau}^2 \epsilon^2\sum_{n=1}^N\!a_n^2\!+\!\frac{1}{2}SN\left(\!\eta^t\!\right)^2\!\tilde{\tau}^2\!\sum_{n=1}^N\!a_n^2 \sigma_n^2\!+\!8NS^3\left(\!\eta^t\!\right)^4\!\tilde{\tau}^4 \epsilon^2\sum_{n=1}^N\!a_n^2\!+\!4NS^3\!\left(\!\eta^t\!\right)^4 \tilde{\tau}^4\!\sum_{n=1}^N\!a_n^2\sigma_n^2\nonumber\\
   & \leq-\frac{\eta^t\tau_{\min}}{2}\left(1  -2SN\eta^t \frac{\tau_{\max}^2}{\tau_{\min}} \sum_{n=1}^Na_n^2\left(1+8S\eta^t\tau+8S^2\left(\eta^t\right)^2\tau_{\max}^2\right)\right)\left\Vert\nabla_{\boldsymbol{x}}f\left(\boldsymbol{x}^{t}\right)\right\Vert^2 \nonumber\\
   &+\left(\frac{1}{2}NS\left(\eta^t\right)^2 \tau^2 +8N\left(\eta^t\right)^3 S^2\tau^3+4NS^3 \left(\eta^t\right)^4 \tau^4 \right)\sum_{n=1}^Na_n^2\sigma_n^2\nonumber\\
   &+\left(NS\left(\eta^t\right)^2 \tau^2+8N\left(\eta^t\right)^3 S^2\tau^3+8NSL^3 \left(\eta^t\right)^4 \tau^4\right)\sum_{n=1}^Na_n^2\epsilon^2  \nonumber\\
     &+\left(\frac{1}{2}NS\left(\eta^t\right)^2 \tilde{\tau}^2 +8N\left(\eta^t\right)^3 S^2\tilde{\tau}^3+4NS^3 \left(\eta^t\right)^4 \tilde{\tau}^4 \right)\sum_{n=1}^Na_n^2\sigma_n^2\nonumber\\
   &+\left(NS\left(\eta^t\right)^2 \tau^2+8N\left(\eta^t\right)^3 S^2\tilde{\tau}^3+8NSL^3 \left(\eta^t\right)^4 \tilde{\tau}^4\right)\sum_{n=1}^Na_n^2\epsilon^2  \nonumber\\
    & \leq-\frac{\eta^t\tau_{\min}}{2}\left(1  -2NS\eta^t \frac{\tau_{\max}^2}{\tau_{\min}}\sum_{n=1}^Na_n^2 \left(1+\frac{1}{2}+\frac{1}{32}\right)\right)\left\Vert\nabla_{\boldsymbol{x}}f\left(\boldsymbol{x}^{t}\right)\right\Vert^2 \nonumber\\
   &+NS\left(\eta^t\right)^2 \tau^2\left(\frac{1}{2} +\frac{1}{2}+\frac{1}{64}\right)\sum_{n=1}^Na_n^2\sigma_n^2  +2SN\left(\eta^t\right)^2 \tau^2\left(\frac{1}{2}+\frac{1}{4}+\frac{1}{64}\right)\sum_{n=1}^Na_n^2\epsilon^2  \nonumber\\
     &+NS\left(\eta^t\right)^2 \tilde{\tau}^2\left(\frac{1}{2} +\frac{1}{2}+\frac{1}{64}\right)\sum_{n=1}^Na_n^2\sigma_n^2  +2SN\left(\eta^t\right)^2 \tilde{\tau}^2\left(\frac{1}{2}+\frac{1}{4}+\frac{1}{64}\right)\sum_{n=1}^Na_n^2\epsilon^2  \nonumber\\
    & \leq-\frac{\eta^t\tau_{\min}}{2}\left(1  -4NS\eta^t \frac{\tau_{\max}^2}{\tau_{\min}}\sum_{n=1}^Na_n^2\right)\left\Vert\nabla_{\boldsymbol{x}}f\left(\boldsymbol{x}^{t}\right)\right\Vert^2 +2NS\left(\eta^t\right)^2 \left(\tau^2+\tilde{\tau}^2\right)  \sum_{n=1}^Na_n^2\left(\sigma_n^2+\epsilon^2\right)\nonumber\\
    & \leq-\frac{\eta^t\tau_{\min}}{4}\left\Vert\nabla_{\boldsymbol{x}}f\left(\boldsymbol{x}^{t}\right)\right\Vert^2 +2NS\left(\eta^t\right)^2  \left(\tau^2+\tilde{\tau}^2\right)  \sum_{n=1}^Na_n^2\left(\sigma_n^2+\epsilon^2\right),
\end{align}
where we first let $\eta^t \leq \frac{1}{16S\tau_{\max}}$ and then let $\eta^t \leq \frac{1}{8SN\frac{\tau_{\max}^2}{\tau_{\min}}\sum_{n=1}^Na_n^2}$. We also use $\left\Vert \nabla_{\boldsymbol{x}}f\left(\boldsymbol{x}^{t}\right)\right\Vert^2=\left\Vert \nabla_{\boldsymbol{x}_c}f\left(\boldsymbol{x}^{t}\right)\right\Vert^2+\left\Vert \nabla_{\boldsymbol{x}_s}f\left(\boldsymbol{x}^{t}\right)\right\Vert^2$.

Rearranging the above we have
\begin{align}
    &\eta^t\left\Vert\nabla_{\boldsymbol{x}}f\left(\boldsymbol{x}^{t}\right)\right\Vert^2\leq \frac{4}{\tau_{\min}}\left(f\left(\boldsymbol{x}^{t}\right)-\mathbb{E} \left[f\left(\boldsymbol{x}^{t+1}_s\right)\right]\right)+8NS\left(\eta^t\right)^2 \frac{ \tau^2+\tilde{\tau}^2 }{\tau_{\min}}  \sum_{n=1}^Na_n^2\left(\sigma_n^2+\epsilon^2\right) .
\end{align}
Taking expectation and averaging over all $t$, we have
\begin{align}
    &\frac{1}{T}\sum_{t=0}^{T-1}\eta^t\mathbb{E}\left[\left\Vert\nabla_{\boldsymbol{x}}f\left(\boldsymbol{x}^{t}\right)\right\Vert^2\right]\leq \frac{4}{T\tau_{\min}}\left(f\left(\boldsymbol{x}_{0}\right)-f^\ast\right)+\frac{8NS\frac{\tau^2+\tilde{\tau}^2 }{\tau_{\min}}}{T} \sum_{n=1}^Na_n^2\left(\sigma_n^2+\epsilon^2\right)\sum_{t=0}^{T-1}\left(\eta^t\right)^2.
\end{align}

\newpage 

\section{Proof of Theorem \ref{theorem-v2-full}}\label{proof-v2-full}
\begin{itemize}
\item In Sec. \ref{proof-v2-sc-full}, we prove the strongly convex case.
\item In Sec. \ref{proof-v2-gc-full}, we prove the general convex case.
\item In Sec. \ref{proof-v2-nc-full}, we prove the non-convex case.
\end{itemize}

\subsection{Strongly convex case for SFL-V2}\label{proof-v2-sc-full}

\subsubsection{\textbf{One-round Sequential Update for M-Server-Side Model}}
\begin{lemma} \label{lem:server-strong-full-2}
    Under Assumptions \ref{asm:lipschitz_grad} and \ref{asm:convex}, if $\eta^t\leq\frac{1}{2S\tau}$, in round $t$, the M-server-side model evolves as  
    \begin{align}
    &\mathbb{E}\left[\left\|{\boldsymbol{x}}^{t+1}_{s}-\boldsymbol{x}_s^{\ast}\right\|^2 \right]\nonumber\\
  & \leq\left(1-\frac{N\eta^t\tau\mu}{2}\right)\mathbb{E}\left[\left\|\boldsymbol{x}^{t}_{s}-\boldsymbol{x}_s^{\ast}\right\|^2\right]-2\eta^t\tau \mathbb{E}\left[f\left(\boldsymbol{x}^{t}\right)-f\left(\boldsymbol{x}^{\ast}\right)\right]\nonumber\\
  &  +\left(\eta^t\right)^2\tau^2 N\sum_{n=1}^N\left(2\sigma_n^2+G^2\right) + 24S\tau^3\left(\eta^t\right)^3 \sum_{n=1}^N\left(2\sigma_n^2+G^2\right).
\end{align}
\end{lemma}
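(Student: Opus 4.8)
The plan is to mirror the proof of Lemma~\ref{lem:server-strong-full-1}, changing only what the SFL-V2 server update forces us to change. Here the main server keeps a single server-side model, and within round $t$ it processes the $N$ clients in a random order, performing for each client $n$ a block of $\tau$ sequential SGD steps; thus $\boldsymbol{x}^{t+1}_{s}=\boldsymbol{x}^{t}_{s}-\eta^{t}\sum_{n=1}^{N}\sum_{i=0}^{\tau-1}\boldsymbol{g}_{s,n}^{t,i}$, i.e.\ $N\tau$ SGD steps in total on one model, with no $a_{n}$ weighting (each client contributes a full gradient rather than an $a_{n}$-scaled one). First I would write out $\boldsymbol{x}^{t+1}_{s}-\boldsymbol{x}^{\ast}_{s}$ from this rule, add and subtract the conditional-mean gradients $\sum_{n,i}\nabla_{\boldsymbol{x}_{s}}F_{n}(\{\boldsymbol{x}^{t,i}_{c,n},\boldsymbol{x}^{t,i}_{s,n}\})$, and expand the square. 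By the unbiasedness in Assumption~\ref{asm:lipschitz_grad} (conditioning on the history up to each mini-batch draw), the cross term between $\boldsymbol{x}^{t}_{s}-\boldsymbol{x}^{\ast}_{s}-\eta^{t}\sum_{n,i}\nabla_{\boldsymbol{x}_{s}}F_{n}$ and the noise $\sum_{n,i}(\boldsymbol{g}_{s,n}^{t,i}-\nabla_{\boldsymbol{x}_{s}}F_{n})$ has zero expectation, leaving a mean-field square plus a pure-noise square, exactly as in the V1 argument.

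For the mean-field square I would expand once more and, using that each $F_{n}$ is $S$-smooth and $\mu$-strongly convex (Assumptions~\ref{asm:lipschitz_grad} and~\ref{asm:convex}), apply Lemma~\ref{lem:smooth-convex} to each inner product $\langle\boldsymbol{x}^{t}_{s}-\boldsymbol{x}^{\ast}_{s},\nabla_{\boldsymbol{x}_{s}}F_{n}(\{\boldsymbol{x}^{t,i}_{c,n},\boldsymbol{x}^{t,i}_{s,n}\})\rangle$ with $h=F_{n}$, $\boldsymbol{z}=\boldsymbol{x}^{t}$, $\boldsymbol{y}=\boldsymbol{x}^{\ast}$. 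Summing this over the $\tau$ inner steps and then over the $N$ clients produces (i) the descent term $-2\eta^{t}\tau\,\mathbb{E}[f(\boldsymbol{x}^{t})-f^{\ast}]$; (ii) the $\frac{\mu}{4}\|\boldsymbol{x}^{t}_{s}-\boldsymbol{x}^{\ast}_{s}\|^{2}$ terms, whose accumulation over the $N\tau$ steps gives the contraction factor $1-\frac{N\eta^{t}\tau\mu}{2}$ --- this extra $N$ is the only structural difference from Lemma~\ref{lem:server-strong-full-1}, and plays the role that $\tilde{\tau}$ plays there; and (iii) a negative drift term $-S\sum_{n,i}\|\boldsymbol{x}^{t,i}_{s,n}-\boldsymbol{x}^{t}_{s}\|^{2}$. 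The squared deterministic gradient $(\eta^{t})^{2}\|\sum_{n,i}\nabla_{\boldsymbol{x}_{s}}F_{n}\|^{2}$ and the pure-noise square are then bounded by repeated use of $\|\sum_{k}\boldsymbol{v}_{k}\|^{2}\le(\#\text{terms})\sum_{k}\|\boldsymbol{v}_{k}\|^{2}$ across both the $N$ clients and the $\tau$ inner iterations, together with the variance bound $\sigma_{n}^{2}$ and the second-moment bound $G^{2}$ of Assumption~\ref{asm:lipschitz_grad}; this is where the $(\eta^{t})^{2}\tau^{2}N\sum_{n}(2\sigma_{n}^{2}+G^{2})$ term comes from, with the weights $a_{n}^{2}$ of the V1 bound replaced by $1$ because the server sum is unweighted.

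The step I expect to be the main obstacle is controlling the drift $\sum_{n}\sum_{i}\mathbb{E}\|\boldsymbol{x}^{t,i}_{s,n}-\boldsymbol{x}^{t}_{s}\|^{2}$. In SFL-V1 each per-client server replica starts from the common point $\boldsymbol{x}^{t}_{s}$ and performs only $\tilde{\tau}$ steps, so Lemma~\ref{lem:multiple-local-training-convex} applies directly; in SFL-V2 the single server model also carries the updates of every client block already processed in round $t$, so a priori $\boldsymbol{x}^{t,i}_{s,n}$ can be as far as $N\tau$ SGD steps from $\boldsymbol{x}^{t}_{s}$. My plan is to split this offset into a within-block part --- controlled by a Lemma~\ref{lem:multiple-local-training-convex}-type recursion over $\tau$ steps, which gives the $\tau^{3}$ scaling and feeds the $24S\tau^{3}(\eta^{t})^{3}\sum_{n}(2\sigma_{n}^{2}+G^{2})$ term --- and a block-start offset, and then to absorb everything into the negative $-S\|\cdot\|^{2}$ slack and the higher-order $(\eta^{t})^{3}$ terms using $\eta^{t}\le\frac{1}{2S\tau}$; the random client order enters only through the conditional expectations and does not change the bounds. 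Combining the server-side pieces and using $\sum_{n}a_{n}=1$ then yields the stated one-round recursion, which later plays the same role in the $O(1/T)$ argument as Lemma~\ref{lem:server-strong-full-1} does for SFL-V1.
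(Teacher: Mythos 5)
Your skeleton is the same as the paper's: expand the squared distance for the unweighted update $\boldsymbol{x}^{t+1}_{s}=\boldsymbol{x}^t_{s}-\eta^t\sum_{n}\sum_{i}\boldsymbol{g}_{s,n}^{t,i}$, kill the cross term by unbiasedness, bound the inner products via Lemma~\ref{lem:smooth-convex}, obtain the extra factor $N$ in the contraction from summing $\tfrac{\mu}{4}\|\boldsymbol{x}^t_s-\boldsymbol{x}^*_s\|^2$ over all $N\tau$ steps, and replace the $a_n^2$ (resp.\ $a_n$) weights of Lemma~\ref{lem:server-strong-full-1} by $1$. Up to the drift step, this is exactly what the paper does.

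The one place you diverge is the step you yourself flag as the obstacle, and here the comparison is instructive: the paper does \emph{not} resolve the cross-block drift --- it simply invokes Lemma~\ref{lem:multiple-local-training-convex} for each client $n$ to get $\sum_{i}\mathbb{E}\|\boldsymbol{x}^{t,i}_{s,n}-\boldsymbol{x}^t_s\|^2\le 12\tau^3(\eta^t)^2(2\sigma_n^2+G^2)$, which implicitly treats every client's server block as restarting from $\boldsymbol{x}^t_s$. Your observation that the shared V2 server model does not restart, so that $\boldsymbol{x}^{t,i}_{s,n}$ can be up to $N\tau$ SGD steps from $\boldsymbol{x}^t_s$, is correct and is precisely the gap the paper glosses over. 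However, your proposed repair does not close it. First, there is no ``negative $-S\|\cdot\|^2$ slack'' to absorb into: the $-S\|\boldsymbol{z}-\boldsymbol{x}\|^2$ term of Lemma~\ref{lem:smooth-convex} enters the distance recursion with a \emph{positive} sign $+2\eta^t S\sum_{n,i}\|\boldsymbol{x}^{t,i}_{s,n}-\boldsymbol{x}^t_s\|^2$, i.e.\ it is itself the penalty you must bound. Second, the block-start offsets are too large to hide in the higher-order terms: if client $j$ is processed $j$-th, its block starts $(j-1)\tau$ steps from $\boldsymbol{x}^t_s$, so $\mathbb{E}\|\boldsymbol{x}^{t,0}_{s,j}-\boldsymbol{x}^t_s\|^2\lesssim((j-1)\tau)^2(\eta^t)^2G^2$, and summing over $i$ and $j$ gives a drift contribution of order $N^3\tau^3(\eta^t)^2G^2$, hence a term of order $SN^3\tau^3(\eta^t)^3G^2$ in the recursion --- a factor $N^2$ larger than the stated $24S\tau^3(\eta^t)^3\sum_n(2\sigma_n^2+G^2)$, and the step-size condition $\eta^t\le\frac{1}{2S\tau}$ carries no $1/N$ to compensate (an interleaved client order gives the same order). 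So if you follow your route honestly you will prove a strictly weaker bound than the lemma claims; to reproduce the lemma as stated you must do what the paper does and apply Lemma~\ref{lem:multiple-local-training-convex} blockwise as if $\boldsymbol{x}^{t,0}_{s,n}=\boldsymbol{x}^t_s$ for every $n$.
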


We prove Lemma \ref{lem:server-strong-full-2} as follows.
\begin{proof}
We use $\boldsymbol{x}_{s,n}^{t,i}$ as the M-server-side model when the M-server interacts with  client $n$ for the $i$-th iteration of model training at round $t$. 
Using the (sequential) gradient update rule of 
${\boldsymbol{x}}^{t+1}_{s}=\boldsymbol{x}^t_{s} - \eta^t \sum_{n=1}^{N}\sum_{i=0}^{\tau-1}\boldsymbol{g}_{s}^{t,i}\left(\left\{\boldsymbol{x}^{t,i}_{c,n},\boldsymbol{x}^{t,i}_{s,n}\right\}\right)$, we have 
\begin{align}
    &\mathbb{E}\left[\left\|{\boldsymbol{x}}^{t+1}_{s}-\boldsymbol{x}_s^{\ast}\right\|^2\right] \nonumber\\
   & =\mathbb{E}\left[\left\|\boldsymbol{x}^{t}_{s}-\eta^t \sum_{i=0}^{\tau-1}\boldsymbol{g}_{s}^{t,i}-\boldsymbol{x}_s^{\ast}-\eta^t\sum_{i=0}^{\tau-1}\nabla_{\boldsymbol{x}_s} f\left(\left\{\boldsymbol{x}^{t,i}_c,\boldsymbol{x}^{t,i}_{s}\right\}\right)+\eta^t \sum_{i=0}^{\tau-1}\nabla_{\boldsymbol{x}_s} f\left(\left\{\boldsymbol{x}^{t,i}_c,\boldsymbol{x}^{t,i}_{s}\right\}\right)\right\|^2 \right]\nonumber \\
& =\mathbb{E}\left[\left\|\boldsymbol{x}^{t}_{s}-\boldsymbol{x}_s^{\ast}-\eta^t \sum_{i=0}^{\tau-1}\nabla_{\boldsymbol{x}_s} f\left(\left\{\boldsymbol{x}^{t,i}_c,\boldsymbol{x}^{t,i}_{s}\right\}\right)\right\|^2\right]\nonumber\\
&+2 \eta^t\mathbb{E}\left[\left\langle\boldsymbol{x}^{t}_{s}-\boldsymbol{x}_s^{\ast}-\eta^t \sum_{i=0}^{\tau-1}\nabla_{\boldsymbol{x}_s} f\left(\left\{\boldsymbol{x}^{t,i}_c,\boldsymbol{x}^{t,i}_{s}\right\}\right),\sum_{i=0}^{\tau-1} \nabla _{\boldsymbol{x}_s}f\left(\left\{\boldsymbol{x}^{t,i}_c,\boldsymbol{x}^{t,i}_{s}\right\}\right)-\sum_{i=0}^{\tau-1}\boldsymbol{g}_{s}^{t,i}\right\rangle\right]\nonumber\\
&+\mathbb{E}\left[\left(\eta^t\right)^2\left\|\sum_{i=0}^{\tau-1}\boldsymbol{g}_{s}^{t,i}-\sum_{i=0}^{\tau-1}\nabla_{\boldsymbol{x}_s} f\left(\left\{\boldsymbol{x}^{t,i}_c,\boldsymbol{x}^{t,i}_{s}\right\}\right)\right\|^2\right] \nonumber\\
    &\leq\mathbb{E}\left[\left\|\boldsymbol{x}^{t}_s\!-\!\boldsymbol{x}_s^{\ast}\!-\!\eta^t\!\sum_{i=0}^{\tau-1}\nabla_{\boldsymbol{x}_s} f\left(\left\{\boldsymbol{x}^{t,i}_c,\boldsymbol{x}^{t,i}_{s}\right\}\right)\right\|^2\right]\nonumber\\
&\quad\quad+\!\left(\eta^t\right)^2\mathbb{E}\left[\left\|\sum_{n=1}^N\sum_{i=0}^{\tau-1}\boldsymbol{g}^{t,i}_{s,n}\!-\!\sum_{i=0}^{\tau-1}\!\nabla_{\boldsymbol{x}_s} f\left(\left\{\boldsymbol{x}^{t,i}_c,\boldsymbol{x}^{t,i}_{s}\right\}\right)\right\|^2\right]. \label{eq:server-strong-full-2.1}
\end{align}
where the first equality is from $\left(a+b\right)^2=a^2+2ab+b^2$ and the last inequality is due to $\mathbb{E}\left[\nabla_{\boldsymbol{x}_s} f\left(\left\{\boldsymbol{x}^{t,i}_c,\boldsymbol{x}^{t,i}_{s}\right\}\right)-\boldsymbol{g}^{t,i}_{s}\right]=0$.

The first part in \eqref{eq:server-strong-full-2.1} is 
\begin{align}
    &\mathbb{E}\left[\left\|\boldsymbol{x}^{t}_s\!-\!\boldsymbol{x}_s^{\ast}\!-\!\eta^t\!\sum_{i=0}^{\tau-1}\nabla_{\boldsymbol{x}_s} f\left(\left\{\boldsymbol{x}^{t,i}_c,\boldsymbol{x}^{t,i}_{s}\right\}\right)\right\|^2\right]\nonumber\\
     &\leq\mathbb{E}\left[\left\|\boldsymbol{x}^{t}_s-\boldsymbol{x}^{\ast}_s\right\|^2\right]
  +\left(\eta^t\right)^2\tau N\sum_{n=1}^N\sum_{i=0}^{\tau-1}\mathbb{E}\left[\left\| \nabla_{\boldsymbol{x}_s} F_n\left(\left\{\boldsymbol{x}^{t,i}_{c,n},\boldsymbol{x}^{t,i}_{s,n}\right\}\right)\right\|^2\right]\nonumber\\
   &-2\eta^t\mathbb{E}\left[\sum_{n=1}^{N}\sum_{i=0}^{\tau-1}\left\langle\boldsymbol{x}^{t}_s-\boldsymbol{x}^{\ast}_s,\nabla_{\boldsymbol{x}_s} F_n\left(\left\{\boldsymbol{x}^{t,i}_{c},\boldsymbol{x}^{t,i}_{s}\right\}\right)\right\rangle\right],\label{eq:server-strong-full-2.2}
\end{align}
where we use $\nabla_{\boldsymbol{x}_s}f\left(\left\{\boldsymbol{x}_c,\boldsymbol{x}_s\right\}\right)=\sum_{n=1}^N\nabla_{\boldsymbol{x}_s}F_n\left(\left\{\boldsymbol{x}_c,\boldsymbol{x}_s\right\}\right)$. 

For \eqref{eq:server-strong-full-2.2}, we have
\begin{align}
 &\left(\eta^t\right)^2\tau N\sum_{n=1}^N\sum_{i=0}^{\tau-1}\mathbb{E}\left[\left\| \nabla_{\boldsymbol{x}_s} F_n\left(\left\{\boldsymbol{x}^{t,i}_{c,n},\boldsymbol{x}^{t,i}_{s,n}\right\}\right)\right\|^2\right]\nonumber\\
   &=\left(\eta^t\right)^2\tau N\sum_{n=1}^N\sum_{i=0}^{\tau-1}\mathbb{E}\left[\left\| \nabla_{\boldsymbol{x}_s} F_n\left(\left\{\boldsymbol{x}^{t,i}_{c,n},\boldsymbol{x}^{t,i}_{s,n}\right\}\right)\right.\right.\nonumber\\
   & \left.\left.-\boldsymbol{g}_{s,n}^{t,i}\left(\left\{\boldsymbol{x}^{t,i}_{c,n},\boldsymbol{x}^{t,i}_{s,n}\right\}\right)\right\|^2 \right]+ \mathbb{E}\left[\left\|\boldsymbol{g}_{s,n}^{t,i}\left(\left\{\boldsymbol{x}^{t,i}_{c,n},\boldsymbol{x}^{t,i}_{s,n}\right\}\right)\right\|^2\right]\nonumber\\
     &\le\left(\eta^t\right)^2\tau^2 N\sum_{n=1}^N\left(\sigma_n^2+G^2\right), 
    \end{align}
where the first inequality applies triangle inequality. In the last inequality, we apply the bound of variance and expected squared norm for stochastic gradients in  Assumption \ref{asm:lipschitz_grad}.

Since $F_n(\boldsymbol{x})$ is $S$-smooth and $\mu$-strongly convex, using Lemma \ref{lem:smooth-convex} we have
\begin{align}
    &-2\eta^t\mathbb{E}\left[\sum_{n=1}^{N}\sum_{i=0}^{\tau-1}\left\langle\boldsymbol{x}^{t}_{s}-\boldsymbol{x}^{\ast}_s,\nabla_{\boldsymbol{x}_s} F_n\left(\left\{\boldsymbol{x}^{t,i}_{c,n},\boldsymbol{x}^{t,i}_{s,n}\right\}\right)\right\rangle\right]\nonumber\\
    &\leq -2\eta^t\sum_{n=1}^{N}\sum_{i=0}^{\tau-1}\mathbb{E}\left[ \left(F_n\left(\boldsymbol{x}^{t}\right)-F_n\left(\boldsymbol{x}^{\ast}\right)\right.\right.\nonumber\\
  &\left.\left.+\frac{\mu}{4}\left\Vert\boldsymbol{x}^{t}_{s}-\boldsymbol{x}^{\ast}_s\right\Vert^2
  -S\left\Vert\boldsymbol{x}^{t,i}_{s,n}-\boldsymbol{x}^{t}_s\right\Vert^2\right)\right].\label{eq:server-strong-full-2.3}
\end{align}

By Lemma \ref{lem:multiple-local-training-convex}, we have
\begin{align}
    &\sum_{n=1}^N\sum_{i=0}^{\tau-1}  \mathbb{E} \left[ \left\Vert  \boldsymbol{x}^{t,i}_{s,n}- \boldsymbol{x}^{t}_s\right\Vert^2\right]\leq 12\sum_{n=1}^N\tau^3\left(\eta^t\right)^2\left(2\sigma_n^2+G^2\right).
\end{align}

From Assumption \ref{asm:lipschitz_grad}, the second part in \eqref{eq:server-strong-full-2.1} is bounded by
\begin{align}
    &\mathbb{E}\left\|\sum_{n=1}^N\sum_{i=0}^{\tau-1}\boldsymbol{g}^{t,i}_{s,n}\!-\!\sum_{i=0}^{\tau-1}\!\nabla_{\boldsymbol{x}_s} f\left(\left\{\boldsymbol{x}^{t,i}_c,\boldsymbol{x}^{t,i}_{s}\right\}\right)\right\|^2 \nonumber\\
        & \leq\tau \sum_{i=0}^{\tau-1}\mathbb{E}\left\|\sum_{n=1}^N \boldsymbol{g}^{t,i}_{s,n}-\nabla_{\boldsymbol{x}_s} F_n\left(\left\{\boldsymbol{x}^{t,i}_c,\boldsymbol{x}^{t,i}_{s}\right\}\right)\right\|^2\nonumber\\
    &\leq N\sum_{n=1}^N  \sigma_n^2 \tau^2.
\end{align}

Thus, \eqref{eq:server-strong-full-2.1} becomes
\begin{align}
    &\mathbb{E}\left[\left\|{\boldsymbol{x}}^{t+1}_{s}-\boldsymbol{x}_s^{\ast}\right\|^2 \right]\nonumber\\
    & \leq \mathbb{E}\left[\left\|\boldsymbol{x}^{t}_s-\boldsymbol{x}^{\ast}_s\right\|^2\right]
  +\left(\eta^t\right)^2\tau^2 N\sum_{n=1}^N\left(\sigma_n^2+G^2\right)\nonumber\\
   &-2\eta^t\tau \mathbb{E} \left[f\left(\boldsymbol{x}^{t}\right)-f\left(\boldsymbol{x}^{\ast}\right)\right]\nonumber\\
  &-\frac{\mu N\tau\eta^t}{2}\left\Vert\boldsymbol{x}^{t}_{s}-\boldsymbol{x}^{\ast}_s\right\Vert^2
  +2\eta^t\left(12\sum_{n=1}^NS\tau^3\left(\eta^t\right)^2\left(2\sigma_n^2+G^2\right)
\right)\nonumber\\
&+N\sum_{n=1}^N  \left(\eta^t\right)^2\sigma_n^2 \tau^2\nonumber\\
    &\leq\left(1-\frac{\eta^tN\tau\mu}{2}\right)\mathbb{E}\left[\left\|\boldsymbol{x}^{t}_{s}-\boldsymbol{x}_s^{\ast}\right\|^2\right]-2\eta^t\tau \mathbb{E}\left[f\left(\boldsymbol{x}^{t}\right)-f\left(\boldsymbol{x}^{\ast}\right)\right]\nonumber\\
  &  +\left(\eta^t\right)^2\tau^2 N\sum_{n=1}^N\left(2\sigma_n^2+G^2\right) + 24S\tau^3\left(\eta^t\right)^3 \sum_{n=1}^N\left(2\sigma_n^2+G^2\right). \label{eq:server-strong-full-2.4}
\end{align}
\end{proof}

Using the above lemma, we can prove the convergence error. 
Let $\Delta^{t+1} \triangleq \mathbb{E}\left[\left\|\boldsymbol{x}_s^{t+1}-\boldsymbol{x}_s^*\right\|^2\right]$. We can rewrite \eqref{eq:server-strong-full-2.4} as:
\begin{align}
\Delta^{t+1} & \leq\left(1-\frac{\eta^tN\tau\mu}{2}\right) \Delta^t-2\eta^t\tau \mathbb{E}\left[f\left(\boldsymbol{x}^{t}\right)-f\left(\boldsymbol{x}^{\ast}\right)\right], \nonumber\\
 &  +\left(\eta^t\right)^2\tau^2 N\sum_{n=1}^N\left(2\sigma_n^2+G^2\right) + 24S\tau^3\left(\eta^t\right)^3 \sum_{n=1}^N\left(2\sigma_n^2+G^2\right),\nonumber\\
& \leq\left(1-\frac{\eta^tN\tau\mu}{2}\right) \Delta^t+\frac{\left(\eta^t\right)^2 \tau^2}{4} B_1+\frac{\left(\eta^t\right)^3 \tau^3}{8} B_2.
\end{align}
where $B_1:= 4N\sum_{n=1}^N\left(2\sigma_n^2+G^2\right)$ and $B:=192S\ \sum_{n=1}^N\left(2\sigma_n^2+G^2\right)$.

Consider a diminishing stepsize  $\eta^t=\frac{2\beta}{N\tau(\gamma_s+t)}$, i.e,  $\frac{N\eta^t\tau}{2}=\frac{\beta}{\gamma_s+t}$, where $\beta=\frac{2}{\mu}, \gamma_s=\frac{8 S}{N\mu}-1$. It is easy to show that $\eta^t \leq \frac{1}{2 S\tau}$ for all $t$. Next, we will prove that $\Delta^{t+1} \leq \frac{v}{\gamma_s+t+1}$, where $v=\max \left\{\frac{4 B_1}{\mu^2}+\frac{8B_2}{\mu^3(\gamma_s+1)},(\gamma_s+1) \Delta^0\right\}$. We prove this by induction.
First, the definition of $v$ ensures that it holds for $t=-1$. Assume the conclusion holds for some $t$, it follows that

\begin{align}
&\Delta^{t+1} \leq\left(1-\frac{N\eta^t\tau\mu}{2}\right) \Delta^t+\frac{\left(\eta^t\right)^2 \tau^2}{4} B_1 +\frac{\left(\eta^t\right)^3 \tau^3}{8} B_2 \\
& \leq\left(1-\frac{\mu \beta}{\gamma_s+t}\right) \frac{v}{\gamma_s+t}+\frac{\left(\eta^t\right)^2 \tau^2}{4} B_1 +\frac{\left(\eta^t\right)^3 \tau^3}{8} B_2 \\
& =\frac{\gamma_s+t-1}{(\gamma_s+t)^2} v+\left[\frac{\beta^2 B_1}{(\gamma_s+t)^2}+\frac{\beta^3 B_2}{(\gamma_s+t)^3}-\frac{\beta \mu-1}{(\gamma_s+t)^2} v\right] \\
& =\frac{\gamma_s+t-1}{(\gamma_s+t)^2} v+\left[\frac{\beta^2 B_1}{(\gamma_s+t)^2}+\frac{\beta^3 B_2}{(\gamma_s+t)^3}-\frac{\beta \mu-1}{(\gamma_s+t)^2} \max \left\{\frac{4 B_1}{\mu^2}+\frac{8B_2}{\mu^3(\gamma_s+1)},(\gamma_s+1) \Delta^0\right\}\right] \\
&=\!\frac{\gamma_s\!+\!t\!-\!1}{(\gamma_s\!+\!t)^2}\!v\!+\!\left[\!\frac{\beta^2B_1}{(\gamma_s\!+\!t)^2}\!+\!\frac{\beta^3 B_2}{(\gamma_s\!+\!t)^3}\!-\!\frac{\beta \mu-1}{(\gamma_s+t)^2}\!\max\!\left\{\frac{\beta^2 B_1}{\beta \mu-1}\!+\!\frac{\beta^3B_2}{(\beta\mu\!-\!1)(\gamma_s\!+\!1)}\!,\!(\gamma_s\!+\!1)\Delta^0\right\}\!\right] \\
& \leq \frac{\gamma_s+t-1}{(\gamma_s+t)^2} v \\
& \leq \frac{v}{\gamma_s+t+1} .
\end{align}

Hence, we have proven that $\Delta^t \leq \frac{v}{\gamma_s+t}, \forall t$. Therefore, we have  
\begin{align}
    &\mathbb{E}\left[\left\|\boldsymbol{x}_s^{t}-\boldsymbol{x}_s^{\ast}\right\|^2\right] =\Delta^t \leq \frac{v}{\gamma_s+t}=\frac{\max \left\{\frac{4 B_1}{\mu^2}+\frac{8B_2}{\mu^3(\gamma_s+1)},(\gamma_s+1) \mathbb{E}\left[\left\|\boldsymbol{x}_s^{0}-\boldsymbol{x}_s^{\ast}\right\|^2\right]\right\}}{\gamma_s+t} \nonumber\\
    &\leq\frac{16N\sum_{n=1}^N\left(2\sigma_n^2+G^2\right)}{\mu^2\left(\gamma_s+t\right)}+\frac{1536S\ \sum_{n=1}^N\left(2\sigma_n^2+G^2\right)}{\mu^3\left(\gamma_s+t\right)\left(\gamma_s+1\right)}+\frac{(\gamma_s+1) \mathbb{E}\left[\left\|\boldsymbol{x}_s^{0}-\boldsymbol{x}_s^{\ast}\right\|^2\right]}{\gamma_s+t}.\label{eq:sever-strong-full-2.5}
\end{align}

\subsubsection{\textbf{One-round Parallel Update for Client-Side Models}} 
 Under Assumptions \ref{asm:lipschitz_grad} and \ref{asm:convex}, if $\eta^t\leq\frac{1}{2S\tau}$, in round $t$, Lemma \ref{lem:server-strong-full-1} gives 
    \begin{align}
    &\mathbb{E}\left[\left\|{\boldsymbol{x}}^{t+1}_{c}-\boldsymbol{x}_c^{\ast}\right\|^2 \right]\nonumber\\
     &\leq\left(1-\frac{\eta^t\tau\mu}{2}\right)\mathbb{E}\left[\left\|\boldsymbol{x}^{t}_{c}-\boldsymbol{x}_c^{\ast}\right\|^2\right] -2\eta^t\tau\mathbb{E}\left[f\left(\boldsymbol{x}^{t}\right)-f\left(\boldsymbol{x}^{\ast}\right)\right]\nonumber\\
  & +\left(\eta^t\right)^2\tau^2 N\sum_{n=1}^Na_n^2\left(2\sigma_n^2+G^2\right)
  + 24S\tau^3\left(\eta^t\right)^3 \sum_{n=1}^Na_n\left(2\sigma_n^2+G^2\right).\label{eq:client-strong-full-2.2}
\end{align}

Let $\Delta^{t+1} \triangleq \mathbb{E}\left[\left\|\boldsymbol{x}_c^{t+1}-\boldsymbol{x}_c^{\ast}\right\|^2\right]$. We can rewrite \eqref{eq:client-strong-full-2.2} as:
\begin{align}
\Delta^{t+1} \leq\left(1-\frac{\eta^t\tau\mu}{2}\right) \Delta^t+\frac{\left(\eta^t\right)^2 \tau^2}{4} B_1+\frac{\left(\eta^t\right)^3 \tau^3}{8} B_2.
\end{align}
where $B_1:= 4N\sum_{n=1}^Na_n^2\left(2\sigma_n^2+G^2\right)$ and $B_2:=192S\ \sum_{n=1}^Na_n\left(2\sigma_n^2+G^2\right)$.

Consider a diminishing stepsize  $\eta^t=\frac{2\beta}{\tau(\gamma_c+t)}$, i.e,  $\frac{\eta^t\tau}{2}=\frac{\beta}{\gamma_c+t}$, where $\beta=\frac{2}{\mu}, \gamma_c=\frac{8 S}{\mu}-1$. It is easy to show that $\eta^t \leq \frac{1}{2 S\tau}$ for all $t$. For $v=\max \left\{\frac{4 B_1}{\mu^2}+\frac{8B_2}{\mu^3(\gamma_c+1)},(\gamma_c+1) \Delta^0\right\}$, we can prove that $\Delta^t \leq \frac{v}{\gamma_c+t}, \forall t$. Therefore, we have  
\begin{align}
    &\mathbb{E}\left[\left\|\boldsymbol{x}_c^{t}-\boldsymbol{x}_c^{\ast}\right\|^2\right]  =\Delta^t \leq \frac{v}{\gamma_c+t}=\frac{\max \left\{\frac{4 B_1}{\mu^2}+\frac{8B_2}{\mu^3(\gamma_c+1)},(\gamma_c+1) \mathbb{E}\left[\left\|\boldsymbol{x}_c^{0}-\boldsymbol{x}_c^{\ast}\right\|^2\right]\right\}}{\gamma_c+t} \nonumber\\
    &\leq\frac{16N\sum_{n=1}^Na_n^2\left(2\sigma_n^2+G^2\right)}{\mu^2\left(\gamma_c+t\right)}+\frac{1536S\ \sum_{n=1}^Na_n\left(2\sigma_n^2+G^2\right)}{\mu^3\left(\gamma_c+t\right)\left(\gamma_c+1\right)}+\frac{(\gamma_c+1) \mathbb{E}\left[\left\|\boldsymbol{x}_c^{0}-\boldsymbol{x}_c^{\ast}\right\|^2\right]}{\gamma_c+t}.\label{eq:client-strong-full-2.3}
\end{align}

\subsubsection{\textbf{Superposition of M-Server and Clients}}

We merge the M-server-side and client-side models in  \eqref{eq:sever-strong-full-2.5} and \eqref{eq:client-strong-full-2.3} using Proposition \ref{prop: decomposition}.
For $\eta^t \leq \frac{1}{2 S\tau}$  and $\gamma=\frac{8 S}{\mu}-1$, we have
\begin{align}
&\mathbb{E}\left[f(\boldsymbol{x}^
{T})\right]- f(\boldsymbol{x}^*)\nonumber\\
&\le\frac{S}{2}\left(\mathbb{E}||\boldsymbol{x}_s^{T}-\boldsymbol{x}_s^*||^2+\mathbb{E}||\boldsymbol{x}_c^{T}-\boldsymbol{x}_c^*||^2\right)\nonumber\\
 &\leq\!\frac{8SN\sum_{n=1}^N\!(\!a_n^2\!+\!1\!)\!\left(\!2\sigma_n^2\!+\!G^2\!\right)}{\mu^2\left(\gamma+T\right)}\!+\!\frac{768S^2\!\sum_{n=1}^N\!(\!a_n\!+\!1\!)\!\left(2\sigma_n^2\!+\!G^2\right)}{\mu^3\left(\gamma+T\right)\left(\gamma+1\right)}\!+\!\frac{S(\!\gamma\!+\!1\!)\!\mathbb{E}\left[\!\left\|\boldsymbol{x}^{0}\!-\!\boldsymbol{x}^{\ast}\right\|^2\!\right]}{2(\gamma+T)}
\end{align}

\newpage 
\subsection{General convex case for SFL-V2}\label{proof-v2-gc-full}
\subsubsection{\textbf{One-round Sequential Update for M-Server-Side Model}}

By Lemma \ref{lem:server-strong-full-2} with $\mu=0$ and $\eta^t\leq\frac{1}{2S\tau}$, we have
\begin{align}
    &\mathbb{E}\left[\left\|{\boldsymbol{x}}^{t+1}_{s}-\boldsymbol{x}_s^{\ast}\right\|^2 \right]\nonumber\\
  & \leq\mathbb{E}\left[\left\|\boldsymbol{x}^{t}_{s}-\boldsymbol{x}_s^{\ast}\right\|^2\right]-2\eta^t\tau \mathbb{E}\left[f\left(\boldsymbol{x}^{t}\right)-f\left(\boldsymbol{x}^{\ast}\right)\right]\nonumber\\
  &  +\left(\eta^t\right)^2\tau^2 N\sum_{n=1}^N\left(2\sigma_n^2+G^2\right) + 24S\tau^3\left(\eta^t\right)^3 \sum_{n=1}^N\left(2\sigma_n^2+G^2\right).\label{eq:server-general-full-2}
\end{align}

\subsubsection{\textbf{One-round Parallel Update for Client-Side Models}}

By Lemma \ref{lem:server-strong-full-1} with $\mu=0$ and $\eta^t\leq\frac{1}{2S\tau}$, we have
\begin{align}
    &\mathbb{E}\left[\left\|{\boldsymbol{x}}^{t+1}_{c}-\boldsymbol{x}_c^{\ast}\right\|^2 \right]\nonumber\\
     &\leq\mathbb{E}\left[\left\|\boldsymbol{x}^{t}_{c}-\boldsymbol{x}_c^{\ast}\right\|^2\right] -2\eta^t\tau\mathbb{E}\left[f\left(\boldsymbol{x}^{t}\right)-f\left(\boldsymbol{x}^{\ast}\right)\right]\nonumber\\
  & +\left(\eta^t\right)^2\tau^2 N\sum_{n=1}^Na_n^2\left(2\sigma_n^2+G^2\right)
  + 24S\tau^3\left(\eta^t\right)^3 \sum_{n=1}^Na_n\left(2\sigma_n^2+G^2\right).\label{eq:client-general-full-2}
\end{align}

\subsubsection{\textbf{Superposition of M-Server and Clients}}

We merge the M-server-side and client-side models in  \eqref{eq:server-general-full-2} and \eqref{eq:client-general-full-2} as follows

\begin{align}
&\mathbb{E}\!\left[\!\left\|\boldsymbol{x}^{t+1}\!-\!\boldsymbol{x}^{\ast}\right\|^2\!\right]\!\leq\!\mathbb{E}\!\left[\!\left\|\boldsymbol{x}^{t+1}_{s}\!-\!\boldsymbol{x}_s^{\ast}\right\|^2\!\right]+\!\mathbb{E}\!\left[\!\left\|\boldsymbol{x}^{t+1}_c\!-\!\boldsymbol{x}_c^{\ast}\right\|^2\!\right]\!,\nonumber\\
&\leq\mathbb{E}\left[\left\|\boldsymbol{x}^{t}_{s}-\boldsymbol{x}_s^{\ast}\right\|^2\right] -2\eta^t\tau\mathbb{E}\left[f\left(\boldsymbol{x}^{t}\right)-f\left(\boldsymbol{x}^{\ast}\right)\right]\nonumber\\
  &+\mathbb{E}\left[\left\|\boldsymbol{x}^{t}_{c}-\boldsymbol{x}_c^{\ast}\right\|^2\right]-2\eta^t\tau\mathbb{E}\left[f\left(\boldsymbol{x}^{t}\right)-f\left(\boldsymbol{x}^{\ast}\right)\right]\nonumber\\
  & +\left(\eta^t\right)^2\tau^2 N\sum_{n=1}^N(a_n^2+1)\left(2\sigma_n^2+G^2\right) + 24S\tau^3\left(\eta^t\right)^3\sum_{n=1}^N(a_n+1)\left(2\sigma_n^2+G^2\right)\nonumber\\
  &=\mathbb{E}\left[\left\|\boldsymbol{x}^{t}-\boldsymbol{x}^{\ast}\right\|^2\right] -4\eta^t\tau\mathbb{E}\left[f\left(\boldsymbol{x}^{t}\right)-f\left(\boldsymbol{x}^{\ast}\right)\right]\nonumber\\
  & +\left(\eta^t\right)^2\tau^2 N\sum_{n=1}^N(a_n^2+1)\left(2\sigma_n^2+G^2\right) + 24S\tau^3\left(\eta^t\right)^3\sum_{n=1}^N(a_n+1)\left(2\sigma_n^2+G^2\right).
\end{align}
Then, we can obtain the relation between  $ \mathbb{E}\!\left[\!\left\|\boldsymbol{x}^{t+1}\!-\!\boldsymbol{x}^{\ast}\right\|^2\!\right]$ and $ \mathbb{E}\!\left[\!\left\|\boldsymbol{x}^{t}\!-\!\boldsymbol{x}^{\ast}\right\|^2\!\right]$, which is related to $\mathbb{E}\left[f\left(\boldsymbol{x}^{t}\right)-f\left(\boldsymbol{x}^{\ast}\right)\right]$.
Applying Lemma 8 in \cite{li2023convergence},
we obtain the performance bound as
\begin{align}
    &\mathbb{E}\left[f\left(\boldsymbol{x}^{T}\right)\right]-f\left(\boldsymbol{x}^{\ast}\right) \nonumber\\
    &\leq \frac{1}{2}\left(N\sum_{n=1}^N(a_n^2+1)\left(2\sigma_n^2+G^2\right)\right)^{\frac{1}{2}}\left(\frac{\left\|\boldsymbol{x}^{0}-\boldsymbol{x}^{\ast}\right\|^2}{T+1}\right)^{\frac{1}{2}}\nonumber\\
    &+\frac{1}{2}\left(24S\sum_{n=1}^N(a_n+1)\left(2\sigma_n^2+G^2\right)\right)^{\frac{1}{3}}\left(\frac{\left\|\boldsymbol{x}^{0}-\boldsymbol{x}^{\ast}\right\|^2}{T+1}\right)^{\frac{1}{3}}
    +\frac{S\left\|\boldsymbol{x}^{0}-\boldsymbol{x}^{\ast}\right\|^2}{2(T+1)}.
\end{align}

\newpage
\subsection{Non-convex case for SFL-V2}\label{proof-v2-nc-full}

\subsubsection{\textbf{One-round Sequential Update for M-Server-Side Model}}
For the server, we have
\begin{align}
    &\mathbb{E} \left[\left\langle\nabla_{\boldsymbol{x}_s} f\left(\boldsymbol{x}^{t}\right), \boldsymbol{x}^{t+1}_s-\boldsymbol{x}^{t}_s\right\rangle\right]\nonumber\\
    &\leq \mathbb{E} \left[\left\langle\nabla_{\boldsymbol{x}_s} f\left(\boldsymbol{x}^{t}\right), \boldsymbol{x}^{t+1}_s-\boldsymbol{x}^{t}_s +\eta^t \tau \nabla_{\boldsymbol{x}_s} f\left(\boldsymbol{x}^{t}\right)- \eta^t \tau \nabla_{\boldsymbol{x}_s} f\left(\boldsymbol{x}^{t}\right)\right\rangle\right]\nonumber\nonumber\\
     &\leq \mathbb{E} \left[\left\langle\nabla_{\boldsymbol{x}_s} f\left(\boldsymbol{x}^{t}\right), \boldsymbol{x}^{t+1}_s-\boldsymbol{x}^{t}_s +\eta^t \tau \nabla_{\boldsymbol{x}_s} f\left(\boldsymbol{x}^{t}\right)\right\rangle -\left\langle \nabla_{\boldsymbol{x}_s}f\left(\boldsymbol{x}^{t}_s\right), \eta^t \tau \nabla_{\boldsymbol{x}_s} f\left(\boldsymbol{x}^{t}\right)\right\rangle\right]\nonumber\\
     &\leq \left\langle\nabla_{\boldsymbol{x}_s} f\left(\boldsymbol{x}^{t}\right), \mathbb{E} \left[- \eta^t\sum_{n=1}^N \sum_{i=0}^{\tau-1} \boldsymbol{g}_{s,n}^{t,i}\right]+\eta^t \tau \nabla_{\boldsymbol{x}_s} f\left(\boldsymbol{x}^{t}\right)\right\rangle -\eta^t \tau \left\Vert\nabla_{\boldsymbol{x}_s} f\left(\boldsymbol{x}^{t}\right)\right\Vert^2 \nonumber\\
     &\leq\!\left\langle\nabla_{\boldsymbol{x}_s}\!f\left(\!\boldsymbol{x}^{t}\!\right)\!,\!\mathbb{E}\!\left[\!-\!\eta^t\sum_{n=1}^N\!\sum_{i=0}^{\tau-1}\!\nabla_{\boldsymbol{x}_s}\! F_n\left(\!\left\{\boldsymbol{x}^{t,i}_{c,n}\!,\!\boldsymbol{x}^{t,i}_{s,n}\right\}\!\right)\!\right]\!+\!\eta^t\!\tau\!\nabla_{\boldsymbol{x}_s}\!f\left(\!\boldsymbol{x}^{t}\!\right)\right\rangle\!-\!\eta^t \tau \left\Vert\nabla_{\boldsymbol{x}_s}\!f\left(\!\boldsymbol{x}^{t}\!\right)\!\right\Vert^2 \nonumber\\
     &\leq \left\langle\nabla_{\boldsymbol{x}_s} f\left(\boldsymbol{x}^{t}\right), \mathbb{E} \left[- \eta^t\sum_{n=1}^N \sum_{i=0}^{\tau-1} \nabla_{\boldsymbol{x}_s} F_n\left(\left\{\boldsymbol{x}^{t,i}_{c,n},\boldsymbol{x}^{t,i}_{s,n}\right\}\right)+\eta^t\sum_{n=1}^N \sum_{i=0}^{\tau-1} \nabla_{\boldsymbol{x}_s} F_n\left(\boldsymbol{x}^{t}\right) \right]\right\rangle \nonumber\\ 
     &-\eta^t \tau \left\Vert\nabla_{\boldsymbol{x}_s} f\left(\boldsymbol{x}^{t}\right)\right\Vert^2 \nonumber\\
      &\leq  \eta^t\tau\left\langle \nabla_{\boldsymbol{x}_s} f\left(\boldsymbol{x}^{t}\right),\mathbb{E} \left[ - \frac{1}{\tau}\sum_{n=1}^N \sum_{i=0}^{\tau-1}\nabla_{\boldsymbol{x}_s} F_n\left(\left\{\boldsymbol{x}^{t,i}_{c,n},\boldsymbol{x}^{t,i}_{s,n}\right\}\right)+ \frac{1}{\tau}\sum_{n=1}^N \sum_{i=0}^{\tau-1} \nabla_{\boldsymbol{x}_s} F_n\left(\boldsymbol{x}^{t}\right) \right] \right\rangle\nonumber\\
      &-\eta^t \tau \left\Vert\nabla_{\boldsymbol{x}_s} f\left(\boldsymbol{x}^{t}\right)\right\Vert^2 \nonumber\\
      &\leq \frac{\eta^t\tau}{2} \left\Vert\nabla_{\boldsymbol{x}_s} f\left(\boldsymbol{x}^{t}\right)\right\Vert^2 +\frac{\eta^t}{2\tau}\mathbb{E} \left[ \left\Vert \sum_{n=1}^N \sum_{i=0}^{\tau-1} \nabla_{\boldsymbol{x}_s} F_n\left(\left\{\boldsymbol{x}^{t,i}_{c,n},\boldsymbol{x}^{t,i}_{s,n}\right\}\right)- \sum_{n=1}^N \sum_{i=0}^{\tau-1} \nabla_{\boldsymbol{x}_s} F_n\left(\boldsymbol{x}^{t}\right) \right\Vert^2\right]\nonumber\\
     & -\eta^t \tau \left\Vert\nabla_{\boldsymbol{x}_s} f\left(\boldsymbol{x}^{t}\right)\right\Vert^2 \nonumber\\
       &\leq -\frac{\eta^t\tau}{2} \left\Vert\nabla_{\boldsymbol{x}_s} f\left(\boldsymbol{x}^{t}\right)\right\Vert^2 +\frac{\eta^t}{2\tau}\mathbb{E} \left[ \left\Vert \sum_{n=1}^N  \sum_{i=0}^{\tau-1}  \left(\nabla_{\boldsymbol{x}_s} F_n\left(\left\{\boldsymbol{x}^{t,i}_{c,n},\boldsymbol{x}^{t,i}_{s,n}\right\}\right)- \nabla_{\boldsymbol{x}_s} F_n\left(\boldsymbol{x}^{t}\right)\right) \right\Vert^2\right]\nonumber\\
       &\leq -\frac{\eta^t\tau}{2} \left\Vert\nabla_{\boldsymbol{x}_s} f\left(\boldsymbol{x}^{t}\right)\right\Vert^2 +\frac{N\eta^t}{2 \tau}\sum_{n=
       1}^N\mathbb{E} \left[ \left\Vert \sum_{i=0}^{\tau-1} \left(\nabla_{\boldsymbol{x}_s} F_n\left(\left\{\boldsymbol{x}^{t,i}_{c,n},\boldsymbol{x}^{t,i}_{s,n}\right\}\right)- \nabla_{\boldsymbol{x}_s} F_n\left(\boldsymbol{x}^{t}\right) \right)\right\Vert^2\right]\nonumber\\
        &\leq -\frac{\eta^t\tau}{2} \left\Vert\nabla_{\boldsymbol{x}_s} f\left(\boldsymbol{x}^{t}\right)\right\Vert^2 +\frac{N\eta^t S^2}{2}\sum_{n=1}^N \sum_{i=0}^{\tau-1} \mathbb{E} \left[ \left\Vert \boldsymbol{x}^{t,i}_{s,n}- \boldsymbol{x}^{t}_s\right\Vert^2\right],\label{eq:server-non-full-2.1}
\end{align}
where we apply Assumption \ref{asm:lipschitz_grad}, $\nabla_{\boldsymbol{x}_s} f\left(\boldsymbol{x}^{t}\right)=\sum_{n=1}^N \nabla_{\boldsymbol{x}_s} F_n\left(\boldsymbol{x}^{t}\right)$, and $\left\langle a,b\right\rangle\leq \frac{a^2+b^2}{2}$.

By Lemma \ref{lem:multiple-local-training} with $\eta^t\leq\frac{1}{\sqrt{8}S\tau}$, we have
\begin{align}
    &\sum_{i=0}^{\tau-1}  \mathbb{E} \left[ \left\Vert  \boldsymbol{x}^{t,i}_{s,n}- \boldsymbol{x}^{t}_s\right\Vert^2\right] \leq 2\tau^2\left(8\tau\left(\eta^t\right)^2 \sigma_n^2 +8\tau\left(\eta^t\right)^2\epsilon^2
         +8\tau\left(\eta^t\right)^2\left\Vert\nabla_{\boldsymbol{x}_s}f\left(\boldsymbol{x}^{t}_s\right)\right\Vert^2\right).
\end{align}

Thus, \eqref{eq:server-non-full-2.1} becomes
\begin{align}
    &\mathbb{E} \left[\left\langle\nabla_{\boldsymbol{x}_s} f\left(\boldsymbol{x}^{t}\right), \boldsymbol{x}^{t+1}_s-\boldsymbol{x}^{t}_s\right\rangle\right]\nonumber\\        
    &\leq\!-\!\frac{\eta^t\tau}{2} \left\Vert\nabla_{\boldsymbol{x}_s} f\left(\boldsymbol{x}^{t}\right)\right\Vert^2\!+\!\frac{N\eta^t S^2}{2}\sum_{n=1}^N\!2\tau^2\!\left(\!8\tau\!\left(\!\eta^t\!\right)^2 \sigma_n^2\!+\!8\tau\!\left(\!\eta^t\!\right)^2\epsilon^2\!+\!8\tau\left(\!\eta^t\!\right)^2\left\Vert\!\nabla_{\boldsymbol{x}_s}\!f\left(\!\boldsymbol{x}^{t}_s\!\right)\!\right\Vert^2\!\right)\nonumber\\
       &\leq \left(-\frac{\eta^t\tau}{2} +8N^2\left(\eta^t\right)^3\tau^3 S^2\right)\left\Vert\nabla_{\boldsymbol{x}_s} f\left(\boldsymbol{x}^{t}\right)\right\Vert^2 +8N\eta^t S^2\tau^3\sum_{n=1}^N \left(\eta^t\right)^2\left( \sigma_n^2 +\epsilon^2
         \right).\label{eq:server-non-full-2.5}
\end{align}

Furthermore, we have
\begin{align}
    &\frac{S}{2}\mathbb{E}\left[\left\|  \boldsymbol{x}^{t+1}_{s}- \boldsymbol{x}^{t}_{s} \right\|^2 \right]\nonumber\\
    &=\frac{SN\left(\eta^t\right)^2}{2}\sum_{n=1}^N\mathbb{E}\left[\left\|  \sum_{i=0}^{\tau-1} \boldsymbol{g}^{t,i}_{s,n}\right\|^2 \right]\nonumber\\
    &\leq\frac{SN\left(\eta^t\right)^2}{2}\sum_{n=1}^N\mathbb{E}\left[\left\|\sum_{i=0}^{\tau-1} \boldsymbol{g}^{t,i}_{s,n}\right\|^2 \right]\nonumber\\
      &\leq\frac{SN\left(\eta^t\right)^2 \tau}{2}\sum_{n=1}^N\sum_{i=0}^{\tau-1} \mathbb{E}\left[\left\|\boldsymbol{g}^{t,i}_{s,n}\right\|^2 \right]\nonumber\\
      &\leq\frac{SN\left(\eta^t\right)^2 \tau}{2}\sum_{n=1}^N\sum_{i=0}^{\tau-1} \mathbb{E}\left[\left\|\boldsymbol{g}^{t,i}_{s,n}-\boldsymbol{g}_{s,n}^t+\boldsymbol{g}_{s,n}^t\right\|^2 \right]\nonumber\\
       &\leq \frac{SN\left(\eta^t\right)^2 \tau}{2}\sum_{n=1}^N\sum_{i=0}^{\tau-1} \left(\mathbb{E}\left[\left\|\boldsymbol{g}^{t,i}_{s,n}-\boldsymbol{g}_{s,n}^t\right\|^2\right]+\mathbb{E}\left[\left\|\boldsymbol{g}_{s,n}^t\right\|^2 \right]\right)\nonumber\\
       &\leq \frac{SN\left(\eta^t\right)^2 \tau}{2}\sum_{n=1}^N\sum_{i=0}^{\tau-1} \left(\mathbb{E}\left[\left\|\boldsymbol{g}^{t,i}_{s,n}-\boldsymbol{g}_{s,n}^t\right\|^2\right]+\mathbb{E}\left[\left\|\nabla_{\boldsymbol{x}_s}F_n\left(\boldsymbol{x}^{t}\right)\right\|^2+\sigma_n^2 \right]\right) ,\label{eq:server-non-full-2.3}
       \end{align}
where the last line uses Assumption \ref{asm:lipschitz_grad} and $\mathbb{E}\left[\|\mathbf{z}\|^2\right]=\|\mathbb{E}[\mathbf{z}]\|^2+\mathbb{E}[\| \mathbf{z}-\left.\mathbb{E}[\mathbf{z}] \|^2\right]$ for any random variable $\mathbf{z}$.

By Lemma \ref{lem:multiple-local-training-grad} with $\eta^t\leq\frac{1}{2S\tau}$, we have
\begin{align}
    \sum_{i=0}^{\tau-1}\mathbb{E}\left[\left\Vert \boldsymbol{g}^{t,i}_{s,n}-\boldsymbol{g}_{s,n}^t\right\Vert^2\right] \leq 8\tau^3\left(\eta^t\right)^2S^2\left(\left\Vert \nabla_{\boldsymbol{x}_s}F_n\left(\boldsymbol{x}^{t}\right)\right\Vert^2+\sigma_n^2\right) .
\end{align}
Thus, \eqref{eq:server-non-full-2.3} becomes
\begin{align}
    &\frac{S}{2}\mathbb{E}\left[\left\|  \boldsymbol{x}^{t+1}_{s}- \boldsymbol{x}^{t}_{s} \right\|^2 \right]\nonumber\\
     &\leq\frac{SN\left(\eta^t\right)^2 \tau}{2}\sum_{n=1}^N\left(8\tau^3\left(\eta^t\right)^2S^2\left(\left\Vert \nabla_{\boldsymbol{x}_s}F_n\left(\boldsymbol{x}^{t}\right)\right\Vert^2+\sigma_n^2\right) 
     +\tau\mathbb{E}\left[\left\|\nabla_{\boldsymbol{x}_s}F_n\left(\boldsymbol{x}^{t}\right)\right\|^2+\sigma_n^2 \right] \right)\nonumber\\
     &\leq\frac{SN\left(\eta^t\right)^2 \tau}{2}\sum_{n=1}^N\left(\tau+8\tau^3\left(\eta^t\right)^2S^2\right)\left(\left\Vert \nabla_{\boldsymbol{x}_s}F_n\left(\boldsymbol{x}^{t}\right)\right\Vert^2+\sigma_n^2\right)\nonumber\\
    &\leq\frac{SN\left(\eta^t\right)^2 \tau}{2}\sum_{n=1}^N\left(\tau+8\tau^3\left(\eta^t\right)^2S^2\right)\left(\left\Vert \nabla_{\boldsymbol{x}_s}F_n\left(\boldsymbol{x}^{t}\right)-\nabla_{\boldsymbol{x}_s}f\left(\boldsymbol{x}^{t}\right)+\nabla_{\boldsymbol{x}_s}f\left(\boldsymbol{x}^{t}\right)\right\Vert^2+\sigma_n^2\right)\nonumber\\
     &\leq\frac{SN\left(\eta^t\right)^2 \tau}{2}\sum_{n=1}^N\left(\tau+8\tau^3\left(\eta^t\right)^2S^2\right)\left(2\left\Vert \nabla_{\boldsymbol{x}_s}f\left(\boldsymbol{x}^{t}\right)\right\Vert^2+2\epsilon^2+\sigma_n^2\right).
     \label{eq:server-non-full-2.4}
\end{align}
\subsubsection{\textbf{One-round Parallel Update for Client-Side Models}} 
The analysis of the client-side model update is the same as the client's model update in version 1. Thus, we have
\begin{align}
    &\mathbb{E} \left[\left\langle\nabla_{\boldsymbol{x}_c} f\left(\boldsymbol{x}^{t}\right), \boldsymbol{x}^{t+1}_c-\boldsymbol{x}^{t}_c\right\rangle\right]\nonumber\\               
    &\leq \left(-\frac{\eta^t\tau}{2} +8N\left(\eta^t\right)^3\tau^3 S^2\sum_{n=1}^N a_n^2\right)\left\Vert\nabla_{\boldsymbol{x}_c} f\left(\boldsymbol{x}^{t}\right)\right\Vert^2 +8N\eta^t S^2\tau^3\sum_{n=1}^N a_n^2\left(\eta^t\right)^2\left( \sigma_n^2 +\epsilon^2
         \right).\label{eq:client-non-full-2.5}
\end{align}

For $\eta^t\leq \frac{1}{2S\tau}$,
\begin{align}
    &\frac{S}{2}\mathbb{E}\left[\left\|  \boldsymbol{x}^{t+1}_{c}- \boldsymbol{x}^{t}_{c} \right\|^2 \right]\nonumber\\
     &\leq\frac{SN\left(\eta^t\right)^2 \tau}{2}\sum_{n=1}^Na_n^2\left(\tau+8\tau^3\left(\eta^t\right)^2S^2\right)\left(2\left\Vert \nabla_{\boldsymbol{x}_c}f\left(\boldsymbol{x}^{t}\right)\right\Vert^2+2\epsilon^2+\sigma_n^2\right).
     \label{eq:client-non-full-2.4}
\end{align}
\subsubsection{\textbf{Superposition of M-Server and Clients}}
Applying \eqref{eq:server-non-full-2.5}, \eqref{eq:server-non-full-2.4}, \eqref{eq:client-non-full-2.4} and \eqref{eq:client-non-full-2.5} into \eqref{eq:non-covex-decouple} in Proposition \ref{Prop: decouple-one-round},
we have
\begin{align}
&\mathbb{E}\left[f\left(\boldsymbol{x}^{t+1}\right)\right]-f\left(\boldsymbol{x}^{t}\right)\nonumber\\
&\!\leq\!\mathbb{E}\!\left[\!\left\langle\!\nabla_{\boldsymbol{x}_c}\!f\left(\!\boldsymbol{x}^{t}\!\right)\!,\!\boldsymbol{x}^{t+1}_c-\boldsymbol{x}^{t}_c\!\right\rangle\!\right]\!+\!\frac{S}{2}\mathbb{E}\!\left[\!\left\|\!\boldsymbol{x}^{t+1}_c\!-\!\boldsymbol{x}^{t}_c\!\right\|^2\!\right]\!+\!\mathbb{E}\!\left[\!\left\langle\!\nabla_{\boldsymbol{x}_s}\!f\left(\!\boldsymbol{x}^{t}\!\right)\!,\!\boldsymbol{x}^{t+1}_s\!-\!\boldsymbol{x}^{t}_s\!\right\rangle\!\right]\!+\!\frac{S}{2}\mathbb{E}\!\left[\!\left\|\!\boldsymbol{x}^{t+1}_s\!-\!\boldsymbol{x}^{t}_s\!\right\|^2\!\right]\nonumber\\
 &\leq \left(-\frac{\eta^t\tau}{2} +8N^2\left(\eta^t\right)^3\tau^3 S^2\right)\left\Vert\nabla_{\boldsymbol{x}} f\left(\boldsymbol{x}^{t}\right)\right\Vert^2 \nonumber\\
 &+8N\eta^t S^2\tau^3\sum_{n=1}^N \left(\eta^t\right)^2(a_n^2+1)\left( \sigma_n^2 +\epsilon^2
         \right)
\nonumber\\
&+\frac{SN\left(\eta^t\right)^2 \tau}{2}\sum_{n=1}^N\left(\tau+8\tau^3\left(\eta^t\right)^2S^2\right)2\left\Vert \nabla_{\boldsymbol{x}}f\left(\boldsymbol{x}^{t}\right)\right\Vert^2  \nonumber\\
&+\frac{SN\left(\eta^t\right)^2\tau}{2}\sum_{n=1}^N(a_n^2+1)\left(\tau+8\tau^3\left(\eta^t\right)^2S^2\right)\left(2\epsilon^2+\sigma_n^2\right) \nonumber  \\
& \leq\left( -\frac{\eta^t\tau}{2} +8N^2\left(\eta^t\right)^3 S^2\tau^3+SN\left(\eta^t\right)^2 \tau\sum_{n=1}^N\left(\tau+8\tau^3\left(\eta^t\right)^2S^2\right) \right)\left\Vert\nabla_{\boldsymbol{x}}f\left(\boldsymbol{x}^{t}\right)\right\Vert^2 \nonumber\\
   &+8N\eta^t S^2\tau^3\sum_{n=1}^N \left(\eta^t\right)^2(a_n^2+1)\left( \sigma_n^2 +\epsilon^2
         \right)
\nonumber\\
&+\frac{1}{2}SN\left(\eta^t\right)^2 \tau \left(\tau+8\tau^3\left(\eta^t\right)^2S^2\right)\sum_{n=1}^N(a_n^2+1)\left(2\epsilon^2+\sigma_n^2\right)\nonumber\\
   & \leq\left( -\frac{\eta^t\tau}{2} +SN^2\left(\eta^t\right)^2 \tau^2 +8N^2\left(\eta^t\right)^3 S^2\tau^3 +8S^3N^2 \left(\eta^t\right)^4 \tau^4 \right)\left\Vert\nabla_{\boldsymbol{x}}f\left(\boldsymbol{x}^{t}\right)\right\Vert^2 \nonumber\\
   &+8N\left(\eta^t\right)^3 S^2\tau^3\sum_{n=1}^N(a_n^2+1)\sigma_n^2  +8N\left(\eta^t\right)^3 S^2\tau^3\epsilon^2\sum_{n=1}^N(a_n^2+1) \nonumber\\
   &+SN\left(\eta^t\right)^2 \tau^2 \epsilon^2\sum_{n=1}^N(a_n^2+1)
   +\frac{1}{2}SN\left(\eta^t\right)^2 \tau^2\sum_{n=1}^N(a_n^2+1) \sigma_n^2\nonumber\\
  & +8NS^3 \left(\eta^t\right)^4 \tau^4 \epsilon^2\sum_{n=1}^N(a_n^2+1)
   +4NS^3 \left(\eta^t\right)^4 \tau^4 \sum_{n=1}^N(a_n^2+1)\sigma_n^2\nonumber\\
    & \leq-\frac{\eta^t\tau}{2}\left(1  -2SN^2\eta^t \frac{\tau^2}{\tau} \left(1+8S\eta^t\tau+8S^2\left(\eta^t\right)^2\tau^2\right)\right)\left\Vert\nabla_{\boldsymbol{x}}f\left(\boldsymbol{x}^{t}\right)\right\Vert^2 \nonumber\\
   &+\left(\frac{1}{2}NS\left(\eta^t\right)^2 \tau^2 +8N\left(\eta^t\right)^3 S^2\tau^3+4NS^3 \left(\eta^t\right)^4 \tau^4 \right)\sum_{n=1}^N(a_n^2+1)\sigma_n^2\nonumber\\
   &+\left(NS\left(\eta^t\right)^2 \tau^2+8N\left(\eta^t\right)^3 S^2\tau^3+8NSL^3 \left(\eta^t\right)^4 \tau^4\right)\sum_{n=1}^N(a_n^2+1)\epsilon^2  \nonumber\\
    & \leq-\frac{\eta^t\tau}{2}\left(1  -2N^2S\eta^t \frac{\tau^2}{\tau}\left(1+\frac{1}{2}+\frac{1}{32}\right)\right)\left\Vert\nabla_{\boldsymbol{x}}f\left(\boldsymbol{x}^{t}\right)\right\Vert^2 \nonumber\\
   &+NS\left(\eta^t\right)^2 \tau^2\left(\frac{1}{2} +\frac{1}{2}+\frac{1}{64}\right)\sum_{n=1}^N(a_n^2+1)\sigma_n^2  +2SN\left(\eta^t\right)^2 \tau^2\left(\frac{1}{2}+\frac{1}{4}+\frac{1}{64}\right)\sum_{n=1}^N(a_n^2+1)\epsilon^2  \nonumber\\
     & \leq-\frac{\eta^t\tau}{2}\left(1  -4N^2S\eta^t \frac{\tau^2}{\tau}\right)\left\Vert\nabla_{\boldsymbol{x}}f\left(\boldsymbol{x}^{t}\right)\right\Vert^2 +2NS\left(\eta^t\right)^2 \sum_{n=1}^N\left(\tau^2a_n^2+\tau^2\right)\left(\sigma_n^2+\epsilon^2\right)\nonumber\\
    & \leq-\frac{\eta^t\tau}{4}\left\Vert\nabla_{\boldsymbol{x}}f\left(\boldsymbol{x}^{t}\right)\right\Vert^2 +2NS\left(\eta^t\right)^2   \tau^2 \sum_{n=1}^N\left(a_n^2+1\right)\left(\sigma_n^2+\epsilon^2\right),
\end{align}
where we first let $\eta^t \leq \frac{1}{16S\tau}$ and then let $\eta^t \leq \frac{1}{8SN^2\tau}$. We have applied $\sum_{n=1}^Na_n^2\leq N$.

Rearranging the above we have
\begin{align}
    &\eta^t\left\Vert\nabla_{\boldsymbol{x}}f\left(\boldsymbol{x}^{t}\right)\right\Vert^2\leq \frac{4}{\tau}\left(f\left(\boldsymbol{x}^{t}\right)-\mathbb{E} \left[f\left(\boldsymbol{x}^{t+1}_s\right)\right]\right)+8NS\left(\eta^t\right)^2\tau  \sum_{n=1}^N \frac{a_n^2+1 }{\tau}\left(\sigma_n^2+\epsilon^2\right). 
\end{align}
Taking expectation and averaging over all $t$, we have
\begin{align}
    &\frac{1}{T}\sum_{t=0}^{T-1}\eta^t\mathbb{E}\left[\left\Vert\nabla_{\boldsymbol{x}}f\left(\boldsymbol{x}^{t}\right)\right\Vert^2\right]\leq \frac{4}{T\tau}\left(f\left(\boldsymbol{x}_{0}\right)-f^\ast\right)+\frac{8NS\tau}{T} \sum_{n=1}^N(a_n^2+1)\left(\sigma_n^2+\epsilon^2\right)\sum_{t=0}^{T-1}\left(\eta^t\right)^2.
\end{align}

\newpage 
\section{Proof of Theorem \ref{theorem-v1-partial}}\label{proof-v1-partial}
\begin{itemize}
\item In Sec. \ref{proof-v1-sc-partial}, we prove the strongly convex case.
\item In Sec. \ref{proof-v1-gc-partial}, we prove the general convex case.
\item In Sec. \ref{proof-v1-nc-partial}, we prove the non-convex case.
\end{itemize}

\subsection{Strongly convex case for SFL-V1}\label{proof-v1-sc-partial}
\subsubsection{\textbf{One-round  Parallel Update for M-Server-Side Model}}
We first bound the M-server-side model update in one round for full
participation ($q_n=1$ for all $n$), and then compute the difference between full participation and partial participation ($q_n<1$ for some $n$). We denote $\mathbf{I}_n^t$ as a binary variable, taking 1 if client $n$ participates in model training in round $t$, and 0 otherwise. Practically, $\mathbf{I}_n^t$ follows a Bernoulli distribution with an expectation of $q_n$.

For full participation, Lemma \ref{lem:server-strong-full-1} gives
    \begin{align}
    &\mathbb{E}\left[\left\|\overline{\boldsymbol{x}}^{t+1}_{s}-\boldsymbol{x}_s^{\ast}\right\|^2 \right]\nonumber\\
   & \leq\left(1-\frac{\eta^t\tilde{\tau}\mu}{2}\right)\mathbb{E}\left[\left\|\boldsymbol{x}^{t}_{s}-\boldsymbol{x}_s^{\ast}\right\|^2\right]-2\eta^t\tilde{\tau} \mathbb{E}\left[f\left(\boldsymbol{x}^{t}\right)-f\left(\boldsymbol{x}^{\ast}\right)\right]\nonumber\\
  &  +\left(\eta^t\right)^2\left(\tilde{\tau}\right)^2 N\sum_{n=1}^Na_n^2\left(2\sigma_n^2+G^2\right) + 24S\left(\tilde{\tau}\right)^3\left(\eta^t\right)^3 \sum_{n=1}^Na_n\left(2\sigma_n^2+G^2\right).
\end{align}

Considering that each client $n$ participates in model training with a probability $q_n$,   we have
\begin{align}
    &\mathbb{E}\left[\left\|\boldsymbol{x}^{t+1}_{s}-\overline{\boldsymbol{x}}^{t+1}_{s}\right\|^2 \right]\nonumber\\
    &=\mathbb{E}\left[\left\|\boldsymbol{x}^{t+1}_s-\boldsymbol{x}^t_s+\boldsymbol{x}^t_s-\overline{\boldsymbol{x}}^{t+1}_s \right\|^2\right]\nonumber\\
     &\leq\mathbb{E}\left[\left\|\boldsymbol{x}^{t+1}_s-\boldsymbol{x}^{t}_s \right\|^2\right]\nonumber\\
    &\leq \mathbb{E}\left[\left\|\sum_{n=1}^N \eta^t\frac{a_n\mathbf{I}_t^n}{q_n}\sum_{i=0}^{\tilde{\tau}-1}\boldsymbol{g}_{s,n}^{t,i}\left(\left\{\boldsymbol{x}^{t,i}_{c,n},\boldsymbol{x}^{t,i}_{s,n}\right\}\right)\right\|^2\right]\nonumber\\
    &\leq N \tilde{\tau} \sum_{n=1}^N\left(\eta^t\right)^2 \frac{a_n^2}{q_n}\sum_{i=0}^{\tilde{\tau}-1}\mathbb{E}\left[\left\|\boldsymbol{g}_{s,n}^{t,i}\left(\left\{\boldsymbol{x}^{t,i}_{c,n},\boldsymbol{x}^{t,i}_{s,n}\right\}\right)\right\|^2\right]\nonumber\\
    &\leq N \left(\tilde{\tau}\right)^2 \left(\eta^t\right)^2G^2\sum_{n=1}^N\frac{a_n^2}{q_n},\label{eq:server-strong-partial-1}
\end{align}
where we use  $\mathbb{E}\left\Vert X-\mathbb{E}X\right\Vert^2\leq\mathbb{E}\left\Vert X\right\Vert^2$, $\mathbb{E}\left[\mathbf{I}^t_n\right]=q_n$, and $\boldsymbol{x}^{t+1}_{s}=\boldsymbol{x}^t_{s} - \eta^t \sum_{n\in \mathcal{P}^t\left(\boldsymbol{q}\right)}\sum_{i=0}^{\tilde{\tau}-1}\frac{a_n^2}{q_n}\boldsymbol{g}_{s,n}^{t,i}\left(\left\{\boldsymbol{x}^{t,i}_{c,n},\boldsymbol{x}^{t,i}_{s,n}\right\}\right)$.

Combining  the above gives
\begin{align}
     &\mathbb{E}\left[\left\|\boldsymbol{x}^{t+1}_{s}-\boldsymbol{x}_s^{\ast}\right\|^2 \right]= \mathbb{E}\left[\left\|\boldsymbol{x}^{t+1}_s-\overline{\boldsymbol{x}}^{t+1}_s+\overline{\boldsymbol{x}}^{t+1}_s-\boldsymbol{x}_s^{\ast}\right\|^2 \right]\nonumber\\
 &\leq\left(1-\frac{\eta^t\tilde{\tau}\mu}{2}\right)\mathbb{E}\left[\left\|\boldsymbol{x}^{t}_{s}-\boldsymbol{x}_s^{\ast}\right\|^2\right]\nonumber\\
  &  +\left(\eta^t\right)^2\left(\tilde{\tau}\right)^2 N\sum_{n=1}^Na_n^2\left(2\sigma_n^2+G^2\right) + 24S\left(\tilde{\tau}\right)^3\left(\eta^t\right)^3 \sum_{n=1}^Na_n\left(2\sigma_n^2+G^2\right)\nonumber\\
  &+N \left(\tilde{\tau}\right)^2 \left(\eta^t\right)^2G^2\sum_{n=1}^N\frac{a_n^2}{q_n}. \label{eq:server-strong-partial-1.2}
\end{align}

Let $\Delta^{t+1} \triangleq \mathbb{E}\left[\left\|\boldsymbol{x}_s^{t+1}-\boldsymbol{x}_s^*\right\|^2\right]$. We can rewrite \eqref{eq:server-strong-partial-1.2} as:
\begin{align}
\Delta^{t+1}  \leq\left(1-\frac{\eta^t\tilde{\tau}\mu}{2}\right) \Delta^t+\frac{\left(\eta^t\right)^2 \left(\tilde{\tau}\right)^2}{4} B_1+\frac{\left(\eta^t\right)^3 \left(\tilde{\tau}\right)^3}{8} B_2.
\end{align}
where $B_1:= 4N\sum_{n=1}^Na_n^2\left(2\sigma_n^2+G^2\right)+4N G^2\sum_{n=1}^N\frac{a_n^2}{q_n}$ and $B:=192S\ \sum_{n=1}^Na_n\left(2\sigma_n^2+G^2\right)$.

Consider a diminishing stepsize  $\eta^t=\frac{2\beta}{\tilde{\tau}(\gamma_s+t)}$, i.e,  $\frac{\eta^t\tilde{\tau}}{2}=\frac{\beta}{\gamma_s+t}$, where $\beta=\frac{2}{\mu}, \gamma_s=\frac{8 S}{\mu}-1$. It is easy to show that $\eta^t \leq \frac{1}{2 S\tilde{\tau}}$ for all $t$. We can prove that $\Delta^t \leq \frac{v}{\gamma_s+t}, \forall t$. Therefore, we have  
\begin{align}
    &\mathbb{E}\left[\left\|\boldsymbol{x}_s^{t}-\boldsymbol{x}_s^{\ast}\right\|^2\right]  =\Delta^t \leq \frac{v}{\gamma_s+t}=\frac{\max \left\{\frac{4 B_1}{\mu^2}+\frac{8B_2}{\mu^3(\gamma_s+1)},(\gamma_s+1) \mathbb{E}\left[\left\|\boldsymbol{x}_s^{0}-\boldsymbol{x}_s^{\ast}\right\|^2\right]\right\}}{\gamma_s+t} \nonumber\\
    &\leq\frac{16N\sum_{n=1}^Na_n^2\left(2\sigma_n^2+G^2\right)+16N G^2\sum_{n=1}^N\frac{a_n^2}{q_n}}{\mu^2\left(\gamma_s+t\right)}+\frac{1536S\ \sum_{n=1}^Na_n\left(2\sigma_n^2+G^2\right)}{\mu^3\left(\gamma_s+t\right)\left(\gamma_s+1\right)}\nonumber\\
    &+\frac{(\gamma_s+1) \mathbb{E}\left[\left\|\boldsymbol{x}_s^{0}-\boldsymbol{x}_s^{\ast}\right\|^2\right]}{\gamma_s+t}.\label{eq:sever-strong-partial-1.3}
\end{align}

\subsubsection{\textbf{One-round Parallel Update for Client-Side Models}} 

Define $\overline{\boldsymbol{x}}_{t}^c=\sum_{n=1}^Na_n{\boldsymbol{x}}^{t}_{c,n}$, which represents the aggregating weights in round $t$ for full participation. Using a similar derivation as the M-server side,
we first bound the client-side model update in one round for full participation $\mathbb{E}\left[\left\|\overline{\boldsymbol{x}}^{t+1}_c-\boldsymbol{x}_c^{\ast}\right\|^2 \right]$ and then bound the difference of client-side model parameters between full participation and partial participation $\mathbb{E}\left[\left\|\boldsymbol{x}^{t+1}_c-\boldsymbol{x}_c^{\ast}\right\|^2 \right]$. The overall gradient update rule of clients in each training round is
$\boldsymbol{x}^{t+1}_{c}=\boldsymbol{x}^t_{c} - \eta^t \sum_{n\in \mathcal{P}^t\left(\boldsymbol{q}\right)}\sum_{i=0}^{\tau-1}\frac{a_n}{q_n}\boldsymbol{g}_{c,n}^{t,i}\left(\left\{\boldsymbol{x}^{t,i}_{c,n},\boldsymbol{x}^{t,i}_{s,n}\right\}\right)$.

 Under Assumptions \ref{asm:lipschitz_grad} and \ref{asm:convex}, if $\eta^t\leq\frac{1}{2S\tau}$, in round $t$, Lemma \ref{lem:server-strong-full-1} gives 
    \begin{align}
    &\mathbb{E}\left[\left\|\overline{\boldsymbol{x}}^{t+1}_{c}-\boldsymbol{x}_c^{\ast}\right\|^2 \right]\nonumber\\
     &\leq\left(1-\frac{\eta^t\tau\mu}{2}\right)\mathbb{E}\left[\left\|\boldsymbol{x}^{t}_{c}-\boldsymbol{x}_c^{\ast}\right\|^2\right] -2\eta^t\tau\mathbb{E}\left[f\left(\boldsymbol{x}^{t}\right)-f\left(\boldsymbol{x}^{\ast}\right)\right]\nonumber\\
  & +\left(\eta^t\right)^2\left(\tau\right)^2 N\sum_{n=1}^Na_n^2\left(2\sigma_n^2+G^2\right)
  + 24S\left(\tau\right)^3\left(\eta^t\right)^3 \sum_{n=1}^Na_n\left(2\sigma_n^2+G^2\right).\label{eq:client-strong-partial-1.2}
\end{align}

Considering that each client $n$ participates in model training with a probability $q_n$,   we have
\begin{align}
    &\mathbb{E}\left[\left\|\boldsymbol{x}^{t+1}_{c}-\overline{\boldsymbol{x}}^{t+1}_{c}\right\|^2 \right]\nonumber\\
    &=\mathbb{E}\left[\left\|\boldsymbol{x}^{t+1}_c-\boldsymbol{x}^t_c+\boldsymbol{x}^t_c-\overline{\boldsymbol{x}}^{t+1}_c \right\|^2\right]\nonumber\\
     &\leq\mathbb{E}\left[\left\|\boldsymbol{x}^{t+1}_c-\boldsymbol{x}^{t}_c \right\|^2\right]\nonumber\\
    &\leq \mathbb{E}\left[\left\|\sum_{n=1}^N \eta^t\frac{a_n\mathbf{I}_t^n}{q_n}\sum_{i=0}^{\tau-1}\boldsymbol{g}_{c,n}^{t,i}\left(\left\{\boldsymbol{x}^{t,i}_{c,n},\boldsymbol{x}^{t,i}_{s,n}\right\}\right)\right\|^2\right]\nonumber\\
    &\leq N \tau \sum_{n=1}^N\left(\eta^t\right)^2 \frac{a_n^2}{q_n}\sum_{i=0}^{\tau-1}\mathbb{E}\left[\left\|\boldsymbol{g}_{c,n}^{t,i}\left(\left\{\boldsymbol{x}^{t,i}_{c,n},\boldsymbol{x}^{t,i}_{s,n}\right\}\right)\right\|^2\right]\nonumber\\
    &\leq N \left(\tau\right)^2 \left(\eta^t\right)^2G^2\sum_{n=1}^N\frac{a_n^2}{q_n},
\end{align}
where we use  $\mathbb{E}\left\Vert X-\mathbb{E}X\right\Vert^2\leq\mathbb{E}\left\Vert X\right\Vert^2$, $\mathbb{E}\left[\mathbf{I}^t_n\right]=q_n$, and $\boldsymbol{x}^{t+1}_{c}=\boldsymbol{x}^t_{c} - \eta^t \sum_{n\in \mathcal{P}^t\left(\boldsymbol{q}\right)}\sum_{i=0}^{\tau-1}\frac{a_n}{q_n}\boldsymbol{g}_{c,n}^{t,i}\left(\left\{\boldsymbol{x}^{t,i}_{c,n},\boldsymbol{x}^{t,i}_{s,n}\right\}\right)$.

We obtain the client-side model parameter update in one round for partial participation by combining the two terms and we have
\begin{align}
     &\mathbb{E}\left[\left\|\boldsymbol{x}^{t+1}_{c}-\boldsymbol{x}_c^{\ast}\right\|^2 \right]= \mathbb{E}\left[\left\|\boldsymbol{x}^{t+1}_c-\overline{\boldsymbol{x}}^{t+1}_c+\overline{\boldsymbol{x}}^{t+1}_c-\boldsymbol{x}_c^{\ast}\right\|^2 \right]\nonumber\\
 &\leq\left(1-\frac{\eta^t\tau\mu}{2}\right)\mathbb{E}\left[\left\|\boldsymbol{x}^{t}_{c}-\boldsymbol{x}_c^{\ast}\right\|^2\right]\nonumber\\
  & +\left(\eta^t\right)^2\left(\tau\right)^2 N\sum_{n=1}^Na_n^2\left(2\sigma_n^2+G^2\right)
  + 24S\left(\tau\right)^3\left(\eta^t\right)^3 \sum_{n=1}^Na_n\left(2\sigma_n^2+G^2\right)\nonumber\\
  &+N \left(\tau\right)^2 \left(\eta^t\right)^2G^2\sum_{n=1}^N\frac{a_n^2}{q_n},\label{eq:client-strong-partial-1.3}
\end{align}
where we consider $\mathbb{E}\left[f\left(\boldsymbol{x}^{t}\right)-f\left(\boldsymbol{x}^{\ast}\right)\right]\geq 0$.

Let $\Delta^{t+1} \triangleq \mathbb{E}\left[\left\|\boldsymbol{x}_c^{t+1}-\boldsymbol{x}_c^{\ast}\right\|^2\right]$. We can rewrite \eqref{eq:client-strong-partial-2.2} as:
\begin{align}
\Delta^{t+1} \leq\left(1-\frac{\eta^t\tau\mu}{2}\right) \Delta^t+\frac{\left(\eta^t\right)^2 \left(\tau\right)^2}{4} B_1+\frac{\left(\eta^t\right)^3 \left(\tau\right)^3}{8} B_2.
\end{align}
where $B_1:= 4N\sum_{n=1}^Na_n^2\left(2\sigma_n^2+G^2\right)+4NG^2\sum_{n=1}^N\frac{a_n^2}{q_n}$ and $B_2:=192S\ \sum_{n=1}^Na_n\left(2\sigma_n^2+G^2\right)$.

Consider a diminishing stepsize  $\eta^t=\frac{2\beta}{\tau(\gamma_c+t)}$, i.e,  $\frac{\eta^t\tau}{2}=\frac{\beta}{\gamma_c+t}$, where $\beta=\frac{2}{\mu}, \gamma_c=\frac{8 S}{\mu}-1$. It is easy to show that $\eta^t \leq \frac{1}{2 S\tau}$ for all $t$. For $v=\max \left\{\frac{4 B_1}{\mu^2}+\frac{8B_2}{\mu^3(\gamma_c+1)},(\gamma_c+1) \Delta^0\right\}$, we can prove that $\Delta^t \leq \frac{v}{\gamma_c+t}, \forall t$. Therefore, we have  
\begin{align}
    &\mathbb{E}\left[\left\|\boldsymbol{x}_c^{t}-\boldsymbol{x}_c^{\ast}\right\|^2\right] =\Delta^t \leq \frac{v}{\gamma_c+t}=\frac{\max \left\{\frac{4 B_1}{\mu^2}+\frac{8B_2}{\mu^3(\gamma_c+1)},(\gamma_c+1) \mathbb{E}\left[\left\|\boldsymbol{x}_c^{0}-\boldsymbol{x}_c^{\ast}\right\|^2\right]\right\}}{\gamma_c+t} \nonumber\\
    &\leq\frac{16N\sum_{n=1}^Na_n^2\left(2\sigma_n^2+G^2\right)+16NG^2\sum_{n=1}^N\frac{a_n^2}{q_n}}{\mu^2\left(\gamma_c+t\right)}+\frac{1536S\ \sum_{n=1}^Na_n\left(2\sigma_n^2+G^2\right)}{\mu^3\left(\gamma_c+t\right)\left(\gamma_c+1\right)}\nonumber\\
    &+\frac{(\gamma_c+1) \mathbb{E}\left[\left\|\boldsymbol{x}_c^{0}-\boldsymbol{x}_c^{\ast}\right\|^2\right]}{\gamma_c+t}.\label{eq:client-strong-partial-1.4}
\end{align}

\subsubsection{\textbf{Superposition of M-Server and Clients}}
We merge the M-server-side and client-side models in  \eqref{eq:sever-strong-partial-1.3} and \eqref{eq:client-strong-partial-1.4} using Proposition \ref{prop: decomposition}.
For $\eta^t \leq \frac{1}{2 S\max\left\{\tau,\tilde{\tau}\right\}}$  and $\gamma=\frac{8 S}{\mu}-1$, we have
\begin{align}
&\mathbb{E}\left[f(\boldsymbol{x}^
{T})\right]- f(\boldsymbol{x}^*)\nonumber\\
&\le\frac{S}{2}\left(\mathbb{E}||\boldsymbol{x}_s^{T}-\boldsymbol{x}_s^*||^2+\mathbb{E}||\boldsymbol{x}_c^{T}-\boldsymbol{x}_c^*||^2\right)\nonumber\\
 &\leq\!\frac{8SN\!\sum_{n=1}^N\!a_n^2\!\left(\!2\sigma_n^2+G^2\!+\!\frac{G^2}{q_n}\!\right)}{\mu^2\left(\gamma+T\right)}\!+\!\frac{768S^2\ \sum_{n=1}^N\!a_n\!\left(\!2\sigma_n^2\!+\!G^2\!\right)}{\mu^3\left(\gamma+T\right)\left(\gamma+1\right)}\!+\!\frac{S(\gamma+1) \mathbb{E}\left[\!\left\|\!\boldsymbol{x}^{0}\!-\!\boldsymbol{x}^{\ast}\!\right\|^2\!\right]}{2(\gamma+T)}.
\end{align}

\newpage 
\subsection{General convex case for SFL-V1}\label{proof-v1-gc-partial}
\subsubsection{\textbf{One-round  Parallel Update for M-Server-Side Model}}
By Lemma \ref{lem:server-strong-full-1} with $\mu=0$ and $\eta^t\leq\frac{1}{2S\tilde{\tau}}$, we have
    \begin{align}
    &\mathbb{E}\left[\left\|\overline{\boldsymbol{x}}^{t+1}_{s}-\boldsymbol{x}_s^{\ast}\right\|^2 \right]\nonumber\\
  & \leq\mathbb{E}\left[\left\|\boldsymbol{x}^{t}_{s}-\boldsymbol{x}_s^{\ast}\right\|^2\right]-2\eta^t\tilde{\tau} \mathbb{E}\left[f\left(\boldsymbol{x}^{t}\right)-f\left(\boldsymbol{x}^{\ast}\right)\right]\nonumber\\
  &  +\left(\eta^t\right)^2\tilde{\tau}^2 N\sum_{n=1}^Na_n^2\left(2\sigma_n^2+G^2\right) + 24S\tilde{\tau}^3\left(\eta^t\right)^3 \sum_{n=1}^Na_n\left(2\sigma_n^2+G^2\right). \end{align}

Considering that each client $n$ participates in model training with a probability $q_n$,   we have
\begin{align}
    &\mathbb{E}\left[\left\|\boldsymbol{x}^{t+1}_{s}-\overline{\boldsymbol{x}}^{t+1}_{s}\right\|^2 \right]\leq N \tilde{\tau}^2 \left(\eta^t\right)^2G^2\sum_{n=1}^N\frac{a_n^2}{q_n}.
\end{align}

Thus, we have
\begin{align}
    &\mathbb{E}\left[\left\|{\boldsymbol{x}}^{t+1}_{s}-\boldsymbol{x}_s^{\ast}\right\|^2 \right]\nonumber\\
  & \leq\mathbb{E}\left[\left\|\boldsymbol{x}^{t}_{s}-\boldsymbol{x}_s^{\ast}\right\|^2\right]-2\eta^t\tilde{\tau} \mathbb{E}\left[f\left(\boldsymbol{x}^{t}\right)-f\left(\boldsymbol{x}^{\ast}\right)\right]\nonumber\\
  &  +\left(\eta^t\right)^2\tilde{\tau}^2 N\sum_{n=1}^Na_n^2\left(2\sigma_n^2+G^2\right) + 24S\tilde{\tau}^3\left(\eta^t\right)^3 \sum_{n=1}^Na_n\left(2\sigma_n^2+G^2\right)+N \tilde{\tau}^2 \left(\eta^t\right)^2G^2\sum_{n=1}^N\frac{a_n^2}{q_n}. \label{eq:server-general-partial-1}
\end{align}
\subsubsection{\textbf{One-round Parallel Update for Client-Side Models}} 

By Lemma \ref{lem:server-strong-full-1} with $\mu=0$ and $\eta^t\leq\frac{1}{2S\tau}$, we have
\begin{align}
    &\mathbb{E}\left[\left\|\overline{\boldsymbol{x}}^{t+1}_{c}-\boldsymbol{x}_c^{\ast}\right\|^2 \right]\nonumber\\
     &\leq\mathbb{E}\left[\left\|\boldsymbol{x}^{t}_{c}-\boldsymbol{x}_c^{\ast}\right\|^2\right] -2\eta^t\tau\mathbb{E}\left[f\left(\boldsymbol{x}^{t}\right)-f\left(\boldsymbol{x}^{\ast}\right)\right]\nonumber\\
  & +\left(\eta^t\right)^2\tau^2 N\sum_{n=1}^Na_n^2\left(2\sigma_n^2+G^2\right)
  + 24S\tau^3\left(\eta^t\right)^3 \sum_{n=1}^Na_n\left(2\sigma_n^2+G^2\right).
\end{align}

Considering that each client $n$ participates in model training with a probability $q_n$,   we have
\begin{align}
    &\mathbb{E}\left[\left\|\boldsymbol{x}^{t+1}_{c}-\overline{\boldsymbol{x}}^{t+1}_{c}\right\|^2 \right]\leq N \tau^2 \left(\eta^t\right)^2G^2\sum_{n=1}^N\frac{a_n^2}{q_n}.
\end{align}
Thus, we have
\begin{align}
    &\mathbb{E}\left[\left\|{\boldsymbol{x}}^{t+1}_{c}-\boldsymbol{x}_c^{\ast}\right\|^2 \right]\nonumber\\
     &\leq\mathbb{E}\left[\left\|\boldsymbol{x}^{t}_{c}-\boldsymbol{x}_c^{\ast}\right\|^2\right] -2\eta^t\tau\mathbb{E}\left[f\left(\boldsymbol{x}^{t}\right)-f\left(\boldsymbol{x}^{\ast}\right)\right]\nonumber\\
  & +\left(\eta^t\right)^2\tau^2 N\sum_{n=1}^Na_n^2\left(2\sigma_n^2+G^2\right)
  + 24S\tau^3\left(\eta^t\right)^3 \sum_{n=1}^Na_n\left(2\sigma_n^2+G^2\right)+N \tau^2 \left(\eta^t\right)^2G^2\sum_{n=1}^N\frac{a_n^2}{q_n}.\label{eq:client-general-partial-1}
\end{align}

\subsubsection{\textbf{Superposition of M-Server and Clients}}

We merge the M-server-side and client-side models in  \eqref{eq:server-general-partial-1} and \eqref{eq:client-general-partial-1} as follows

\begin{align}
&\mathbb{E}\!\left[\!\left\|\boldsymbol{x}^{t+1}\!-\!\boldsymbol{x}^{\ast}\right\|^2\!\right]\!\leq\!\mathbb{E}\!\left[\!\left\|\boldsymbol{x}^{t+1}_{s}\!-\!\boldsymbol{x}_s^{\ast}\right\|^2\!\right]+\!\mathbb{E}\!\left[\!\left\|\boldsymbol{x}^{t+1}_c\!-\!\boldsymbol{x}_c^{\ast}\right\|^2\!\right]\!,\nonumber\\
&\leq\mathbb{E}\left[\left\|\boldsymbol{x}^{t}_{s}-\boldsymbol{x}_s^{\ast}\right\|^2\right] -2\eta^t\tilde{\tau}\mathbb{E}\left[f\left(\boldsymbol{x}^{t}\right)-f\left(\boldsymbol{x}^{\ast}\right)\right]\nonumber\\
  & +\left(\eta^t\right)^2\tilde{\tau}^2 N\sum_{n=1}^Na_n^2\left(2\sigma_n^2+G^2\right) + 24S\tilde{\tau}^3\left(\eta^t\right)^3\sum_{n=1}^Na_n\left(2\sigma_n^2+G^2\right)+N \tilde{\tau}^2 \left(\eta^t\right)^2G^2\sum_{n=1}^N\frac{a_n^2}{q_n}\nonumber\\
  &+\mathbb{E}\left[\left\|\boldsymbol{x}^{t}_{c}-\boldsymbol{x}_c^{\ast}\right\|^2\right]-2\eta^t\tau\mathbb{E}\left[f\left(\boldsymbol{x}^{t}\right)-f\left(\boldsymbol{x}^{\ast}\right)\right]\nonumber\\
  & +\left(\eta^t\right)^2\tau^2 N\sum_{n=1}^Na_n^2\left(2\sigma_n^2+G^2\right) + 24S\tau^3\left(\eta^t\right)^3\sum_{n=1}^Na_n\left(2\sigma_n^2+G^2\right)+N \tau^2 \left(\eta^t\right)^2G^2\sum_{n=1}^N\frac{a_n^2}{q_n}\nonumber\\
  &=\mathbb{E}\left[\left\|\boldsymbol{x}^{t}-\boldsymbol{x}^{\ast}\right\|^2\right] -4\eta^t\min\left\{\tau,\tilde{\tau}\right\}\mathbb{E}\left[f\left(\boldsymbol{x}^{t}\right)-f\left(\boldsymbol{x}^{\ast}\right)\right]\nonumber\\
  & +\!\left(\!\eta^t\!\right)^2\!N\!\sum_{n=1}^N\!a_n^2\!(\!\tilde{\tau}^2\!+\!\tau^2\!)\!\left(\!2\sigma_n^2\!+\!G^2\!\right)\!+\!24S\!\left(\!\eta^t\!\right)^3\!\sum_{n=1}^N\!a_n(\!\tilde{\tau}^3\!+\!\tau^3\!)\!\left(\!2\sigma_n^2\!+\!G^2\!\right)\!+\!N\!\left(\!\tau^2\!+\!\tilde{\tau}^2\!\right)\!\left(\!\eta^t\!\right)^2\!G^2\!\sum_{n=1}^N\!\frac{a_n^2}{q_n}.
\end{align}
Then, we can obtain the relation between  $ \mathbb{E}\!\left[\!\left\|\boldsymbol{x}^{t+1}\!-\!\boldsymbol{x}^{\ast}\right\|^2\!\right]$ and $ \mathbb{E}\!\left[\!\left\|\boldsymbol{x}^{t}\!-\!\boldsymbol{x}^{\ast}\right\|^2\!\right]$, which is related to $\mathbb{E}\left[f\left(\boldsymbol{x}^{t}\right)-f\left(\boldsymbol{x}^{\ast}\right)\right]$.
Applying Lemma 8 in \cite{li2023convergence} and let $\tau_{\min}:=\min\left\{\tilde{\tau},\tau\right\}$,
we obtain the performance bound as
\begin{align}
    &\mathbb{E}\left[f\left(\boldsymbol{x}^{T}\right)\right]-f\left(\boldsymbol{x}^{\ast}\right) \nonumber\\
    &\leq \frac{1}{2}\left(\frac{\tilde{\tau}^2+\tau^2}{\tau_{\min}^2}N\sum_{n=1}^Na_n^2\left(2\sigma_n^2+G^2+\frac{G_n^2}{q_n}\right)\right)^{\frac{1}{2}}\left(\frac{\left\|\boldsymbol{x}^{0}-\boldsymbol{x}^{\ast}\right\|^2}{T+1}\right)^{\frac{1}{2}}\nonumber\\
    &+\frac{1}{2}\left(\frac{\tilde{\tau}^2+\tau^2}{\tau_{\min}^2}24S\sum_{n=1}^Na_n\left(2\sigma_n^2+G^2\right)\right)^{\frac{1}{3}}\left(\frac{\left\|\boldsymbol{x}^{0}-\boldsymbol{x}^{\ast}\right\|^2}{T+1}\right)^{\frac{1}{3}}
    +\frac{S\left\|\boldsymbol{x}^{0}-\boldsymbol{x}^{\ast}\right\|^2}{2(T+1)}.
\end{align}

\newpage
\subsection{Non-convex case for SFL-V1}\label{proof-v1-nc-partial}
\subsubsection{\textbf{One-round  Parallel Update for M-Server-Side Model}}
For the server, we have
\begin{align}
    &\mathbb{E} \left[\left\langle\nabla_{\boldsymbol{x}_s} f\left(\boldsymbol{x}^{t}\right), \boldsymbol{x}^{t+1}_s-\boldsymbol{x}^{t}_s\right\rangle\right]\nonumber\\
    &\leq \mathbb{E} \left[\left\langle\nabla_{\boldsymbol{x}_s} f\left(\boldsymbol{x}^{t}\right), \boldsymbol{x}^{t+1}_s-\boldsymbol{x}^{t}_s +\eta^t \tilde{\tau} \nabla_{\boldsymbol{x}_s} f\left(\boldsymbol{x}^{t}\right)- \eta^t \tilde{\tau} \nabla_{\boldsymbol{x}_s} f\left(\boldsymbol{x}^{t}\right)\right\rangle\right]\nonumber\nonumber\\
     &\leq \mathbb{E} \left[\left\langle\nabla_{\boldsymbol{x}_s} f\left(\boldsymbol{x}^{t}\right), \boldsymbol{x}^{t+1}_s-\boldsymbol{x}^{t}_s +\eta^t \tilde{\tau} \nabla_{\boldsymbol{x}_s} f\left(\boldsymbol{x}^{t}\right)\right\rangle -\left\langle \nabla_{\boldsymbol{x}_s}f\left(\boldsymbol{x}^{t}_s\right), \eta^t \tilde{\tau} \nabla_{\boldsymbol{x}_s} f\left(\boldsymbol{x}^{t}\right)\right\rangle\right]\nonumber\\
     &\leq\!\left\langle\!\nabla_{\boldsymbol{x}_s}f\!\left(\!\boldsymbol{x}^{t}\!\right),\mathbb{E}\!\left[\!-\!\eta^t\sum_{n=1}^N\!\sum_{i=0}^{\tilde{\tau}-1}\!\frac{a_n\mathbf{I}_n^t}{q_n}\boldsymbol{g}_{s,n}^{t,i}\!\right]\!+\!\eta^t\tilde{\tau} \nabla_{\boldsymbol{x}_s}\!f\left(\!\boldsymbol{x}^{t}\!\right)\right\rangle\!-\!\eta^t\!\tilde{\tau}\left\Vert\!\nabla_{\boldsymbol{x}_s}\!f\left(\!\boldsymbol{x}^{t}\!\right)\right\Vert^2\nonumber\\
     &\leq \left\langle\!\nabla_{\boldsymbol{x}_s}\!f\left(\!\boldsymbol{x}^{t}\!\right)\!, \mathbb{E}\left[\!-\!\eta^t\sum_{n=1}^N \sum_{i=0}^{\tilde{\tau}-1}\!\frac{a_n\mathbf{I}_n^t}{q_n}\nabla_{\boldsymbol{x}_s}\!F_n\left(v\left\{\boldsymbol{x}^{t,i}_{c,n},\boldsymbol{x}^{t,i}_{s,n}\right\}\!\right)\right]\!+\!\eta^t\tilde{\tau} \nabla_{\boldsymbol{x}_s}\!f\!\left(\!\boldsymbol{x}^{t}\!\right)\right\rangle\!-\!\eta^t \tilde{\tau} \left\Vert\!\nabla_{\boldsymbol{x}_s}\!f\left(\!\boldsymbol{x}^{t}\!\right)\!\right\Vert^2 \nonumber\\
     &\leq\!\left\langle\!\nabla_{\boldsymbol{x}_s}\!f\left(\!\boldsymbol{x}^{t}\!\right), \mathbb{E}\left[\!- \eta^t\sum_{n=1}^N \sum_{i=0}^{\tilde{\tau}-1} \frac{a_n\mathbf{I}_n^t}{q_n}\nabla_{\boldsymbol{x}_s} F_n\left(\!\left\{\boldsymbol{x}^{t,i}_{c,n}\!,\boldsymbol{x}^{t,i}_{s,n}\right\}\right)\!+\!\eta^t\sum_{n=1}^N \sum_{i=0}^{\tilde{\tau}-1}\!\frac{a_n\mathbf{I}_n^t}{q_n}\nabla_{\boldsymbol{x}_s}\!F_n\!\left(\!\boldsymbol{x}^{t}\!\right)\!\right]\right\rangle \nonumber\\
     &-\eta^t \tilde{\tau} \left\Vert\nabla_{\boldsymbol{x}_s} f\left(\boldsymbol{x}^{t}\right)\right\Vert^2 \nonumber\\
      &\leq \!\eta^t\tilde{\tau}\left\langle\!\nabla_{\boldsymbol{x}_s}\!f\left(\!\boldsymbol{x}^{t}\!\right),\mathbb{E}\left[\!- \frac{1}{\tilde{\tau}}\sum_{n=1}^N \sum_{i=0}^{\tilde{\tau}-1}\frac{a_n\mathbf{I}_n^t}{q_n}\nabla_{\boldsymbol{x}_s} F_n\left(\left\{\boldsymbol{x}^{t,i}_{c,n},\boldsymbol{x}^{t,i}_{s,n}\right\}\right)\!+\!\frac{1}{\tilde{\tau}}\sum_{n=1}^N \sum_{i=0}^{\tilde{\tau}-1} \frac{a_n\mathbf{I}_n^t}{q_n}\nabla_{\boldsymbol{x}_s}\!F_n\left(\boldsymbol{x}^{t}\right)\!\right] \right\rangle\nonumber\\
      & -\eta^t \tilde{\tau} \left\Vert\nabla_{\boldsymbol{x}_s} f\left(\boldsymbol{x}^{t}\right)\right\Vert^2 \nonumber\\
      &\leq\!\frac{\eta^t\tilde{\tau}}{2}\!\left\Vert\!\nabla_{\boldsymbol{x}_s}\!f\left(\!\boldsymbol{x}^{t}\!\right)\right\Vert^2\!+\!\frac{\eta^t}{2\tilde{\tau}}\mathbb{E}\left[\!\left\Vert \sum_{n=1}^N \sum_{i=0}^{\tilde{\tau}-1}\!\frac{a_n\mathbf{I}_n^t}{q_n}\nabla_{\boldsymbol{x}_s} F_n\left(\left\{\boldsymbol{x}^{t,i}_{c,n},\boldsymbol{x}^{t,i}_{s,n}\right\}\right)\!-\!\sum_{n=1}^N \sum_{i=0}^{\tilde{\tau}-1} \frac{a_n\mathbf{I}_n^t}{q_n}\nabla_{\boldsymbol{x}_s} F_n\left(\boldsymbol{x}^{t}\right) \right\Vert^2\!\right]\nonumber\\
     & -\eta^t \tilde{\tau} \left\Vert\nabla_{\boldsymbol{x}_s} f\left(\boldsymbol{x}^{t}\right)\right\Vert^2 \nonumber\\
       &\leq -\frac{\eta^t\tilde{\tau}}{2} \left\Vert\nabla_{\boldsymbol{x}_s} f\left(\boldsymbol{x}^{t}\right)\right\Vert^2 +\frac{\eta^t}{2\tilde{\tau}}\mathbb{E} \left[ \left\Vert \sum_{n=1}^N  \frac{a_n\mathbf{I}_n^t}{q_n}\sum_{i=0}^{\tilde{\tau}-1}  \left(\nabla_{\boldsymbol{x}_s} F_n\left(\left\{\boldsymbol{x}^{t,i}_{c,n},\boldsymbol{x}^{t,i}_{s,n}\right\}\right)- \nabla_{\boldsymbol{x}_s} F_n\left(\boldsymbol{x}^{t}\right)\right) \right\Vert^2\right]\nonumber\\
       &\leq -\frac{\eta^t\tilde{\tau}}{2} \left\Vert\nabla_{\boldsymbol{x}_s} f\left(\boldsymbol{x}^{t}\right)\right\Vert^2 +\frac{N\eta^t}{2 \tilde{\tau}}\sum_{n=
       1}^N\frac{a_n^2}{q_n}\mathbb{E} \left[ \left\Vert \sum_{i=0}^{\tilde{\tau}-1} \left(\nabla_{\boldsymbol{x}_s} F_n\left(\left\{\boldsymbol{x}^{t,i}_{c,n},\boldsymbol{x}^{t,i}_{s,n}\right\}\right)- \nabla_{\boldsymbol{x}_s} F_n\left(\boldsymbol{x}^{t}\right) \right)\right\Vert^2\right]\nonumber\\
        &\leq -\frac{\eta^t\tilde{\tau}}{2} \left\Vert\nabla_{\boldsymbol{x}_s} f\left(\boldsymbol{x}^{t}\right)\right\Vert^2 +\frac{N\eta^t S^2}{2}\sum_{n=1}^N \frac{a_n^2}{q_n}\sum_{i=0}^{\tilde{\tau}-1} \mathbb{E} \left[ \left\Vert \boldsymbol{x}^{t,i}_{s,n}- \boldsymbol{x}^{t}_s\right\Vert^2\right],\label{eq:server-non-partial-1.1}
\end{align}
where we apply Assumption \ref{asm:lipschitz_grad}, $\nabla_{\boldsymbol{x}_s} f\left(\boldsymbol{x}^{t}\right)=\sum_{n=1}^N a_n\nabla_{\boldsymbol{x}_s} F_n\left(\boldsymbol{x}^{t}\right)$, $\left\langle a,b\right\rangle\leq \frac{a^2+b^2}{2}$, and $\mathbb{E}\left[\mathbf{I}_n^t\right]=q_n$.

By Lemma \ref{lem:multiple-local-training} with $\eta^t\leq\frac{1}{\sqrt{8}S\tilde{\tau}}$, we have
\begin{align}
    &\sum_{i=0}^{\tilde{\tau}-1}  \mathbb{E} \left[ \left\Vert  \boldsymbol{x}^{t,i}_{s,n}- \boldsymbol{x}^{t}_s\right\Vert^2\right] \leq 2\tilde{\tau}^2\left(8\tilde{\tau}\left(\eta^t\right)^2 \sigma_n^2 +8\tilde{\tau}\left(\eta^t\right)^2\epsilon^2
         +8\tilde{\tau}\left(\eta^t\right)^2\left\Vert\nabla_{\boldsymbol{x}_s}f\left(\boldsymbol{x}^{t}_s\right)\right\Vert^2\right).
\end{align}

Thus, \eqref{eq:server-non-partial-1.1} becomes
\begin{align}
    &\mathbb{E} \left[\left\langle\nabla_{\boldsymbol{x}_s} f\left(\boldsymbol{x}^{t}\right), \boldsymbol{x}^{t+1}_s-\boldsymbol{x}^{t}_s\right\rangle\right]\nonumber\\        
    &\leq\!-\frac{\eta^t\tilde{\tau}}{2} \left\Vert\!\nabla_{\boldsymbol{x}_s} f\left(\!\boldsymbol{x}^{t}\!\right)\right\Vert^2\!+\!\frac{N\eta^t S^2}{2}\sum_{n=1}^N\!\frac{a_n^2}{q_n}2\tilde{\tau}^2\left(8\tilde{\tau}\left(\eta^t\right)^2 \sigma_n^2\!+\!8\tilde{\tau}\left(\eta^t\right)^2\epsilon^2\!+\!8\tilde{\tau}\left(\eta^t\right)^2\left\Vert\nabla_{\boldsymbol{x}_s}\!f\left(\boldsymbol{x}^{t}_s\right)\right\Vert^2\right)\nonumber\\
       &\leq \left(-\frac{\eta^t\tilde{\tau}}{2} +8N\left(\eta^t\right)^3\tilde{\tau}^3 S^2\sum_{n=1}^N \frac{a_n^2}{q_n}\right)\left\Vert\nabla_{\boldsymbol{x}_s} f\left(\boldsymbol{x}^{t}\right)\right\Vert^2 +8N\eta^t S^2\tilde{\tau}^3\sum_{n=1}^N \frac{a_n^2}{q_n}\left(\eta^t\right)^2\left( \sigma_n^2 +\epsilon^2
         \right).\label{eq:server-non-partial-1.5}
\end{align}

Furthermore, we have
\begin{align}
    &\frac{S}{2}\mathbb{E}\left[\left\|  \boldsymbol{x}^{t+1}_{s}- \boldsymbol{x}^{t}_{s} \right\|^2 \right]\nonumber\\
    &=\frac{SN\left(\eta^t\right)^2}{2}\sum_{n=1}^N\mathbb{E}\left[\left\|  \sum_{i=0}^{\tilde{\tau}-1} \frac{a_n\mathbf{I}_n^t}{q_n}\boldsymbol{g}^{t,i}_{s,n}\right\|^2 \right]\nonumber\\
    &\leq\frac{SN\left(\eta^t\right)^2}{2}\sum_{n=1}^N\frac{a_n^2}{q_n}\mathbb{E}\left[\left\|\sum_{i=0}^{\tilde{\tau}-1} \boldsymbol{g}^{t,i}_{s,n}\right\|^2 \right]\nonumber\\
      &\leq\frac{SN\left(\eta^t\right)^2 \tilde{\tau}}{2}\sum_{n=1}^N\frac{a_n^2}{q_n}\sum_{i=0}^{\tilde{\tau}-1} \mathbb{E}\left[\left\|\boldsymbol{g}^{t,i}_{s,n}\right\|^2 \right]\nonumber\\
      &\leq\frac{SN\left(\eta^t\right)^2 \tilde{\tau}}{2}\sum_{n=1}^N\frac{a_n^2}{q_n}\sum_{i=0}^{\tilde{\tau}-1} \mathbb{E}\left[\left\|\boldsymbol{g}^{t,i}_{s,n}-\boldsymbol{g}_{s,n}^t+\boldsymbol{g}_{s,n}^t\right\|^2 \right]\nonumber\\
       &\leq \frac{SN\left(\eta^t\right)^2 \tilde{\tau}}{2}\sum_{n=1}^N\frac{a_n^2}{q_n}\sum_{i=0}^{\tilde{\tau}-1} \left(\mathbb{E}\left[\left\|\boldsymbol{g}^{t,i}_{s,n}-\boldsymbol{g}_{s,n}^t\right\|^2\right]+\mathbb{E}\left[\left\|\boldsymbol{g}_{s,n}^t\right\|^2 \right]\right)\nonumber\\
       &\leq \frac{SN\left(\eta^t\right)^2 \tilde{\tau}}{2}\sum_{n=1}^N\frac{a_n^2}{q_n}\sum_{i=0}^{\tilde{\tau}-1} \left(\mathbb{E}\left[\left\|\boldsymbol{g}^{t,i}_{s,n}-\boldsymbol{g}_{s,n}^t\right\|^2\right]+\mathbb{E}\left[\left\|\nabla_{\boldsymbol{x}_s}F_n\left(\boldsymbol{x}^{t}\right)\right\|^2+\sigma_n^2 \right]\right), \label{eq:server-non-partial-1.3}
       \end{align}
where the last line uses Assumption \ref{asm:lipschitz_grad} and $\mathbb{E}\left[\|\mathbf{z}\|^2\right]=\|\mathbb{E}[\mathbf{z}]\|^2+\mathbb{E}[\| \mathbf{z}-\left.\mathbb{E}[\mathbf{z}] \|^2\right]$ for any random variable $\mathbf{z}$.

By Lemma \ref{lem:multiple-local-training-grad} with $\eta^t\leq\frac{1}{2S\tilde{\tau}}$, we have
\begin{align}
    \sum_{i=0}^{\tilde{\tau}-1}\mathbb{E}\left[\left\Vert \boldsymbol{g}^{t,i}_{s,n}-\boldsymbol{g}_{s,n}^t\right\Vert^2\right] \leq 8\tilde{\tau}^3\left(\eta^t\right)^2S^2\left(\left\Vert \nabla_{\boldsymbol{x}_s}F_n\left(\boldsymbol{x}^{t}\right)\right\Vert^2+\sigma_n^2\right) .
\end{align}
Thus, \eqref{eq:server-non-partial-1.3} becomes
\begin{align}
    &\frac{S}{2}\mathbb{E}\left[\left\|  \boldsymbol{x}^{t+1}_{s}- \boldsymbol{x}^{t}_{s} \right\|^2 \right]\nonumber\\
     &\leq\frac{SN\left(\eta^t\right)^2 \tilde{\tau}}{2}\sum_{n=1}^N\frac{a_n^2}{q_n}\left(8\tilde{\tau}^3\left(\eta^t\right)^2S^2\left(\left\Vert \nabla_{\boldsymbol{x}_s}F_n\left(\boldsymbol{x}^{t}\right)\right\Vert^2+\sigma_n^2\right) 
     +\tilde{\tau}\mathbb{E}\left[\left\|\nabla_{\boldsymbol{x}_s}F_n\left(\boldsymbol{x}^{t}\right)\right\|^2+\sigma_n^2 \right] \right)\nonumber\\
     &\leq\frac{SN\left(\eta^t\right)^2 \tilde{\tau}}{2}\sum_{n=1}^N\frac{a_n^2}{q_n}\left(\tilde{\tau}+8\tilde{\tau}^3\left(\eta^t\right)^2S^2\right)\left(\left\Vert \nabla_{\boldsymbol{x}_s}F_n\left(\boldsymbol{x}^{t}\right)\right\Vert^2+\sigma_n^2\right)\nonumber\\
    &\leq\frac{SN\left(\eta^t\right)^2 \tilde{\tau}}{2}\sum_{n=1}^N\frac{a_n^2}{q_n}\left(\tilde{\tau}+8\tilde{\tau}^3\left(\eta^t\right)^2S^2\right)\left(\left\Vert \nabla_{\boldsymbol{x}_s}F_n\left(\boldsymbol{x}^{t}\right)-\nabla_{\boldsymbol{x}_s}f\left(\boldsymbol{x}^{t}\right)+\nabla_{\boldsymbol{x}_s}f\left(\boldsymbol{x}^{t}\right)\right\Vert^2+\sigma_n^2\right)\nonumber\\
     &\leq\frac{SN\left(\eta^t\right)^2 \tilde{\tau}}{2}\sum_{n=1}^N\frac{a_n^2}{q_n}\left(\tilde{\tau}+8\tilde{\tau}^3\left(\eta^t\right)^2S^2\right)\left(2\left\Vert \nabla_{\boldsymbol{x}_s}f\left(\boldsymbol{x}^{t}\right)\right\Vert^2+2\epsilon^2+\sigma_n^2\right),
     \label{eq:server-non-partial-1.4}
\end{align}
\subsubsection{\textbf{One-round Parallel Update for Client-Side Models}} 
The analysis of the client-side model update is similar to the server. Thus, we have
\begin{align}
    &\mathbb{E} \left[\left\langle\nabla_{\boldsymbol{x}_c} f\left(\boldsymbol{x}^{t}\right), \boldsymbol{x}^{t+1}_c-\boldsymbol{x}^{t}_c\right\rangle\right]\nonumber\\        
    &\leq \left(-\frac{\eta^t\tau}{2} +8N\left(\eta^t\right)^3\tau^3 S^2\sum_{n=1}^N \frac{a_n^2}{q_n}\right)\left\Vert\nabla_{\boldsymbol{x}_c} f\left(\boldsymbol{x}^{t}\right)\right\Vert^2 +8N\eta^t S^2\tau^3\sum_{n=1}^N \frac{a_n^2}{q_n}\left(\eta^t\right)^2\left( \sigma_n^2 +\epsilon^2
         \right).\label{eq:client-non-partial-1.5}
\end{align}

For $\eta^t\leq \frac{1}{2S\tau}$,
\begin{align}
    &\frac{S}{2}\mathbb{E}\left[\left\|  \boldsymbol{x}^{t+1}_{c}- \boldsymbol{x}^{t}_{c} \right\|^2 \right]\nonumber\\
     &\leq\frac{SN\left(\eta^t\right)^2 \tau}{2}\sum_{n=1}^N\frac{a_n^2}{q_n}\left(\tau+8\tau^3\left(\eta^t\right)^2S^2\right)\left(2\left\Vert \nabla_{\boldsymbol{x}_c}f\left(\boldsymbol{x}^{t}\right)\right\Vert^2+2\epsilon^2+\sigma_n^2\right).
     \label{eq:client-non-partial-1.4}
\end{align}
\subsubsection{\textbf{Superposition of M-Server and Clients}}
Applying \eqref{eq:server-non-partial-1.5}, \eqref{eq:server-non-partial-1.4}, \eqref{eq:client-non-partial-1.4} and \eqref{eq:client-non-partial-1.5} into \eqref{eq:non-covex-decouple} in Proposition \ref{Prop: decouple-one-round} and define $\tau_{\min}\triangleq\min\left\{\tau,\tilde{\tau}\right\}$, $\tau_{\max}\triangleq\max\left\{\tau,\tilde{\tau}\right\}$
we have
\begin{align} 
&\mathbb{E}\left[f\left(\boldsymbol{x}^{t+1}\right)\right]-f\left(\boldsymbol{x}^{t}\right)\nonumber\\
& \leq\!\mathbb{E}\!\left[\!\left\langle\!\nabla_{\boldsymbol{x}_c}\!f\left(\!\boldsymbol{x}^{t}\!\right)\!,\!\boldsymbol{x}^{t+1}_c\!-\!\boldsymbol{x}^{t}_c\!\right\rangle\!\right]\!+\!\frac{S}{2}\!\mathbb{E}\!\left[\!\left\|\boldsymbol{x}^{t+1}_c\!-\!\boldsymbol{x}^{t}_c\right\|^2\!\right]\!+\!\mathbb{E}\!\left[\!\left\langle\!\nabla_{\boldsymbol{x}_s}\!f\left(\!\boldsymbol{x}^{t}\!\right),\!\boldsymbol{x}^{t+1}_s\!-\!\boldsymbol{x}^{t}_s\!\right\rangle\!\right]\!+\!\frac{S}{2}\mathbb{E}\!\left[\!\left\|\!\boldsymbol{x}^{t+1}_s\!-\!\boldsymbol{x}^{t}_s\!\right\|^2\!\right]\nonumber\\
 &\leq \left(-\frac{\eta^t\min\left\{\tau,\tilde{\tau}\right\}}{2} +8N\left(\eta^t\right)^3\left(\max\left\{\tau,\tilde{\tau}\right\}\right)^3 S^2\sum_{n=1}^N \frac{a_n^2}{q_n}\right)\left\Vert\nabla_{\boldsymbol{x}} f\left(\boldsymbol{x}^{t}\right)\right\Vert^2 \nonumber\\
 &+8N\eta^t S^2\left(\tau^3+\tilde{\tau}^3\right)\sum_{n=1}^N \left(\eta^t\right)^2\frac{a_n^2}{q_n}\left( \sigma_n^2 +\epsilon^2
         \right)\nonumber\\
&+\frac{SN\left(\eta^t\right)^2 \max\left\{\tau,\tilde{\tau}\right\}}{2}\sum_{n=1}^N\frac{a_n^2}{q_n}\left(\max\left\{\tau,\tilde{\tau}\right\}+8\left(\max\left\{\tau,\tilde{\tau}\right\}\right)^3\left(\eta^t\right)^2S^2\right)2\left\Vert \nabla_{\boldsymbol{x}}f\left(\boldsymbol{x}^{t}\right)\right\Vert^2  \nonumber\\
&+\!\frac{SN\left(\!\eta^t\!\right)^2\tau}{2}\sum_{n=1}^N\!\frac{a_n^2}{q_n}\left(\!\tau+8\tau^3\left(\!\eta^t\!\right)^2S^2\!\right)\left(\!2\epsilon^2\!+\!\sigma_n^2\!\right)\!+\!\frac{SN\left(\!\eta^t\!\right)^2\tilde{\tau}}{2}\sum_{n=1}^N\!\frac{a_n^2}{q_n}\left(\!\tilde{\tau}\!+\!8\tilde{\tau}^3\left(\!\eta^t\!\right)^2S^2\!\right)\left(\!2\epsilon^2\!+\!\sigma_n^2\!\right) \nonumber \\
& \leq\!\left(\!-\frac{\eta^t\tau_{\min}}{2}\!+\!8N\left(\!\eta^t\!\right)^3S^2\tau_{\max}^3\sum_{n=1}^N\!\frac{a_n^2}{q_n}\!+\!SN\left(\!\eta^t\!\right)^2\tau_{\max}\sum_{n=1}^N\!\frac{a_n^2}{q_n}\left(\!\tau_{\max}\!+\!8\tau_{\max}^3\left(\eta^t\right)^2S^2\!\right) \!\right)\left\Vert\!\nabla_{\boldsymbol{x}}\!f\left(\boldsymbol{x}^{t}\right)\!\right\Vert^2 \nonumber\\
   &+8N\eta^t S^2\left(\tau^3+\tilde{\tau}^3\right)\sum_{n=1}^N\frac{a_n^2}{q_n}\left(\eta^t\right)^2\left( \sigma_n^2 +\epsilon^2 \right) \nonumber\\
   &+\frac{1}{2}SN\left(\eta^t\right)^2 \tau \left(\tau+8\tau^3\left(\eta^t\right)^2S^2\right)\sum_{n=1}^N\frac{a_n^2}{q_n}\left(2\epsilon^2+\sigma_n^2\right)\nonumber\\
   &+\frac{1}{2}SN\left(\eta^t\right)^2 \tilde{\tau} \left(\tilde{\tau}+8\tilde{\tau}^3\left(\eta^t\right)^2S^2\right)\sum_{n=1}^N\frac{a_n^2}{q_n}\left(2\epsilon^2+\sigma_n^2\right)\nonumber\\
   & \leq\left(\!-\frac{\eta^t\tau_{\min}}{2}\!+\!SN\left(\!\eta^t\!\right)^2\tau_{\max}^2\!\sum_{n=1}^N\!\frac{a_n^2}{q_n}\!+\!8N\left(\!\eta^t\!\right)^3 S^2\tau_{\max}^3\!\sum_{n=1}^N\!\frac{a_n^2}{q_n}\!+\!8S^3N \left(\!\eta^t\!\right)^4\tau_{\max}^4\!\sum_{n=1}^N\!\frac{a_n^2}{q_n} \!\right)\!\left\Vert\!\nabla_{\boldsymbol{x}}\!f\left(\boldsymbol{x}^{t}\!\right)\!\right\Vert^2 \nonumber\\
   &+8N\left(\eta^t\right)^3 S^2\tau^3\sum_{n=1}^N\frac{a_n^2}{q_n}\sigma_n^2  +8N\left(\eta^t\right)^3 S^2\tau^3\epsilon^2\sum_{n=1}^N\frac{a_n^2}{q_n} \nonumber\\
   &+SN\left(\eta^t\right)^2 \tau^2 \epsilon^2\sum_{n=1}^N\frac{a_n^2}{q_n}
   +\frac{1}{2}SN\left(\eta^t\right)^2 \tilde{\tau}^2\sum_{n=1}^N\frac{a_n^2}{q_n} \sigma_n^2\nonumber\\
   &+8NS^3 \left(\eta^t\right)^4 \tau^4 \epsilon^2\sum_{n=1}^N\frac{a_n^2}{q_n}
   +4NS^3 \left(\eta^t\right)^4 \tau^4 \sum_{n=1}^N\frac{a_n^2}{q_n}\sigma_n^2\nonumber\\
    &+8N\left(\eta^t\right)^3 S^2\tilde{\tau}^3\sum_{n=1}^N\frac{a_n^2}{q_n}\sigma_n^2  +8N\left(\eta^t\right)^3 S^2\tilde{\tau}^3\epsilon^2\sum_{n=1}^N\frac{a_n^2}{q_n} \nonumber\\
   &+\!SN\left(\!\eta^t\!\right)^2\tilde{\tau}^2\epsilon^2\!\sum_{n=1}^N\!\frac{a_n^2}{q_n}\!
   +\!\frac{1}{2}SN\left(\!\eta^t\!\right)^2 \tilde{\tau}^2\!\sum_{n=1}^N\!\frac{a_n^2}{q_n}\sigma_n^2\nonumber\\
   &+\!8NS^3 \left(\!\eta^t\!\right)^4\tilde{\tau}^4 \epsilon^2\!\sum_{n=1}^N\!\frac{a_n^2}{q_n}
   +4NS^3 \left(\eta^t\right)^4 \tilde{\tau}^4 \sum_{n=1}^N\frac{a_n^2}{q_n}\sigma_n^2\nonumber\\
   & \leq-\frac{\eta^t\tau_{\min}}{2}\left(1  -2SN\eta^t \frac{\tau_{\max}^2}{\tau_{\min}} \sum_{n=1}^N\frac{a_n^2}{q_n}\left(1+8S\eta^t\tau+8S^2\left(\eta^t\right)^2\tau_{\max}^2\right)\right)\left\Vert\nabla_{\boldsymbol{x}}f\left(\boldsymbol{x}^{t}\right)\right\Vert^2 \nonumber\\
   &+\left(\frac{1}{2}NS\left(\eta^t\right)^2 \tau^2 +8N\left(\eta^t\right)^3 S^2\tau^3+4NS^3 \left(\eta^t\right)^4 \tau^4 \right)\sum_{n=1}^N\frac{a_n^2}{q_n}\sigma_n^2\nonumber\\
   &+\left(NS\left(\eta^t\right)^2 \tau^2+8N\left(\eta^t\right)^3 S^2\tau^3+8NSL^3 \left(\eta^t\right)^4 \tau^4\right)\sum_{n=1}^N\frac{a_n^2}{q_n}\epsilon^2  \nonumber\\
     &+\left(\frac{1}{2}NS\left(\eta^t\right)^2 \tilde{\tau}^2 +8N\left(\eta^t\right)^3 S^2\tilde{\tau}^3+4NS^3 \left(\eta^t\right)^4 \tilde{\tau}^4 \right)\sum_{n=1}^N\frac{a_n^2}{q_n}\sigma_n^2\nonumber\\
   &+\left(NS\left(\eta^t\right)^2 \tau^2+8N\left(\eta^t\right)^3 S^2\tilde{\tau}^3+8NSL^3 \left(\eta^t\right)^4 \tilde{\tau}^4\right)\sum_{n=1}^N\frac{a_n^2}{q_n}\epsilon^2  \nonumber\\
    & \leq-\frac{\eta^t\tau_{\min}}{2}\left(1  -2NS\eta^t \frac{\tau_{\max}^2}{\tau_{\min}}\sum_{n=1}^N\frac{a_n^2}{q_n} \left(1+\frac{1}{2}+\frac{1}{32}\right)\right)\left\Vert\nabla_{\boldsymbol{x}}f\left(\boldsymbol{x}^{t}\right)\right\Vert^2 \nonumber\\
   &+NS\left(\eta^t\right)^2 \tau^2\left(\frac{1}{2} +\frac{1}{2}+\frac{1}{64}\right)\sum_{n=1}^N\frac{a_n^2}{q_n}\sigma_n^2  +2SN\left(\eta^t\right)^2 \tau^2\left(\frac{1}{2}+\frac{1}{4}+\frac{1}{64}\right)\sum_{n=1}^N\frac{a_n^2}{q_n}\epsilon^2  \nonumber\\
     &+NS\left(\eta^t\right)^2 \tilde{\tau}^2\left(\frac{1}{2} +\frac{1}{2}+\frac{1}{64}\right)\sum_{n=1}^N\frac{a_n^2}{q_n}\sigma_n^2  +2SN\left(\eta^t\right)^2 \tilde{\tau}^2\left(\frac{1}{2}+\frac{1}{4}+\frac{1}{64}\right)\sum_{n=1}^N\frac{a_n^2}{q_n}\epsilon^2  \nonumber\\
    & \leq-\frac{\eta^t\tau_{\min}}{2}\left(\!1\!-\!4NS\eta^t \frac{\tau_{\max}^2}{\tau_{\min}}\!\sum_{n=1}^N\!\frac{a_n^2}{q_n}\!\right)\!\left\Vert\nabla_{\boldsymbol{x}}\!f\left(\!\boldsymbol{x}^{t}\!\right)\!\right\Vert^2\!+\!2NS\left(\!\eta^t\!\right)^2 \left(\!\tau^2\!+\!\tilde{\tau}^2\!\right)\sum_{n=1}^N\!\frac{a_n^2}{q_n}\left(\!\sigma_n^2\!+\!\epsilon^2\!\right)\nonumber\\
    & \leq-\frac{\eta^t\tau_{\min}}{4}\left\Vert\nabla_{\boldsymbol{x}}f\left(\boldsymbol{x}^{t}\right)\right\Vert^2 +2NS\left(\eta^t\right)^2  \left(\tau^2+\tilde{\tau}^2\right)  \sum_{n=1}^N\frac{a_n^2}{q_n}\left(\sigma_n^2+\epsilon^2\right),
\end{align}
where we first let $\eta^t \leq \frac{1}{16S\tau_{\max}}$ and then let $\eta^t \leq \frac{1}{8SN\frac{\tau_{\max}^2}{\tau_{\min}}\sum_{n=1}^N\frac{a_n^2}{q_n}}$. We also use $\left\Vert \nabla_{\boldsymbol{x}}f\left(\boldsymbol{x}^{t}\right)\right\Vert^2=\left\Vert \nabla_{\boldsymbol{x}_c}f\left(\boldsymbol{x}^{t}\right)\right\Vert^2+\left\Vert \nabla_{\boldsymbol{x}_s}f\left(\boldsymbol{x}^{t}\right)\right\Vert^2$.

Rearranging the above we have
\begin{align}
    &\eta^t\left\Vert\nabla_{\boldsymbol{x}}f\left(\boldsymbol{x}^{t}\right)\right\Vert^2\leq \frac{4}{\tau_{\min}}\left(f\left(\boldsymbol{x}^{t}\right)-\mathbb{E} \left[f\left(\boldsymbol{x}^{t+1}_s\right)\right]\right)+8NS\left(\eta^t\right)^2 \frac{ \tau^2+\tilde{\tau}^2 }{\tau_{\min}}  \sum_{n=1}^N\frac{a_n^2}{q_n}\left(\sigma_n^2+\epsilon^2\right) 
\end{align}
Taking expectation and averaging over all $t$, we have
\begin{align}
    &\frac{1}{T}\sum_{t=0}^{T-1}\eta^t\mathbb{E}\left[\left\Vert\nabla_{\boldsymbol{x}}f\left(\boldsymbol{x}^{t}\right)\right\Vert^2\right]\leq \frac{4}{T\tau_{\min}}\left(f\left(\boldsymbol{x}_{0}\right)-f^\ast\right)+\frac{8NS\frac{\tau^2+\tilde{\tau}^2 }{\tau_{\min}}}{T} \sum_{n=1}^N\frac{a_n^2}{q_n}\left(\sigma_n^2+\epsilon^2\right)\sum_{t=0}^{T-1}\left(\eta^t\right)^2.
\end{align}

\newpage 
\section{Proof of Theorem \ref{theorem-v2-partial}}\label{proof-v2-partial}
\begin{itemize}
\item In Sec. \ref{proof-v2-sc-partial}, we prove the strongly convex case.
\item In Sec. \ref{proof-v2-gc-partial}, we prove the general convex case.
\item In Sec. \ref{proof-v2-nc-partial}, we prove the non-convex case.
\end{itemize}

\subsection{Strongly convex case for SFL-V2}\label{proof-v2-sc-partial}
\subsubsection{\textbf{One-round Sequential Update for M-Server-Side Model}}
We first bound the M-server-side model update in one round for full
participation ($q_n=1$ for all $n$), and then compute the difference between full participation and partial participation ($q_n<1$ for some $n$). We denote $\mathbf{I}_n^t$ as a binary variable, taking 1 if client $n$ participates in model training in round $t$, and 0 otherwise. 

For full participation, Lemma \ref{lem:server-strong-full-2} gives
    \begin{align}
    &\mathbb{E}\left[\left\|\overline{\boldsymbol{x}}^{t+1}_{s}-\boldsymbol{x}_s^{\ast}\right\|^2 \right]\nonumber\\
  & \leq\left(1-\frac{N\eta^t\tau\mu}{2}\right)\mathbb{E}\left[\left\|\boldsymbol{x}^{t}_{s}-\boldsymbol{x}_s^{\ast}\right\|^2\right]-2\eta^t\tau \mathbb{E}\left[f\left(\boldsymbol{x}^{t}\right)-f\left(\boldsymbol{x}^{\ast}\right)\right]\nonumber\\
  &  +\left(\eta^t\right)^2\tau^2 N\sum_{n=1}^N\left(2\sigma_n^2+G^2\right) + 24S\tau^3\left(\eta^t\right)^3 \sum_{n=1}^N\left(2\sigma_n^2+G^2\right).
\end{align}
Considering that each client $n$ participates in model training with a probability $q_n$,   we have
\begin{align}
    &\mathbb{E}\left[\left\|\boldsymbol{x}^{t+1}_{s}-\overline{\boldsymbol{x}}^{t+1}_{s}\right\|^2 \right]\nonumber\\
    &=\mathbb{E}\left[\left\|\boldsymbol{x}^{t+1}_s-\boldsymbol{x}^t_s+\boldsymbol{x}^t_s-\overline{\boldsymbol{x}}^{t+1}_s \right\|^2\right]\nonumber\\
     &\leq\mathbb{E}\left[\left\|\boldsymbol{x}^{t+1}_s-\boldsymbol{x}^{t}_s \right\|^2\right]\nonumber\\
    &\leq \mathbb{E}\left[\left\|\sum_{n=1}^N \eta^t\frac{\mathbf{I}_t^n}{q_n}\sum_{i=0}^{\tau-1}\boldsymbol{g}_{s,n}^{t,i}\left(\left\{\boldsymbol{x}^{t,i}_{c,n},\boldsymbol{x}^{t,i}_{s,n}\right\}\right)\right\|^2\right]\nonumber\\
    &\leq N \tau \sum_{n=1}^N\left(\eta^t\right)^2 \frac{1}{q_n}\sum_{i=0}^{\tau-1}\mathbb{E}\left[\left\|\boldsymbol{g}_{s,n}^{t,i}\left(\left\{\boldsymbol{x}^{t,i}_{c,n},\boldsymbol{x}^{t,i}_{s,n}\right\}\right)\right\|^2\right]\nonumber\\
    &\leq N \tau^2 \left(\eta^t\right)^2G^2\sum_{n=1}^N\frac{1}{q_n},\label{eq:server-strong-partial-2}
\end{align}
where we use  $\mathbb{E}\left\Vert X-\mathbb{E}X\right\Vert^2\leq\mathbb{E}\left\Vert X\right\Vert^2$, $\mathbb{E}\left[\mathbf{I}^t_n\right]=q_n$, and $\boldsymbol{x}^{t+1}_{s}=\boldsymbol{x}^t_{s} - \eta^t \sum_{n\in \mathcal{P}^t\left(\boldsymbol{q}\right)}\sum_{i=0}^{\tau-1}\frac{1}{q_n}\boldsymbol{g}_{s,n}^{t,i}\left(\left\{\boldsymbol{x}^{t,i}_{c,n},\boldsymbol{x}^{t,i}_{s,n}\right\}\right)$.

Combining  the above gives
\begin{align}
     &\mathbb{E}\left[\left\|\boldsymbol{x}^{t+1}_{s}-\boldsymbol{x}_s^{\ast}\right\|^2 \right]= \mathbb{E}\left[\left\|\boldsymbol{x}^{t+1}_s-\overline{\boldsymbol{x}}^{t+1}_s+\overline{\boldsymbol{x}}^{t+1}_s-\boldsymbol{x}_s^{\ast}\right\|^2 \right]\nonumber\\
 &\leq\left(1-\frac{N\eta^t\tau\mu}{2}\right)\mathbb{E}\left[\left\|\boldsymbol{x}^{t}_{s}-\boldsymbol{x}_s^{\ast}\right\|^2\right]\nonumber\\
  &  +\left(\eta^t\right)^2\tau^2 N\sum_{n=1}^N\left(2\sigma_n^2+G^2\right) + 24S\tau^3\left(\eta^t\right)^3 \sum_{n=1}^N\left(2\sigma_n^2+G^2\right)\nonumber\\
  &+N \tau^2 \left(\eta^t\right)^2G^2\sum_{n=1}^N\frac{1}{q_n}. \label{eq:server-strong-partial-2.2}
\end{align}

Let $\Delta^{t+1} \triangleq \mathbb{E}\left[\left\|\boldsymbol{x}_s^{t+1}-\boldsymbol{x}_s^*\right\|^2\right]$. We can rewrite \eqref{eq:server-strong-partial-2.2} as:
\begin{align}
\Delta^{t+1}  \leq\left(1-\frac{\eta^tN\tau\mu}{2}\right) \Delta^t+\frac{\left(\eta^t\right)^2 \tau^2}{4} B_1+\frac{\left(\eta^t\right)^3 \tau^3}{8} B_2.
\end{align}
where $B_1:= 4N\sum_{n=1}^N\left(2\sigma_n^2+G^2\right)+4N G^2\sum_{n=1}^N\frac{1}{q_n}$ and $B:=192S\ \sum_{n=1}^N\left(2\sigma_n^2+G^2\right)$.

Consider a diminishing stepsize  $\eta^t=\frac{2\beta}{N\tau(\gamma_s+t)}$, i.e,  $\frac{N\eta^t\tau}{2}=\frac{\beta}{\gamma_s+t}$, where $\beta=\frac{2}{\mu}, \gamma_s=\frac{8 S}{N\mu}-1$. It is easy to show that $\eta^t \leq \frac{1}{2 S\tau}$ for all $t$. We can prove that $\Delta^t \leq \frac{v}{\gamma_s+t}, \forall t$. Therefore, we have  
\begin{align}
    &\mathbb{E}\left[\left\|\boldsymbol{x}_s^{t}-\boldsymbol{x}_s^{\ast}\right\|^2\right]  =\Delta^t \leq \frac{v}{\gamma_s+t}=\frac{\max \left\{\frac{4 B_1}{\mu^2}+\frac{8B_2}{\mu^3(\gamma_s+1)},(\gamma_s+1) \mathbb{E}\left[\left\|\boldsymbol{x}_s^{0}-\boldsymbol{x}_s^{\ast}\right\|^2\right]\right\}}{\gamma_s+t} \nonumber\\
    &\leq\frac{16N\sum_{n=1}^N\left(2\sigma_n^2+G^2\right)+16N G^2\sum_{n=1}^N\frac{1}{q_n}}{\mu^2\left(\gamma_s+t\right)}+\frac{1536S\ \sum_{n=1}^N\left(2\sigma_n^2+G^2\right)}{\mu^3\left(\gamma_s+t\right)\left(\gamma_s+1\right)}\nonumber\\
    &+\frac{(\gamma_s+1) \mathbb{E}\left[\left\|\boldsymbol{x}_s^{0}-\boldsymbol{x}_s^{\ast}\right\|^2\right]}{\gamma_s+t}.\label{eq:sever-strong-partial-2.3}
\end{align}

\subsubsection{\textbf{One-round Parallel Update for Client-Side Models}} 

Define $\overline{\boldsymbol{x}}_{t}^c=\sum_{n=1}^Na_n{\boldsymbol{x}}^{t}_{c,n}$, which represents the aggregating weights in round $t$ for full participation. Using a similar derivation as the M-server side,
we first bound the client-side model update in one round for full participation $\mathbb{E}\left[\left\|\overline{\boldsymbol{x}}^{t+1}_c-\boldsymbol{x}_c^{\ast}\right\|^2 \right]$ and then bound the difference of client-side model parameters between full participation and partial participation $\mathbb{E}\left[\left\|\boldsymbol{x}^{t+1}_c-\boldsymbol{x}_c^{\ast}\right\|^2 \right]$. The overall gradient update rule of clients in each training round is
$\boldsymbol{x}^{t+1}_{c}=\boldsymbol{x}^t_{c} - \eta^t \sum_{n\in \mathcal{P}^t\left(\boldsymbol{q}\right)}\sum_{i=0}^{\tau-1}\frac{a_n}{q_n}\boldsymbol{g}_{c,n}^{t,i}\left(\left\{\boldsymbol{x}^{t,i}_{c,n},\boldsymbol{x}^{t,i}_{s,n}\right\}\right)$.

 Under Assumptions \ref{asm:lipschitz_grad} and \ref{asm:convex}, if $\eta^t\leq\frac{1}{2S\tau}$, in round $t$, Lemma \ref{lem:server-strong-full-1} gives 
    \begin{align}
    &\mathbb{E}\left[\left\|\overline{\boldsymbol{x}}^{t+1}_{c}-\boldsymbol{x}_c^{\ast}\right\|^2 \right]\nonumber\\
     &\leq\left(1-\frac{\eta^t\tau\mu}{2}\right)\mathbb{E}\left[\left\|\boldsymbol{x}^{t}_{c}-\boldsymbol{x}_c^{\ast}\right\|^2\right] -2\eta^t\tau\mathbb{E}\left[f\left(\boldsymbol{x}^{t}\right)-f\left(\boldsymbol{x}^{\ast}\right)\right]\nonumber\\
  & +\left(\eta^t\right)^2\tau^2 N\sum_{n=1}^Na_n^2\left(2\sigma_n^2+G^2\right)
  + 24S\tau^3\left(\eta^t\right)^3 \sum_{n=1}^Na_n\left(2\sigma_n^2+G^2\right).\label{eq:client-strong-partial-2.1}
\end{align}

Considering that each client $n$ participates in model training with a probability $q_n$,   we have
\begin{align}
    &\mathbb{E}\left[\left\|\boldsymbol{x}^{t+1}_{c}-\overline{\boldsymbol{x}}^{t+1}_{c}\right\|^2 \right]\nonumber\\
    &=\mathbb{E}\left[\left\|\boldsymbol{x}^{t+1}_c-\boldsymbol{x}^t_c+\boldsymbol{x}^t_c-\overline{\boldsymbol{x}}^{t+1}_c \right\|^2\right]\nonumber\\
     &\leq\mathbb{E}\left[\left\|\boldsymbol{x}^{t+1}_c-\boldsymbol{x}^{t}_c \right\|^2\right]\nonumber\\
    &\leq \mathbb{E}\left[\left\|\sum_{n=1}^N \eta^t\frac{a_n\mathbf{I}_t^n}{q_n}\sum_{i=0}^{\tau-1}\boldsymbol{g}_{c,n}^{t,i}\left(\left\{\boldsymbol{x}^{t,i}_{c,n},\boldsymbol{x}^{t,i}_{s,n}\right\}\right)\right\|^2\right]\nonumber\\
    &\leq N \tau \sum_{n=1}^N\left(\eta^t\right)^2 \frac{a_n^2}{q_n}\sum_{i=0}^{\tau-1}\mathbb{E}\left[\left\|\boldsymbol{g}_{c,n}^{t,i}\left(\left\{\boldsymbol{x}^{t,i}_{c,n},\boldsymbol{x}^{t,i}_{s,n}\right\}\right)\right\|^2\right]\nonumber\\
    &\leq N \tau^2 \left(\eta^t\right)^2G^2\sum_{n=1}^N\frac{a_n^2}{q_n},
\end{align}
where we use  $\mathbb{E}\left\Vert X-\mathbb{E}X\right\Vert^2\leq\mathbb{E}\left\Vert X\right\Vert^2$, $\mathbb{E}\left[\mathbf{I}^t_n\right]=q_n$, and $\boldsymbol{x}^{t+1}_{c}=\boldsymbol{x}^t_{c} - \eta^t \sum_{n\in \mathcal{P}^t\left(\boldsymbol{q}\right)}\sum_{i=0}^{\tau-1}\frac{a_n}{q_n}\boldsymbol{g}_{c,n}^{t,i}\left(\left\{\boldsymbol{x}^{t,i}_{c,n},\boldsymbol{x}^{t,i}_{s,n}\right\}\right)$.

We obtain the client-side model parameter update in one round for partial participation by combining the two terms and we have
\begin{align}
     &\mathbb{E}\left[\left\|\boldsymbol{x}^{t+1}_{c}-\boldsymbol{x}_c^{\ast}\right\|^2 \right]= \mathbb{E}\left[\left\|\boldsymbol{x}^{t+1}_c-\overline{\boldsymbol{x}}^{t+1}_c+\overline{\boldsymbol{x}}^{t+1}_c-\boldsymbol{x}_c^{\ast}\right\|^2 \right]\nonumber\\
 &\leq\left(1-\frac{\eta^t\tau\mu}{2}\right)\mathbb{E}\left[\left\|\boldsymbol{x}^{t}_{c}-\boldsymbol{x}_c^{\ast}\right\|^2\right]\nonumber\\
  & +\left(\eta^t\right)^2\tau^2 N\sum_{n=1}^Na_n^2\left(2\sigma_n^2+G^2\right)
  + 24S\tau^3\left(\eta^t\right)^3 \sum_{n=1}^Na_n\left(2\sigma_n^2+G^2\right)\nonumber\\
  &+N \tau^2 \left(\eta^t\right)^2G^2\sum_{n=1}^N\frac{a_n^2}{q_n}.\label{eq:client-strong-partial-2.2}
\end{align}

Let $\Delta^{t+1} \triangleq \mathbb{E}\left[\left\|\boldsymbol{x}_c^{t+1}-\boldsymbol{x}_c^{\ast}\right\|^2\right]$. We can rewrite \eqref{eq:client-strong-partial-2.2} as:
\begin{align}
\Delta^{t+1} \leq\left(1-\frac{\eta^t\tau\mu}{2}\right) \Delta^t+\frac{\left(\eta^t\right)^2 \tau^2}{4} B_1+\frac{\left(\eta^t\right)^3 \tau^3}{8} B_2.
\end{align}
where $B_1:= 4N\sum_{n=1}^Na_n^2\left(2\sigma_n^2+G^2\right)+4NG^2\sum_{n=1}^N\frac{a_n^2}{q_n}$ and $B_2:=192S\ \sum_{n=1}^Na_n\left(2\sigma_n^2+G^2\right)$.

Consider a diminishing stepsize  $\eta^t=\frac{2\beta}{\tau(\gamma_c+t)}$, i.e,  $\frac{\eta^t\tau}{2}=\frac{\beta}{\gamma_c+t}$, where $\beta=\frac{2}{\mu}, \gamma_c=\frac{8 S}{\mu}-1$. It is easy to show that $\eta^t \leq \frac{1}{2 S\tau}$ for all $t$. For $v=\max \left\{\frac{4 B_1}{\mu^2}+\frac{8B_2}{\mu^3(\gamma_c+1)},(\gamma_c+1) \Delta^0\right\}$, we can prove that $\Delta^t \leq \frac{v}{\gamma_c+t}, \forall t$. Therefore, we have  
\begin{align}
    &\mathbb{E}\left[\left\|\boldsymbol{x}_c^{t}-\boldsymbol{x}_c^{\ast}\right\|^2\right] =\Delta^t \leq \frac{v}{\gamma_c+t}=\frac{\max \left\{\frac{4 B_1}{\mu^2}+\frac{8B_2}{\mu^3(\gamma_c+1)},(\gamma_c+1) \mathbb{E}\left[\left\|\boldsymbol{x}_c^{0}-\boldsymbol{x}_c^{\ast}\right\|^2\right]\right\}}{\gamma_c+t} \nonumber\\
    &\leq\frac{16N\sum_{n=1}^Na_n^2\left(2\sigma_n^2+G^2\right)+16NG^2\sum_{n=1}^N\frac{a_n^2}{q_n}}{\mu^2\left(\gamma_c+t\right)}+\frac{1536S\ \sum_{n=1}^Na_n\left(2\sigma_n^2+G^2\right)}{\mu^3\left(\gamma_c+t\right)\left(\gamma_c+1\right)}\nonumber\\
    &+\frac{(\gamma_c+1) \mathbb{E}\left[\left\|\boldsymbol{x}_c^{0}-\boldsymbol{x}_c^{\ast}\right\|^2\right]}{\gamma_c+t}.\label{eq:client-strong-partial-2.3}
\end{align}

\subsubsection{\textbf{Superposition of M-Server and Clients}}
We merge the M-server-side and client-side models in  \eqref{eq:sever-strong-partial-2.3} and \eqref{eq:client-strong-partial-2.3} using Proposition \ref{prop: decomposition}.
For $\eta^t \leq \frac{1}{2 S\tau}$  and $\gamma=\frac{8 S}{\mu}-1$, we have
\begin{align}
&\mathbb{E}\left[f(\boldsymbol{x}^
{T})\right]- f(\boldsymbol{x}^*)\nonumber\\
&\le\frac{S}{2}\left(\mathbb{E}||\boldsymbol{x}_s^{T}-\boldsymbol{x}_s^*||^2+\mathbb{E}||\boldsymbol{x}_c^{T}-\boldsymbol{x}_c^*||^2\right)\nonumber\\
 &\leq\frac{8SN\sum_{n=1}^N(a_n^2+1)\left(2\sigma_n^2+G^2+\frac{G^2}{q_n}\right)}{\mu^2\left(\gamma+T\right)}+\frac{768S^2\ \sum_{n=1}^N(a_n+1)\left(2\sigma_n^2+G^2\right)}{\mu^3\left(\gamma+T\right)\left(\gamma+1\right)}\nonumber\\
 &+\frac{S(\gamma+1) \mathbb{E}\left[\left\|\boldsymbol{x}_c^{0}-\boldsymbol{x}_c^{\ast}\right\|^2\right]}{2(\gamma+T)}
\end{align}

\newpage 
\subsection{General convex case for SFL-V2}\label{proof-v2-gc-partial}
\subsubsection{\textbf{One-round Sequential Update for M-Server-Side Model}}

By Lemma \ref{lem:server-strong-full-2} with $\mu=0$ and $\eta^t\leq\frac{1}{2S\tau}$, we have
\begin{align}
    &\mathbb{E}\left[\left\|{\boldsymbol{x}}^{t+1}_{s}-\boldsymbol{x}_s^{\ast}\right\|^2 \right]\nonumber\\
  & \leq\mathbb{E}\left[\left\|\boldsymbol{x}^{t}_{s}-\boldsymbol{x}_s^{\ast}\right\|^2\right]-2\eta^t\tau \mathbb{E}\left[f\left(\boldsymbol{x}^{t}\right)-f\left(\boldsymbol{x}^{\ast}\right)\right]\nonumber\\
  &  +\left(\eta^t\right)^2\tau^2 N\sum_{n=1}^N\left(2\sigma_n^2+G^2\right) + 24S\tau^3\left(\eta^t\right)^3 \sum_{n=1}^N\left(2\sigma_n^2+G^2\right).
\end{align}

Considering that each client $n$ participates in model training with a probability $q_n$,   we have
\begin{align}
    &\mathbb{E}\left[\left\|\boldsymbol{x}^{t+1}_{s}-\overline{\boldsymbol{x}}^{t+1}_{s}\right\|^2 \right]\leq N \tau^2 \left(\eta^t\right)^2G^2\sum_{n=1}^N\frac{1}{q_n}.
\end{align}

Thus, we have
\begin{align}
    &\mathbb{E}\left[\left\|{\boldsymbol{x}}^{t+1}_{s}-\boldsymbol{x}_s^{\ast}\right\|^2 \right]\nonumber\\
  & \leq\mathbb{E}\left[\left\|\boldsymbol{x}^{t}_{s}-\boldsymbol{x}_s^{\ast}\right\|^2\right]-2\eta^t\tau \mathbb{E}\left[f\left(\boldsymbol{x}^{t}\right)-f\left(\boldsymbol{x}^{\ast}\right)\right]\nonumber\\
  &  +\left(\eta^t\right)^2\tau^2 N\sum_{n=1}^N\left(2\sigma_n^2+G^2\right) + 24S\tau^3\left(\eta^t\right)^3 \sum_{n=1}^N\left(2\sigma_n^2+G^2\right)+N \tau^2 \left(\eta^t\right)^2G^2\sum_{n=1}^N\frac{1}{q_n}. \label{eq:server-general-partial-2}
\end{align}
\subsubsection{\textbf{One-round Parallel Update for Client-Side Models}}

By Lemma \ref{lem:server-strong-full-1} with $\mu=0$ and $\eta^t\leq\frac{1}{2S\tau}$, we have
\begin{align}
    &\mathbb{E}\left[\left\|{\boldsymbol{x}}^{t+1}_{c}-\boldsymbol{x}_c^{\ast}\right\|^2 \right]\nonumber\\
     &\leq\mathbb{E}\left[\left\|\boldsymbol{x}^{t}_{c}-\boldsymbol{x}_c^{\ast}\right\|^2\right] -2\eta^t\tau\mathbb{E}\left[f\left(\boldsymbol{x}^{t}\right)-f\left(\boldsymbol{x}^{\ast}\right)\right]\nonumber\\
  & +\left(\eta^t\right)^2\tau^2 N\sum_{n=1}^Na_n^2\left(2\sigma_n^2+G^2\right)
  + 24S\tau^3\left(\eta^t\right)^3 \sum_{n=1}^Na_n\left(2\sigma_n^2+G^2\right).
\end{align}
Considering that each client $n$ participates in model training with a probability $q_n$,   we have
\begin{align}
    &\mathbb{E}\left[\left\|\boldsymbol{x}^{t+1}_{c}-\overline{\boldsymbol{x}}^{t+1}_{c}\right\|^2 \right]\leq N \tau^2 \left(\eta^t\right)^2G^2\sum_{n=1}^N\frac{a_n^2}{q_n}.
\end{align}
Thus, we have
\begin{align}
    &\mathbb{E}\left[\left\|{\boldsymbol{x}}^{t+1}_{c}-\boldsymbol{x}_c^{\ast}\right\|^2 \right]\nonumber\\
     &\leq\mathbb{E}\left[\left\|\boldsymbol{x}^{t}_{c}-\boldsymbol{x}_c^{\ast}\right\|^2\right] -2\eta^t\tau\mathbb{E}\left[f\left(\boldsymbol{x}^{t}\right)-f\left(\boldsymbol{x}^{\ast}\right)\right]\nonumber\\
  & +\left(\eta^t\right)^2\tau^2 N\sum_{n=1}^Na_n^2\left(2\sigma_n^2+G^2\right)
  + 24S\tau^3\left(\eta^t\right)^3 \sum_{n=1}^Na_n\left(2\sigma_n^2+G^2\right)+N \tau^2 \left(\eta^t\right)^2G^2\sum_{n=1}^N\frac{a_n^2}{q_n}.\label{eq:client-general-partial-2}
\end{align}

\subsubsection{\textbf{Superposition of M-Server and Clients}}

We merge the M-server-side and client-side models in  \eqref{eq:server-general-partial-2} and \eqref{eq:client-general-partial-2} as follows

\begin{align}
&\mathbb{E}\!\left[\!\left\|\boldsymbol{x}^{t+1}\!-\!\boldsymbol{x}^{\ast}\right\|^2\!\right]\!\leq\!\mathbb{E}\!\left[\!\left\|\boldsymbol{x}^{t+1}_{s}\!-\!\boldsymbol{x}_s^{\ast}\right\|^2\!\right]+\!\mathbb{E}\!\left[\!\left\|\boldsymbol{x}^{t+1}_c\!-\!\boldsymbol{x}_c^{\ast}\right\|^2\!\right]\!,\nonumber\\
&\leq\mathbb{E}\left[\left\|\boldsymbol{x}^{t}_{s}-\boldsymbol{x}_s^{\ast}\right\|^2\right] -2\eta^t\tau\mathbb{E}\left[f\left(\boldsymbol{x}^{t}\right)-f\left(\boldsymbol{x}^{\ast}\right)\right]\nonumber\\
  &+\mathbb{E}\left[\left\|\boldsymbol{x}^{t}_{c}-\boldsymbol{x}_c^{\ast}\right\|^2\right]-2\eta^t\tau\mathbb{E}\left[f\left(\boldsymbol{x}^{t}\right)-f\left(\boldsymbol{x}^{\ast}\right)\right]\nonumber\\
  & +\left(\eta^t\right)^2\tau^2 N\sum_{n=1}^N(a_n^2+1)\left(2\sigma_n^2+G^2\right) + 24S\tau^3\left(\eta^t\right)^3\sum_{n=1}^N(a_n+1)\left(2\sigma_n^2+G^2\right)\nonumber\\
  &+N \tau^2\left(\eta^t\right)^2G^2\sum_{n=1}^N\frac{a_n^2+1}{q_n}\nonumber\\
  &=\mathbb{E}\left[\left\|\boldsymbol{x}^{t}-\boldsymbol{x}^{\ast}\right\|^2\right] -4\eta^t\tau\mathbb{E}\left[f\left(\boldsymbol{x}^{t}\right)-f\left(\boldsymbol{x}^{\ast}\right)\right]\nonumber\\
  & +\left(\eta^t\right)^2\tau^2 N\sum_{n=1}^N(a_n^2+1)\left(2\sigma_n^2+G^2\right) + 24S\tau^3\left(\eta^t\right)^3\sum_{n=1}^N(a_n+1)\left(2\sigma_n^2+G^2\right)\nonumber\\
  &+N \tau^2 \left(\eta^t\right)^2G^2\sum_{n=1}^N\frac{a_n^2+1}{q_n}.
\end{align}
Then, we can obtain the relation between  $ \mathbb{E}\!\left[\!\left\|\boldsymbol{x}^{t+1}\!-\!\boldsymbol{x}^{\ast}\right\|^2\!\right]$ and $ \mathbb{E}\!\left[\!\left\|\boldsymbol{x}^{t}\!-\!\boldsymbol{x}^{\ast}\right\|^2\!\right]$, which is related to $\mathbb{E}\left[f\left(\boldsymbol{x}^{t}\right)-f\left(\boldsymbol{x}^{\ast}\right)\right]$.
Applying Lemma 8 in \cite{li2023convergence},
we obtain the performance bound as
\begin{align}
    &\mathbb{E}\left[f\left(\boldsymbol{x}^{T}\right)\right]-f\left(\boldsymbol{x}^{\ast}\right) \nonumber\\
    &\leq \frac{1}{2}\left(N\sum_{n=1}^N(a_n^2+1)\left(2\sigma_n^2+G^2+\frac{G_n^2}{q_n}\right)\right)^{\frac{1}{2}}\left(\frac{\left\|\boldsymbol{x}^{0}-\boldsymbol{x}^{\ast}\right\|^2}{T+1}\right)^{\frac{1}{2}}\nonumber\\
    &+\frac{1}{2}\left(24S\sum_{n=1}^N(a_n+1)\left(2\sigma_n^2+G^2\right)\right)^{\frac{1}{3}}\left(\frac{\left\|\boldsymbol{x}^{0}-\boldsymbol{x}^{\ast}\right\|^2}{T+1}\right)^{\frac{1}{3}}
    +\frac{S\left\|\boldsymbol{x}^{0}-\boldsymbol{x}^{\ast}\right\|^2}{2(T+1)}.
\end{align}

\newpage
\subsection{Non-convex case for SFL-V2}\label{proof-v2-nc-partial}
\subsubsection{\textbf{One-round Sequential Update for M-Server-Side Model}}
For the server, we have
\begin{align}
    &\mathbb{E} \left[\left\langle\nabla_{\boldsymbol{x}_s} f\left(\boldsymbol{x}^{t}\right), \boldsymbol{x}^{t+1}_s-\boldsymbol{x}^{t}_s\right\rangle\right]\nonumber\\
    &\leq \mathbb{E} \left[\left\langle\nabla_{\boldsymbol{x}_s} f\left(\boldsymbol{x}^{t}\right), \boldsymbol{x}^{t+1}_s-\boldsymbol{x}^{t}_s +\eta^t \tau \nabla_{\boldsymbol{x}_s} f\left(\boldsymbol{x}^{t}\right)- \eta^t \tau \nabla_{\boldsymbol{x}_s} f\left(\boldsymbol{x}^{t}\right)\right\rangle\right]\nonumber\nonumber\\
     &\leq \mathbb{E} \left[\left\langle\nabla_{\boldsymbol{x}_s} f\left(\boldsymbol{x}^{t}\right), \boldsymbol{x}^{t+1}_s-\boldsymbol{x}^{t}_s +\eta^t \tau \nabla_{\boldsymbol{x}_s} f\left(\boldsymbol{x}^{t}\right)\right\rangle -\left\langle \nabla_{\boldsymbol{x}_s}f\left(\boldsymbol{x}^{t}_s\right), \eta^t \tau \nabla_{\boldsymbol{x}_s} f\left(\boldsymbol{x}^{t}\right)\right\rangle\right]\nonumber\\
     &\leq \left\langle\nabla_{\boldsymbol{x}_s} f\left(\boldsymbol{x}^{t}\right), \mathbb{E} \left[- \eta^t\sum_{n=1}^N \sum_{i=0}^{\tau-1}\frac{\mathbf{I}_n^t}{q_n} \boldsymbol{g}_{s,n}^{t,i}\right]+\eta^t \tau \nabla_{\boldsymbol{x}_s} f\left(\boldsymbol{x}^{t}\right)\right\rangle -\eta^t \tau \left\Vert\nabla_{\boldsymbol{x}_s} f\left(\boldsymbol{x}^{t}\right)\right\Vert^2 \nonumber\\
     &\leq\left\langle\!\nabla_{\boldsymbol{x}_s}\!f\left(\!\boldsymbol{x}^{t}\!\right)\!, \mathbb{E}\left[\!-\!\eta^t\sum_{n=1}^N\!\sum_{i=0}^{\tau-1}\!\frac{\mathbf{I}_n^t}{q_n}\nabla_{\boldsymbol{x}_s} F_n\left(\left\{\boldsymbol{x}^{t,i}_{c,n},\boldsymbol{x}^{t,i}_{s,n}\right\}\right)\right]\!+\!\eta^t\tau\nabla_{\boldsymbol{x}_s}\!f\left(\boldsymbol{x}^{t}\right)\right\rangle\!-\!\eta^t \tau \left\Vert\nabla_{\boldsymbol{x}_s} f\left(\boldsymbol{x}^{t}\right)\!\right\Vert^2 \nonumber\\
     &\leq \left\langle\nabla_{\boldsymbol{x}_s} f\left(\boldsymbol{x}^{t}\right), \mathbb{E} \left[- \eta^t\sum_{n=1}^N \sum_{i=0}^{\tau-1}\frac{\mathbf{I}_n^t}{q_n} \nabla_{\boldsymbol{x}_s} F_n\left(\left\{\boldsymbol{x}^{t,i}_{c,n},\boldsymbol{x}^{t,i}_{s,n}\right\}\right)+\eta^t\sum_{n=1}^N \sum_{i=0}^{\tau-1}\frac{\mathbf{I}_n^t}{q_n} \nabla_{\boldsymbol{x}_s} F_n\left(\boldsymbol{x}^{t}\right) \right]\right\rangle \nonumber\\
     &-\eta^t \tau \left\Vert\nabla_{\boldsymbol{x}_s} f\left(\boldsymbol{x}^{t}\right)\right\Vert^2 \nonumber\\
      &\leq  \eta^t\tau\left\langle \nabla_{\boldsymbol{x}_s} f\left(\boldsymbol{x}^{t}\right),\mathbb{E} \left[ - \frac{1}{\tau}\sum_{n=1}^N \sum_{i=0}^{\tau-1}\frac{\mathbf{I}_n^t}{q_n}\nabla_{\boldsymbol{x}_s} F_n\left(\left\{\boldsymbol{x}^{t,i}_{c,n},\boldsymbol{x}^{t,i}_{s,n}\right\}\right)+ \frac{1}{\tau}\sum_{n=1}^N \sum_{i=0}^{\tau-1}\frac{\mathbf{I}_n^t}{q_n} \nabla_{\boldsymbol{x}_s} F_n\left(\boldsymbol{x}^{t}\right) \right] \right\rangle\nonumber\\
      &-\eta^t \tau \left\Vert\nabla_{\boldsymbol{x}_s} f\left(\boldsymbol{x}^{t}\right)\right\Vert^2 \nonumber\\
      &\leq \frac{\eta^t\tau}{2} \left\Vert\nabla_{\boldsymbol{x}_s} f\left(\boldsymbol{x}^{t}\right)\right\Vert^2 +\frac{\eta^t}{2\tau}\mathbb{E} \left[ \left\Vert \sum_{n=1}^N \sum_{i=0}^{\tau-1} \frac{\mathbf{I}_n^t}{q_n}\nabla_{\boldsymbol{x}_s} F_n\left(\left\{\boldsymbol{x}^{t,i}_{c,n},\boldsymbol{x}^{t,i}_{s,n}\right\}\right)- \sum_{n=1}^N \sum_{i=0}^{\tau-1} \frac{\mathbf{I}_n^t}{q_n}\nabla_{\boldsymbol{x}_s} F_n\left(\boldsymbol{x}^{t}\right) \right\Vert^2\right]\nonumber\\
      &-\eta^t \tau \left\Vert\nabla_{\boldsymbol{x}_s} f\left(\boldsymbol{x}^{t}\right)\right\Vert^2 \nonumber\\
       &\leq -\frac{\eta^t\tau}{2} \left\Vert\nabla_{\boldsymbol{x}_s} f\left(\boldsymbol{x}^{t}\right)\right\Vert^2 +\frac{\eta^t}{2\tau}\mathbb{E} \left[ \left\Vert \sum_{n=1}^N  \sum_{i=0}^{\tau-1}\frac{\mathbf{I}_n^t}{q_n}  \left(\nabla_{\boldsymbol{x}_s} F_n\left(\left\{\boldsymbol{x}^{t,i}_{c,n},\boldsymbol{x}^{t,i}_{s,n}\right\}\right)- \nabla_{\boldsymbol{x}_s} F_n\left(\boldsymbol{x}^{t}\right)\right) \right\Vert^2\right]\nonumber\\
       &\leq -\frac{\eta^t\tau}{2} \left\Vert\nabla_{\boldsymbol{x}_s} f\left(\boldsymbol{x}^{t}\right)\right\Vert^2 +\frac{N\eta^t}{2 \tau}\sum_{n=
       1}^N\frac{1}{q_n}\mathbb{E} \left[ \left\Vert \sum_{i=0}^{\tau-1} \left(\nabla_{\boldsymbol{x}_s} F_n\left(\left\{\boldsymbol{x}^{t,i}_{c,n},\boldsymbol{x}^{t,i}_{s,n}\right\}\right)- \nabla_{\boldsymbol{x}_s} F_n\left(\boldsymbol{x}^{t}\right) \right)\right\Vert^2\right]\nonumber\\
        &\leq -\frac{\eta^t\tau}{2} \left\Vert\nabla_{\boldsymbol{x}_s} f\left(\boldsymbol{x}^{t}\right)\right\Vert^2 +\frac{N\eta^t S^2}{2}\sum_{n=1}^N \frac{1}{q_n}\sum_{i=0}^{\tau-1} \mathbb{E} \left[ \left\Vert \boldsymbol{x}^{t,i}_{s,n}- \boldsymbol{x}^{t}_s\right\Vert^2\right],\label{eq:server-non-partial-2.1}
\end{align}
where we apply Assumption \ref{asm:lipschitz_grad}, $\nabla_{\boldsymbol{x}_s} f\left(\boldsymbol{x}^{t}\right)=\sum_{n=1}^N \nabla_{\boldsymbol{x}_s} F_n\left(\boldsymbol{x}^{t}\right)$,  $\left\langle a,b\right\rangle\leq \frac{a^2+b^2}{2}$, and $\mathbb{E}\left[\mathbf{I}_n^t\right]=q_n$.

By Lemma \ref{lem:multiple-local-training} with $\eta^t\leq\frac{1}{\sqrt{8}S\tau}$, we have
\begin{align}
    &\sum_{i=0}^{\tau-1}  \mathbb{E} \left[ \left\Vert  \boldsymbol{x}^{t,i}_{s,n}- \boldsymbol{x}^{t}_s\right\Vert^2\right] \leq 2\tau^2\left(8\tau\left(\eta^t\right)^2 \sigma_n^2 +8\tau\left(\eta^t\right)^2\epsilon^2
         +8\tau\left(\eta^t\right)^2\left\Vert\nabla_{\boldsymbol{x}_s}f\left(\boldsymbol{x}^{t}_s\right)\right\Vert^2\right)
\end{align}

Thus, \eqref{eq:server-non-partial-2.1} becomes
\begin{align}
    &\mathbb{E} \left[\left\langle\nabla_{\boldsymbol{x}_s} f\left(\boldsymbol{x}^{t}\right), \boldsymbol{x}^{t+1}_s-\boldsymbol{x}^{t}_s\right\rangle\right]\nonumber\\        
    &\leq\!-\frac{\eta^t\tau}{2}\!\left\Vert\!\nabla_{\boldsymbol{x}_s}\!f\left(\!\boldsymbol{x}^{t}\!\right)\right\Vert^2\!+\!\frac{N\eta^t S^2}{2}\sum_{n=1}^N\!\frac{1}{q_n}2\tau^2\left(8\tau\left(\eta^t\right)^2 \sigma_n^2\!+\!8\tau\left(\eta^t\right)^2\epsilon^2\!+\!8\tau\left(\eta^t\right)^2\!\left\Vert\!\nabla_{\boldsymbol{x}_s}f\left(\boldsymbol{x}^{t}_s\right)\!\right\Vert^2\right)\nonumber\\
       &\leq \left(-\frac{\eta^t\tau}{2} +8N\left(\eta^t\right)^3\tau^3 S^2\sum_{n=1}^N\frac{1}{q_n}\right)\left\Vert\nabla_{\boldsymbol{x}_s} f\left(\boldsymbol{x}^{t}\right)\right\Vert^2 +8N\eta^t S^2\tau^3\sum_{n=1}^N\frac{1}{q_n} \left(\eta^t\right)^2\left( \sigma_n^2 +\epsilon^2
         \right).\label{eq:server-non-partial-2.5}
\end{align}

Furthermore, we have
\begin{align}
    &\frac{S}{2}\mathbb{E}\left[\left\|  \boldsymbol{x}^{t+1}_{s}- \boldsymbol{x}^{t}_{s} \right\|^2 \right]\nonumber\\
    &=\frac{SN\left(\eta^t\right)^2}{2}\sum_{n=1}^N\mathbb{E}\left[\left\|  \sum_{i=0}^{\tau-1} \frac{\mathbf{I}_n^t}{q_n}\boldsymbol{g}^{t,i}_{s,n}\right\|^2 \right]\nonumber\\
    &\leq\frac{SN\left(\eta^t\right)^2}{2}\sum_{n=1}^N\frac{1}{q_n}\mathbb{E}\left[\left\|\sum_{i=0}^{\tau-1} \boldsymbol{g}^{t,i}_{s,n}\right\|^2 \right]\nonumber\\
      &\leq\frac{SN\left(\eta^t\right)^2 \tau}{2}\sum_{n=1}^N\frac{1}{q_n}\sum_{i=0}^{\tau-1} \mathbb{E}\left[\left\|\boldsymbol{g}^{t,i}_{s,n}\right\|^2 \right]\nonumber\\
      &\leq\frac{SN\left(\eta^t\right)^2 \tau}{2}\sum_{n=1}^N\frac{1}{q_n}\sum_{i=0}^{\tau-1} \mathbb{E}\left[\left\|\boldsymbol{g}^{t,i}_{s,n}-\boldsymbol{g}_{s,n}^t+\boldsymbol{g}_{s,n}^t\right\|^2 \right]\nonumber\\
       &\leq \frac{SN\left(\eta^t\right)^2 \tau}{2}\sum_{n=1}^N\frac{1}{q_n}\sum_{i=0}^{\tau-1} \left(\mathbb{E}\left[\left\|\boldsymbol{g}^{t,i}_{s,n}-\boldsymbol{g}_{s,n}^t\right\|^2\right]+\mathbb{E}\left[\left\|\boldsymbol{g}_{s,n}^t\right\|^2 \right]\right)\nonumber\\
       &\leq \frac{SN\left(\eta^t\right)^2 \tau}{2}\sum_{n=1}^N\frac{1}{q_n}\sum_{i=0}^{\tau-1} \left(\mathbb{E}\left[\left\|\boldsymbol{g}^{t,i}_{s,n}-\boldsymbol{g}_{s,n}^t\right\|^2\right]+\mathbb{E}\left[\left\|\nabla_{\boldsymbol{x}_s}F_n\left(\boldsymbol{x}^{t}\right)\right\|^2+\sigma_n^2 \right]\right) ,\label{eq:server-non-partial-2.3}
       \end{align}
where the last line uses Assumption \ref{asm:lipschitz_grad} and $\mathbb{E}\left[\|\mathbf{z}\|^2\right]=\|\mathbb{E}[\mathbf{z}]\|^2+\mathbb{E}[\| \mathbf{z}-\left.\mathbb{E}[\mathbf{z}] \|^2\right]$ for any random variable $\mathbf{z}$.

By Lemma \ref{lem:multiple-local-training-grad} with $\eta^t\leq\frac{1}{2S\tau}$, we have
\begin{align}
    \sum_{i=0}^{\tau-1}\mathbb{E}\left[\left\Vert \boldsymbol{g}^{t,i}_{s,n}-\boldsymbol{g}_{s,n}^t\right\Vert^2\right] \leq 8\tau^3\left(\eta^t\right)^2S^2\left(\left\Vert \nabla_{\boldsymbol{x}_s}F_n\left(\boldsymbol{x}^{t}\right)\right\Vert^2+\sigma_n^2\right) .
\end{align}
Thus, \eqref{eq:server-non-partial-2.3} becomes
\begin{align}
    &\frac{S}{2}\mathbb{E}\left[\left\|  \boldsymbol{x}^{t+1}_{s}- \boldsymbol{x}^{t}_{s} \right\|^2 \right]\nonumber\\
     &\leq\frac{SN\left(\eta^t\right)^2 \tau}{2}\sum_{n=1}^N\frac{1}{q_n}\left(8\tau^3\left(\eta^t\right)^2S^2\left(\left\Vert \nabla_{\boldsymbol{x}_s}F_n\left(\boldsymbol{x}^{t}\right)\right\Vert^2+\sigma_n^2\right) 
     +\tau\mathbb{E}\left[\left\|\nabla_{\boldsymbol{x}_s}F_n\left(\boldsymbol{x}^{t}\right)\right\|^2+\sigma_n^2 \right] \right)\nonumber\\
     &\leq\frac{SN\left(\eta^t\right)^2 \tau}{2}\sum_{n=1}^N\frac{1}{q_n}\left(\tau+8\tau^3\left(\eta^t\right)^2S^2\right)\left(\left\Vert \nabla_{\boldsymbol{x}_s}F_n\left(\boldsymbol{x}^{t}\right)\right\Vert^2+\sigma_n^2\right)\nonumber\\
    &\leq\frac{SN\left(\eta^t\right)^2 \tau}{2}\sum_{n=1}^N\frac{1}{q_n}\left(\tau+8\tau^3\left(\eta^t\right)^2S^2\right)\left(\left\Vert \nabla_{\boldsymbol{x}_s}F_n\left(\boldsymbol{x}^{t}\right)-\nabla_{\boldsymbol{x}_s}f\left(\boldsymbol{x}^{t}\right)+\nabla_{\boldsymbol{x}_s}f\left(\boldsymbol{x}^{t}\right)\right\Vert^2+\sigma_n^2\right)\nonumber\\
     &\leq\frac{SN\left(\eta^t\right)^2 \tau}{2}\sum_{n=1}^N\frac{1}{q_n}\left(\tau+8\tau^3\left(\eta^t\right)^2S^2\right)\left(2\left\Vert \nabla_{\boldsymbol{x}_s}f\left(\boldsymbol{x}^{t}\right)\right\Vert^2+2\epsilon^2+\sigma_n^2\right).
     \label{eq:server-non-partial-2.4}
\end{align}
\subsubsection{\textbf{One-round Parallel Update for Client-Side Models}} 
The analysis of the client-side model update is the same as the client's model update in version 1. Thus, we have
\begin{align}
    &\mathbb{E} \left[\left\langle\nabla_{\boldsymbol{x}_c} f\left(\boldsymbol{x}^{t}\right), \boldsymbol{x}^{t+1}_c-\boldsymbol{x}^{t}_c\right\rangle\right]\nonumber\\               
    &\leq \left(-\frac{\eta^t\tau}{2} +8N\left(\eta^t\right)^3\tau^3 S^2\sum_{n=1}^N \frac{a_n^2}{q_n}\right)\left\Vert\nabla_{\boldsymbol{x}_c} f\left(\boldsymbol{x}^{t}\right)\right\Vert^2 +8N\eta^t S^2\tau^3\sum_{n=1}^N \frac{a_n^2}{q_n}\left(\eta^t\right)^2\left( \sigma_n^2 +\epsilon^2
         \right).\label{eq:client-non-partial-2.5}
\end{align}

For $\eta^t\leq \frac{1}{2S\tau}$,
\begin{align}
    &\frac{S}{2}\mathbb{E}\left[\left\|  \boldsymbol{x}^{t+1}_{c}- \boldsymbol{x}^{t}_{c} \right\|^2 \right]\nonumber\\
     &\leq\frac{SN\left(\eta^t\right)^2 \tau}{2}\sum_{n=1}^N\frac{a_n^2}{q_n}\left(\tau+8\tau^3\left(\eta^t\right)^2S^2\right)\left(2\left\Vert \nabla_{\boldsymbol{x}_c}f\left(\boldsymbol{x}^{t}\right)\right\Vert^2+2\epsilon^2+\sigma_n^2\right).
     \label{eq:client-non-partial-2.4}
\end{align}
\subsubsection{\textbf{Superposition of M-Server and Clients}}
Applying \eqref{eq:server-non-partial-2.5}, \eqref{eq:server-non-partial-2.4}, \eqref{eq:client-non-partial-2.4} and \eqref{eq:client-non-partial-2.5} into \eqref{eq:non-covex-decouple} in Proposition \ref{Prop: decouple-one-round},
we have
\begin{align}
&\mathbb{E}\left[f\left(\boldsymbol{x}^{t+1}\right)\right]-f\left(\boldsymbol{x}^{t}\right)\nonumber\\
& \leq\!\mathbb{E}\!\left[\!\left\langle\!\nabla_{\boldsymbol{x}_c} f\left(\boldsymbol{x}^{t}\right)\!,\!\boldsymbol{x}^{t+1}_c\!-\!\boldsymbol{x}^{t}_c\right\rangle\!\right]\!+\!\frac{S}{2}\!\mathbb{E}\!\left[\!\left\|\boldsymbol{x}^{t+1}_c\!-\!\boldsymbol{x}^{t}_c\right\|^2\!\right]\!+\!\mathbb{E}\!\left[\!\left\langle\nabla_{\boldsymbol{x}_s}\!f\left(\!\boldsymbol{x}^{t}\!\right), \boldsymbol{x}^{t+1}_s\!-\!\boldsymbol{x}^{t}_s\right\rangle\!\right]\!+\!\frac{S}{2}\mathbb{E}\left[\!\left\|\boldsymbol{x}^{t+1}_s\!-\!\boldsymbol{x}^{t}_s\right\|^2\!\right]\nonumber\\
 &\leq \left(-\frac{\eta^t\tau}{2} +8N\left(\eta^t\right)^3\tau^3 S^2\sum_{n=1}^N\frac{1}{q_n}\right)\left\Vert\nabla_{\boldsymbol{x}} f\left(\boldsymbol{x}^{t}\right)\right\Vert^2 \nonumber\\
 &+8N\eta^t S^2\tau^3\sum_{n=1}^N \frac{a_n^2+1}{q_n}\left(\eta^t\right)^2\left( \sigma_n^2 +\epsilon^2
         \right)
\nonumber\\
&+\frac{SN\left(\eta^t\right)^2 \tau}{2}\sum_{n=1}^N\frac{1}{q_n}\left(\tau+8\tau^3\left(\eta^t\right)^2S^2\right)2\left\Vert \nabla_{\boldsymbol{x}}f\left(\boldsymbol{x}^{t}\right)\right\Vert^2  \nonumber\\
&+\frac{SN\left(\eta^t\right)^2\tau}{2}\sum_{n=1}^N\frac{a_n^2+1}{q_n}\left(\tau+8\tau^3\left(\eta^t\right)^2S^2\right)\left(2\epsilon^2+\sigma_n^2\right) \nonumber  \\
& \leq\left( -\frac{\eta^t\tau}{2} +8N\left(\eta^t\right)^3 S^2\tau^3\sum_{n=1}^N\frac{1}{q_n}+SN\left(\eta^t\right)^2 \tau\sum_{n=1}^N\frac{1}{q_n}\left(\tau+8\tau^3\left(\eta^t\right)^2S^2\right) \right)\left\Vert\nabla_{\boldsymbol{x}}f\left(\boldsymbol{x}^{t}\right)\right\Vert^2 \nonumber\\
   &+8N\eta^t S^2\tau^3\sum_{n=1}^N\frac{a_n^2+1}{q_n} \left(\eta^t\right)^2\left( \sigma_n^2 +\epsilon^2
         \right)
\nonumber\\
&+\frac{1}{2}SN\left(\eta^t\right)^2 \tau \left(\tau+8\tau^3\left(\eta^t\right)^2S^2\right)\sum_{n=1}^N\frac{a_n^2+1}{q_n}\left(2\epsilon^2+\sigma_n^2\right)\nonumber\\
   & \leq\left( -\frac{\eta^t\tau}{2} +SN\left(\eta^t\right)^2 \tau^2 \sum_{n=1}^N\frac{1}{q_n}+8N\left(\eta^t\right)^3 S^2\tau^3\sum_{n=1}^N\frac{1}{q_n} +8S^3N \left(\eta^t\right)^4 \tau^4 \sum_{n=1}^N\frac{1}{q_n}\right)\left\Vert\nabla_{\boldsymbol{x}}f\left(\boldsymbol{x}^{t}\right)\right\Vert^2 \nonumber\\
   &+8N\left(\eta^t\right)^3 S^2\tau^3\sum_{n=1}^N\frac{a_n^2+1}{q_n}\sigma_n^2  +8N\left(\eta^t\right)^3 S^2\tau^3\epsilon^2\sum_{n=1}^N\frac{a_n^2+1}{q_n} \nonumber\\
   &+SN\left(\eta^t\right)^2 \tau^2 \epsilon^2\sum_{n=1}^N\frac{a_n^2+1}{q_n}
   +\frac{1}{2}SN\left(\eta^t\right)^2 \tau^2\sum_{n=1}^N\frac{a_n^2+1}{q_n} \sigma_n^2\nonumber\\
   &+8NS^3 \left(\eta^t\right)^4 \tau^4 \epsilon^2\sum_{n=1}^N\frac{a_n^2+1}{q_n}
   +4NS^3 \left(\eta^t\right)^4 \tau^4 \sum_{n=1}^N\frac{a_n^2+1}{q_n}\sigma_n^2\nonumber\\
 & \leq-\frac{\eta^t\tau}{2}\left(1  -2SN\eta^t \frac{\tau^2}{\tau} \sum_{n=1}^N\frac{1}{q_n}\left(1+8S\eta^t\tau+8S^2\left(\eta^t\right)^2\tau^2\right)\right)\left\Vert\nabla_{\boldsymbol{x}}f\left(\boldsymbol{x}^{t}\right)\right\Vert^2 \nonumber\\
   &+\left(\frac{1}{2}NS\left(\eta^t\right)^2 \tau^2 +8N\left(\eta^t\right)^3 S^2\tau^3+4NS^3 \left(\eta^t\right)^4 \tau^4 \right)\sum_{n=1}^N\frac{a_n^2+1}{q_n}\sigma_n^2\nonumber\\
   &+\left(NS\left(\eta^t\right)^2 \tau^2+8N\left(\eta^t\right)^3 S^2\tau^3+8NSL^3 \left(\eta^t\right)^4 \tau^4\right)\sum_{n=1}^N\frac{a_n^2+1}{q_n}\epsilon^2  \nonumber\\
    & \leq-\frac{\eta^t\tau}{2}\left(1  -2NS\eta^t \frac{\tau^2}{\tau}\sum_{n=1}^N\frac{1}{q_n}\left(1+\frac{1}{2}+\frac{1}{32}\right)\right)\left\Vert\nabla_{\boldsymbol{x}}f\left(\boldsymbol{x}^{t}\right)\right\Vert^2 \nonumber\\
   &+NS\left(\eta^t\right)^2 \tau^2\left(\frac{1}{2} +\frac{1}{2}+\frac{1}{64}\right)\sum_{n=1}^N\frac{a_n^2+1}{q_n}\sigma_n^2  +2SN\left(\eta^t\right)^2 \tau^2\left(\frac{1}{2}+\frac{1}{4}+\frac{1}{64}\right)\sum_{n=1}^N\frac{a_n^2+1}{q_n}\epsilon^2  \nonumber\\
 & \leq-\frac{\eta^t\tau}{2}\left(1  -4N^2S\eta^t \frac{\tau^2}{\tau}\sum_{n=1}^N\frac{1}{q_n}\right)\left\Vert\nabla_{\boldsymbol{x}}f\left(\boldsymbol{x}^{t}\right)\right\Vert^2 +2NS\left(\eta^t\right)^2\tau^2 \sum_{n=1}^N\frac{a_n^2+1}{q_n}\left(\sigma_n^2+\epsilon^2\right)\nonumber\\
    & \leq-\frac{\eta^t\tau}{4}\left\Vert\nabla_{\boldsymbol{x}}f\left(\boldsymbol{x}^{t}\right)\right\Vert^2 +2NS\left(\eta^t\right)^2  \tau^2  \sum_{n=1}^N\frac{a_n^2+1}{q_n}\left(\sigma_n^2+\epsilon^2\right),
\end{align}
where we first let $\eta^t \leq \frac{1}{16S\tau}$ and then let $\eta^t \leq \frac{1}{8SN^2\tau\sum_{n=1}^N\frac{1}{q_n}}$. We have applied $\sum_{n=1}^Na_n^2\leq N$.

Rearranging the above we have
\begin{align}
    &\eta^t\left\Vert\nabla_{\boldsymbol{x}}f\left(\boldsymbol{x}^{t}\right)\right\Vert^2\leq \frac{4}{\tau}\left(f\left(\boldsymbol{x}^{t}\right)-\mathbb{E} \left[f\left(\boldsymbol{x}^{t+1}_s\right)\right]\right)+8NS\left(\eta^t\right)^2\tau  \sum_{n=1}^N \frac{a_n^2+1 }{q_n}\left(\sigma_n^2+\epsilon^2\right) .
\end{align}
Taking expectation and averaging over all $t$, we have
\begin{align}
    &\frac{1}{T}\sum_{t=0}^{T-1}\eta^t\mathbb{E}\left[\left\Vert\nabla_{\boldsymbol{x}}f\left(\boldsymbol{x}^{t}\right)\right\Vert^2\right]\leq \frac{4}{T\tau}\left(f\left(\boldsymbol{x}_{0}\right)-f^\ast\right)+\frac{8NS\tau}{T} \sum_{n=1}^N\frac{a_n^2+1}{q_n}\left(\sigma_n^2+\epsilon^2\right)\sum_{t=0}^{T-1}\left(\eta^t\right)^2.
\end{align}

\newpage 
\section{Comparative Analysis}\label{appendix: comparison}
\subsection{Main technical results}
We conclude the convergence results in our paper in Table \ref{table: all-bounds}.
\begin{table}[h]
\centering
\caption{Performance upper bounds for different objectives (let $Q:=\sum_{n=1}^N\frac{1}{q_n}$).}\label{table: all-bounds}
\begin{tabular}{llll}
\hline
Scenario & Case &    Method    &  Convergence result  \\
\hline
\multirow{6}{*}{Full participation}
&\multirow{2}{*}{Strongly convex}
&SFL-V1&$\frac{S}{\mu(S+\mu T)}( N(\sigma^2 + G^2)+ \mu S I^{\rm err})$\\
&&SFL-V2&$\frac{S}{\mu(S+\mu T)}(N^2 (\sigma^2+G^2)+ \mu S I^{\rm err})$\\
\cline{2-4}
&\multirow{2}{*}{General  convex}
&SFL-V1&$(\frac{N(\sigma^2 + G^2)}{T})^{\frac{1}{2}}+(\frac{N(\sigma^2 + G^2)}{T})^{\frac{1}{3}}+\frac{S}{T} I^{\rm err} $\\
&&SFL-V2&$(\frac{N^2(\sigma^2 + G^2)}{T})^{\frac{1}{2}}+(\frac{N^2(\sigma^2 + G^2)}{T})^{\frac{1}{3}}+\frac{S}{T} I^{\rm err} $\\
\cline{2-4}
&\multirow{2}{*}{Non-convex}
&SFL-V1&$\frac{NS(\sigma^2 + \epsilon^2)}{T}+\frac{\boldsymbol{F}^{err}}{T}$\\
&&SFL-V2&$\frac{N^2S(\sigma^2 + \epsilon^2)}{T}+\frac{\boldsymbol{F}^{err}}{T}$\\
\hline
\multirow{6}{*}{Partial participation}
&\multirow{2}{*}{Strongly convex}
&SFL-V1&$\frac{S}{\mu(S+\mu T)}( N(\sigma^2 + G^2(1+Q))+ \mu S I^{\rm err})$\\
&&SFL-V2&$\frac{S}{\mu(S+\mu T)}(N^2 (\sigma^2+G^2(1+Q))+ \mu S I^{\rm err})$\\
\cline{2-4}
&\multirow{2}{*}{General  convex}
&SFL-V1&$(\frac{N(\sigma^2 + G^2(1+Q))}{T})^{\frac{1}{2}}+(\frac{N(\sigma^2 + G^2)}{T})^{\frac{1}{3}}+\frac{S}{T} I^{\rm err} $\\
&&SFL-V2&$(\frac{N^2(\sigma^2 + G^2(1+Q))}{T})^{\frac{1}{2}}+(\frac{N^2(\sigma^2 + G^2)}{T})^{\frac{1}{3}}+\frac{S}{T} I^{\rm err} $\\
\cline{2-4}
&\multirow{2}{*}{Non-convex}
&SFL-V1&$\frac{NS(\sigma^2 + \epsilon^2)Q}{T}+\frac{\boldsymbol{F}^{err}}{T}$\\
&&SFL-V2&$\frac{N^2S(\sigma^2 + \epsilon^2)Q}{T}+\frac{\boldsymbol{F}^{err}}{T}$\\
\hline
\end{tabular}
\end{table}

\subsection{Comparison of Bounds}\label{appendix: bound comparison}
We compare our derived bounds for SFL to other distributed approaches. For simplicity, we let $a_n=1/N$ in (\ref{global-loss}) and $\sigma_n=\sigma$ for all $n$ in (\ref{bounded-variance}). The result are summarized in Table \ref{table: bound-comparison}. Since different convergence theories make slightly different assumptions, we clarify them below. 

 In \cite{woodworth2020minibatch}, $\sigma_*^2$ is the variance of the stochastic gradient at the optimum: 
$\mathbb{E}_{\zeta_n\sim \mathcal{D}_n}\left[\left\Vert \boldsymbol{g}_n\left(\boldsymbol{x}^*,\zeta_n\right) - \nabla F_n\left(\boldsymbol{x}^*\right) \right\Vert^2 \right] \leq \sigma_*^2.$
In \cite{li2020convergence},  $\Gamma = f^* - \sum_{n=1}^N F_n^* /N $  characterizes the client heterogeneity. 
 In \cite{li2023convergence}, $\epsilon_*^2$ characterizes the client heterogeneity at the optimum, similar to \cite{koloskova2020unified}, i.e., 
 $\frac{1}{N}\sum_{n=1}^N ||\nabla F_n(\boldsymbol{x}^*)||^2 = \epsilon_*^2.$
 
 \begin{table}[h]
\centering
\caption{Performance upper bounds for strongly convex objectives with full client participation. Here, absolute constants and polylogarithmic factors are omitted. We further relax the upper bounds of SFL-V2 for an easier comparison.}\label{table: bound-comparison}
\begin{tabular}{ll}
\hline
Method & Performance upper bound \\
\hline
Mini-Batch SGD  \cite{woodworth2020minibatch}&   $\frac{\sigma_*^2}{\mu N\tau T} + \frac{S (f(\boldsymbol{x}^0) - f^*)}{\mu}\exp\left(\frac{-\mu T}{S}\right)$ \\
\hline 
FL & \\
\hspace{5mm} \cite{li2020convergence} & 
 $\frac{
 S}{\mu \tau+T}\left(\frac{\sigma^2 + S \Gamma N + N \tau^2 G^2}{\mu N} + S I^{\rm err}\right)$\\
\hspace{5mm}\cite{karimireddy2020scaffold} & \(\frac{\sigma^2}{\mu N\tau T} + \frac{S\sigma^2}{\mu^2 \tau T^2} + \frac{S\epsilon^2}{\mu^2 T^2} + \mu I^{\rm err} \exp\left(\frac{-\mu T}{S}\right) \) \\
\hline
SL \cite{li2023convergence} & \(\frac{\sigma^2}{\mu N\tau T} + \frac{S\sigma^2}{\mu^2 N\tau T^2} + \frac{S\epsilon_*^{2}}{\mu^2 N T^2} + \mu I^{\rm err}\exp\left(\frac{-\mu T}{S}\right) \) \\
\hline 
SFL &\\
\hspace{5mm} SFL-V1 (Theorem \ref{theorem-v1-full}) & $\frac{S}{\mu(S+\mu T)}\left(  N (\sigma^2 + G^2)+ \mu S I^{\rm err}\right)$\\
\hspace{5mm} SFL-V2 (Theorem \ref{theorem-v2-full}) & 
$\frac{S}{\mu(S+\mu T)}\left(N^2 (\sigma^2+G^2)+ \mu S I^{\rm err}\right)$\\
\hline
\hline
\end{tabular}
\end{table}

The key observation is that our derived bounds match the other distributed approaches in the order of $T$ and they all achieve $O(1/T)$.

Furthermore, we will compare the convergence upper bounds of SFL to those of distributed SGD in \cite{koloskova2020unified} (with parameters $p=1$ and $\bar{\zeta}^2=0$) as follows:

\begin{table}[h]
\centering
\caption{Performance upper bounds for different objectives with full client participation.}\label{table: bound-comparison1}
\begin{tabular}{lll}
\hline
Case &    Method    &  Convergence results  \\
\hline
\multirow{3}{*}{Strongly convex}
&Distributed SGD&$\frac{\sigma^2}{\mu N T }+S \boldsymbol{I}^{err}\exp\left(-\frac{\mu T}{S}\right)$\\
&SFL-V1&$\frac{S}{\mu(S+\mu T)}( N(\sigma^2 + G^2)+ \mu S I^{\rm err})$\\
&SFL-V2&$\frac{S}{\mu(S+\mu T)}(N^2 (\sigma^2+G^2)+ \mu S I^{\rm err})$\\
\hline
\multirow{3}{*}{General  convex}&Distributed SGD&$(\frac{\sigma^2}{N T^2 }+\frac{S }{T})\boldsymbol{I}^{err}$\\
&SFL-V1&$(\frac{N(\sigma^2 + G^2)}{T})^{\frac{1}{2}}+(\frac{N(\sigma^2 + G^2)}{T})^{\frac{1}{3}}+\frac{S}{T} I^{\rm err} $\\
&SFL-V2&$(\frac{N^2(\sigma^2 + G^2)}{T})^{\frac{1}{2}}+(\frac{N^2(\sigma^2 + G^2)}{T})^{\frac{1}{3}}+\frac{S}{T} I^{\rm err} $\\
\hline
\multirow{3}{*}{Non-convex}&Distributed SGD&$(\frac{\sigma^2}{N T^2 }+\frac{1}{T})S \boldsymbol{F}^{err}$\\
&SFL-V1&$\frac{NS(\sigma^2 + \epsilon^2)}{T}+\frac{\boldsymbol{F}^{err}}{T}$\\
&SFL-V2&$\frac{N^2S(\sigma^2 + \epsilon^2)}{T}+\frac{\boldsymbol{F}^{err}}{T}$\\
\hline
\end{tabular}
\end{table}

In the aforementioned table, we denote $\sigma_n=\sigma$, define $I^{\rm err}\triangleq ||\boldsymbol{x}^{0}-\boldsymbol{x}^{\ast}||^2$, and represent $\boldsymbol{F}^{err}\triangleq f(\boldsymbol{x}^{0})-f^\ast$. The absolute constants and polylogarithmic factors are omitted for brevity. Our SFL algorithms show the same convergence rate as the distributed SGD.

\subsection{Comparison of Communication and Computation Overheads}\label{appendix: latency}
There have been are some papers discussing the overhead of SFL, e.g., \cite{thapa2022splitfed,han2022splitgp}. We mainly use the analysis from \cite{thapa2022splitfed}.

We start with the definitions. Let $K$ represent the total number of clients involved, $D$ denote the aggregate size of the data, and $v$ indicate the size of the smashed layer. The rate of communication is given by $R$, while $T_{\rm fb}$ signifies the duration required for a complete forward and backward propagation cycle on the entire model for a dataset of size $D$, applicable across various architectures. The time needed to aggregate the full model is expressed as $T_{\text{fedavg}}$. The full model's size is denoted by $|W|$, and $r$ reflects the proportion of the full model’s size that is accessible to a client in SFL, specifically, $|W_C| = r|W|$. The factor $2r|W|$ in the communication per client arises from the necessity for clients to download and upload their model updates before and after the training process. These findings are encapsulated in Table \ref{cost-analysis}. It is observed that as $K$ escalates, the cumulative cost of training time tends to rise following the sequence: SFL-V2 being less than SFL-V1.

\begin{table}[h]
\centering
\caption{Communication and computation comparison between FL, SL, and SFL.}
\label{cost-analysis}
\begin{tabular}{cccc}
\hline
Method & Communication per client & Total communication & Total model training time \\
\hline
FL & $2|W|$ & $2K|W|$ & $T_{\rm fb} + \frac{2|W|}{R} + T_{\text{fedavg}}$ \\
SL & $(\frac{2D}{K})v + 2r|W|$ & $2Dv + 2r K|W|$ & $T_{\rm fb}+\frac{2Dv}{R} + \frac{2r|W|K}{R}$ \\
SFL-V1 & $(\frac{2D}{K})v + 2r|W|$ & $2Dv + 2r K|W|$ & $T_{\rm fb} + \frac{2Dv}{RK}  + \frac{2r|W|}{R} + T_{\text{fedavg}}$ \\
SFL-V2 & $(\frac{2D}{K})v + 2r|W|$ & $2Dv + 2r K|W|$ & $T_{\rm fb} + \frac{2Dv}{RK}  + \frac{2r|W|}{R} + \frac{T_{\text{fedavg}}}{2}$ \\
\hline
\end{tabular}
\end{table}

\newpage 
\section{Additional Experiments}\label{appendix: experiments}

\subsection{More comparing methods}
We compare the SFL methods with the benchmarks, i.e., FedProx \cite{li2020federated} and FedOpt \cite{reddi2020adaptive}. We have used the same hyperparameters, and trained the models on CIFAR-10. The results are provided in Figure \ref{fig:more_comparision}. From the figures, we observe that SFL-V2 continues to be the best-performing algorithm. This is consistent with our observations in the main paper.
\begin{figure}[t]
	\centering
        \subfloat[$\beta=0.1, N=10$.]{\includegraphics[width=1.5in]{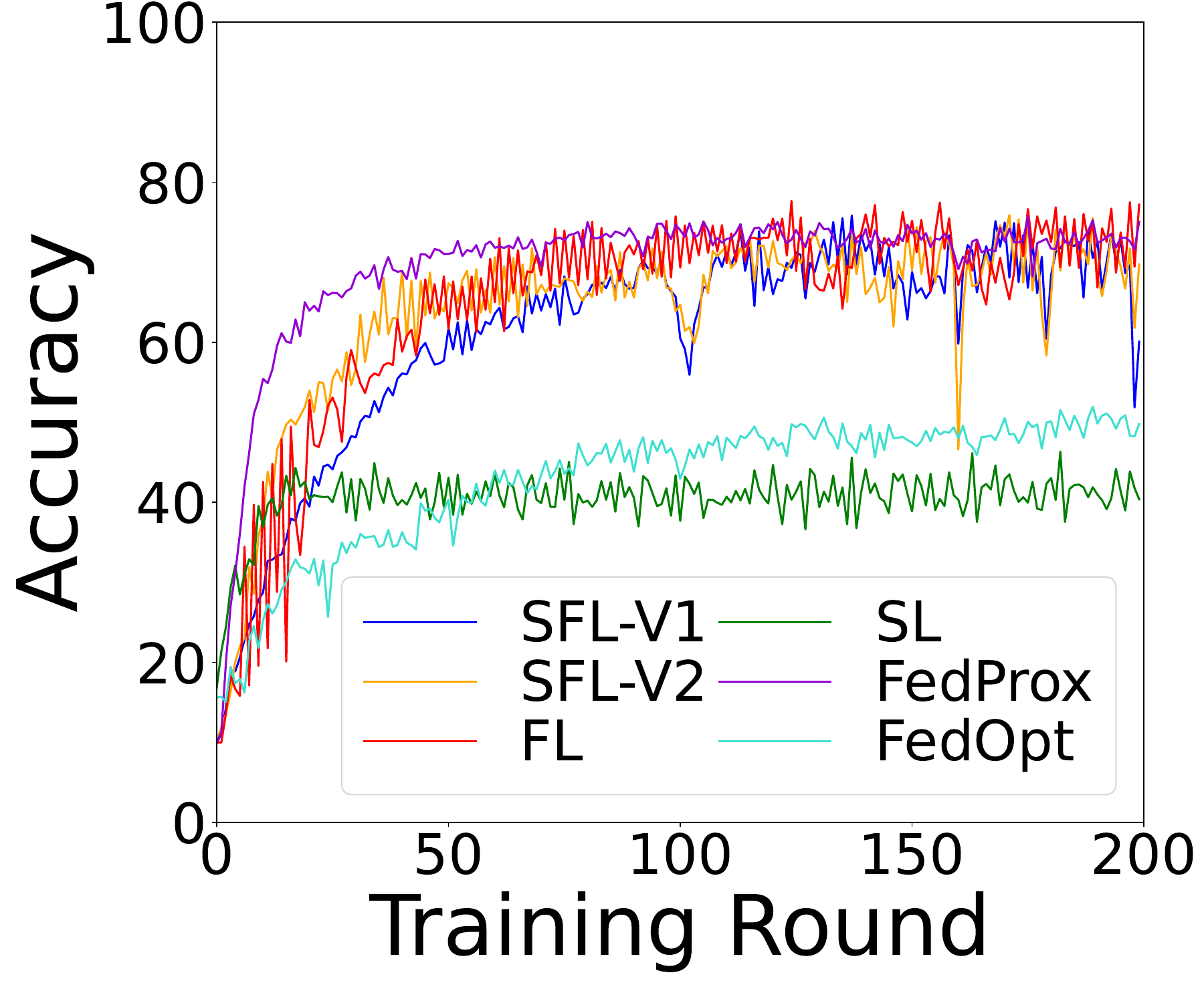}}
	\subfloat[$\beta=0.1, N=100$.]{\includegraphics[width=1.5in]{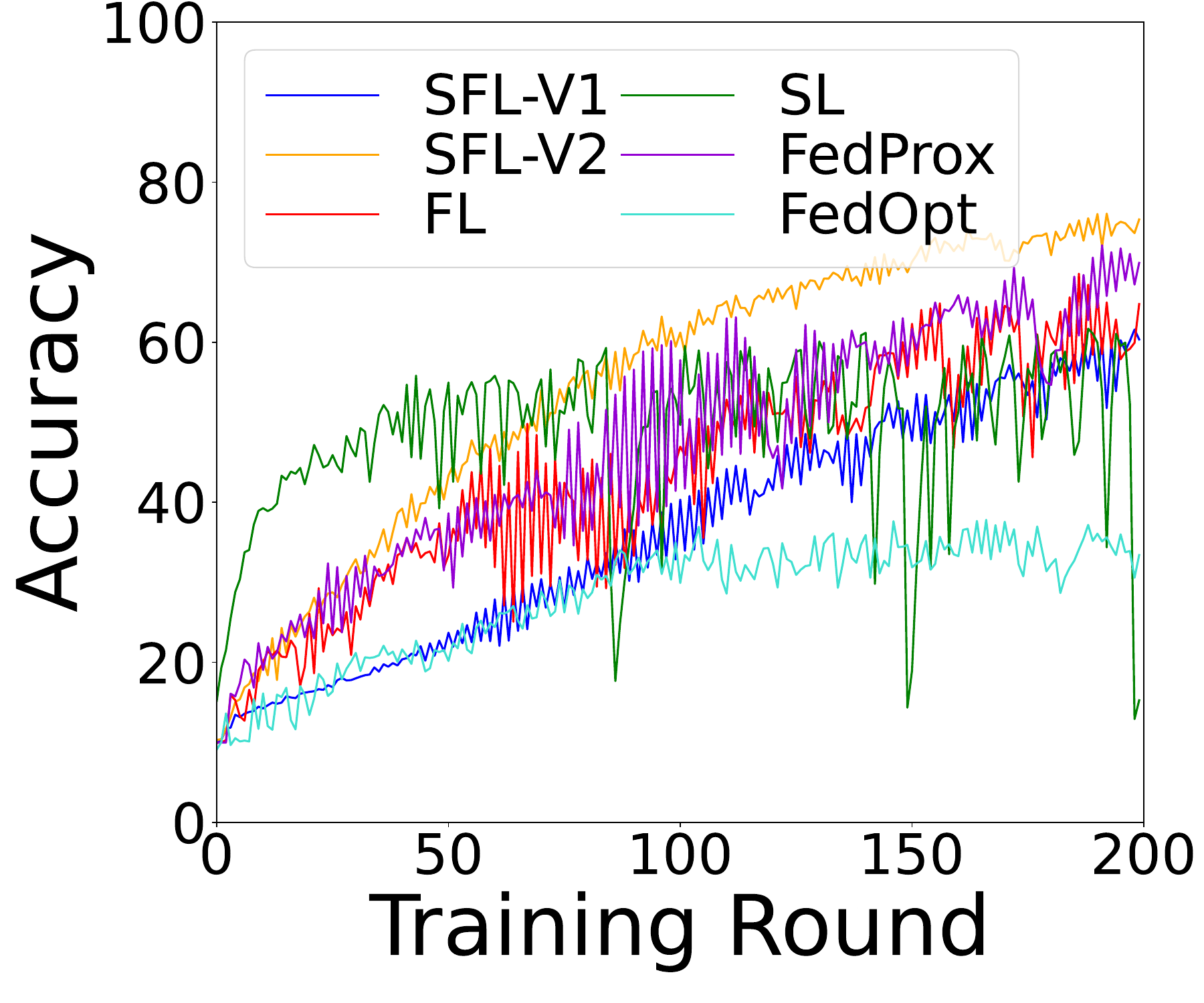}}
	\caption{Performance comparison on CIFAR-10.} \label{fig:more_comparision}
\end{figure}

\subsection{Impact of local iteration}
\begin{figure}[t]
	\centering
        \subfloat[SFL-V1 on CIFAR-10.]{\includegraphics[width=1.4in]{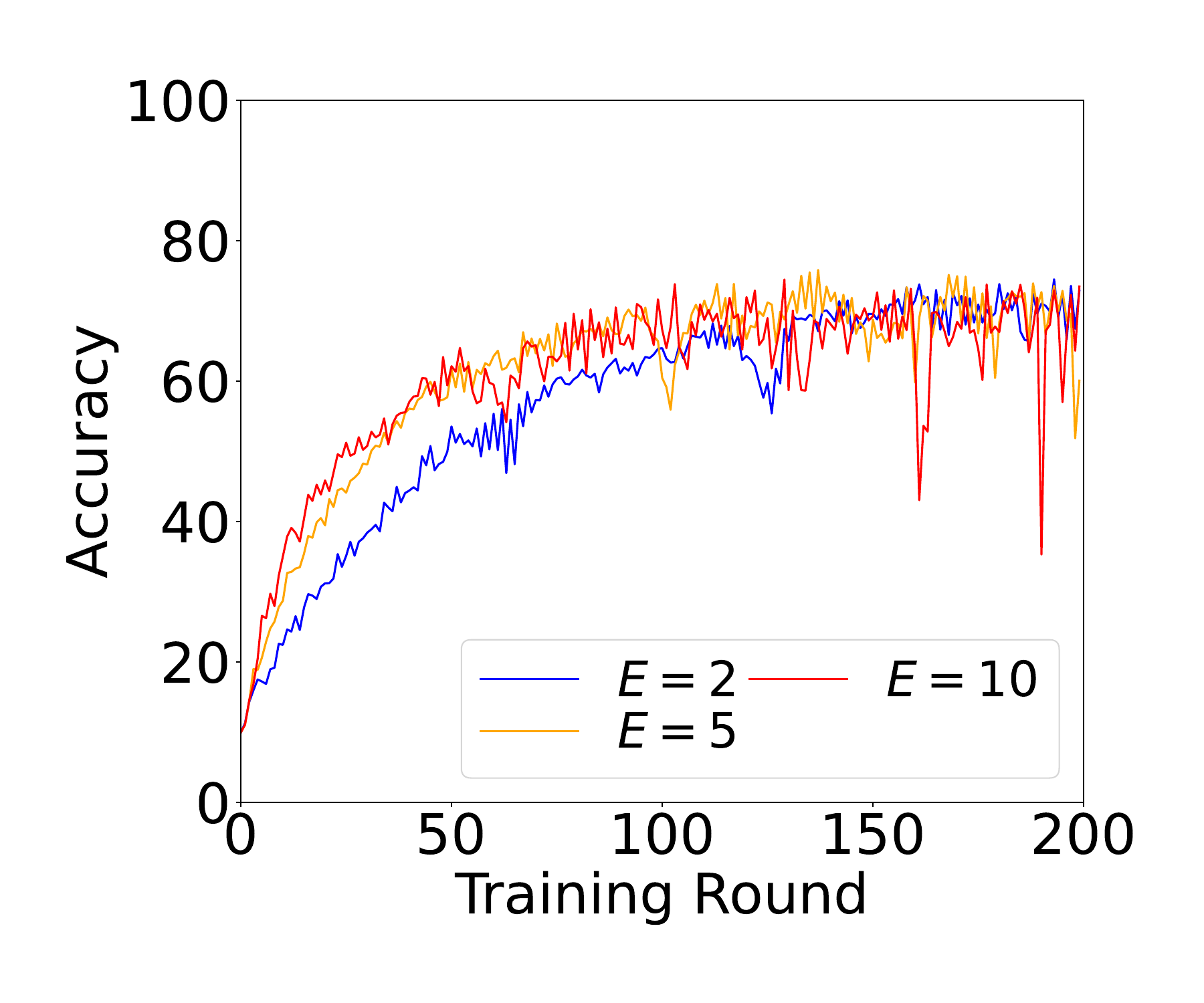}}
	\subfloat[SFL-V2 on CIFAR-10.]{\includegraphics[width=1.4in]{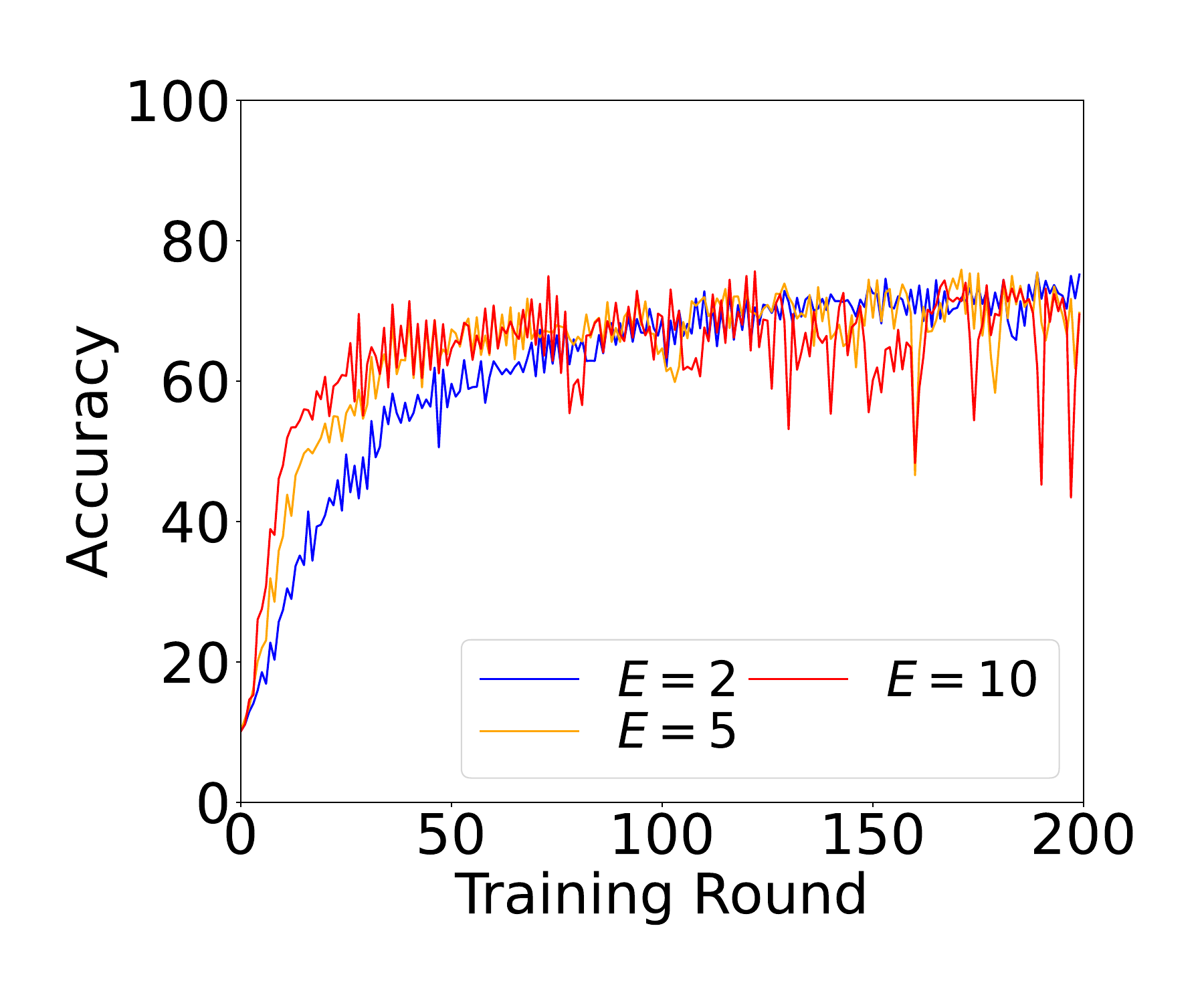}}
	\subfloat[SFL-V1 on CIFAR-100.]{\includegraphics[width=1.4in]{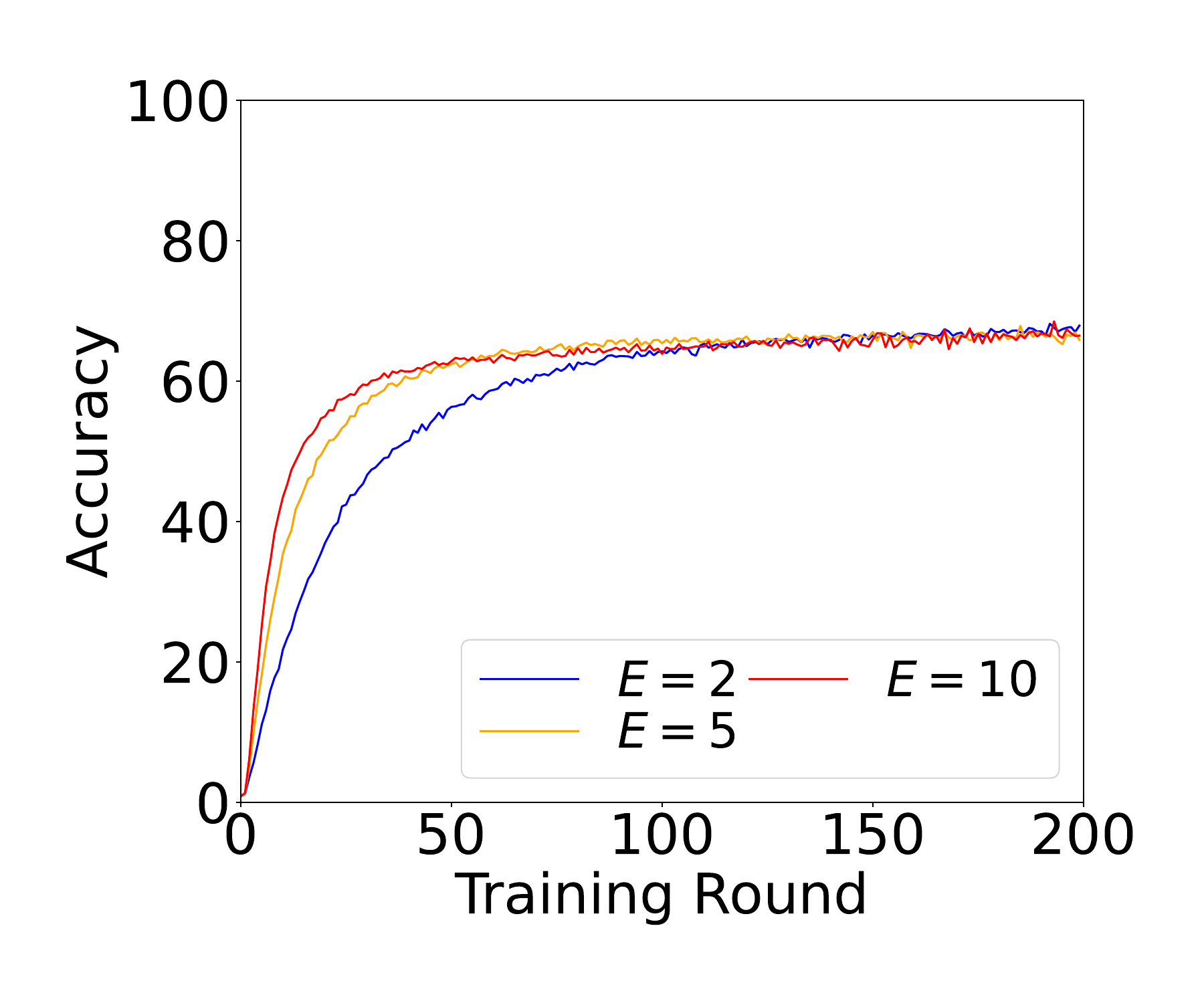}}
	\subfloat[SFL-V2 on CIFAR-100.]{\includegraphics[width=1.4in]{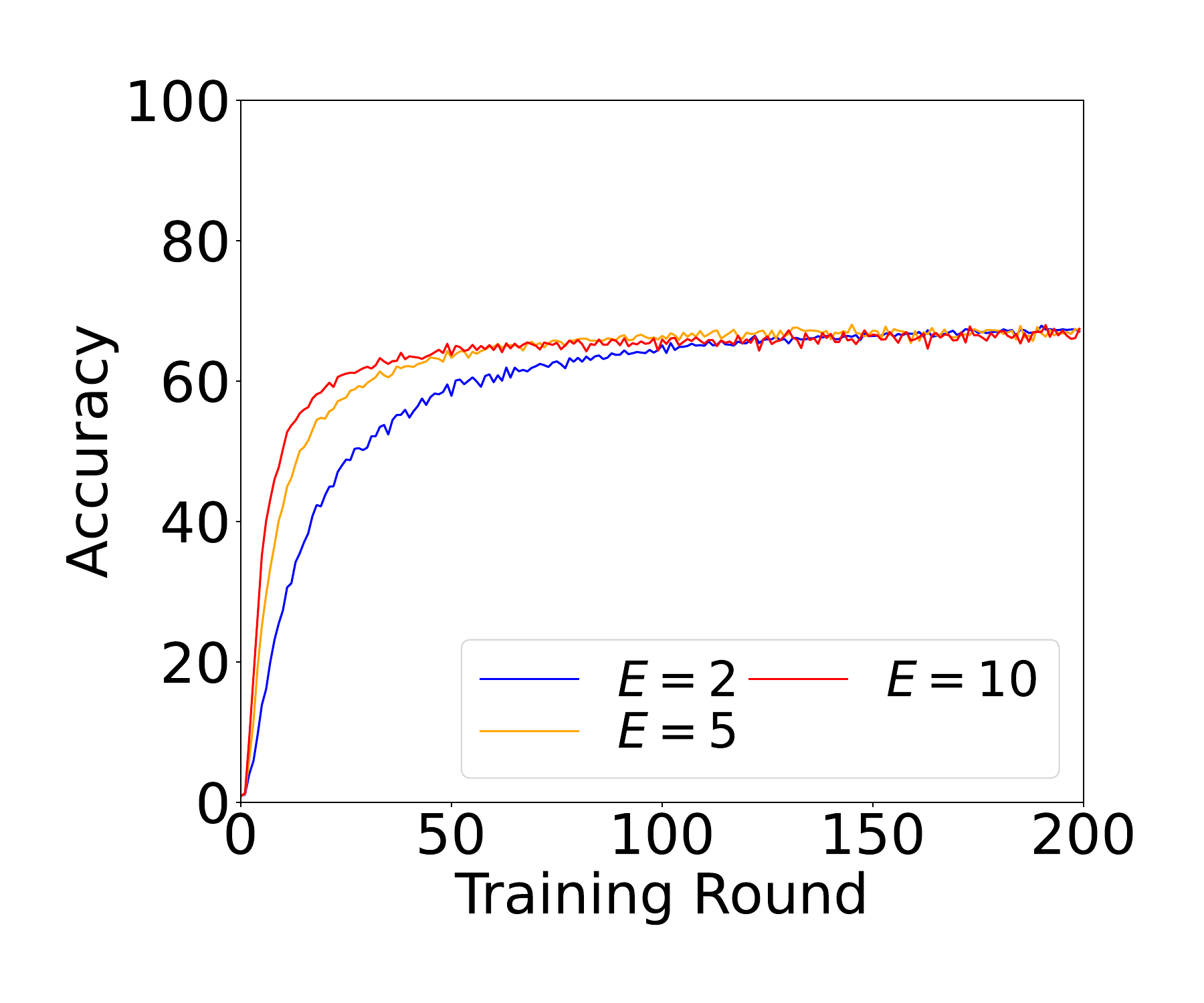}}
	\caption{Impact of local iteration on SFL performance.} 
	\label{Impact-epoch-number}
\end{figure}

We further study the impact of local epoch number $E$ on the SFL performance. The results are reported in Fig. \ref{Impact-epoch-number}. We observe that SFL generally converges faster with a larger {\color{red}$\tau$}, demonstrating the benefit of SFL in practical distributed systems.

\begin{figure}[t]
	\centering
        \subfloat[SFL-V1 on CIFAR-10.]{\includegraphics[width=1.4in]{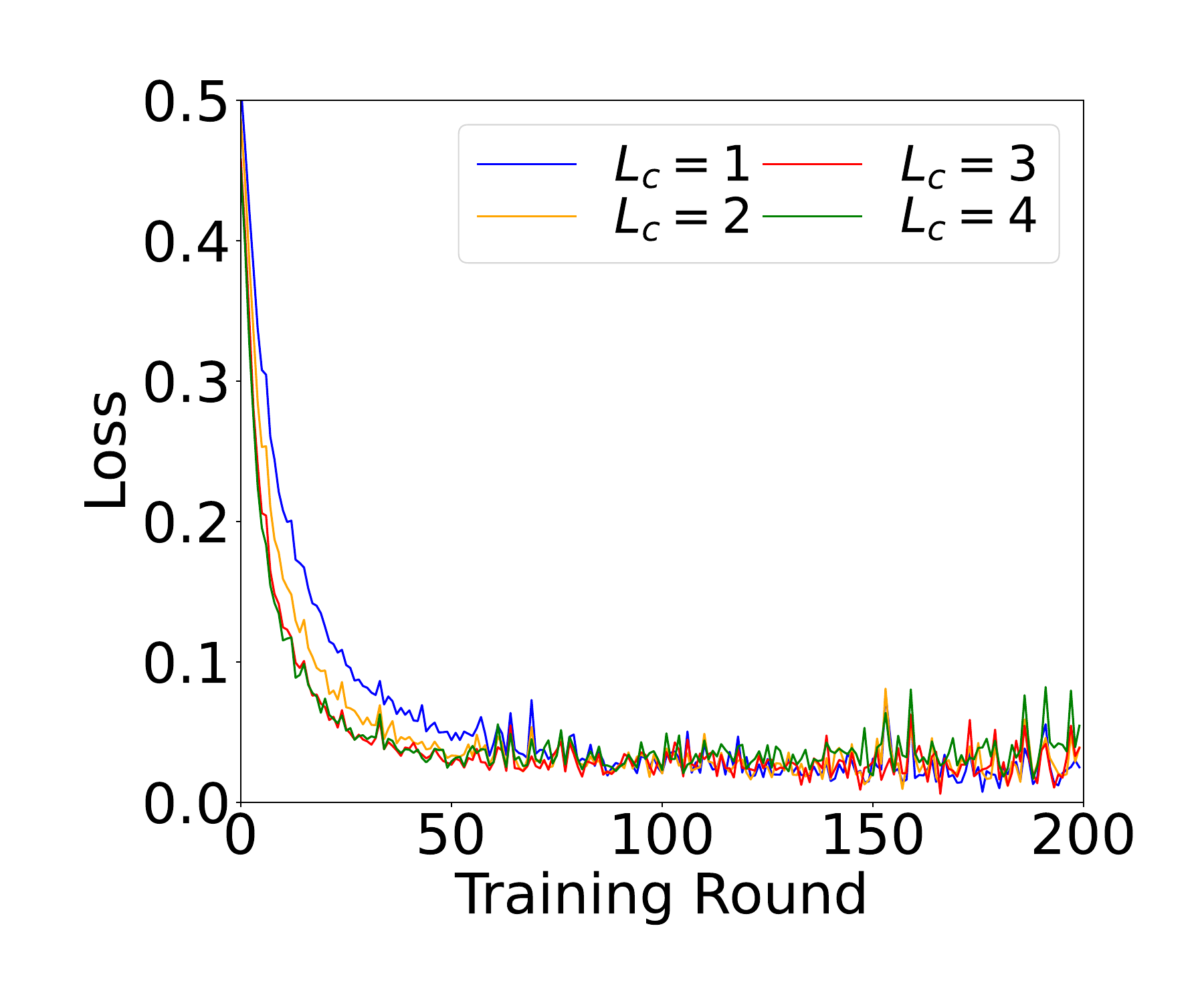}}
	\subfloat[SFL-V2 on CIFAR-10.]{\includegraphics[width=1.4in]{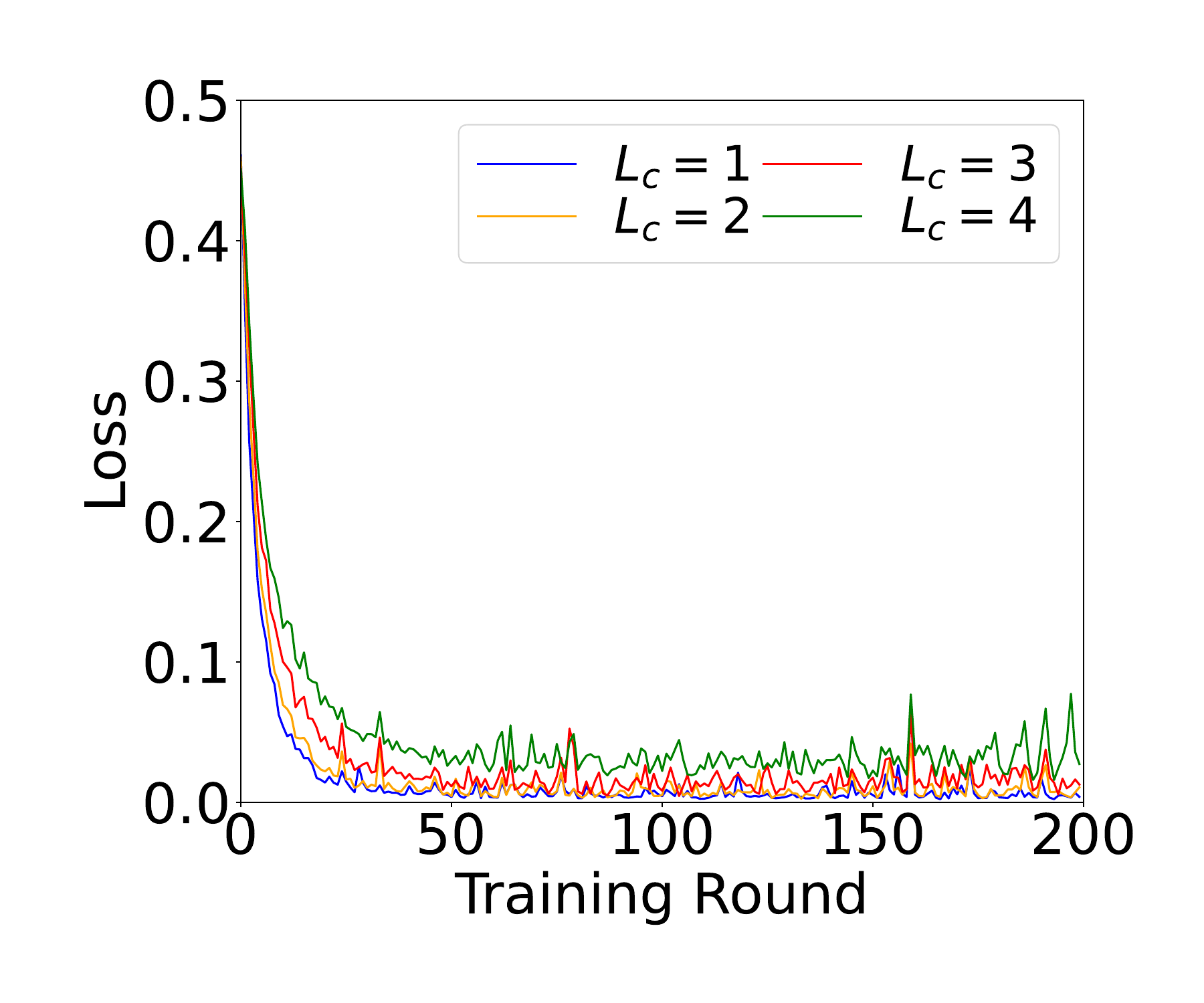}}
	\subfloat[SFL-V1 on CIFAR-100.]{\includegraphics[width=1.4in]{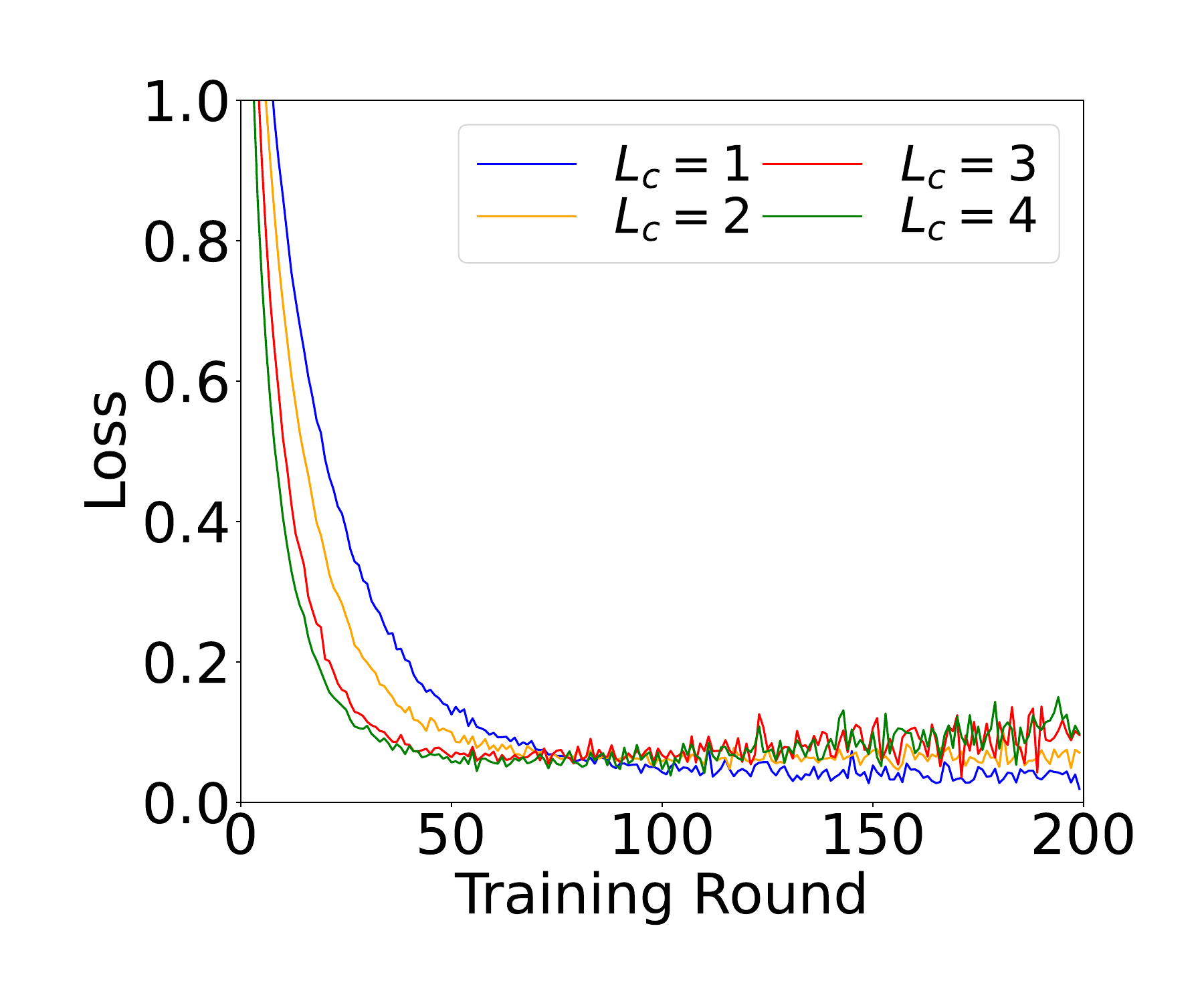}}
	\subfloat[SFL-V2 on CIFAR-100.]{\includegraphics[width=1.4in]{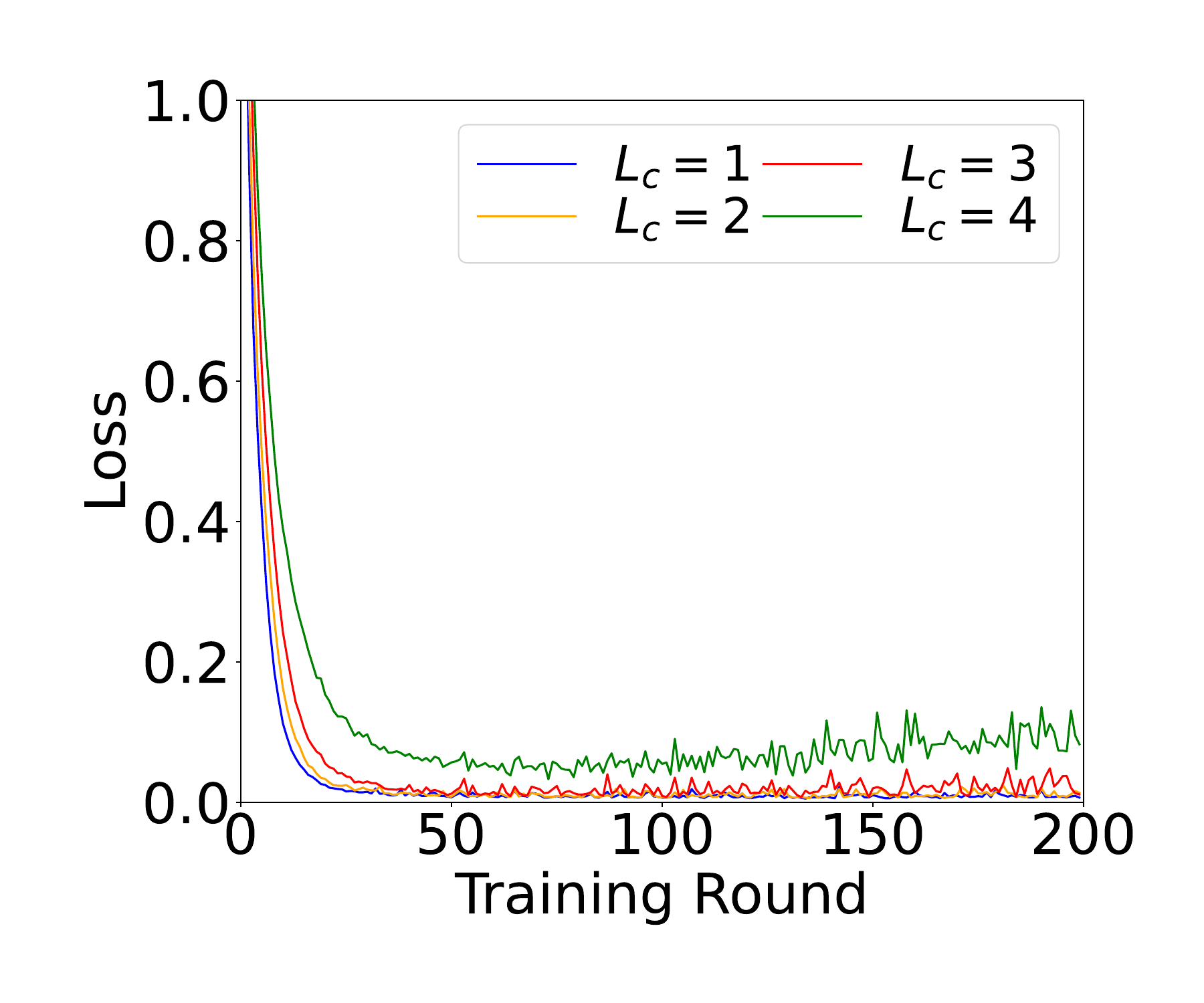}}
	\caption{Impact of the choice of cut layer on SFL training loss.}
	\label{Impact_cut-layer-loss}
\end{figure}
\begin{figure}[h]
	\centering
        \subfloat[SFL-V1 on CIFAR-10.]{\includegraphics[width=1.4in]{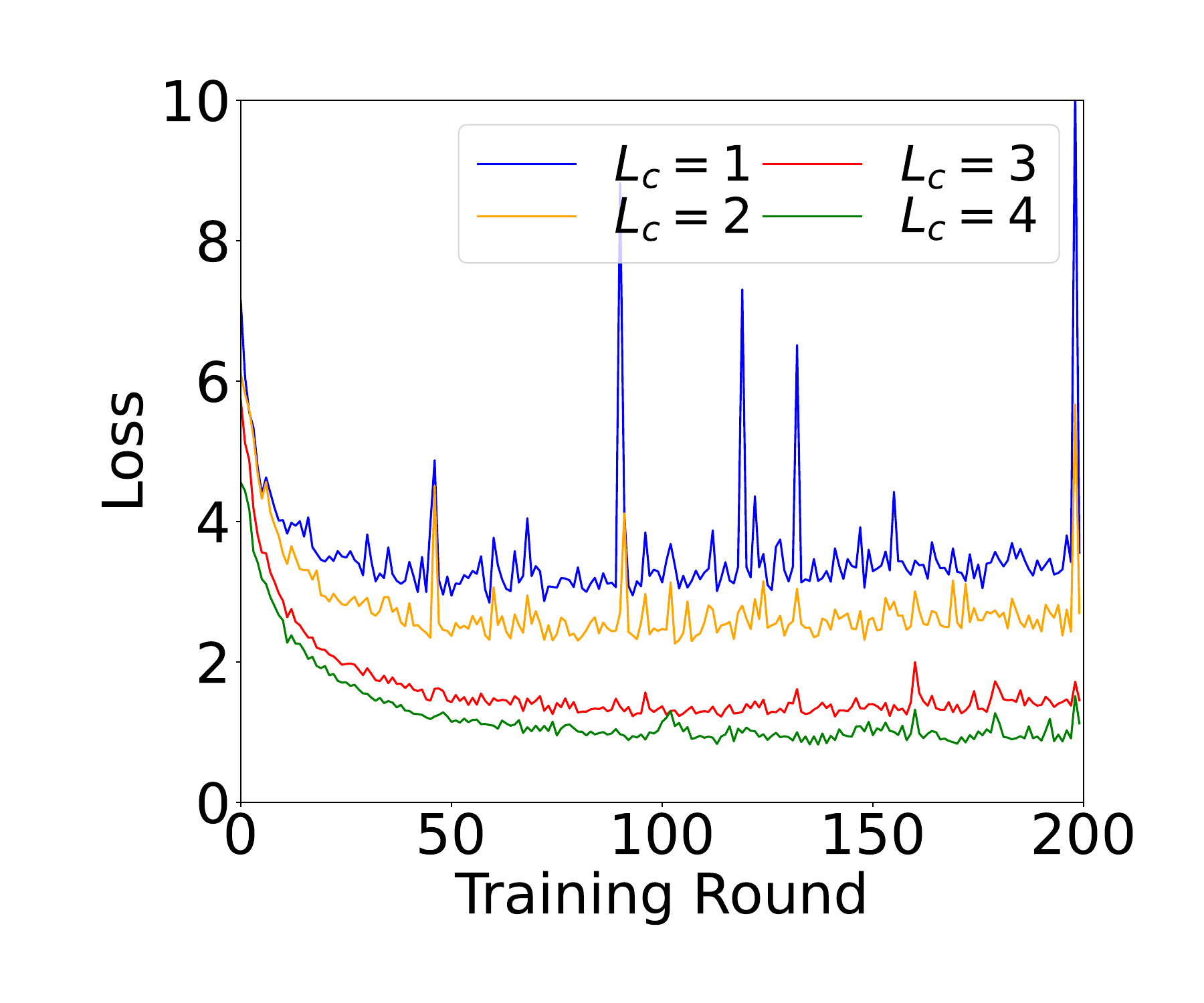}}
	\subfloat[SFL-V2 on CIFAR-10.]{\includegraphics[width=1.4in]{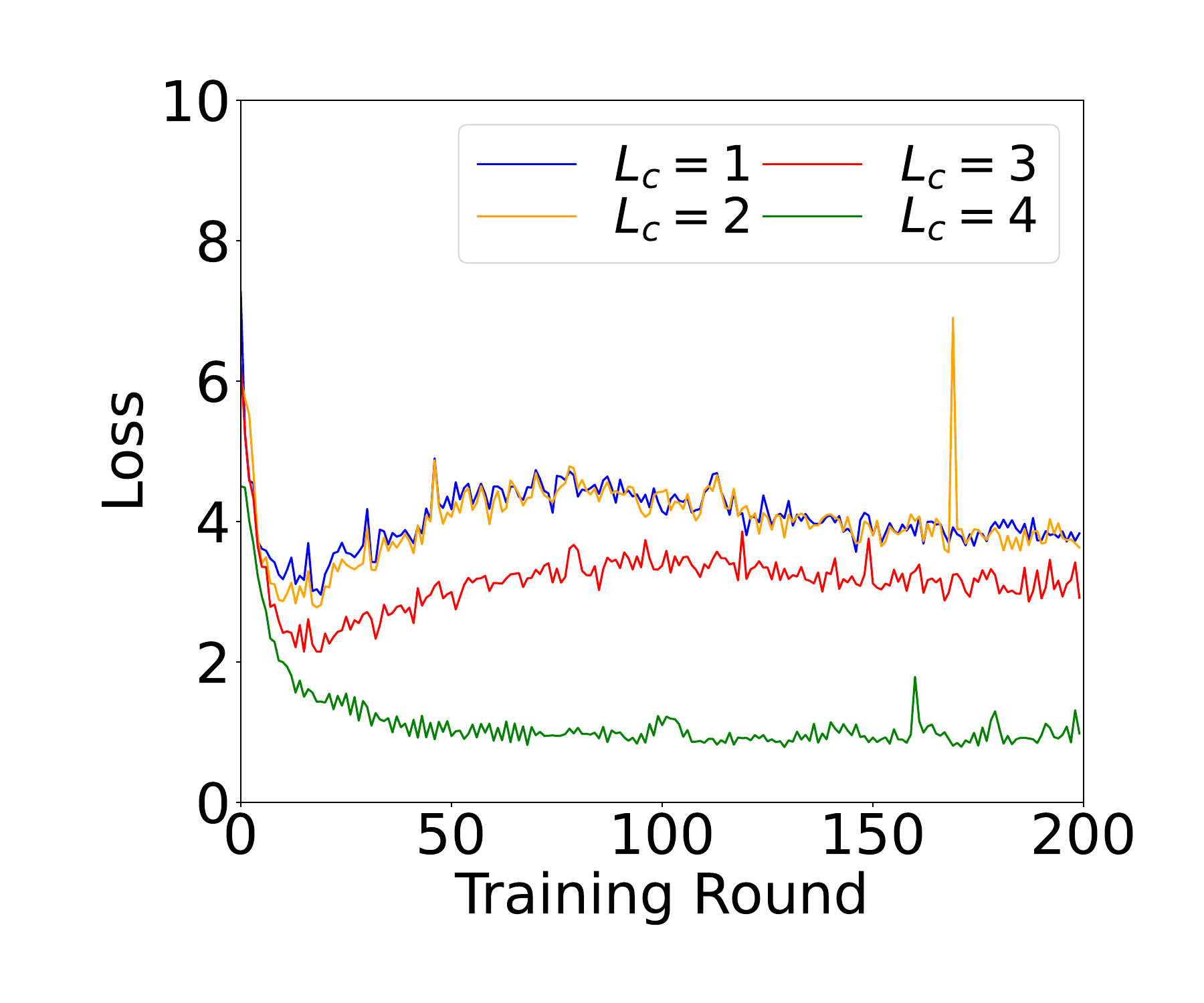}}
  \subfloat[SFL-V1 on CIFAR-100.]{\includegraphics[width=1.4in]{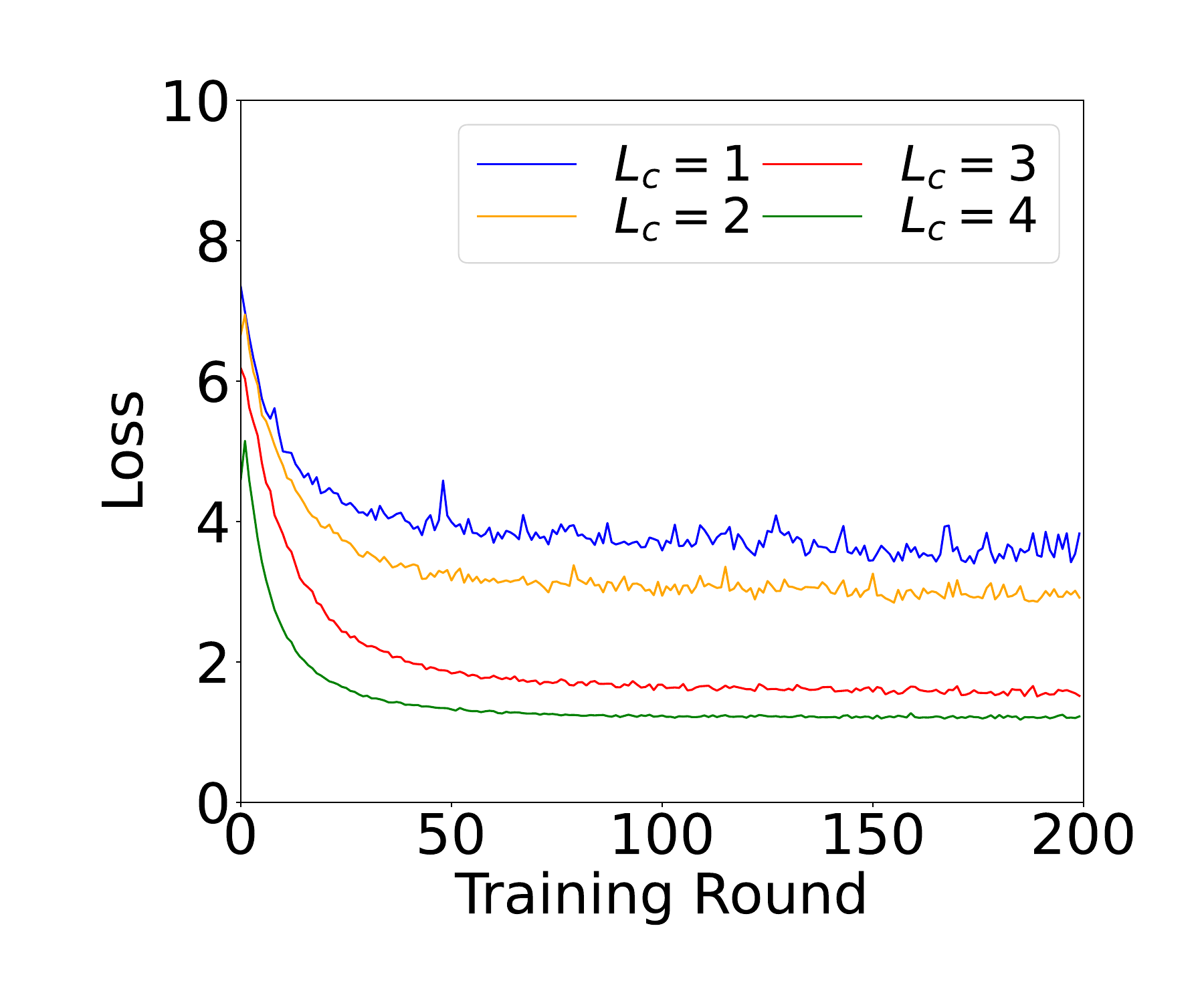}}
	\subfloat[SFL-V2 on CIFAR-100.]{\includegraphics[width=1.4in]{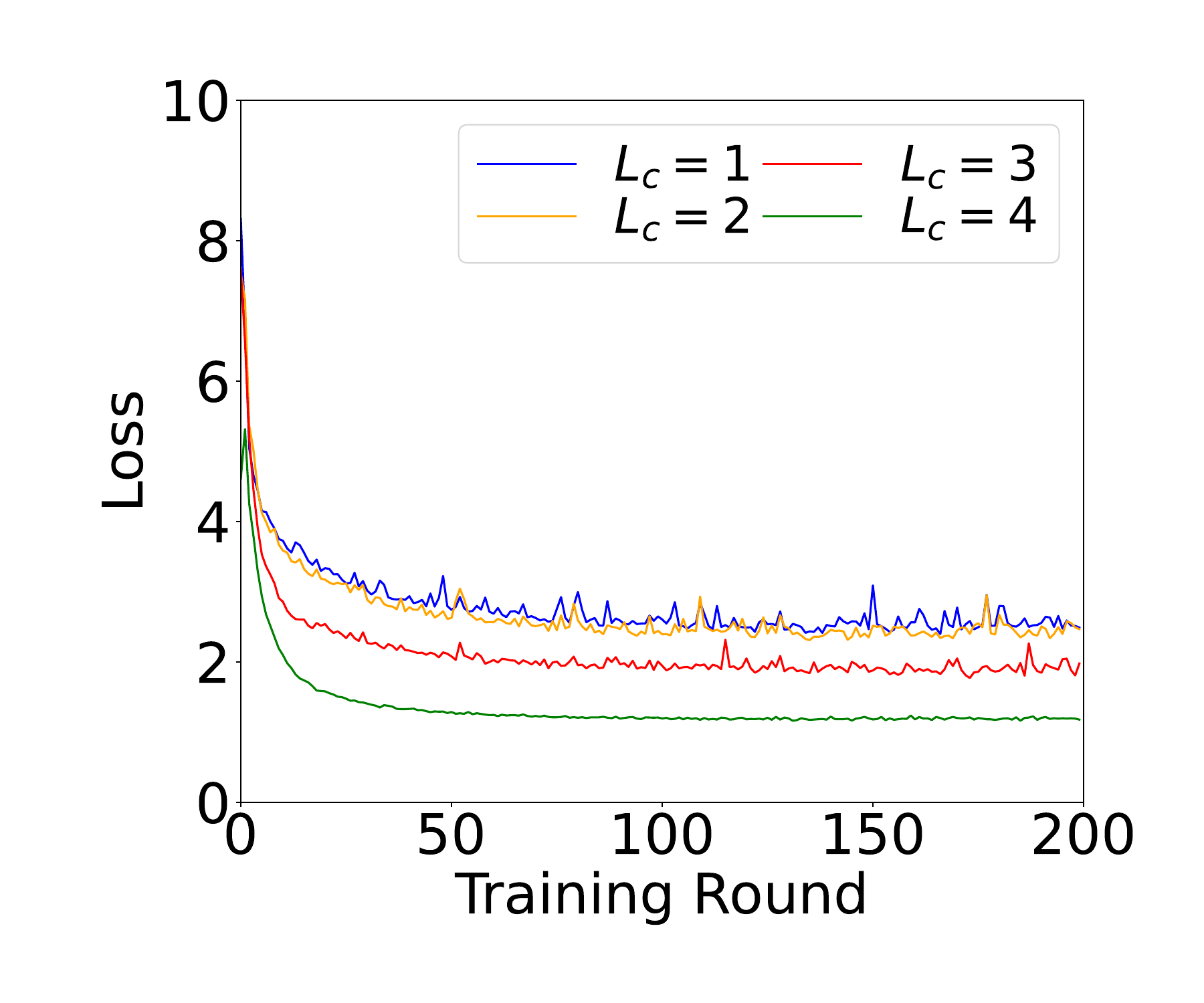}}
	\caption{Impact of the choice of cut layer on SFL test loss. }\label{Impact_cut-layer-test-loss}
\end{figure}
\subsection{Results using loss metric}
We further report the results using the loss metric. More specifically: 
\begin{itemize}
\item \textit{Impact of cut layer}: The results are reported in Figs. \ref{Impact_cut-layer-loss} and \ref{Impact_cut-layer-test-loss}. 
\item \textit{Impact of data heterogeneity}:  The results are reported in Fig. \ref{Impact-data-heteogeneity-loss}.
\item \textit{Impact of partial participation}: The results are reported in Fig. \ref{Impact-client-participation-loss}.
\end{itemize}
 In general, we see similar (but opposite) trends with the observations in the main paper. That is, a higher accuracy is associated with a smaller loss.
 These results are again consistent with our theories.

\begin{figure}[t]
	\centering
        \subfloat[SFL-V1 on CIFAR-10.]{\includegraphics[width=1.4in]{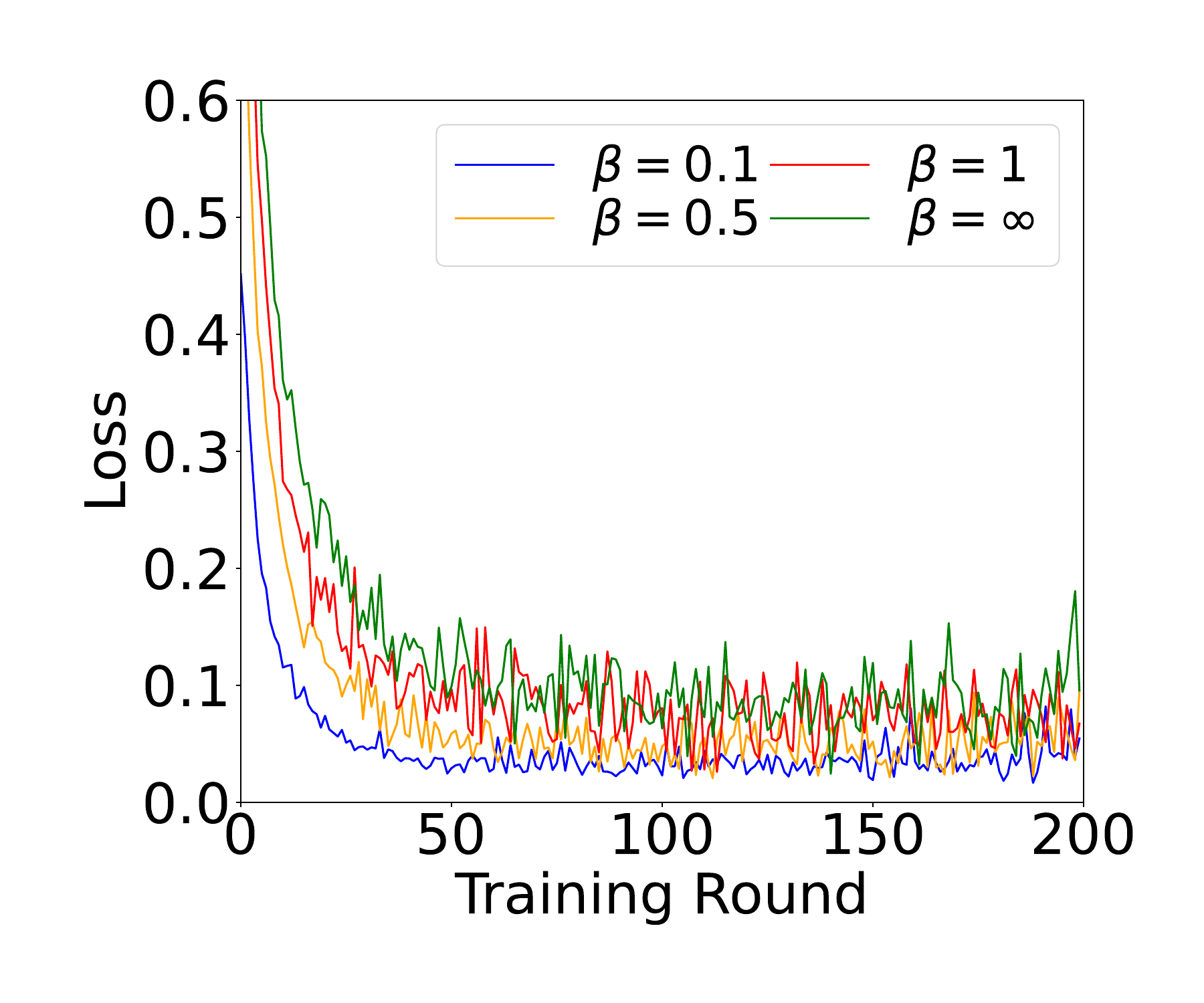}}
	\subfloat[SFL-V2 on CIFAR-10.]{\includegraphics[width=1.4in]{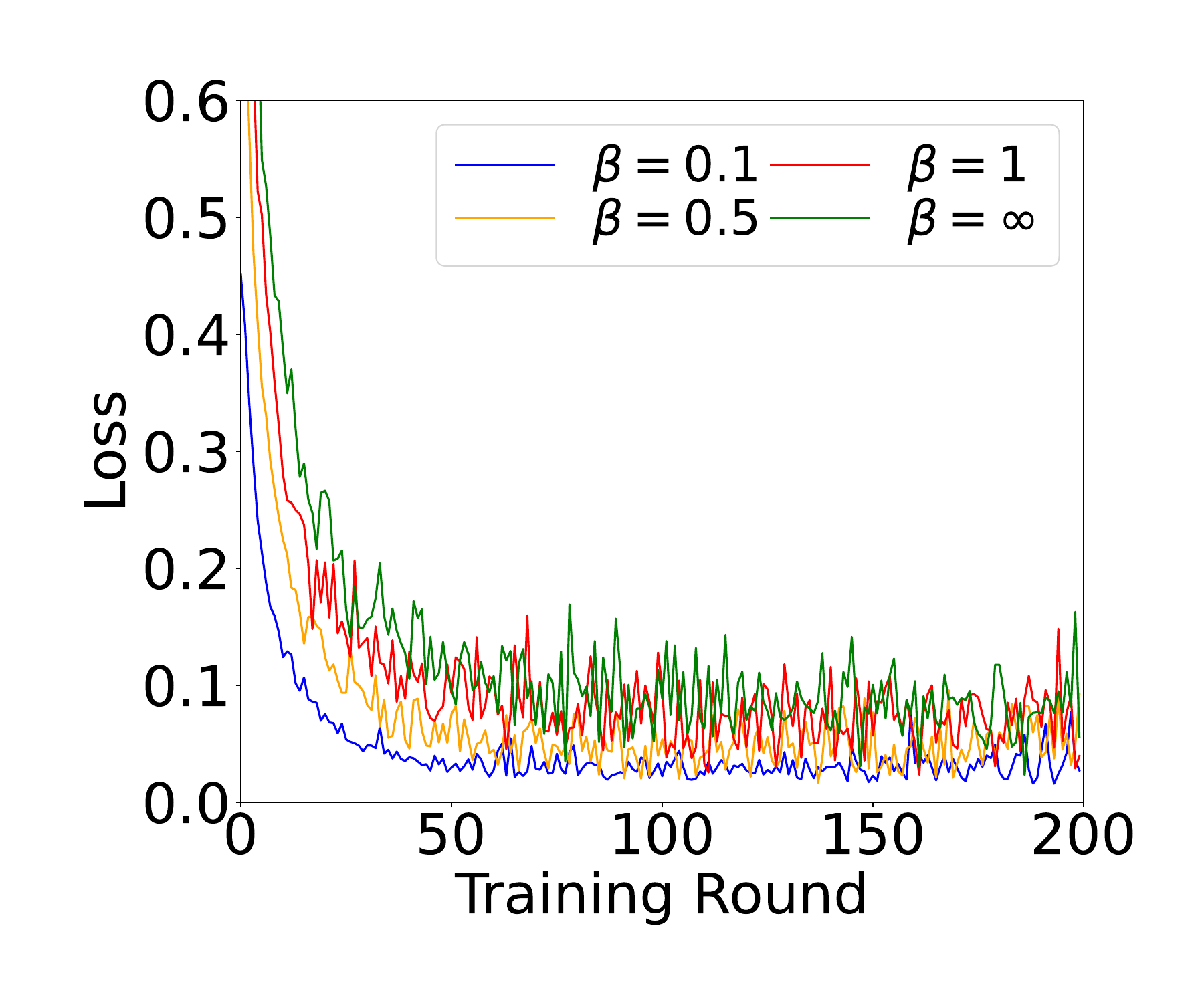}}
	\subfloat[SFL-V1 on CIFAR-100.]{\includegraphics[width=1.4in]{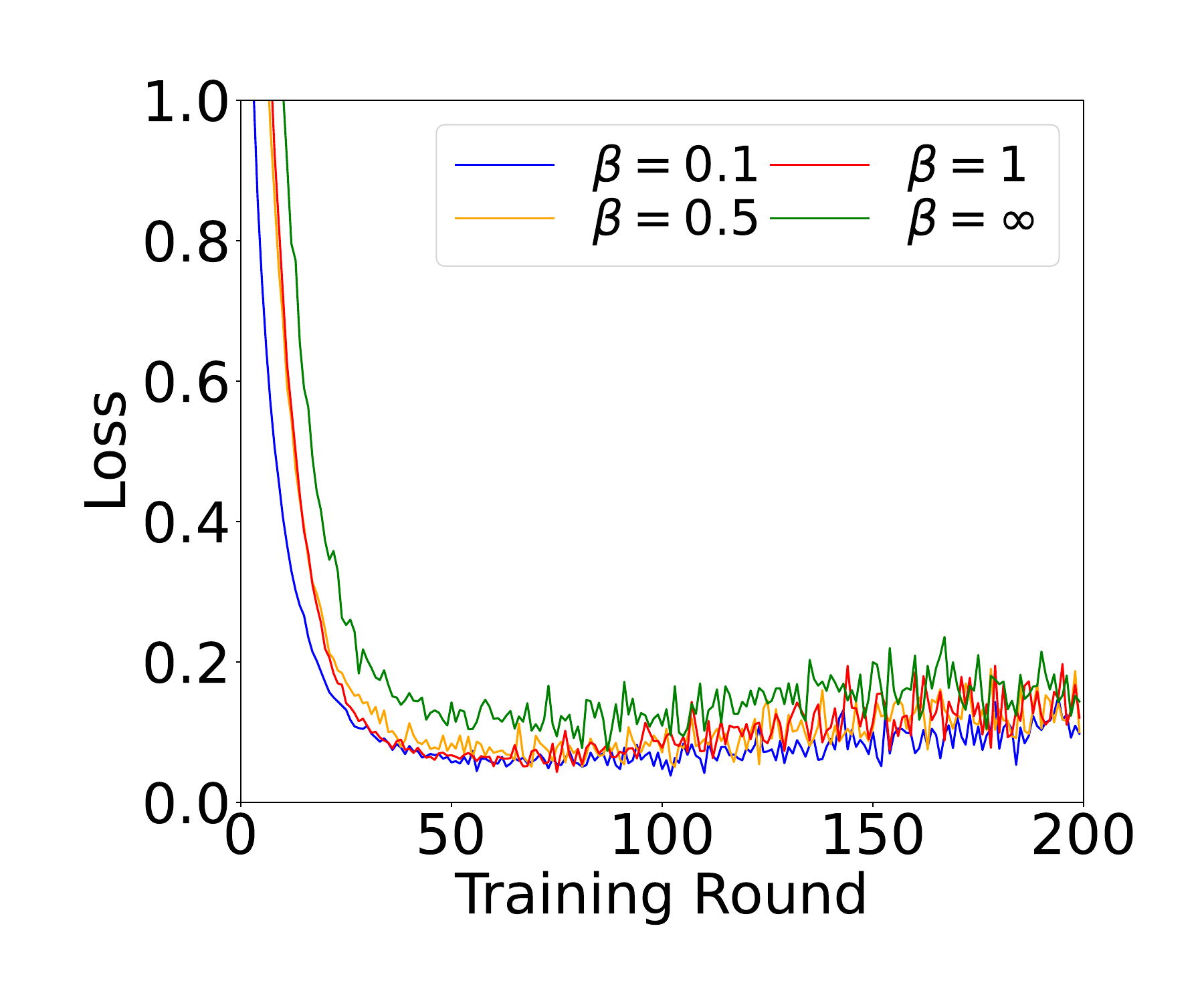}}
	\subfloat[SFL-V2 on CIFAR-100.]{\includegraphics[width=1.4in]{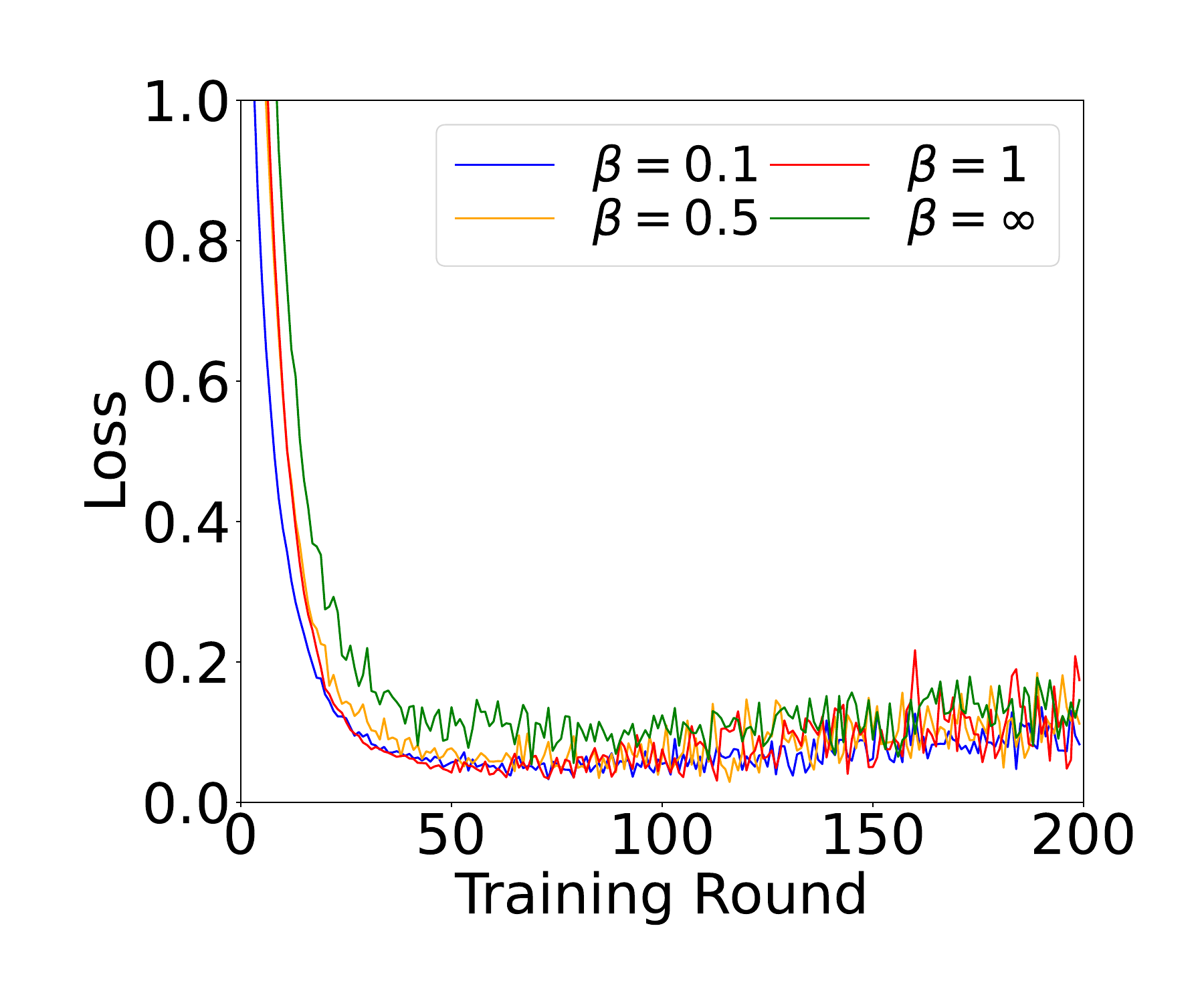}}
	\caption{Impact of data heterogeneity on SFL {\color{blue}training} loss.}
	\label{Impact-data-heteogeneity-loss}
\end{figure}
\begin{figure}[t]
	\centering
        \subfloat[SFL-V1 on CIFAR-10.]{\includegraphics[width=1.4in]{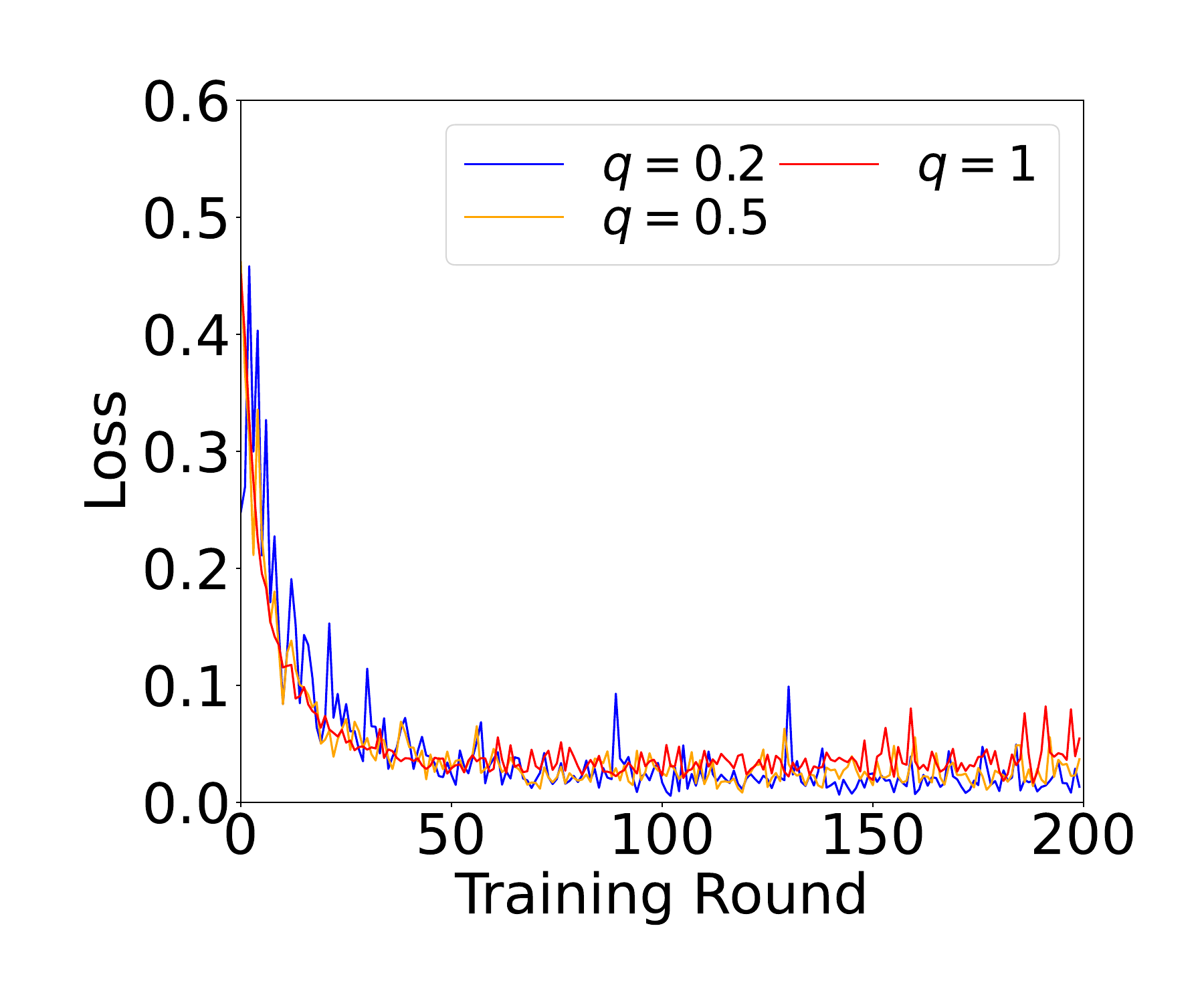}}
	\subfloat[SFL-V2 on CIFAR-10.]{\includegraphics[width=1.4in]{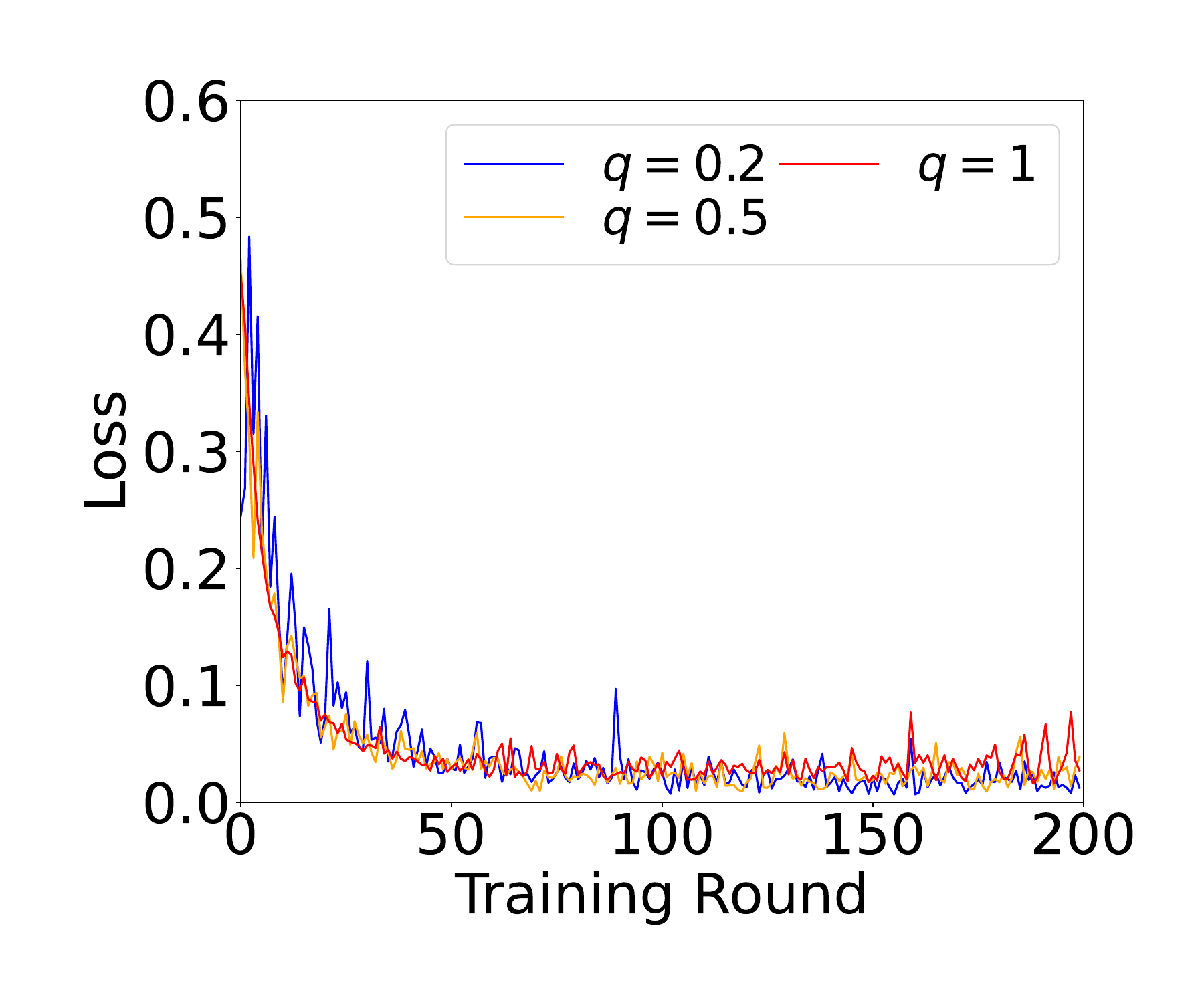}}
	\subfloat[SFL-V1 on CIFAR-100.]{\includegraphics[width=1.4in]{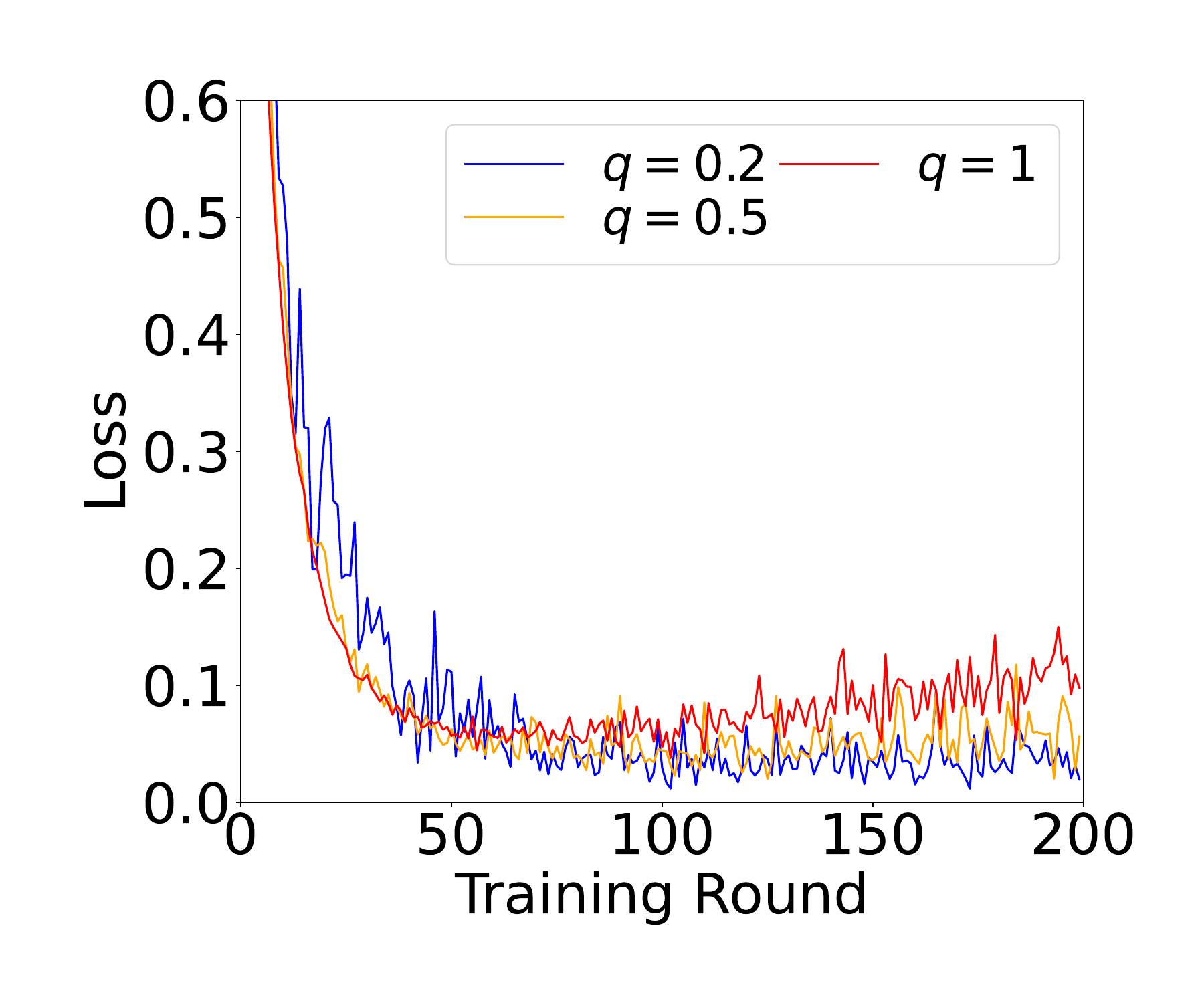}}
	\subfloat[SFL-V2 on CIFAR-100.]{\includegraphics[width=1.4in]{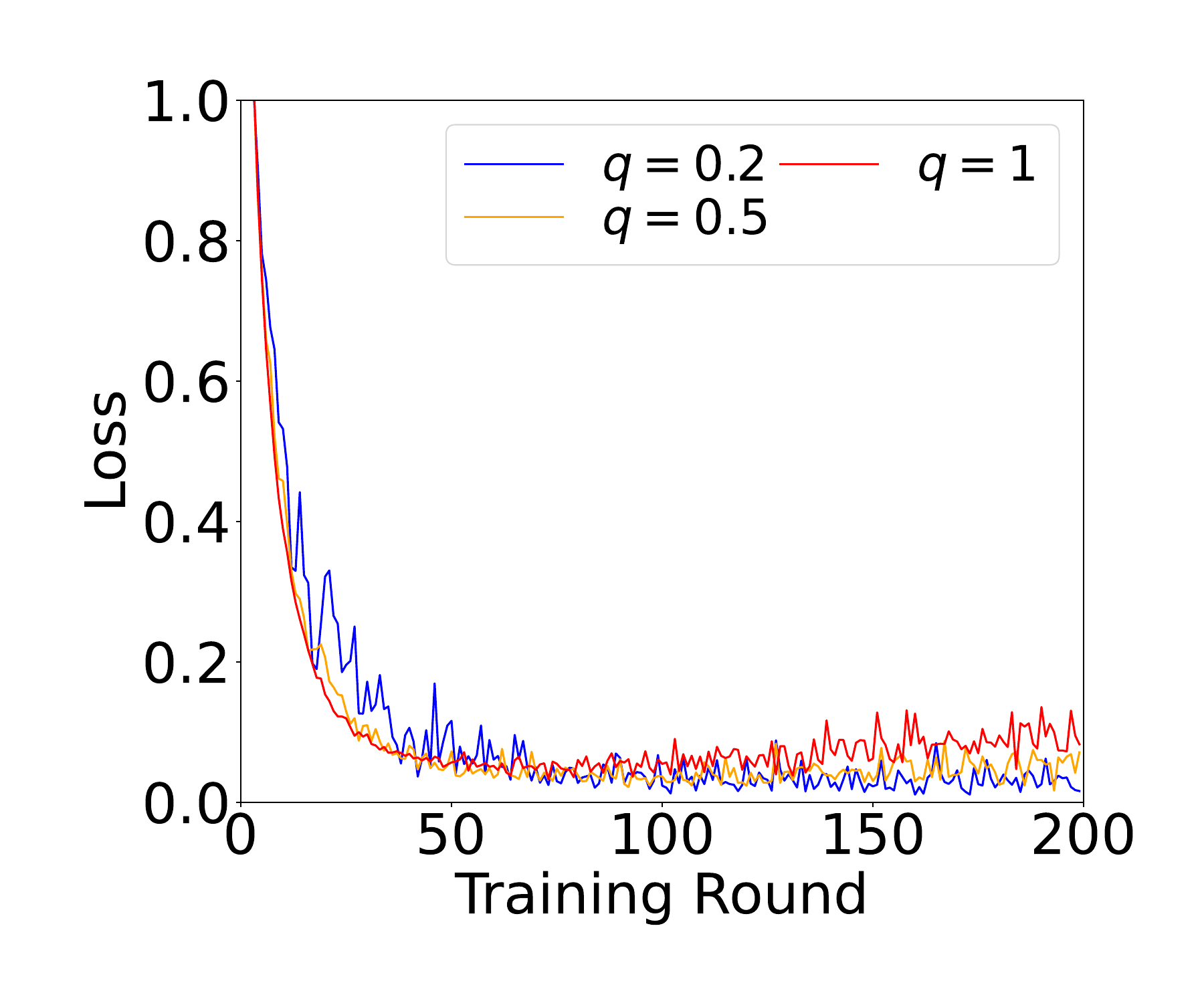}}
	\caption{Impact of client participation on SFL {\color{blue}training} loss.}
	\label{Impact-client-participation-loss}
\end{figure}

\subsection{Results on the impact of the position of cut layer.}\label{sec: cut layer}
We can observe from the results that the performance of SFL-V1 and SFL-V2 increases in $L_c$.
We look at the impact of the position of cut layer from the gradient perspective. We plot the gradient divergence in Fig. \ref{cut_layer}:

We see for SFL-V1 and SFL-V2, the gradient divergence decreases as we choose a latter cut layer (a larger $L_c$). This means that the client drift issue is less severe and hence the performance increases. In addition, from a theoretical perspective, if we write $\epsilon^2$ (i.e., upper bound of gradient divergence defined in Assumption 3.3) as a function of $L_c$, we can see that the upper bounds of performance loss decrease in $L_c$. This provides a theoretical angle that the performance of SFL-V1 and SFL-V2 increases in $L_c$.

\begin{figure}[h!]
\begin{minipage}[h]{.49\linewidth}
	\centering
        \includegraphics[width=0.6\linewidth]{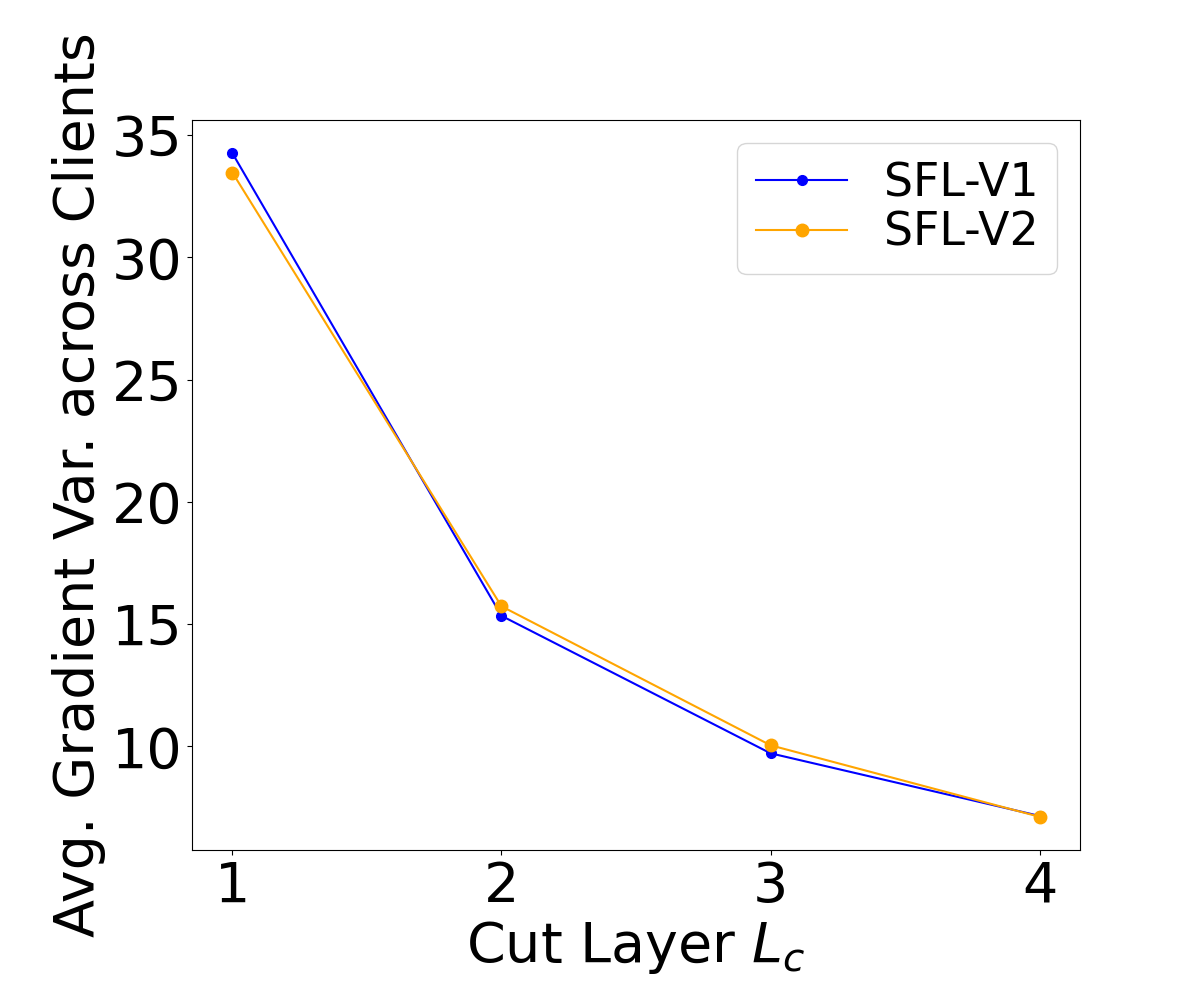}
	\caption{Results on the impact of the position of cut layer.}
	\label{cut_layer}
\end{minipage}
\begin{minipage}[h]{.49\linewidth}
	\centering
        \includegraphics[width=0.6\linewidth]{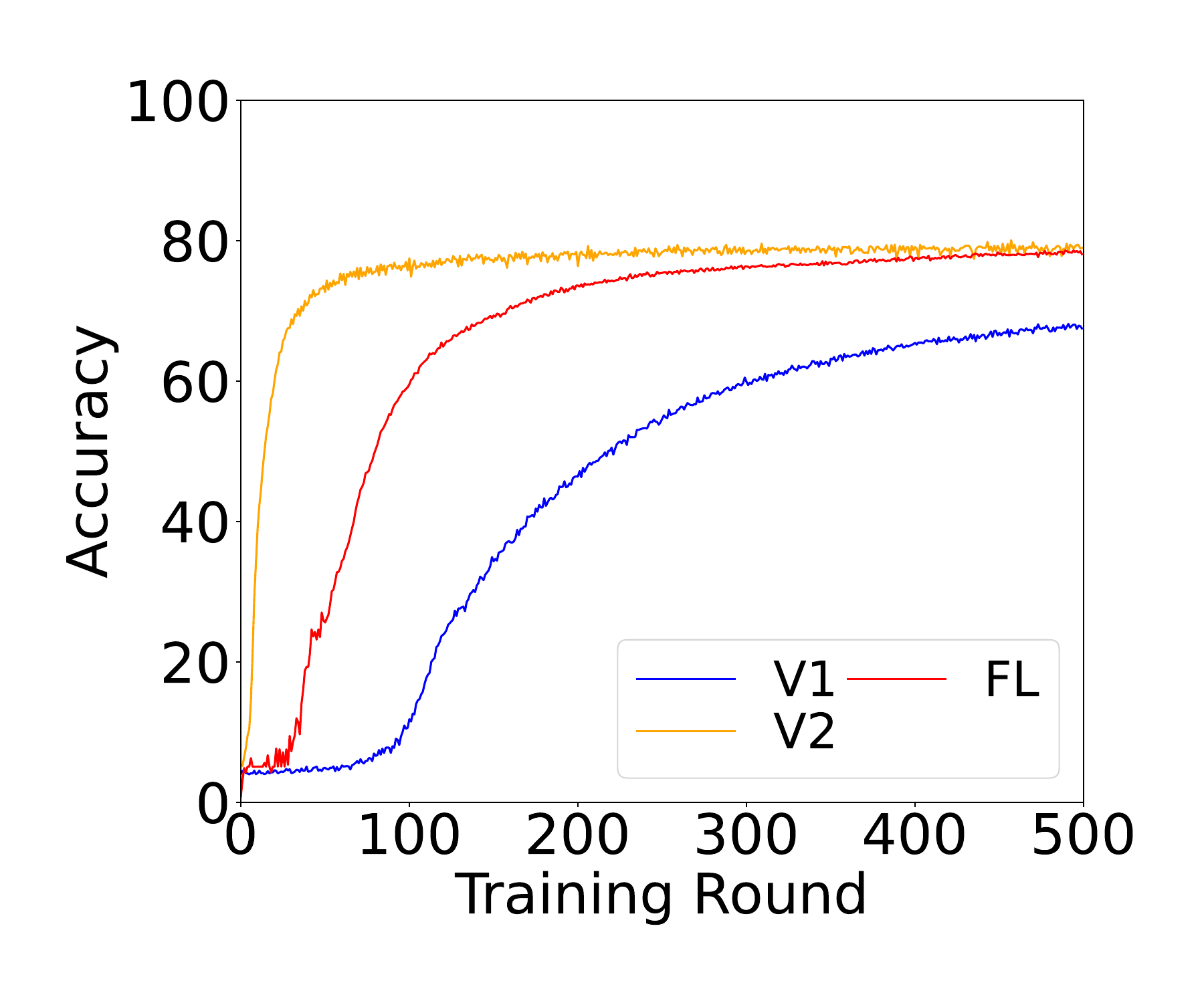}
	\caption{Results on FEMNIST.}
	\label{femnist}
\end{minipage}
\end{figure}
\subsection{Results on FEMNIST dataset}\label{sec: FEMNIST}

We have conducted more simulations on a larger dataset FEMNIST. In particular, we consider $N=100$ and train FL, SFL-V1, SFL-V2. Note that the data come from different sources and are heterogeneous across clients. The results are reported in Fig. \ref{femnist}.

We note that our key observation continues to hold. That is, SFL-V2 outperforms FL and SFL-V1 under-performs FL under heterogeneous data and a large number of clients.

\clearpage

\end{document}